\documentclass{lmcs}
\pdfoutput=1
\usepackage[utf8]{inputenc}

% LMCS Layouting Macros
\usepackage{lastpage}
\lmcsdoi{20}{2}{8}
\lmcsheading{}{\pageref{LastPage}}{}{}%
{Oct.~25,~2022}{May~16,~2024}{}

\usepackage{hyperref}
\usepackage{cleveref}
\usepackage{graphicx}
\usepackage{proof-at-the-end}
% \pgfkeys{/prAtEnd/default/.style={restate, proof at the end}}
\pgfkeys{/prAtEnd/default/.style={restate, proof at the end, stared, no link to proof}}
\pgfkeys{/prAtEnd/allend/.style={all end, no link to theorem}}
\usepackage{amssymb}
\usepackage{amsmath}
\usepackage{proof}
\usepackage{amsthm}
\usepackage{arydshln}
\usepackage{stmaryrd}
\usepackage{mathpartir}
\usepackage{upgreek}
\usepackage{xcolor}

% \usepackage{varioref}
% \usepackage{hyperref}
% If you use the hyperref package, please uncomment the following line
% to display URLs in blue roman font according to Springer's eBook style:
% \renewcommand\UrlFont{\color{blue}\rmfamily}
% \usepackage[capitalise]{cleveref}
%
% \crefname{thm}{Theorem}{Theorems}
% \crefname{thmC}{Theorem}{Theorems}
% \crefname{lem}{Lemma}{Lemmas}
% \crefname{lemC}{Lemma}{Lemmas}
% \crefname{defi}{Definition}{Definitions}
% \crefname{cor}{Corollary}{Corollaries}
% \crefname{rem}{Remark}{Remarks}
% \crefname{prop}{Proposition}{Propositions}
% \crefname{exa}{Example}{Examples}
% \crefname{fact}{Fact}{Facts}

% Rerun \newtheorem after loading cleveref to ensure compatibility with amsthm
\makeatletter
\let\c@lem\undefined%
\let\c@cor\undefined%
\let\c@prop\undefined%
\let\c@fact\undefined%
\let\c@lemC\undefined%
\let\c@exa\undefined%
\let\c@defi\undefined%
\let\c@thm\undefined%
\let\c@rem\undefined%
\let\c@thmC\undefined%
\makeatother

\theoremstyle{plain}
\declaretheorem[name=Theorem,numberwithin=section]{thm}
\newtheorem{lem}[thm]{Lemma}
\newtheorem{cor}[thm]{Corollary}
\newtheorem{prop}[thm]{Proposition}
\newtheorem{fact}[thm]{Fact}
\newtheorem{rem}[thm]{Remark}

\theoremstyle{thmC}
\newtheorem{lemC}[thm]{Lemma}
\newtheorem{thmC}[thm]{Theorem}

\theoremstyle{definition}
\newtheorem{exa}[thm]{Example}
\newtheorem{defi}[thm]{Definition}

\crefname{prop}{Proposition}{Propositions}
\crefname{thm}{Theorem}{Theorems}
\crefname{clm}{Claim}{Claims}
\crefname{lem}{Lemma}{Lemmas}
\crefname{lemC}{Lemma}{Lemmas}
\crefname{thmC}{Theorem}{Theorems}
\crefname{rem}{Remark}{Remarks}
\crefname{cor}{Corollary}{Corollaries}
\crefname{exa}{Example}{Examples}
\crefname{defi}{Definition}{Definitions}
\crefname{fact}{Fact}{Facts}

\crefname{section}{Section}{Sections}
\crefname{figure}{Figure}{Figures}

\usepackage{xspace}

\definecolor{citecolor}{rgb}{0.0,0.4,0.0}
\definecolor{urlcolor}{rgb}{0.0,0.0,0.4}
\definecolor{linkcolor}{rgb}{0.0,0.0,0.4}
\hypersetup{
  pdftex,
  plainpages=false,
  pdfpagelabels,
  linktoc=all,
  bookmarks=true,
  bookmarksopen=true,
  breaklinks=true,
  colorlinks,
  linkcolor=linkcolor,
  urlcolor=urlcolor,
  citecolor=citecolor}

\usepackage{tikz}
\usepackage[all]{xy}

\newcommand\eqdef\triangleq

\newcommand\pspace{\textsc{PSpace}}
\newcommand\expspace{\textsc{ExpSpace}}

\makeatletter
\providecommand{\leftsquigarrow}{%
  \mathrel{\mathpalette\reflect@squig\relax}%
}
\newcommand{\reflect@squig}[2]{%
  \reflectbox{$\m@th#1\rightsquigarrow$}%
}
\makeatother

\newcommand{\KA}{\ensuremath{\mathsf{KA}}\xspace}
\newcommand{\BA}{\ensuremath{\mathsf{BA}}\xspace}
\newcommand{\DL}{\ensuremath{\mathsf{DL}}\xspace}

\newcommand{\KAT}{\ensuremath{\mathsf{KAT}}\xspace}

\newcommand{\KATF}{\ensuremath{\mathsf{KATF}}\xspace}

\newcommand\lang[1]{\left\llbracket#1\right\rrbracket}

\newcommand\ttop{\mathbb T}
\newcommand\full{\mathbb F}

\newcommand\cl[1]{{#1}^\star}
\newcommand\clH[1]{H^\star_{#1}}
\newcommand\Hlt[1]{H_{{{<}#1}}}
\newcommand\clHlt[1]{\clH{{{<}#1}}}
\newcommand\sem[2]{\cl{#1}\!\lang{#2}}
\newcommand\semH[2]{\sem{H_{#1}}{#2}}

\newcommand\proves[1][]{#1 \vdash}         % abbreviated version
\newcommand\KAproves[1][]{\KA_{#1} \vdash} % explicit version
\newcommand\BAproves{\vdash_\BA}
\newcommand\DLproves{\vdash_\DL}
\newcommand\KATproves{\KAT\vdash}

\newcommand\A{\mathcal{A}}
\newcommand\e{\mathbf{e}}
\newcommand\br{\mathbf{r}}
\newcommand\f{\mathbf{f}}

\newcommand\G{\mathcal{G}}
\newcommand\GS{\mathcal{GS}}
\newcommand\aalpha{\boldsymbol\alpha}

\newcommand\pphi{\boldsymbol\phi}
\newcommand\phis{\boldsymbol\phi}
\newcommand\psis{\boldsymbol\psi}

\newcommand{\termst}[1]{\ensuremath{{\mathcal{T}_{\scriptscriptstyle#1}}}}
\newcommand{\termska}{\termst{}}
\newcommand{\termskat}{\termst{\KAT}}
\newcommand{\termsba}{\termst{\BA}}
\newcommand{\termsdl}{\termst{\DL}}

\newcommand\hset[1]{\ensuremath{\mathsf{#1}}\xspace}
\newcommand{\hkat}{\hset{kat}}
\newcommand{\hnetkat}{\hset{netkat}}
\newcommand{\hbool}{\hset{bool}}
\newcommand{\hkapt}{\hset{kapt}}
\newcommand{\hdl}{\hset{dl}}
\newcommand{\hgluedl}{\hset{glue''}}
\newcommand{\hatomdl}{\hset{atm'}}
\renewcommand{\hglue}{\hset{glue}}
\newcommand{\hatom}{\hset{atm}}
\newcommand{\hkao}{\hset{kao}}
\newcommand{\hkabo}{\hset{kabo}}
\newcommand{\hgluekao}{\hset{glue'}}
\newcommand{\hcontr}{\hset{ctr}}

\newcommand{\At}{\mathsf{At}}

\newcommand{\ltr}[1]{\mathtt{#1}}

\newcommand{\id}{\mathrm{id}}
\newcommand{\pipe}{\;\;|\;\;}

\newcommand{\pow}{\mathcal{P}}

\newcommand\tothinklater[1]{}

\newcommand{\hkatf}{\hset{katf}}
\newcommand{\hkatc}{\hset{katc}}
\newcommand{\termskatf}{\termst{\KATF}}
\newcommand{\htop}{\hset{t}}
\newcommand{\hful}{\hset{f}}
\newcommand{\hcnv}{\hset{cnv}}
\newcommand{\hc}{\hset{c}}
\newcommand{\hi}{\hset{i}}
\newcommand{\hs}{\hset{s}}
\newcommand{\hfull}{\hset{full}}

\newcommand{\inl}{\mathsf{in}_l}
\newcommand{\inr}{\mathsf{in}_r}

\newcommand\NN{\mathbb{N}}
\newcommand\set[1]{\left\{#1\right\}}
\newcommand\paren[1]{\left(#1\right)}
\newcommand\tuple[1]{\left\langle #1\right\rangle}

\newcommand\dup{\mathsf{dup}}

\newcommand\closeoverlap[5]{
      \begin{align*}
        \xymatrix@C=3em{
        #3 & \ar@{~>}[l]_{#1}
        #4 & \ar@{~>}[l]_{#2}
        #5 \ar@{=}@/^1em/[ll]}\\
      \end{align*}}
\newcommand\closeoverlapi[6]{
      \begin{align*}
        \xymatrix@C=3em@R=0em{
        #4 & \ar@{~>}[l]_{#1}
        #5 & \ar@{~>}[l]_{#2}
        #6 \ar@{~>}@/^2em/[ll]_{#3}}\\
      \end{align*}}
\newcommand\closeoverlapii[8]{
    \begin{align*}
      \xymatrix@R=.6em{
       #5 & \ar@{~>}[l]_{#1}
       #6 & \ar@{~>}[l]_{#2}
       #7 \ar@{~>}@/^.8em/[ld]^{#4}\\
      &#8 \ar@{~>}@/^.8em/[lu]^{#3}\\}
    \end{align*}}

\newcommand{\neworrenewcommand}[1]{\providecommand{#1}{}\renewcommand{#1}}
\newcommand\closeoverlapiii[9]{
    \neworrenewcommand{\fcloseoverlapiii}[1]{
    \begin{align*}
      \xymatrix@R=1.5em{
      #6 & \ar@{~>}[l]_{#1}
      #7 & \ar@{~>}[l]_{#2}
      #8 \ar@{~>}[d]^{#5}\\
      #9 \ar@{~>}[u]^{#3}
      &&##1\ar@{~>}[ll]^{#4}\\}
    \end{align*}}
    \fcloseoverlapiii
    }
\newcommand\closeoverlapiiii[9]{
    \neworrenewcommand{\fcloseoverlapiiii}[3]{
    \begin{align*}
      \xymatrix@R=1.5em{
      #7 & \ar@{~>}[l]_{#1}
      #8 & \ar@{~>}[l]_{#2}
      #9 \ar@{~>}[d]^{#6}\\
      ##1 \ar@{~>}[u]^{#3}
      &##2\ar@{~>}[l]^{#4}
      &##3\ar@{~>}[l]^{#5}\\}
    \end{align*}}
    \fcloseoverlapiiii
    }

\begin{document}

\title[On Tools for Completeness of Kleene Algebra with Hypotheses]%
{On Tools for Completeness \\ of Kleene Algebra with Hypotheses}%
  \thanks{%
    This paper is an extended version of the paper with the same title which appeared in Proc. RAMiCS 2021~\cite{prw:ramics21:mkah}; we summarise the additions at the end of the Introduction.
    This work has been supported by the ERC (CoVeCe, grant No 678157), by the LABEX MILYON (ANR-10-LABX-0070), within the program ANR-11-IDEX-0007, and by the project ‘Mode(l)s of Verification and Monitorability’ (MoVeMent) (grant No 217987) of the Icelandic Research Fund.%
}

\author[D.~Pous]{Damien Pous\lmcsorcid{0000-0002-1220-4399}}[a]
\author[J.~Rot]{Jurriaan Rot\lmcsorcid{0000-0002-1404-6232}}[b]
\author[J.~Wagemaker]{Jana Wagemaker\lmcsorcid{0000-0002-8616-3905}}[c]

% affiliation 1 (automatically numbered a)
\address{CNRS, LIP, ENS de Lyon}
\email{Damien.Pous@ens-lyon.fr}

% affiliation 2 (automatically numbered b)
\address{Radboud University, Nijmegen}
\email{Jurriaan.Rot@ru.nl}

% affiliation 3 (automatically numbered c)
\address{Reykjavik University, Reykjavik}
\email{janaw@ru.is}

\begin{abstract}
  In the literature on Kleene algebra, a number of variants have been proposed which impose additional structure specified by a theory, such as Kleene algebra with tests (KAT) and the recent Kleene algebra with observations (KAO), or make specific assumptions about certain constants, as for instance in NetKAT.
  Many of these variants fit within the unifying perspective offered by \emph{Kleene algebra with hypotheses}, which comes with a canonical language model constructed from a given set of hypotheses. For the case of KAT, this model corresponds to the familiar interpretation of expressions as languages of guarded strings.

  A relevant question therefore is whether Kleene algebra together with a given set of hypotheses is complete with respect to its canonical language model.
  In this paper, we revisit, combine and extend existing results on this question to obtain tools for proving completeness in a modular way.
        
  We showcase these tools by giving new and modular proofs of completeness for KAT, KAO and NetKAT,
  and we prove completeness for new variants of KAT: KAT extended with a constant for the full relation, KAT extended with a converse operation, and a version of KAT where the collection of tests only forms a distributive lattice.

  \keywords{Kleene algebra  \and Completeness \and Reduction}
\end{abstract}

\maketitle

\section{Introduction}
\label{sec:intro}
Kleene algebras (KA)~\cite{Kle56,Conway71} are algebraic structures involving an iteration operation, Kleene star, corresponding to reflexive-transitive closure in relational models and to language iteration in language models.
Its axioms are complete w.r.t. relational models and language models~\cite{Kozen91,Krob91a,Boffa90}, and the resulting equational theory is decidable via automata algorithms (in fact, \pspace-complete~\cite{MS72,HUNT1976}).

These structures were later extended in order to deal with common programming constructs. For instance, Kleene algebras with tests (KAT)~\cite{kozen:97:kat}, which combine Kleene algebra and Boolean algebra, make it possible to represent the control flow of while programs. Kleene star is used for while loops, and Boolean tests are used for the conditions of such loops, as well as the conditions in if-then-else statements.
Again, the axioms of KAT are complete w.r.t. appropriate classes of models, and its equational theory remains in \pspace. Proving so is non-trivial: Kozen's proof reduces completeness of KAT to completeness of KA, via a direct syntactic transformation on terms.

Another extension is Concurrent Kleene algebra (CKA)~\cite{HoareMSW11}, where a binary operator for parallelism is added. The resulting theory is characterised by languages of pomsets rather than languages of words, and is \expspace-complete~\cite{bps:concur17:cka}. Trying to have both tests and concurrency turned out to be non-trivial, and called for yet another notion: Kleene algebras with observations (KAO)~\cite{kao-2019}, which are again complete w.r.t.\ appropriate models, and decidable.

When used in the context of program verification, e.g., in a proof assistant, such structures make it possible to write algebraic proofs of correctness, and to mechanise some of the steps: when two expressions $e$ and $f$ representing two programs happen to be provably equivalent in KA, KAT, or KAO, one does not need to provide a proof, one can simply call a certified decision procedure~\cite{bp:itp10:kacoq,KraussN12,pous:itp13:ra}.
However, this is often not enough~\cite{kozenp00:kat:compiler:opts,angusk01:kat:schemato,hardink02:kat:hypotheses}: most of the time, the expressions $e$ and $f$ are provably equal only under certain assumptions on their constituents. For instance, to prove that $(a+b)^*$ and $a^*b^*$ are equal, one may have to use the additional assumption $ba= ab$.
In other words, one would like to prove equations under some assumptions, to have algorithms for the Horn theory of Kleene algebra and its extensions rather than just their equational theories.

Unfortunately, those Horn theories are typically undecidable~\cite{kozen96:kat:commutativitycond,kozen02:ka:complexity}, even with rather restricted forms of hypotheses (e.g., commutation of specific pairs of letters, as in the above example).
Nevertheless, important and useful classes of hypotheses can be `eliminated', by reducing to the plain and decidable case of the equational theory. This is, for instance, the case for \emph{Hoare hypotheses}~\cite{kozen00:kat:hoare}, of the shape $e=0$, which allow to encode Hoare triples for partial correctness in KAT.

In some cases, one wants to exploit hypotheses about specific constituents (e.g, $a$ and $b$ in the above example). In other situations, one wants to exploit assumptions on the whole structure. For instance, in commutative Kleene algebra~\cite{redko1964algebra,Conway71,Brunet19}, one assumes that the product is commutative everywhere.

Many of these extensions of Kleene algebra (KAT, KAO, commutative KA, specific hypotheses) fit into the generic framework of Kleene algebra with hypotheses~\cite{dkpp:fossacs19:kah}, providing in each case a canonical model in terms of closed languages.

We show that we recover standard models in this way, and we provide tools to establish completeness and decidability of such extensions, in a modular way. The key notion is that of \emph{reduction} from one set of hypotheses to another. We summarise existing reductions, provide new ones, and design a toolbox for combining those reductions together. We use this toolbox in order to obtain new and modular proofs of completeness for KAT, KAO and NetKAT~\cite{AndersonFGJKSW14}; to prove completeness of KAT with a full element and KAT with a converse operation, two extensions that were formerly proposed for KA alone, respectively in~\cite{pw:concur22:katop} and~\cite{BES95,EB95}; and to prove completeness for the variant of KAT where tests are only assumed to form a distributive lattice.

\subsection*{Outline}

We first recall Kleene algebra, Kleene algebra with hypotheses, and the key concept of reduction (\cref{sec:prelim}).
Then we design a first toolbox for reductions (\cref{sec:tools}): a collection of primitive reductions, together with lemmas allowing one to combine them together in a modular fashion.
We exemplify those tools by using them to give a new proof of completeness for KAT (\cref{sec:kat}).

After this first example, we develop more systematic tools for reductions (\cref{sec:moretools}): lemmas to combine more than two reductions at a time, and lemmas to prove the resulting side-conditions more efficiently.
We finally demonstrate the flexibility of our approach in a series of examples of increasing complexity:
KAO (\cref{sec:kao}),
KAT with a full element (\cref{sec:katf}),
KAT with converse (\cref{sec:katc}),
KA with positive tests (KAPT---\cref{sec:kapt}), and
NetKAT (\cref{sec:netkat}).

The first appendix contains an abstract theory of least closures in complete lattices, which we base our  toolbox on (\cref{app:closures}). Two subsequent appendices contain proofs omitted from the main text (a direct soundness proof for Kleene algebra with hypotheses---\cref{app:soundness}, and proofs related to guarded strings for KAT---\cref{app:kat:gs}).

\subsection*{Differences with the conference version}
This paper is a revised and extended version of an earlier paper in the proceedings of RaMiCS 2021~\cite{prw:ramics21:mkah}.
Besides various improvements, and inclusion of most proofs, here are the main differences w.r.t.\ the conference version.

\Cref{ssec:automata:red} is new, it provides a general technique to construct reductions via finite automata. We use it to establish some of the basic reductions from \cref{ssec:basic:red}.

With the exception of \cref{prop:cup}, \cref{sec:moretools} on advanced tools is new (e.g., \cref{prop:gsos} or development on overlaps). This section makes it possible to present most examples using `tables' (\cref{table:kabo,table:katf,table:katc,table:kapt,table:netkat}), which we consider as a new contribution concerning methodology.
\cref{lem:basic:e1}, which makes it possible to obtain reductions for hypotheses of the shape $e\leq 1$ is also new; we need it to deal with NetKAT.

The examples of KAT with a full element, KAT with converse, and NetKAT, are new (\cref{sec:katf,sec:katc,sec:netkat}); the example of KAPT (\cref{sec:kapt}) is presented more uniformly, thanks to the new tools.
\cref{app:closures} on the abstract theory of least closures is entirely new; it makes it possible to get rid of all proofs by transfinite induction that we used in the conference version. \cref{app:kat:gs} was publicly available via the author version of~\cite{prw:ramics21:mkah}, but not officially published.

\subsection*{Preliminaries, notation}

We write $1$ for the empty word and $uv$ for the concatenation of two words $u,v$.
Given a set $X$, subsets of $X$ form a complete lattice $\tuple{\pow(X),\subseteq,\bigcup}$.
Given two functions $f,g$, we write $f\circ g$ for their composition: $(f\circ g)(x)=f(g(x))$. We write $\id$ for the identity function.
Functions into a complete lattice, ordered pointwise, form a complete lattice where suprema are computed pointwise. We reuse the same notations on such lattices: for instance, given two functions $f,g\colon Y\to\pow(X)$, we write $f\subseteq g$ when $\forall y\in Y, f(y)\subseteq g(y)$, or $f\cup g$ for the function mapping $y\in Y$ to $f(y)\cup g(y)$.

A function $f\colon \pow(X)\to\pow(X)$ is \emph{monotone} if for all $S,S'\in \pow(X)$, $S\subseteq S'$ implies $f(S)\subseteq f(S')$; it is a \emph{closure} if it is monotone and satisfies $\id\subseteq f$ and $f\circ f \subseteq f$. When $f$ is a closure, we have $f\circ f=f$, and for all $S,S'$, $S\subseteq f(S')$ iff $f(S)\subseteq f(S')$. These notions are generalised to arbitrary complete lattices in \cref{app:closures}.

\section{Kleene algebra with hypotheses, closures, reductions}
\label{sec:prelim}

A \emph{Kleene algebra}~\cite{Conway71,Kozen91} is a tuple $(K,+,\cdot,^*,0,1)$ such that $(K,+,\cdot,0,1)$ is an idempotent semiring, and $^*$ is a unary operator on $K$ such that for all $x,y\in K$ the following axioms are satisfied:
\begin{mathpar}
1 + x \cdot x^* \leq x^*
\and
x + y \cdot z \leq z \Rightarrow y^* \cdot x \leq z
\and
x + y \cdot z \leq y \Rightarrow x \cdot z^* \leq y
\end{mathpar}
Here, as later in the paper, we write $x \leq y$ as a shorthand for $x + y = y$. Given the idempotent semiring axioms, $\leq$ is a partial order in every Kleene algebra, and all operations are monotone w.r.t. that order. The seemingly missing law $1 + x^* \cdot x \leq x^*$ is derivable from these axioms.

Languages over some alphabet, as well as binary relations on a given set, are standard examples of Kleene algebras.

\medskip

We let $e,f$ range over regular expressions over an alphabet $\Sigma$, defined by:
\begin{align*}
  e,f ::= e + f \pipe e \cdot f \pipe e^* \pipe 0 \pipe 1 \pipe a \in \Sigma
\end{align*}
We write $\termska(\Sigma)$ for the set of such expressions, or simply $\termska$ when the alphabet is clear from the context.
We usually write $ef$ for $e\cdot f$, and $e^+$ for $e\cdot e^*$.
Given alphabets $\Sigma$ and $\Gamma$, a function $h\colon\Sigma\to\termska(\Gamma)$ extends
uniquely to a homomorphism $h \colon \termska(\Sigma)\to\termska(\Gamma)$, referred to as the \emph{homomorphism generated by $h$}.
As usual, every regular expression $e$ gives rise to a language $\lang{e}\in \pow(\Sigma^*)$.
Given regular expressions $e,f$, we write $\KAproves e=f$ when $e=f$ is derivable from the axioms of Kleene algebra. (Equivalently, when the equation $e=f$ holds universally, in all Kleene algebras.)

The central theorem of Kleene algebra is the following:
\begin{thm}[Soundness and Completeness of KA~\cite{Kozen91,Krob91a,Boffa90}]%
  \label{thm:ka}
  For all regular expressions $e,f\in\termska$, we have:
  \begin{align*}
    \KAproves e=f \quad\text{if and only if}\quad \lang{e}=\lang{f}\,.
  \end{align*}
\end{thm}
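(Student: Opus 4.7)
The plan is to handle the two directions separately, with soundness being routine and completeness being the substantive part. For soundness, I would verify that the powerset $\pow(\Sigma^*)$, equipped with union as addition, concatenation of languages as multiplication, the empty language as $0$, the singleton $\{1\}$ as $1$, and language iteration as the star, forms a Kleene algebra. Each axiom (the idempotent semiring laws, the star unfolding law, and the two quasi-equational induction rules) is verified by direct, elementary reasoning on words; this gives $\KAproves e = f$ implies $\lang{e} = \lang{f}$ by induction on the derivation.

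The hard direction is completeness, and I would follow the standard route through finite automata and matrix Kleene algebras, as developed by Kozen. First, every regular expression $e$ can be converted into a finite automaton $\A_e$ (say via Thompson's construction or Antimirov partial derivatives) recognizing $\lang{e}$, and conversely every finite automaton can be translated back to a regular expression (via Kleene's state-elimination algorithm). The central technical step is then to show that these back-and-forth conversions are compatible with the KA axioms: in particular, that $e$ is provably equal to the expression extracted from $\A_e$. This is most cleanly done by representing automata as triples $(u, M, v)$ of matrices over $\termska$, lifting KA to matrices over a KA (so that $M^*$ satisfies the fixed-point laws at the matrix level), and exploiting the two induction rules to derive the needed identities about $u M^* v$.

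With this infrastructure in place, the remaining step reduces to the following: if two finite automata accept the same language, then their associated matrix expressions are KA-provably equal. The standard way is to pass through a canonical (e.g.\ deterministic and minimal) automaton and show that each of the transformations involved in minimization and determinization can be mirrored by KA-provable equalities; alternatively one can use a bisimulation-style argument together with the unique-fixed-point consequences of the two induction rules. The main obstacle I anticipate is precisely this lift from semantic equivalence of automata to KA-provability: the intuitive arguments about automata rely on set-theoretic reasoning about states and runs, and the delicate part is to re-encode that reasoning using only equational manipulations and the two quasi-equational inductive laws. Once this is achieved, combining both conversions yields $\lang{e} = \lang{f}$ implies $\KAproves e = f$, completing the proof together with soundness.
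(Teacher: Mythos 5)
The paper gives no proof of \cref{thm:ka} at all---it is imported from the literature with citations to \cite{Kozen91,Krob91a,Boffa90}---and your outline is essentially a faithful summary of Kozen's cited proof: soundness by checking that languages form a Kleene algebra, completeness via the matrix representation $(u,M,v)$ of automata and the two induction rules, with determinization and minimization mirrored by KA-provable equalities. So your approach is correct and matches the standard proof the paper relies on, the step you rightly flag as delicate (replaying set-theoretic automata arguments purely equationally) being exactly where Kozen's paper does the real work.
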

As a consequence, the equational theory of Kleene algebras is decidable.

\subsection{Hypotheses}
\label{ssec:hyps}

Our goal is to extend this result to the case where we have additional hypotheses on some of the letters of the alphabet, or axioms restricting the behaviour of certain operations.
Those are represented by sets of \emph{inequations}, i.e., pairs $(e,f)$ of regular expressions written $e\leq f$ for the sake of clarity.
Given a set $H$ of such inequations, we write $\KAproves[H] e\leq f$ when the inequation $e\leq f$ is derivable from the axioms of Kleene algebra and the hypotheses in $H$ (similarly for equations).
By extension, we write $\KAproves[H]H'$ when $\KAproves[H] e\leq f$ for all $e\leq f$ in $H'$.
Since the ambient theory will systematically be KA, we will sometimes abbreviate $\KAproves[H]H'$ as $\proves[H]H'$, to alleviate notation.

Note that we consider letters of the alphabet as constants rather than variables.
In particular, while we have $\KAproves[ba\leq ab] (a+b)^*\leq a^*b^*$, we do not have $\KAproves[ba\leq ab] (a+c)^*\leq a^*c^*$.
Formally, we use a notion of derivation where there is no substitution rule, and where we have all instances of Kleene algebra axioms as axioms.
When we want to consider hypotheses that are universally valid, it suffices to use all their instances. For example, to define commutative Kleene algebra, we simply use the infinite set $\set{ef\leq fe\mid e,f\in\termska}$.

\subsection{Closures associated to hypotheses}
\label{ssec:clo}

We associate a canonical language model to Kleene algebra with a set of hypotheses $H$,
defined by \emph{closure} under $H$~\cite{dkpp:fossacs19:kah}.

For $u,v\in\Sigma^*$ and $L\subseteq \Sigma^*$, let $uLv\eqdef\set{uxv\mid x\in L}$.

\begin{defi}[$H$-closure]
  \label{def:clo}
  Let $H$ be a set of hypotheses and let $L\subseteq \Sigma^*$ be a language. The \emph{$H$-closure of $L$}, denoted as
  $\cl H(L)$, is the smallest language containing $L$ such that for all $e\leq f\in H$ and $u,v\in\Sigma^*$,
  if $u\lang{f}v\subseteq \cl H(L)$, then $u\lang{e}v\subseteq \cl H(L)$.
\end{defi}
For every set of hypotheses $H$, $\cl H$ is indeed a closure (also see \cref{ssec:clo:props}).

In many cases, $H$ will consist of inequations whose members are words rather than arbitrary expressions.
In that case, there is an intuitive string rewriting characterisation of the function $\cl H$.
Given words $u,v$, we write $u\leftsquigarrow_H v$ when $u$ can be obtained from $v$ by replacing an occurrence as a subword of the right-hand side of an inequation of $H$ by its left-hand side. Further write $\leftsquigarrow^*_H$ for the reflexive-transitive closure of $\leftsquigarrow_H$.
Then we have $u\in\cl H(L)$ iff $u\leftsquigarrow^*_H v$ for some word $v\in L$.
\begin{exa}
  Let $H=\set{ab\leq bc}$.
  We have $baab\leftsquigarrow_H babc \leftsquigarrow_H bbcc$, whence $baab \in \cl H(\set{bbcc})$.
  For $L=\lang{bc^*}$, we have $\cl H(L)=\lang {a^*bc^*}$.
  \qed
\end{exa}

This notion of closure gives a \emph{closed interpretation} of regular expressions, $\sem H -$, for which $\KA_H$ is sound:
\begin{theoremEnd}[default,category=snd]{thm}
  \label{thm:soundness}
  $\KAproves[H] e = f$ implies $\sem H e = \sem H f$.
\end{theoremEnd}
\begin{proofEnd}
  By induction on the derivation.
  \begin{itemize}
  \item For KA equational axioms, we actually have $\lang e=\lang f$.
  \item For an inequation $e\leq f\in H$, we have $\lang e\subseteq H\lang f$ by definition, whence $\lang e\subseteq \sem H f$.
  \item The reflexivity, symmetry, and transitivity rules are easy to handle.
  \item The contextuality rules are dealt with \cref{lem:presoundness}. For instance, for $+$,
    suppose $\KAproves[H]e+f=e'+f'$ is obtained from $\KAproves[H]e=e'$ and $\KAproves[H]f=f'$, so that $\sem H e = \sem H {e'}$ and $\sem H f = \sem H{f'}$ by induction. We derive:
    \begin{align*}
      \sem H {e+f}
      \tag{\cref{lem:presoundness}(1)}
      &= \cl H (\sem H e+\sem H f)\\
      \tag{IH}
      &= \cl H (\sem H{e'}+\sem H{f'})\\
      \tag{\cref{lem:presoundness}(1)}
      &= \sem H {e'+f'}
    \end{align*}
  \item It remains to deal with the two implications of KA.
    Suppose $\KAproves[H]f^*e\leq g$ is obtained from $\KAproves[H]e+fg\leq g$, so that
    $\lang{e+fg}\subseteq \sem H g$ by induction. We have to show $\lang{f^*e}=\lang{f}^*\lang e\subseteq \sem H g$, which we obtain from
    \begin{align*}
      \lang e+\lang{f}\sem H{g}
      &\subseteq \lang e+\sem H{f}\sem H {g}\\
      \tag{\cref{lem:con:clo:app}}
      &\subseteq \lang e+\sem H{fg}\\
      &\subseteq \sem H{e+fg}
      \tag{IH}
      \subseteq \sem H{g}
    \end{align*}
    The other implication is dealt with symmetrically.\qedhere
  \end{itemize}
\end{proofEnd}
\begin{proof}
  This is basically \cite[Theorem 2]{dkpp:fossacs19:kah}, which is proved there by constructing a model of $\KA_H$. We provide a direct proof in \cref{app:soundness}, by induction on the derivation.
\end{proof}
In the sequel, we shall prove the converse implication, completeness, for specific choices of $H$:
we say that $\KA_H$ is \emph{complete} if for all expressions $e,f$:
\begin{align*}
  \sem H e = \sem H f \quad\text{implies}\quad \KAproves[H] e = f\,.
\end{align*}

\begin{rem}
  Thanks to the presence of sum in the syntax of regular expressions (so that $\KAproves[H] e \leq f$ iff $\KAproves[H] e+f = f$), and to the fact that $\cl H$ is a closure, we have the following counterpart to \cref{thm:soundness} for soundness w.r.t inequations:
  \begin{align*}
    \KAproves[H] e \leq f \quad\text{implies}\quad \lang e \subseteq \sem H f\,.
  \end{align*}
  Similarly, $\KA_H$ is complete if and only if for all expressions $e,f$,
  \begin{align*}
    \tag*\qed
    \lang e \subseteq \sem H f \quad\text{implies}\quad \KAproves[H] e \leq f\,.
  \end{align*}
\end{rem}

We could hope that completeness always holds, notably because the notion of closure is invariant under inter-derivability of the considered hypotheses, as a consequence of the following lemma:
\begin{lemC}[{\cite[Lemma 4.10]{KappeB0WZ20}}]
  \label{lem:clo:incl}
  Let $H$ and $H'$ be sets of hypotheses. If $\proves[H]H'$ then $\cl{H'} \subseteq \cl H$.
\end{lemC}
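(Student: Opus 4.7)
The plan is to prove $\cl{H'}(L)\subseteq\cl H(L)$ by exploiting the minimality characterisation of $\cl{H'}(L)$: since $L\subseteq\cl H(L)$ trivially, it suffices to show that $\cl H(L)$ already satisfies the closure condition associated to $H'$, namely, that for every $e'\leq f'\in H'$ and all $u,v\in\Sigma^*$, if $u\lang{f'}v\subseteq\cl H(L)$ then $u\lang{e'}v\subseteq\cl H(L)$.

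First, from $\proves[H] H'$ I get $\KAproves[H] e'\leq f'$ for each $e'\leq f'\in H'$, so by \cref{thm:soundness} (in its inequational form) $\lang{e'}\subseteq\sem H{f'}=\cl H(\lang{f'})$.

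The main step is then a \emph{contextual} property of the closure: for any languages $X,Y\subseteq\Sigma^*$ and any words $u,v\in\Sigma^*$,
\[
X\subseteq\cl H(Y)\quad\Longrightarrow\quad uXv\subseteq\cl H(uYv).
\]
I would prove this by another minimality argument. Fix $u,v$ and let $Z\eqdef\{x\in\Sigma^*\mid uxv\in\cl H(uYv)\}$. Clearly $Y\subseteq Z$. Moreover $Z$ is itself $H$-closed: if $e\leq f\in H$ and $u',v'\in\Sigma^*$ satisfy $u'\lang{f}v'\subseteq Z$, then $(uu')\lang{f}(v'v)\subseteq\cl H(uYv)$; closure of $\cl H(uYv)$ under $H$ gives $(uu')\lang{e}(v'v)\subseteq\cl H(uYv)$, whence $u'\lang{e}v'\subseteq Z$. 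By minimality of $\cl H(Y)$, $\cl H(Y)\subseteq Z$, so $X\subseteq\cl H(Y)\subseteq Z$, which is exactly $uXv\subseteq\cl H(uYv)$.

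Putting the pieces together: given $e'\leq f'\in H'$ and $u,v$ with $u\lang{f'}v\subseteq\cl H(L)$, the contextual property applied to $\lang{e'}\subseteq\cl H(\lang{f'})$ yields $u\lang{e'}v\subseteq\cl H(u\lang{f'}v)$, and since $\cl H$ is a closure with $u\lang{f'}v\subseteq\cl H(L)$ we have $\cl H(u\lang{f'}v)\subseteq\cl H(\cl H(L))=\cl H(L)$. Hence $u\lang{e'}v\subseteq\cl H(L)$, establishing the $H'$-closure property of $\cl H(L)$, and therefore $\cl{H'}(L)\subseteq\cl H(L)$ by minimality of $\cl{H'}$. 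The only real obstacle is the contextual property; everything else is a routine unpacking of the definitions together with soundness.
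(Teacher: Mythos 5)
Your proof is correct, and note that the paper itself never proves this lemma: it is imported as a citation from \cite{KappeB0WZ20}, so your argument is a genuinely self-contained alternative to deferring to the literature. Two remarks on how it sits relative to the paper's machinery. First, the contextual property you isolate---$X\subseteq\cl H(Y)$ implies $uXv\subseteq\cl H(uYv)$---is precisely contextuality of $\cl H$ for word contexts, which the paper records later as \cref{lem:con} (see also \cref{lem:con:iff}) and proves in \cref{app:closures} via the abstract least-closure theory (\cref{prop:clo:iter:r}, using upper adjoints); your direct minimality argument, showing that $Z=\set{x\mid uxv\in\cl H(uYv)}$ contains $Y$ and is itself $H$-closed, is an appealingly elementary substitute, and in a polished write-up you could equally just invoke \cref{lem:con}. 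Second, your appeal to soundness is legitimate and non-circular: \cref{thm:soundness} (in the inequational form of the remark following it, giving $\lang{e'}\subseteq\sem H{f'}$ from $\proves[H]H'$) precedes this lemma in the paper, and its proof in \cref{app:soundness} relies only on the presoundness and contextuality lemmas, never on \cref{lem:clo:incl} itself. The overall shape---use soundness to get $\lang{e'}\subseteq\cl H(\lang{f'})$ for each $e'\leq f'\in H'$, verify that $\cl H(L)$ satisfies the $H'$-closure conditions via contextuality and idempotence, and conclude by minimality of $\cl{H'}(L)$---is the natural route and matches the spirit of the cited proof, so what your version buys is mainly independence from the external reference at the modest cost of re-deriving a contextuality fact the paper already has on the shelf.
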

\noindent
(There, $\cl{H'} \subseteq \cl H$ means pointwise inclusion of functions, i.e.,\ for all $L$, $\cl{H'}(L) \subseteq \cl H(L)$.)

Another property witnessing the canonicity of the $H$-closed language interpretation is that it is sound and complete w.r.t.\ $*$-\emph{continuous} models~\cite[Theorem~2]{dkpp:fossacs19:kah}.

Unfortunately, there are concrete instances for which $\KA_H$ is known to be incomplete.
For instance, there is a finitely presented monoid (thus a finite set $H_0$ of equations) such that $\set{(e,f)\mid\sem{H_0}e=\sem{H_0}f}$ is not recursively enumerable~\cite[Theorem~1]{KozenM14}.
Since derivability in $\KA_H$ is recursively enumerable as soon as $H$ is, $\KA_{H_0}$ cannot be complete.

For a more concrete counter-example, consider $H_1\eqdef\set{apb\leq p,~u\leq a,u\leq p,u\leq b}$, where $a,p,b,u$ are four distinct letters. The closed language $\sem{H_1}p$ contains all words of the shape $a^npb^n$ for some $n\in\NN$ thanks to the first inequation in $H_1$, and thus all words of the shape $u^{2n+1}$ for some $n\in\NN$ thanks to the other three inequations. It follows that $(uu)^*u\subseteq \sem{H_1}p$. However, $(uu)^*u\leq p$ is not derivable in $\KA_{H_1}$. The intuition is that the star induction rules in \KA do not make it possible to reach the middle of a word. Formally, the action lattice provided by Kuznetsov in~\cite[Section~3]{Kuznetsov18} gives a counter-model: consider his Lemma~3.6 and interpret $p$ as $p$, $a$ as $p/q$, $b$ as $p\backslash q$, and $u$ as $p\land q\land a\land b$. 

\medskip

Before turning to techniques for proving completeness, let us describe the closed interpretation of regular expressions for two specific choices of hypotheses.

\begin{exa}
  \label{ex:comm:ka}
Let us first consider \emph{commutative Kleene algebra}, obtained using the set $\set{ef \leq fe \mid e,f \in \termska(\Sigma)}$,
as also mentioned in the introduction. Under Kleene algebra axioms, this set is equiderivable with its restriction to letters, $C = \set{ab \leq ba \mid a,b \in \Sigma}$ (a consequence of~\cite[Lemma~4.4]{angusk01:kat:schemato}).
The associated closure can be characterised as follows:
\begin{align*}
  \cl C(L) = \set{w \in \Sigma^* \mid \exists v \in L, \, |w|_x = |v|_x \text{ for all }x \in \Sigma }
\end{align*}
where $|w|_x$ denotes the number of occurrences of $x$ in $w$. Thus, $w \in \cl C(L)$
if it is a permutation of some word in $L$.

This semantics matches the one used
in~\cite{Conway71} for commutative Kleene algebra: there, a
function
$\lang{-}_c \colon \termska(\Sigma) \rightarrow
\pow(\mathbb{N}^\Sigma)$ interprets regular expressions as subsets of
$\mathbb{N}^\Sigma$, whose elements are thought of as ``commutative
words'': these assign to each letter the number of occurrences, but
there is no order of letters.  Let
$q \colon \pow(\Sigma^*) \rightarrow \pow(\mathbb{N}^\Sigma)$,
$q(L) = \set{\lambda x. |w|_x \mid w \in L}$; this map computes the
\emph{Parikh image} of a given language $L$, that is, the set of
multisets representing occurrences of letters in words in $L$.  Then
the semantics is characterised by $\lang{-}_c = q \circ \lang{-}$.

One may observe that $\lang{-}_c = q\circ\sem C -$, since
$\cl C$ only adds words to a language which have the same number of
occurrences of each letter as some word which is already
there. Conversely, we have $\sem C -=q'\circ\lang{-}_c$, where
$q' \colon \pow(\mathbb{N}^\Sigma) \rightarrow \pow(\Sigma^*) $,
$q'(L) = \set{w\mid p\in L,~\forall x\in\Sigma, |w|_x = p(x) }$.  As a
consequence, we have $\lang{e}_c=\lang{f}_c$ iff
$\sem C e=\sem C f$.

From there, we can easily deduce from the completeness result
in~\cite[Chapter~11, Theorem~4]{Conway71}, attributed to
Pilling (see also~\cite{Brunet19}, which mentions the history of this result including the earlier proof by Redko~\cite{redko1964algebra}), that $\KA_C$ is complete.

\qedhere
\end{exa}

\begin{exa}
  \label{ex:ab0}
Let us now consider a single hypothesis: $D=\set{ab\leq 0}$ for some
letters $a$ and $b$. The $D$-closure of a language $L$ consists of those words that either belong to $L$, or contain $ab$ as a subword. As a consequence, we have $\sem D e=\sem D f$ if and only if $\lang{e}$ and $\lang{f}$ agree on all words not containing the pattern $ab$.

In this example, we can easily obtain decidability and completeness of $\KA_D$. Indeed, consider the function $r\colon\termska(\Sigma)\to\termska(\Sigma)$, $r(e)\eqdef e+\Sigma^*ab\Sigma^*$. For all $e$, we have $\KAproves[D] e=r(e)$, and $\sem D e=\lang{r(e)}$. As a consequence, we have
\begin{align*}
  &&\sem D e&=\sem D f \\&\Leftrightarrow
  &\lang{r(e)}&=\lang{r(f)} \\&\Leftrightarrow\tag{\cref{thm:ka}}
  &\KAproves r(e)&=r(f) \\&\Rightarrow
  &\KAproves[D] e&=f 
\end{align*}
The first step above establishes decidability of the closed semantics; the following ones reduce the problem of completeness for $\KA_D$ to that for $\KA$ alone (which holds). By soundness (\cref{thm:soundness}), the last line implies the first one, so that these conditions are all equivalent.
\end{exa}

This second example exploits and illustrate a simple instance of the framework we design in the sequel to prove completeness of Kleene algebra with various sets of hypotheses.

\subsection{Reductions}
\label{ssec:red}

As illustrated above, the overall strategy is to reduce completeness of $\KA_H$, for a given set of hypotheses $H$,
to completeness of Kleene algebra without hypotheses. The core idea is to provide a map $r$ from expressions to expressions,
which incorporates the hypotheses $H$ in the sense that $\lang{r(e)} = \sem H e$,
and such that $r(e)$ is provably equivalent to $e$ under the hypotheses $H$.
This idea leads to the unifying notion of \emph{reduction}, developed in~\cite{KozenM14,dkpp:fossacs19:kah,KappeB0WZ20}.

\begin{defi}[Reduction]\label{def:red}
  Assume $\Gamma\subseteq\Sigma$ and let $H$, $H'$ be sets of
  hypotheses over $\Sigma$ and $\Gamma$ respectively. We say that $H$
  \emph{reduces to} $H'$ if
  \begin{enumerate}
  \item\label{red:proves} $\KAproves[H] H'$, and
  \end{enumerate}
  there exists a map
  $r \colon \termska(\Sigma) \rightarrow \termska(\Gamma)$ such that for all
  $e\in\termska(\Sigma)$,
  \begin{enumerate}
    \setcounter{enumi}{1}
  \item\label{red:proves:r} $\KAproves[H] e = r(e)$, and
  \item\label{red:lang:r} $\sem H e\cap \Gamma^* \subseteq \sem{H'}{r(e)}$.
  \end{enumerate}
\end{defi}
We often refer to the witnessing map $r$ itself as a reduction.
Note that condition~\eqref{red:proves} vanishes when $H'=\emptyset$ is empty.
Thanks to the first two conditions, the third condition can be strengthened into an equality:
\begin{lem}
  \label{lem:red:opt}
  If $r$ is a reduction from $H$ to $H'$ as in the above definition,
  then for all expressions $e$ we have $\sem{H'}{r(e)} = \sem H e\cap \Gamma^*$.
\end{lem}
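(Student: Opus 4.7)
The plan is to show that the reverse inclusion $\sem{H'}{r(e)} \subseteq \sem{H}{e} \cap \Gamma^*$ holds, which, together with condition~\eqref{red:lang:r} of \cref{def:red}, gives the desired equality.

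For the $\Gamma^*$ side, I would argue that $\cl{H'}$ preserves membership in $\Gamma^*$: since $r(e) \in \termska(\Gamma)$, we have $\lang{r(e)} \subseteq \Gamma^*$, and every inequation in $H'$ has both sides in $\termska(\Gamma)$, so the string-rewriting characterisation of $\cl{H'}$ (noted right after \cref{def:clo}) only produces words in $\Gamma^*$. Hence $\sem{H'}{r(e)} \subseteq \Gamma^*$.

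For the other side, I would chain three steps. First, since $H' \subseteq \termska(\Gamma) \times \termska(\Gamma) \subseteq \termska(\Sigma) \times \termska(\Sigma)$, we may view $H'$ as a set of hypotheses over $\Sigma$; condition~\eqref{red:proves} then gives $\KAproves[H] H'$, so by \cref{lem:clo:incl} we have the pointwise inclusion $\cl{H'} \subseteq \cl{H}$ of closures on $\pow(\Sigma^*)$. Applying this to $\lang{r(e)}$ yields $\sem{H'}{r(e)} \subseteq \cl{H}(\lang{r(e)}) = \sem{H}{r(e)}$. Second, condition~\eqref{red:proves:r} gives $\KAproves[H] e = r(e)$, so by soundness (\cref{thm:soundness}) we have $\sem{H}{r(e)} = \sem{H}{e}$. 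Combining the two yields $\sem{H'}{r(e)} \subseteq \sem{H}{e}$, and intersecting with the previous observation that $\sem{H'}{r(e)} \subseteq \Gamma^*$ completes the argument.

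The only subtle point I anticipate is the compatibility of the two possible ambient alphabets for $\cl{H'}$ (either $\Gamma^*$ as per the typing in \cref{def:red}, or $\Sigma^*$ as needed to apply \cref{lem:clo:incl} in the same ambient lattice as $\cl{H}$). As sketched above, this is resolved by observing that because the right-hand and left-hand sides of inequations in $H'$ live in $\Gamma^*$, the two views of $\cl{H'}$ agree on subsets of $\Gamma^*$; no transfinite induction or additional structural reasoning is required, and the rest of the argument is a direct chase through the three defining conditions of a reduction.
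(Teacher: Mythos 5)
Your proposal is correct and takes essentially the same route as the paper: by item~\ref{red:lang:r} of \cref{def:red} only the inclusion $\sem{H'}{r(e)}\subseteq\sem H e\cap\Gamma^*$ is needed, and you derive it from items~\ref{red:proves} and~\ref{red:proves:r} via \cref{lem:clo:incl} and \cref{thm:soundness}---the paper's exact ingredients, merely applied in the opposite order (you apply $\cl{H'}\subseteq\cl H$ to $\lang{r(e)}$ and then soundness, whereas the paper computes $\sem{H'}{r(e)}\subseteq\sem{H'}{\sem H{r(e)}}=\sem{H'}{\sem H e}\subseteq\sem H{\sem H e}=\sem H e$). One cosmetic remark: $\sem{H'}{r(e)}\subseteq\Gamma^*$ and the agreement of the two ambient readings of $\cl{H'}$ on languages over $\Gamma$ follow directly from the typing in \cref{def:clo} (the contexts $u,v$ and the languages $\lang e$, $\lang f$ all live over $\Gamma$ there), so you need not invoke the string-rewriting characterisation, which the paper states only for hypotheses whose members are words; the paper itself leaves both of these points implicit.
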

\begin{proof}
  By item~\ref{red:lang:r}, it suffices to show $\sem{H'}{r(e)}\subseteq \sem H e$. We have
  \begin{align*}
    \sem{H'}{r(e)}&\subseteq\sem{H'}{\sem{H}{r(e)}}\\
    \tag{item~\ref{red:proves:r} with \cref{thm:soundness}}
                  &=\sem{H'}{\sem{H}{e}}\\
    \tag*{(item~\ref{red:proves} with \cref{lem:clo:incl})~\qedhere}
                  &\subseteq \sem{H}{\sem{H}{e}} = \sem{H}{e} 
  \end{align*}
\end{proof}
Generalising \cref{ex:ab0}, we get the following key property of reductions:
\begin{thm}
  \label{thm:red}
  Suppose $H$ reduces to $H'$. If $\KA_{H'}$ is complete, then so is $\KA_H$.
\end{thm}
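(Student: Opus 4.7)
The plan is to chase the three conditions of \cref{def:red} to transport a completeness problem for $\KA_H$ into one for $\KA_{H'}$ via the reduction map $r$, then pull the resulting derivation back along conditions~\ref{red:proves:r} and~\ref{red:proves}.

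Concretely, fix expressions $e,f\in\termska(\Sigma)$ with $\sem H e = \sem H f$, and aim to show $\KAproves[H] e=f$. First I would apply \cref{lem:red:opt} to both sides: it gives
\[\sem{H'}{r(e)} \;=\; \sem H e\cap\Gamma^* \;=\; \sem H f\cap\Gamma^* \;=\; \sem{H'}{r(f)},\]
so that $r(e)$ and $r(f)$ have the same closed language in the $\Gamma$-world. Invoking the assumed completeness of $\KA_{H'}$ then yields $\KAproves[H'] r(e)=r(f)$.

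Next I would lift this derivation from $\KA_{H'}$ to $\KA_H$. This is where condition~\ref{red:proves} of a reduction is used: since $\KAproves[H] H'$, every hypothesis used in a $\KA_{H'}$-derivation is itself derivable in $\KA_H$, hence the entire derivation can be simulated in $\KA_H$ (formally by induction on derivations, or by appealing to the usual cut/transitivity property of provability from hypotheses). We obtain $\KAproves[H] r(e)=r(f)$. Finally, condition~\ref{red:proves:r} gives $\KAproves[H] e=r(e)$ and $\KAproves[H] f=r(f)$, and transitivity of equality in $\KA_H$ concludes $\KAproves[H] e=f$.

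The proof is essentially routine once the statement of \cref{lem:red:opt} is in hand; the only point that deserves care is the lifting of derivations from $\KA_{H'}$ to $\KA_H$, i.e., the tacit ``transitivity'' of the hypothetical provability relation $\proves[-]$. Since the notion of derivation in \cref{ssec:hyps} is the standard one with hypotheses treated as additional axioms and no substitution rule, this follows by a straightforward induction on derivations, replacing each use of an axiom $e'\leq f'\in H'$ by the $\KA_H$-derivation of the same inequality provided by condition~\ref{red:proves}. No deeper obstacle is expected.
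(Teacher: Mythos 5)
Your proof is correct and follows exactly the same chain of implications as the paper's: \cref{lem:red:opt} to transfer the semantic equality, completeness of $\KA_{H'}$ to get $\KAproves[H'] r(e)=r(f)$, item~\ref{red:proves} to lift the derivation to $\KA_H$, and item~\ref{red:proves:r} with transitivity to conclude $\KAproves[H] e=f$. Your closing remark on why the lifting step is legitimate (simulating $H'$-axioms by their $\KA_H$-derivations, by induction on derivations) is left implicit in the paper but is exactly the right justification.
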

\begin{proof}
Let $r$ be the map for the reduction from $H$ to $H'$.
For all $e,f \in \termska(\Sigma)$,
\begin{align*}
  &&\sem H e&=\sem H f \\&\Rightarrow \tag{\cref{lem:red:opt}}
  &\sem{H'}{r(e)}&=\sem{H'}{r(f)} \\&\Rightarrow\tag{completeness of $\KA_{H'}$}
  &\KAproves[H'] r(e)&=r(f) \\&\Rightarrow \tag{$r$ a reduction, item~\ref{red:proves}}
  &\KAproves[H] r(e) &= r(f) \\&\Rightarrow \tag*{($r$ a reduction, item~\ref{red:proves:r})\qedhere}
  &\KAproves[H] e &= f
\end{align*}
\end{proof}
An important case is when $H'=\emptyset$: given a reduction from $H$ to $\emptyset$,
\cref{thm:red} gives completeness
of $\KA_H$, by completeness of $\KA$. Such reductions to the empty set are what we ultimately aim for.
However, in order to decompose reductions into small and elementary pieces, we need
the extra generality allowed by our definition (e.g., so that reductions can be composed---\cref{lem:red:comp}).

While we focus on completeness in this paper, note that reductions can also be used to
prove decidability:
\begin{thm}
  \label{thm:dec}
  If $\KA_{H'}$ is complete and decidable, and $H$ reduces
  to $H'$ via a computable function $r$, then $\KA_H$ is decidable.
\end{thm}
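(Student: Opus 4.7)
The plan is to reduce the decision problem for $\KA_H$-equivalence to the decision problem for $\KA_{H'}$-equivalence via the map $r$. Specifically, I would establish the biconditional
\begin{align*}
\KAproves[H] e = f \quad\Longleftrightarrow\quad \KAproves[H'] r(e) = r(f)\,,
\end{align*}
and then the decision procedure for $\KA_H$ reads: on input $(e,f)$, compute $r(e)$ and $r(f)$ (possible since $r$ is assumed computable), and invoke the decision procedure for $\KA_{H'}$ on the pair $(r(e),r(f))$.

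For the forward direction, suppose $\KAproves[H] e = f$. Soundness (\cref{thm:soundness}) gives $\sem H e = \sem H f$; intersecting both sides with $\Gamma^*$ and applying \cref{lem:red:opt} yields $\sem{H'}{r(e)} = \sem{H'}{r(f)}$, and completeness of $\KA_{H'}$ gives $\KAproves[H'] r(e) = r(f)$. For the backward direction, suppose $\KAproves[H'] r(e) = r(f)$. By condition~\ref{red:proves} of \cref{def:red}, every hypothesis of $H'$ is $\KA_H$-derivable, so any $\KA_{H'}$-derivation can be rerouted into a $\KA_H$-derivation, giving $\KAproves[H] r(e) = r(f)$; combining this with condition~\ref{red:proves:r}, which yields $\KAproves[H] e = r(e)$ and $\KAproves[H] r(f) = f$, transitivity concludes $\KAproves[H] e = f$.

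In fact, the biconditional above is already implicit in the chain of implications used to prove \cref{thm:red}, so essentially no new mathematical work is required; the only genuinely new ingredient is the hypothesis that $r$ is computable, which is exactly what is needed to turn that chain of reductions into an effective procedure. For this reason I do not anticipate any serious obstacle—the proof is just a matter of recording that the reduction witnessed by $r$ is effective whenever $r$ itself is.
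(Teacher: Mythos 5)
Your proposal is correct and is essentially the paper's own proof: the paper likewise establishes the biconditional $\KAproves[H] e=f \Leftrightarrow \KAproves[H'] r(e)=r(f)$ by combining soundness (\cref{thm:soundness}) with the chain of implications in the proof of \cref{thm:red}, and then appeals to computability of $r$. Your write-up merely makes explicit the steps (via \cref{lem:red:opt} and conditions~\ref{red:proves} and~\ref{red:proves:r} of \cref{def:red}) that the paper leaves to inspection.
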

\begin{proof}
  By soundness of $\KA_H$ (\cref{thm:soundness}) and by inspecting the above proof of \cref{thm:red}, we see that $\KAproves[H] e=f$ iff $\KAproves[H'] r(e)=r(f)$.
\end{proof}

\begin{exa}
  \label{ex:katop}
  We consider KA together with a global ``top element'' $\ttop$ and the axiom $e \leq \ttop$. To make this precise in Kleene algebra with hypotheses, we assume an alphabet $\Sigma$ with $\ttop \in \Sigma$, and take the set of hypotheses $H_\ttop = \set{e \leq \ttop \mid e \in \termska(\Sigma)}$.

  We claim that $H$ reduces to $\emptyset$. The first condition is void since we target the empty set.
  For the two other conditions, define the unique homomorphism
  $r \colon \termska(\Sigma) \rightarrow \termska(\Sigma)$
  such that $r(\ttop) = \Sigma^*$ (where we view $\Sigma$ as the sum $a_1+...+a_n$ of all the letters in $\Sigma$) and $r(a) = a$ for $a \in\Sigma\setminus\set\ttop$.  

  Thanks to the hypotheses in $H$, we have $\KAproves[H] \ttop=\Sigma^*$, from which we easily deduce  the second condition ($\KAproves[H] e=r(e)$ for all expressions $e$), by induction.
  Concerning the third condition, notice that the restriction to words over the smaller alphabet vanishes since the alphabet remains the same in that example.
  Then it suffices to observe that $\clH\ttop(L)$ contains those words obtained from a word $w \in L$ by replacing occurrences of $\ttop$ in $w$ by arbitrary words in $\Sigma^*$.

  This gives a first proof that $\KA_{H_\ttop}$ is complete, which we will revisit in \cref{ex:katop:revisited}.
  Note that this implies completeness w.r.t.\ validity of equations in all language models,
  where $\ttop$ is interpreted as the largest language: indeed, the
  closed semantics $\sem{H_\ttop}-$ is generated by such a model.
  \qedhere
\end{exa}

\begin{rem}
  Reductions do not always exist.
  In \cref{ex:comm:ka}, we discussed commutative KA as an instance of Kleene algebra with hypotheses $H$.
  While $\KA_C$ is complete in that case, there is no reduction from $C$ to $\emptyset$, as
  $\cl C$ does not preserve regularity. Indeed, $\sem C {(ab)^*} = \set{w \mid |w|_a = |w|_b}$ which is
  not regular. The completeness proof in~\cite{Conway71,Brunet19} is self-contained, and does not rely
  on completeness of KA.
\end{rem}

\section{Tools for analysing closures and constructing reductions}
\label{sec:tools}

To deal with concrete examples we still need tools to construct reductions.
We provide such tools in this section.
First, we give simpler conditions to obtain reductions in the special case where the underlying function is a homomorphism (\cref{ssec:hred}).
Second, we give a general technique to construct reductions via finite automata (\cref{ssec:automata:red}).
Third, we establish a few basic reductions for commonly encountered sets of hypotheses (\cref{ssec:basic:red}).
And fourth, we provide lemmas to compose reductions (\cref{ssec:compo}).

Before doing so, we mention elementary properties of the functions $\cl H$ associated to sets $H$ of hypotheses. Those properties will be useful first to obtain the aforementioned tools, and then later to deal with concrete examples.

\subsection{On closures w.r.t. sets of hypotheses}
\label{ssec:clo:props}

Given a set $H$ of hypotheses, the function $\cl H$ (\cref{def:clo}) actually is the least closure containing the following function:
\begin{align*}
  H\colon \pow(\Sigma^*) &\to \pow(\Sigma^*)\\
  L &\mapsto \bigcup\set{ u\lang e v \mid e\leq f\in H,~u\lang f v\subseteq L}
\end{align*}
This function intuitively computes the words which are reachable from $L$ in exactly one step. In particular, when $H$ consists of inequations between words, we have $u\in H(L)$ iff $u\leftsquigarrow_H v$ for some $v\in L$. (Note that we slightly abuse notation by writing $H$ both for a set of hypotheses and for the associated function.)

Thanks to the above characterisation, many properties of $\cl H$ can be obtained by analysing the simpler function $H$. To this end, we actually develop a general theory of least closures in \cref{app:closures}. We will state the relevant lemmas in the main text, delegating the corresponding proofs to that appendix.

\medskip

Let us call a function between complete lattices \emph{linear} when it preserves all joins, and \emph{affine} when it preserves all non-empty joins (see \cref{app:closures}).
Not all sets of hypotheses yield linear (or even affine) functions: consider for instance $\set{1\leq a+b}$.
However, an important class of hypotheses do:
\begin{lem}
  \label{lem:hyp:add}
  If $H$ is a set of hypotheses whose right-hand sides are words, then the functions $H$ and $\cl H$ are linear.
\end{lem}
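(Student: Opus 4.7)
The plan is to split the claim into two steps: first verify linearity of the one-step function $H\colon\pow(\Sigma^*)\to\pow(\Sigma^*)$ directly from its definition, and then lift linearity to the least closure $\cl H$ via a general lemma from \cref{app:closures}.

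For the first step, fix a family $\{L_i\}_{i\in I}$ of languages. Because the right-hand side $f$ of every hypothesis $e\leq f\in H$ is a word, the language $\lang f$ is a singleton $\{w_f\}$, and consequently $u\lang f v = \{uw_f v\}$ is a singleton as well. Hence the side-condition $u\lang f v \subseteq L$ in the definition of $H(L)$ simply expresses that $uw_f v\in L$. This reduces $H(L)$ to
\[
  H(L) \;=\; \bigcup\bigl\{\,u\lang e v \,\bigm|\, e\leq f\in H,\ uw_f v\in L\,\bigr\}.
\]
Membership of a single element in a union of languages distributes over the union, so
\[
  H\Bigl(\bigcup_{i\in I} L_i\Bigr) \;=\; \bigcup_{i\in I} H(L_i),
\]
including the empty join: $H(\emptyset)=\emptyset$. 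This is exactly linearity.

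For the second step, I would appeal to a general lemma (to be stated and proved in \cref{app:closures}) saying that the least closure above a linear endofunction on a complete lattice is itself linear. The intuition is that the least closure can be built by iterating $\id\cup H$ (transfinitely, if necessary), and linearity is preserved both by the identity, by composition of linear functions, and by arbitrary pointwise joins, so it is preserved by the whole iteration. Combining this with the first step yields linearity of $\cl H$.

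The main obstacle is only notational: one must make sure that the general closure-theoretic lemma used in the second step is stated at the right level of generality, since $\cl H$ is defined concretely by a smallest-set condition rather than as an explicit fixed-point iteration. Once this bridge is provided by \cref{app:closures}, the proof reduces to the one-line singleton computation above.
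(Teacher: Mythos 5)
Your proof is correct and takes essentially the same route as the paper: the paper likewise observes that word right-hand sides make $u\lang f v$ a singleton, simplifies the one-step function to $H(L)=\bigcup\set{u\lang e v \mid e\leq w\in H,~uwv\in L}$, distributes this over arbitrary unions, and then lifts linearity to $\cl H$ by citing the appendix (\cref{lem:clo:add}, whose proof handles the linear case via the upper-adjoint characterisation of linear functions). The only divergence is your sketched justification of the lifting step by transfinite iteration of $\id\cup H$ — that argument is workable, but the paper's appendix deliberately proves the lemma through Galois connections (a linear map has an upper adjoint, and $s^\star$ inherits one as the greatest coclosure below it) precisely to avoid transfinite induction; since you invoke the lemma only as a black box, this does not affect the correctness of your proof.
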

\begin{proof}
  When $H$ consists of hypotheses of the shape $e\leq w$ with $w$ being a word, the definition of the function $H$ simplifies to $H(L)=\bigcup\set{u\lang e v \mid e\leq w\in H,~uwv\in L}$.
  We deduce
  \begin{align*}
    H\paren{\bigcup_{i\in I}L_i}
    &= \bigcup\{u\lang e v \mid e\leq w\in H,~uwv\in \bigcup_{i\in I}L_i\}\\
    &= \bigcup\set{u\lang e v \mid e\leq w\in H,~i\in I,~uwv\in L_i}\\
    &= \bigcup_{i\in I}\bigcup\set{u\lang e v \mid e\leq w\in H,~uwv\in L_i}
     = \bigcup_{i\in I}H(L_i)
  \end{align*}
  Thus $H$ is linear, and we deduce that so is $\cl H$ by \cref{lem:clo:add}.
\end{proof}

\subsection{Homomorphic reductions}
\label{ssec:hred}

The following result from~\cite{KappeB0WZ20} (cf. Remark~\ref{rem:red:ckah})
simplifies the conditions of a reduction, assuming the underlying map $r$
is a homomorphism.
\begin{prop}
  \label{prop:hred}
  Assume $\Gamma\subseteq\Sigma$ and let $H$, $H'$ be sets of
  hypotheses over $\Sigma$ and $\Gamma$ respectively.
  If
  \begin{enumerate}
  \item \label{hred:proves} $\KAproves[H] H'$,
  \end{enumerate}
  and there exists a homomorphism
  $r \colon \termska(\Sigma) \rightarrow \termska(\Gamma)$ such that:
  \begin{enumerate}
    \setcounter{enumi}{1}
  \item \label{hred:proves:r}
    for all $a \in \Sigma$, we have $\KAproves[H] a = r(a)$,
  \item \label{hred:idem}
    for all $a \in \Gamma$, we have $\KAproves a \leq r(a)$, and
  \item \label{hred:hyps}
    for all $e\leq f \in H$, we have $\KAproves[H'] r(e) \leq r(f)$,
  \end{enumerate}
  then $H$ \emph{reduces to} $H'$, with $r$ being the witness of the reduction.
\end{prop}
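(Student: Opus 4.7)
The goal is to verify the three conditions of \cref{def:red}. Condition~\ref{red:proves} is exactly hypothesis~\ref{hred:proves}, so it suffices to establish the remaining two.

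For condition~\ref{red:proves:r}, namely $\KAproves[H] e = r(e)$ for all $e\in\termska(\Sigma)$, my plan is a straightforward induction on the structure of $e$. The base case for letters $a \in \Sigma$ is precisely hypothesis~\ref{hred:proves:r}; the constants $0$ and $1$ are handled by the fact that $r$ is a homomorphism; and the inductive cases follow because KA-derivable equalities (in the presence of $H$) are preserved by the KA operations, which $r$ commutes with.

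The main work is condition~\ref{red:lang:r}: $\sem H e \cap \Gamma^* \subseteq \sem{H'}{r(e)}$. The strategy is to lift $r$ to languages via $\tilde r \colon \pow(\Sigma^*) \to \pow(\Gamma^*)$, defined by $\tilde r(L) \eqdef \bigcup_{w \in L} \lang{r(w)}$, and to establish the pointwise inclusion
\begin{align*}
  \tilde r(\cl H(L)) \subseteq \cl{H'}(\tilde r(L))\,.
\end{align*}
A simple homomorphism calculation gives $\tilde r(\lang e) = \lang{r(e)}$, and hypothesis~\ref{hred:idem} together with soundness of KA gives $w \in \lang{r(w)}$ for every $w \in \Gamma^*$. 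These two facts immediately yield condition~\ref{red:lang:r}: for $w \in \sem H e \cap \Gamma^*$, one has $w \in \lang{r(w)} \subseteq \tilde r(\cl H(\lang e)) \subseteq \cl{H'}(\lang{r(e)}) = \sem{H'}{r(e)}$.

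To prove the key inclusion $\tilde r(\cl H(L)) \subseteq \cl{H'}(\tilde r(L))$, I would exploit the minimality of $\cl H(L)$ by showing that the set $S \eqdef \set{w \in \Sigma^* \mid \lang{r(w)} \subseteq \cl{H'}(\tilde r(L))}$ contains $L$ and is closed under $H$-rewriting. Containment of $L$ is immediate, since for $w\in L$ we have $\lang{r(w)}\subseteq \tilde r(L)\subseteq \cl{H'}(\tilde r(L))$. The main obstacle is closure of $S$ under $H$: given $e\leq f\in H$ and $u,v\in\Sigma^*$ with $u\lang f v\subseteq S$, one must show $u\lang e v\subseteq S$. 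Unpacking the hypothesis and using that $r$ is a homomorphism yields $\lang{r(u)}\lang{r(f)}\lang{r(v)}\subseteq \cl{H'}(\tilde r(L))$, while hypothesis~\ref{hred:hyps} combined with \cref{thm:soundness} gives $\lang{r(e)}\subseteq \cl{H'}(\lang{r(f)})$. Bridging these requires the auxiliary fact that $\cl{H'}$ commutes with concatenation on either side, i.e., $U\cl{H'}(M)V\subseteq \cl{H'}(UMV)$ for all $U,V,M\subseteq \Gamma^*$; I would prove this in turn by a minimality argument, showing that $\{w \mid uwv\in \cl{H'}(UMV)\text{ for all }u\in U,v\in V\}$ contains $M$ and is closed under $H'$-rewriting. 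Combining these ingredients and using idempotence of $\cl{H'}$ closes the calculation and completes the induction.
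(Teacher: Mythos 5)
Your proof is correct, and while it shares the paper's overall skeleton---establishing condition~(2) of \cref{def:red} by induction on terms, lifting $r$ to languages via $\dot r$ (your $\tilde r$), and reducing condition~(3) to the inclusion $\dot r\circ\cl H\subseteq\cl{H'}\circ\dot r$---it proves that key inclusion by a genuinely different route. The paper first establishes only the \emph{one-step} inclusion $\dot r\circ H\subseteq \cl{H'}\circ\dot r$, handling the context $u,v$ syntactically: from the assumption $\KAproves[H'] r(e)\leq r(f)$ it derives $\KAproves[H'] r(u)r(e)r(v)\leq r(u)r(f)r(v)$ by monotonicity inside the proof system and then applies \cref{thm:soundness} once; it then lifts this one-step inclusion to the closure via the abstract lattice-theoretic \cref{prop:clo:iter:r} (proved in \cref{app:closures} using upper adjoints), exploiting linearity of $\dot r$. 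You instead argue by direct fixpoint induction on $\cl H(L)$ (your set $S$), and you handle the context semantically, via the contextuality property $U\cdot\cl{H'}(M)\cdot V\subseteq\cl{H'}(U\cdot M\cdot V)$, which you prove by a second fixpoint induction; this auxiliary fact is precisely the paper's \cref{lem:con}, there obtained from the same abstract toolbox. Both arguments are sound: your induction principle is exactly the minimality clause of \cref{def:clo} (equivalently, the induction rule for the least closure $\cl H$; note that sets satisfying the closure condition are stable under intersection, so minimality indeed yields the principle you invoke), and your intermediate steps all check out, including the use of the identity $\tilde r(\lang f)=\lang{r(f)}$ (the paper's \cref{fact:dotr}) to unpack $u\lang f v\subseteq S$, and idempotence of $\cl{H'}$ to close the chain. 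What each approach buys: yours is elementary and self-contained, needing neither the adjoint machinery nor contextuality as a black box, at the cost of re-proving ad hoc two facts the paper extracts from its general theory; the paper's route factors the work through reusable lemmas (linearity, \cref{prop:clo:iter:r}, \cref{lem:con}) that serve many other proofs in the development, and its syntactic treatment of contexts avoids any auxiliary closure lemma at this point.
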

\noindent
Conditions \eqref{hred:idem} and \eqref{hred:hyps} are stated as derivability conditions, but this is only a matter of convenience: all we need are the corresponding inclusions of languages---closed by $H'$ for~\eqref{hred:hyps}.

We prove this proposition below; first we illustrate its usage by revisiting \cref{ex:katop}.
\begin{exa}
  \label{ex:katop:revisited}
  The reduction we defined to deal with \cref{ex:katop} is a homomorphism.
  Accordingly, we can use \cref{prop:hred} to prove more easily that it is indeed a reduction from $H_\ttop$ to the empty set, without ever having to understand the closure $\clH\ttop$:
  \begin{itemize}
  \item the first condition is void since $H'=\emptyset$;
  \item the second and third condition hold trivially for all letters different from $\ttop$, on which $r$ is the identity. For the letter $\ttop$, we have
    \begin{align*}
      \tag{because $\ttop\in\Sigma$}
      &\KAproves \ttop \leq \Sigma^*=r(\ttop)\\
      \tag{by a single application of $H_\ttop$}
      &\KAproves[H_\ttop] r(\ttop) \leq \ttop
    \end{align*}
  \item the fourth condition holds by completeness of KA:
    $\KAproves r(e)\leq \Sigma^*=r(\ttop)$ for all expressions $e$ since any language is contained in $\Sigma^*$, by definition.
    (Note that using completeness of KA is an overkill here.)\qed
  \end{itemize}
\end{exa}

We now turn to proving \cref{prop:hred}. We need the following auxiliary definition:
for a map $r \colon\termska(\Sigma)\rightarrow\termska(\Gamma)$, define
\begin{align*}
  \dot r\colon  \pow(\Sigma^*)&\to\pow(\Gamma^*)\\
  L&\mapsto \bigcup_{w\in L} \lang{r(w)}\,.
\end{align*}
(Here words $w$ are seen as regular expressions when fed to $r$.)
The function $\dot r$ is linear. When $r$ is a homomorphism, we have:
\begin{fact}
  \label{fact:dotr}
  If $r$ is a homomorphism then $\dot r$ is a homomorphism and for all expressions $e\in\termska(\Sigma)$ we have $\dot r(\lang{e})=\lang{r(e)}$.
\end{fact}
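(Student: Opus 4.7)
The plan is to split the task in two: first show that $\dot r$ is a Kleene algebra homomorphism between the language KAs $\pow(\Sigma^*)$ and $\pow(\Gamma^*)$, and then deduce the pointwise equality $\dot r(\lang e)=\lang{r(e)}$ by a short structural induction on $e$.

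For the first part, linearity of $\dot r$ (noted just before the statement) already takes care of preservation of $0$ and of arbitrary joins, hence of $+$. The unit is handled by a direct computation: $\dot r(\{1\})=\lang{r(1)}=\lang 1=\{1\}$, using $r(1)=1$ because $r$ is a homomorphism of expressions. For concatenation I would compute
\[
\dot r(L\cdot M)=\bigcup_{u\in L,\,v\in M}\lang{r(uv)}=\bigcup_{u\in L,\,v\in M}\lang{r(u)}\lang{r(v)}=\dot r(L)\cdot\dot r(M)\,,
\]
exploiting that when the words $u,v$ are viewed as expressions, $r(uv)=r(u)\cdot r(v)$, and that concatenation of languages distributes over arbitrary unions. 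For Kleene star I would unfold $L^{*}=\bigcup_{n\in\NN}L^{n}$ and chain the preservation of union and concatenation just obtained to conclude $\dot r(L^{*})=\bigcup_{n}\dot r(L)^{n}=\dot r(L)^{*}$.

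For the second part, once $\dot r$ is known to be a KA homomorphism, the identity $\dot r(\lang e)=\lang{r(e)}$ falls out by structural induction on $e$: on a letter $a$ it reduces to $\dot r(\{a\})=\lang{r(a)}$, which holds by definition of $\dot r$; on each constant and connective the inductive step is immediate because $\lang{-}$, $r$, and $\dot r$ all commute with the corresponding KA operation. Equivalently, $\dot r\circ\lang-$ and $\lang-\circ r$ are two KA homomorphisms from $\termska(\Sigma)$ to $\pow(\Gamma^*)$ that agree on letters, hence coincide by the universal property of $\termska(\Sigma)$. There is no real obstacle here; the only step that is not pure unfolding is the Kleene-star case, and it depends precisely on the linearity of $\dot r$ invoked at the start.
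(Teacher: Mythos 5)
Your proof is correct; the paper states this result as a \textsc{Fact} with no proof at all, and your argument (linearity handling $0$ and $+$, direct computations for $1$, concatenation and star, then a structural induction --- or equivalently the freeness of $\termska(\Sigma)$ --- for $\dot r(\lang e)=\lang{r(e)}$) is exactly the routine verification the paper implicitly relies on. The only cosmetic caveat is that $r(uv)=r(u)\cdot r(v)$ holds syntactically only up to how a word is bracketed (and padded with $1$'s) when viewed as an expression, but the two sides always denote the same language, which is all your computation needs.
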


\begin{proof}[Proof of \cref{prop:hred}]
  First, using that $r$ is a homomorphism, we extend by induction the assumptions (2) and (3) into:
  \begin{enumerate}[align=left]
  \item[(2')] for all $e\in\termska(\Sigma)$, $\KAproves[H] e = r(e)$
  \item[(3')] for all $e\in\termska(\Gamma)$, $\KAproves e \leq r(e)$
  \end{enumerate}
  From (3') we deduce
  \begin{enumerate}[align=left]
  \item[(3'')] for all $L\subseteq \Gamma^*$, $L\subseteq \dot r(L)$
  \end{enumerate}
  (Note the restriction to words in $\Gamma^*$.) Indeed, by soundness of KA and (3'), we have
  $\lang e\subseteq \lang{r(e)}$ for all $e\in\termska(\Gamma)$, and in particular
  $w\in \lang{r(w)}$ for all words $w\in\Gamma^*$.

  \medskip

  At this point we only have the third condition of a reduction left to show, i.e., that for all $e\in\termska(\Sigma)$, $\sem H e\cap \Gamma^* \subseteq \sem{H'}{r(e)}$.
  We proceed as follows:
  \begin{align*}
    \tag{by (3'')}
    \sem H e\cap \Gamma^*
    &\subseteq \dot r(\sem H e\cap \Gamma^*)\\
    \tag{$\dot r$ monotone}
    &\subseteq \dot r(\sem H e)\\
    \tag{$\dagger$}
    &\subseteq \cl{H'}(\dot r(\lang e))\\
    \tag{by \cref{fact:dotr}}
    &= \sem{H'}{r(e)}\,.
  \end{align*}
  It remains to show $(\dagger)~\dot r\circ \cl H \subseteq \cl{H'}\circ\dot r$.
  Since $\dot r$ is linear, \cref{prop:clo:iter:r} applies and it suffices to prove
  $(\ddagger)~\dot r\circ H \subseteq \cl{H'}\circ\dot r$.
  Let $L\subseteq \Sigma^*$. Again since $\dot r$ is linear, we have
  \begin{align*}
  \dot r(H(L))=\bigcup_{
    \begin{array}{c}
      e\leq f\in H\\u\lang f v\subseteq L
    \end{array}
    }\dot r(u\lang e v)\,.
  \end{align*}
  Accordingly, let $e\leq f\in H$ and let $u,v$ be words such that
  $u\lang f v\subseteq L$.
  By assumption~\eqref{hred:hyps}, we have a derivation $\KAproves[H'] r(e) \leq r(f)$, whence, by monotonicity, another derivation
  \begin{equation}\label{eq:rurerv}
  \KAproves[H'] r(u)r(e)r(v) \leq r(u)r(f)r(v) \, .
  \end{equation}
  We deduce:
  \begin{align*}
    \dot r (u\lang e v)
    &=\dot r (\lang{u e v})\\
    \tag{by \cref{fact:dotr}}
    &=\lang{r(uev)}\\
    \tag{$r$ homomorphism}
    &=\lang{r(u)r(e)r(v))}\\
    \tag{by \cref{thm:soundness} applied to \eqref{eq:rurerv}}
    &\subseteq\sem{H'}{r(u)r(f)r(v))}\\
    \tag{$r$ homomorphism}
    &=\sem{H'}{r(ufv)}\\
    \tag{by \cref{fact:dotr}}
    &=\cl{H'}(\dot r(\lang{u f v}))\\
    \tag{by monotonicity}
    &=\cl{H'}(\dot r(u\lang f v))\subseteq\cl{H'}(\dot r(L))\,.
  \end{align*}
  Therefore, we get $\dot r(H(L))\subseteq \cl{H'}(\dot r(L))$: we have proved $(\ddagger)$, which concludes the proof.
\end{proof}

\begin{rem}
  \label{rem:red:ckah}
  The idea to use two sets of hypotheses in Definition~\ref{def:red}
  is from~\cite{KappeB0WZ20}, where reductions are defined slightly differently: the alphabet is fixed (that is, $\Sigma = \Gamma$),
  and the last condition is instead defined as $\sem H{e}=\sem H{f}\Rightarrow \sem{H'}{r(e)}=\sem{H'}{r(f)}$. An extra notion of \emph{strong} reduction is then introduced,
  which coincides with our definition if $\Sigma=\Gamma$. By allowing a change of alphabet, we do not need
  to distinguish reductions and strong reductions.
  \cref{prop:hred} is in~\cite[Lemma 4.23]{KappeB0WZ20}, adapted here
  to the case with two alphabets (this is taken care of in \emph{loc.\ cit.} by assuming
  $\cl{H'}$ preserves languages over $\Gamma$).
  \qedhere
\end{rem}

\subsection{Automata-based reductions}
\label{ssec:automata:red}

Another recipe to construct reductions consists in using finite automata (see, e.g.,\ \cite{EB95,KozenM14,dkpp:fossacs19:kah}).

Suppose, for instance, that we want to build a reduction from $H\eqdef\set{a\leq bc}$ to the empty set.
Given an automaton for the language $L$, an automaton for the language $\cl{H}(L)$ is easily obtained by adding ``shortcuts'': we add an $a$-transition from state $i$ to state $j$ whenever there is a $bc$-labelled path from $i$ to $j$.
Using Kleene's theorem---the languages denoted by regular expressions are the languages accepted by finite automata---we define a function $r$ on regular expressions accordingly: given an expression $e$, compute an automaton for its language, add shortcuts as above, and define $r(e)$ to be any expression for the resulting automaton.
The first requirement of a reduction holds trivially since we target the empty set; the third one holds by construction; and since $\lang e\subseteq \lang{r(e)}$ (also by construction), we have $\KAproves e \leq r(e)$ by completeness of KA (\cref{thm:ka}). Therefore, it only remains to prove $\KAproves[H] r(e) \leq e$. This last step is the most delicate one; it can be obtained using the following technique, inspired from~\cite[Lemma~32]{ddp:lpar18:lefthanded}.

\begin{defi}
  \label{def:compat}
  Let $H$ be a set of hypotheses.
  Given a (non-deterministic, finite) automaton $\A$, a labelling function $\f$ from states to regular expressions is \emph{compatible (with $\A$)} if $\KAproves[H] 1\leq \f(i)$ whenever $i$ is an accepting state, and $\KAproves[H] a\cdot \f(j)\leq \f(i)$ whenever there is an $a$-transition from $i$ to $j$ (for $a$ a letter, or $1$ in case of an epsilon transition).
\end{defi}
Any function associating to each state a regular expression for its language in the automaton is compatible (by completeness of KA); we call such labellings \emph{canonical}.
\begin{prop}
  \label{prop:compat}
  Let $H$ be a set of hypotheses, let $\A$ be an automaton, let $\br$ be a canonical labelling and let $\f$ be a compatible labelling. For all states $i$, we have $\KAproves[H] \br(i)\leq \f(i)$.
\end{prop}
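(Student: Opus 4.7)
The plan is to pass through the matrix formulation of Kleene algebra. Let $Q$ be the finite set of states of $\A$, and index all vectors and matrices by $Q$. Define the transition matrix $M$ over $\termska(\Sigma)$ by setting $M_{ij}$ to be the sum of the labels $a$ such that $i \xrightarrow{a} j$ in $\A$, with $\epsilon$-transitions contributing $1$; and define the acceptance vector $\vec b$ by $\vec b_i = 1$ if $i$ is accepting, $\vec b_i = 0$ otherwise. View $\f$ as the $Q$-vector $\vec\f$ with $\vec\f_i = \f(i)$.

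Unpacking \cref{def:compat}, compatibility of $\f$ gives $\KAproves[H] a \cdot \f(j) \leq \f(i)$ for every transition $i \xrightarrow{a} j$, and $\KAproves[H] 1 \leq \f(i)$ for every accepting state $i$. Summing the first inequality over the transitions from $i$ to a given $j$, then over $j$, and adding the accepting-state contribution, yields the componentwise matrix inequality $\KAproves[H] M \vec\f + \vec b \leq \vec\f$. The matrix form of the KA left-star induction rule---namely, $X + M Y \leq Y \Rightarrow M^* X \leq Y$---is derivable from the scalar KA axioms (via Kozen's $2 \times 2$-block construction of $M^*$ by induction on the size of $M$), and therefore remains available in $\KA_H$, since hypotheses are just extra axioms in scope throughout Kozen's derivation. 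Applying it yields $\KAproves[H] (M^* \vec b)_i \leq \f(i)$ for every $i \in Q$.

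To bridge back to $\br$, I invoke Kleene's theorem: $(M^* \vec b)_i$ denotes precisely the language accepted by $\A$ starting from $i$ (the same matrix construction is one standard proof of Kleene's theorem). Since $\br$ is canonical, $\lang{\br(i)}$ equals that same language, so $\lang{\br(i)} = \lang{(M^* \vec b)_i}$, and by completeness of KA (\cref{thm:ka}) we get $\KAproves \br(i) \leq (M^* \vec b)_i$. Composing this derivation with the one produced by matrix induction yields $\KAproves[H] \br(i) \leq \f(i)$, as required. The one delicate point is the matrix star induction under hypotheses; it presents no genuine difficulty because every step of Kozen's derivation uses only scalar KA axioms, which remain in scope when $H$ is added.
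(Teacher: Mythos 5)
Your proof is correct and follows essentially the same route as the paper's: both write $\A$ matricially, restate compatibility of $\f$ as the vector inequality $M\cdot\f + v\leq \f$ (with $v$ the acceptance vector), apply left star induction in the Kleene algebra of matrices to get $M^*\cdot v\leq\f$ under the hypotheses, and use completeness of KA to identify the canonical labelling $\br$ with $M^*\cdot v$. Your added justification that matrix star induction remains derivable when $H$ is in scope merely makes explicit a step the paper leaves implicit in the phrase ``in the Kleene algebra of matrices.''
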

\begin{proof}
  Thanks to completeness of KA, this is a direct consequence of~\cite[Lemma~32]{ddp:lpar18:lefthanded} (this lemma only gives the statement for a specific canonical labelling).
  
  A more direct proof can be obtained as follows, using the matricial representation of automata~\cite{Kozen91}.
  Write the automaton $\A$ matricially as $\tuple{u,M,v}$, where $u$ is the row vector of initial states, $M$ is the square matrix of transitions, and $v$ is the column vector of accepting states. Accordingly, see the labellings $\br$ and $\f$ as column vectors. 
  Recall that an expression for the language of $\A$ can be obtained by computing the product $u\cdot M^*\cdot v$ in the Kleene algebra of matrices.
  We have $\KAproves \br=M^*\cdot v$ by completeness of KA, and the compatibility of $\f$ can be restated as $\KAproves[H] v\leq \f$ and $\KAproves[H] M\cdot \f\leq \f$.
  By star induction on the left, we deduce $\KAproves[H] \br=M^*\cdot v\leq \f$.  
\end{proof}

This proposition makes it possible to finish our example with $H\eqdef\set{a\leq bc}$.
Let $e$ be a regular expression, let $\A$ be an automaton for its language, and let $\A'$ be the automaton obtained after adding shortcuts. Let $\e$ and $\br$ be canonical labellings of $\A$ and $\A'$, respectively. We have that $\e$ is compatible with $\A$, but also with $\A'$. Indeed, each new transition $i\xrightarrow{a}j$ in $\A'$ comes from two transitions $i\xrightarrow{b}k\xrightarrow{c}j$ in $\A$, and we have
\begin{align*}
  \tag{hypothesis in $H$}
  \KAproves[H] a\cdot \e(j)&\leq b\cdot c\cdot \e(j)\\ 
  \tag{$\e$ is compatible with $\A$}
  &\leq b\cdot \e(k)\\ 
  \tag{$\e$ is compatible with $\A$}
  &\leq \e(i)
\end{align*}
Therefore, by \cref{prop:compat}, $\KAproves[H] \br\leq \e$. It follows that
$\KAproves[H] r(e)=u\cdot \br\leq u\cdot \e = e$ (using the row vector $u$ of initial states of $\A$).

\subsection{Basic reductions}
\label{ssec:basic:red}

The following result collects several sets of hypotheses for which we have reductions to the empty set.
These mostly come from the literature; we establish them using the tools from the previous two sections. They form basic building blocks used in the more complex reductions that we present in the examples below.

\begin{lem}
  \label{lem:reds}
  The following sets of hypotheses over $\Sigma$ reduce to the empty set.
  \begin{enumerate}[(i)]
  \item\label{it:red:aw} $\set{u_i \leq w_i \mid i \in I} $
    with $u_i,w_i\in \Sigma^*$ and $|u_i|\leq 1$ for all $i\in I$
  \item\label{it:red:1S}
    $\set{1 = \sum_{a \in S_i}a\mid i\in I}$ with each $S_i \subseteq \Sigma$ finite
  \item\label{it:red:e0} $\set{e \leq 0}$ for $e \in \termska(\Sigma)$
  \item\label{it:red:ea} $\set{ea \leq a}$ and $\set{ae \leq a}$ for $a \in \Sigma$, $e \in \termska(\Sigma \setminus \set a)$
  \item\label{it:red:aa} $\set{aa \leq a}$ for $a \in \Sigma$
  \end{enumerate}
\end{lem}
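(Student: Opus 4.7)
The plan is to supply, case by case, a map $r : \termska(\Sigma) \to \termska(\Sigma)$ witnessing a reduction to the empty set, and to verify the three conditions of \cref{def:red}; condition~\ref{red:proves} is vacuous throughout since $H' = \emptyset$.

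Cases (iii)--(v) will be handled by explicit formulas. For (iii), take $r(f) = f + \Sigma^* e \Sigma^*$ (with $\Sigma$ abbreviating the sum of all letters), generalising \cref{ex:ab0}: the closure under $\{e\leq 0\}$ satisfies $\cl H(L) = L \cup \Sigma^* \lang e \Sigma^*$, which gives condition~\ref{red:lang:r}, while the extra summand is absorbed under the hypothesis $e \leq 0$, which gives condition~\ref{red:proves:r}. For (iv), take the homomorphism with $r(a) = e^* a$ (respectively $r(a) = a e^*$ for the symmetric hypothesis) and $r(b) = b$ for $b \neq a$; \cref{prop:hred} reduces the verification to deriving $\KAproves[H] e^* a = a$, which is an immediate consequence of $ea \leq a$ together with the left star-induction axiom. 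For (v), the analogous homomorphism $r(a) = a a^*$ works the same way.

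For case (i), the plan is to apply the automata-based technique of \cref{ssec:automata:red}. Given an expression $e$, build a finite automaton $\A$ for $\lang e$ and iteratively saturate it: for each hypothesis $u_i \leq w_i$ and each pair of states $(p,q)$ connected by a $w_i$-labelled path, add a transition $p \xrightarrow{u_i} q$ (interpreting $u_i = 1$ as an $\epsilon$-transition). Only finitely many transitions are possible over a fixed finite alphabet and state set, so the saturation terminates; let $r(e)$ be a regular expression for the saturated automaton $\A'$ via Kleene's theorem. Condition~\ref{red:proves:r} is then proved in two directions: $\KAproves e \leq r(e)$ follows from completeness of KA (\cref{thm:ka}) since $\lang e \subseteq \lang{r(e)}$; and $\KAproves[H] r(e) \leq e$ follows from \cref{prop:compat}, taking the canonical labelling $\f$ of $\A$ as a compatible labelling of $\A'$, since each newly added shortcut $p \xrightarrow{u_i} q$ is compatible by $\KAproves[H] u_i\cdot \f(q) \leq w_i \cdot \f(q) \leq \f(p)$ (the left step by the hypothesis, the right by iterated compatibility), with an induction on the saturation stages handling shortcuts built from previous shortcuts. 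Condition~\ref{red:lang:r} follows because $\lang{r(e)}$ contains $\lang e$ and is $H$-closed by construction.

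The main obstacle is case (ii), where the right-hand sides $\sum_{a \in S_i} a$ are no longer single words, so case (i) does not apply verbatim. The plan is to decompose each equation $1 = \sum_{a \in S_i} a$ into $\sum_{a \in S_i} a \leq 1$ (inter-derivable in KA with the family of word inequations $\{a \leq 1 \mid a \in S_i\}$, an instance of case~(i)) and the genuinely new part $1 \leq \sum_{a \in S_i} a$. For the latter, I plan to generalise the automaton construction: working with a deterministic automaton, iteratively augment each state $p$'s accepted language with the intersection $\bigcap_{a \in S_i} L(\delta(p,a))$ of the languages of its $S_i$-successors (equivalently, add suitable $\epsilon$-shortcuts). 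Termination is ensured because the languages produced remain in the finite distributive lattice generated by the original state-languages under the relevant Boolean operations; soundness of each new shortcut in $\KA_H$ rests on one application of $1 \leq \sum_{a \in S_i} a$ together with \cref{prop:compat}. Combining the two saturation phases yields a single reduction $r$ for which the three conditions of \cref{def:red} are verified exactly as in case (i). This last argument, and in particular the control of termination in the face of non-word right-hand sides, is where I expect most of the technical care to be required.
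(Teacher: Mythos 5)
Your treatments of cases (i), (iii), (iv) and (v) coincide with the paper's: the same saturation-plus-\cref{prop:compat} argument for word inequations (the paper defers the details to Kozen and Mamouras but sketches exactly your construction, including termination because no states are added), Cohen's $r(f)=f+\Sigma^*e\Sigma^*$ for $e\leq 0$, and the homomorphisms $r(a)=e^*a$ (via \cref{prop:hred} and left star induction, using $r(e)=e$ since $e$ avoids $a$) and $r(a)=aa^*$.

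Case (ii), however, contains a genuine gap, and it is precisely the trap the paper flags in \cref{rem:wrong:red}. You propose to handle the ``genuinely new part'' $1\leq\sum_{a\in S_i}a$ by a self-contained saturation whose conditions are then ``verified exactly as in case (i)'', i.e., with $H$-closedness of $\lang{r(e)}$ holding by construction. That step fails: the deterministic powerset construction with conjunction-detecting shortcuts computes only the \emph{one-parallel-step} closure $P_2(L)=\set{u_0u_1\dots u_n\mid u_0\set{a,b}u_1\dots\set{a,b}u_n\subseteq L}$, and $P_2(L)\subsetneq\cl{H_2}(L)$ in general for $H_2=\set{1\leq a+b}$: with $L=\set{aa,ab,bb}$ one has $a,b\in\cl{H_2}(L)$ and hence $1\in\cl{H_2}(L)$, but $1\notin P_2(L)$. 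The reason you cannot simply re-saturate is that the added $\epsilon$-shortcuts destroy determinism, so detecting the new conjunctions requires re-determinising over a strictly larger state space; your termination argument (that the languages stay in the finite lattice generated by the original state-languages) then breaks down, because the augmented languages leave that lattice—indeed the paper records that whether $\set{1\leq\sum_{a\in S_i}a}$ \emph{alone} reduces to the empty set is open. The paper's actual proof avoids any iteration: writing $H_1=\set{a\leq 1\mid a\in S_i}$, it first establishes $\cl{H}=\cl{H_2}\circ\cl{H_1}$, runs the case-(i) reduction $s$ for $H_1$ \emph{first}, and then proves that on $H_1$-closed languages a \emph{single} pass of the powerset construction already yields the full closure, i.e., $P_2(\sem{H_1}e)=\cl{H_2}(\sem{H_1}e)$; hence $r\circ s$ is a reduction for the whole equational set even though $r$ is not a reduction for $H_2$ alone. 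Your decomposition identifies the right two halves, but without the ordering lemma $\cl H=\cl{H_2}\cl{H_1}$ and the $P_2$-versus-$\cl{H_2}$ analysis on $H_1$-closed languages, condition~(3) of \cref{def:red} does not go through; note also that you misplace the expected difficulty in termination, whereas the real obstruction is that the saturated language is not $H_2$-closed.
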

\begin{proof}\hfill
  \begin{enumerate}[(i)]
  \item This is~\cite[Theorem~2]{KozenM14}. (The result mentions equations, but in the proof
    only the relevant inequations are used.) 
    Intuitively, the example from Section~\ref{ssec:automata:red} can easily be generalised: starting from an automaton $\A$, add $u_i$-labelled shortcuts for all $w_i$-labelled paths in $\A$ to obtain an automaton $\A'$. There is a catch, however: if some of the $u_i$ are letters appearing in some of the $w_i$, then $\A'$ is not necessarily closed: we may have to add shortcuts to $\A'$ and get a third automaton $\A''$, and so on. Fortunately, this process terminates since we never add new states to the automaton.
    
  \item This is \cite[Theorem~4]{dkpp:fossacs19:kah}.
    In the case of a single hypothesis $H=\set{1=a+b}$, the proof given there can be rephrased as follows.
    Split $H$ into two sets $H_1=\set{a\leq 1,~b\leq 1}$ and $H_2=\set{1\leq a+b}$.
    One can easily show that $\clH{}=\clH2\circ\clH1$ (in fact this is an immediate consequence of \cref{cor:nooverlap} and \cref{lem:dl:2,lem:gsos:2} below).
    By the previous item \ref{it:red:aw}, $H_1$ reduces to the empty set, say via $s$. 
    Let us try to construct a reduction $r$ from $H_2$ to the empty set.
    We use for that the automata-based technique from Section~\ref{ssec:automata:red}. A difficulty is that we need more room in the starting automaton to add shortcuts: since the right hand-side of the inequation in $H_2$ is a sum, we have to detect conjunctions.

    Given an expression $e$ and a deterministic finite automaton $\A_0$ for its language $L$, we first extend $\A_0$ into a (deterministic, finite) automaton $\A$ whose states recognise all intersections of residuals of $L$: if $\A_0$ has state-space $X$, initial state $i\in X$, transition function $\delta_0$, and accepting states $F_0\subseteq X$, then $\A$ has state-space $\pow(X)$, initial state $\set i$, transition function $\delta(a,S)\eqdef\set{\delta_0(a,x)\mid x\in S}$, and accepting states $\pow(F_0)\subseteq \pow(X)$.
    By construction, the language of a state $S$ in $\A$ is the intersection of the languages of the elements of $S$ in $\A_0$, and $\A$ contains $\A_0$ via the singleton function.

    Then we add shortcuts to $\A$ to obtain an automaton $\A'$: from every state $S$, we add an epsilon transition to the state
    \begin{align*}
      S'\eqdef \delta(a,S)\cup\delta(b,S)
    \end{align*}
    (Recall that here, the union semantically yields a conjunction.) 
    We let $r(e)$ be a regular expression for the language of $\A'$.

    Let us prove the second condition of a reduction for $r$.
    We have $\KAproves e \leq r(e)$ by completeness and construction, since the language of $\A$ is contained in that of $\A'$. For the other inequation, we use \cref{prop:compat}.
    Let $\e$ and $\br$ be canonical labellings of $\A$ and $\A'$, respectively. Since we have $\KAproves e=\e(\set i)$ and $\KAproves r(e)=\br(\set i)$, it suffices to prove that
    $\e$, which is compatible with $\A$, is also compatible with $\A'$. 
    Indeed, for each added epsilon transition from $S$ to $S'$, we have:
    \begin{align*}
      \tag{hypothesis in $H$}
      \KAproves[H] 1\cdot \e(S') &\leq (a+b)\cdot \e(S')\\
      \tag{distributivity}
      &= a\cdot \e(S') + b\cdot \e(S')\\
      \tag{KA completeness with $\lang{S'}=\lang{\delta(a,S)}\cap \lang{\delta(b,S)}$}
      &\leq a\cdot \e(\delta(a,S)) + b\cdot \e(\delta(b,S))\\
      \tag{compatibility of $\e$, along $a$ and $b$ transitions}
      &\leq \e(S)
    \end{align*}
    Now consider the third condition for a reduction and define the following function on languages:
    \begin{align*}
      P_2\colon L\mapsto\set{u_0u_1\dots u_n\mid u_0\set{a,b}u_1\dots\set{a,b}u_n\subseteq L}
    \end{align*}
    Intuitively, $P_2(L)$ is the ``one-parallel step'' closure of $L$ w.r.t. $H_2$, and the expression $r(e)$ we constructed via the automaton $\A'$ is such that:
    \begin{align*}
      \lang{r(e)}=P_2\lang e
    \end{align*}
    Unfortunately, in general, we only have $P_2(L)\subseteq \clH2(L)$.
    For instance, if $L=\set{aa,ab,bb}$, then the empty word does not belong to $P_2(L)$, although it belongs to $\clH2(L)$. Therefore, $r$ is not a reduction from $H_2$ to the empty set.
    Still, if $L$ is closed w.r.t. $H_1$ (i.e., $H_1(L)\subseteq L$), then $\clH2(L)=P_2(L)$.
    This makes it possible to conclude: $r\circ s$ is a reduction from $H$ to the empty set: for all expressions $e$, we have
    \begin{gather*}
      \KAproves[H] r(s(e))=s(e)=e,\text{ and }\\
      \sem H e=\clH2(\sem{H_1}e)=P_2(\sem{H_1}e)=P_2\lang{s(e)}=\lang{r(s(e))}\,.
    \end{gather*}
    
  \item This is basically due to~\cite{cohen94:ka:hypotheses}, but since it is phrased differently
    there we include a proof.

    Define $r \colon \termska(\Sigma) \rightarrow \termska(\Sigma)$
    by $r(f)=f+\Sigma^*\cdot e \cdot \Sigma^*$ for $f\in\termska(\Sigma)$,
    where $\Sigma^*$ is seen as the expression $(\sum_{a \in \Sigma} a)^*$.
    We claim $r$ witnesses a reduction.
    First, we have $\KAproves[H] f\leq r(f)$ trivially, and $\KAproves[H] r(f)\leq f$ follows
    from $e \leq 0 \in H$.

    Second, we have to prove that $\cl H(\lang{f})\subseteq \lang{r(f)}$.
    In fact, since the right-hand side of the considered hypothesis is $0$, the one-step function associated to $H$ is constant, and we have that for all languages $L$,
    \begin{align*}
      \cl H(L)
      = L \cup H(L)
      = L \cup \set{u \lang{e} v \mid u,v \in \Sigma^*}
      = L \cup \Sigma^*\cdot \lang{e} \cdot \Sigma^*\,.
    \end{align*}
    In particular, for $L=\lang f$, we get $\sem H f=\lang{f + \Sigma^* e \Sigma^*}=\lang{r(f)}$.
  \item Hypotheses of a similar form are studied in the setting of Kleene algebra with tests in~\cite{hardink02:kat:hypotheses}.
    Consider the case $\set{ea \leq a}$.
    Define $r$ as the unique homomorphism satisfying
$r(a) = e^* a$, and $r(b) = b$ for all $b \in \Sigma\setminus{\{a\}}$.
    We use \cref{prop:hred} to show that this witnesses a reduction.

    The first two conditions are trivial for letters $b \in \Sigma$ with $a \neq b$, since
    $r$ is the identity on those letters.
    For $a$, we have $\KAproves a \leq e^* a = r(a)$ easily from the KA axioms.
    Further, we get $\KAproves[H] r(a)=e^*a \leq a$ by $ea \leq a$ and the (left) induction
    axiom for Kleene star in $\KA$.
    Finally, we have to prove that $\KAproves r(ea) \leq r(a)$.
    We have $r(a)=e^*a$, and
    since $e$ does not contain $a$, we have $r(e) = e$. Therefore, $r(ea) = r(e)r(a) = ee^* a$,
    and the required inequality follows since $\KAproves e e^* \leq e^*$.
  \item This is a variant of the previous item, but not an instance due to the condition that $a$ cannot appear in $e$.
    The same homomorphism $r$ (satisfying $r(a) = a^+$) nevertheless gives a reduction: the first three conditions of \cref{prop:hred} are satisfied like in the previous item; for the fourth one, we have
    $\KAproves r(aa)=a^+a^+\leq a^+=r(a)$.
    \qedhere
  \end{enumerate}
\end{proof}

Note that \cref{it:red:e0} above covers finite sets of hypotheses of the form
$\set{e_i \leq 0}_{i \in I}$, as these can be encoded as the single hypothesis $\sum_{i \in I} e_i \leq 0$.

\begin{rem}
  \label{rem:wrong:red}
  We had wrongly announced in~\cite[Lemma~3.6(ii)]{prw:ramics21:mkah} a reduction to the empty set for $\set{1 \leq \sum_{a \in S_i}a\mid i\in I}$ (inequations rather than equations), attributing a proof to \cite[Proposition~6]{dkpp:fossacs19:kah}. We had forgotten the side condition in this proposition, which precisely prevents the function $r$ given in the proof sketch of item~\ref{it:red:1S} above to be a reduction. Whether there exists a reduction to the empty set for such sets of hypotheses remains open.
\end{rem}

\medskip

We conclude this subsection with a lemma allowing one to construct reductions for hypotheses of the shape $f\leq 1$ for $f$ a regular expression.
This is not always possible: consider for instance the hypothesis $P\eqdef\set{ab\leq 1}$ for two distinct letters $a,b$; then $\sem P 1$ consists of all well-parenthesised words (where $a$ means opening a parenthesis, and $b$ means closing one), which is not a regular language.
In its special case where $H$ is the empty set, \cref{lem:basic:e1} below shows a solution when we have a regular expression $f'$ above $f$, whose language is closed under $f\leq 1$, and such that $\proves[f\leq 1]f'\leq 1$. These requirements are often too strong (as in the above example), which is why we allow for a generic
target $H$. We use this lemma only for our most advanced example (NetKAT, in \cref{sec:netkat}).

\begin{lem}
  \label{lem:basic:e1}
  Let $f$ be a regular expression and let $H$ be a set of hypotheses.
  Set $H_f\eqdef \set{f\leq 1}$.
  If
  \begin{enumerate}
  \item $\cl{(H\cup H_f)}=\cl{H}\circ\clH f$
  \end{enumerate}
  and there is a regular expression $f'$ such that:
  \begin{enumerate}
    \setcounter{enumi}{1}
  \item $\proves[H,f\leq 1] f' \leq 1$
  \item $\lang f \cup H_f\lang{f'}\subseteq \lang{f'}$
  \end{enumerate}
  then $H\cup H_f$ reduces to $H$.
\end{lem}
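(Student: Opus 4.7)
The plan is to define a reduction map $r$ explicitly and verify the three conditions of \cref{def:red} directly. Set $T \eqdef (f')^*$, let $s \colon \termska(\Sigma) \to \termska(\Sigma)$ be the unique homomorphism with $s(a) = a \cdot T$ for $a \in \Sigma$, and define $r(e) \eqdef T \cdot s(e)$. Since $r$ is not itself a homomorphism (because $r(1) = T \neq 1$), we cannot appeal to \cref{prop:hred} and will check \cref{def:red} by hand.

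The first condition, $\KAproves[H \cup H_f] H$, is immediate since $H \subseteq H \cup H_f$. For the second condition, $\KAproves[H \cup H_f] e = r(e)$, assumption~(2) gives $\proves[H \cup H_f] f' \leq 1$, whence $(f')^* \leq 1^* = 1$; combined with $1 \leq (f')^*$, this yields $T \equiv 1$ under $H \cup H_f$. Then $s(a) = a \cdot T \equiv a$ on letters, a routine induction (using that $s$ is a homomorphism) gives $s(e) \equiv e$, and therefore $r(e) = T \cdot s(e) \equiv 1 \cdot e = e$.

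The third condition, $\sem{H \cup H_f}{e} \subseteq \sem H{r(e)}$, is the core of the argument. By assumption~(1), $\sem{H \cup H_f}{e} = \cl H(\clH f(\lang e))$, so by monotonicity of $\cl H$ it suffices to show $\clH f(\lang e) \subseteq \lang{r(e)}$. Since the right-hand side of the hypothesis $f \leq 1$ is a word, \cref{lem:hyp:add} makes $\clH f$ linear, so $\clH f(\lang e) = \bigcup_{w \in \lang e} \clH f(\{w\})$. For a fixed word $w = a_1 \cdots a_n \in \lang e$, we first establish by induction on the number of rewrite steps that
\[
  \clH f(\{w\}) \subseteq \clH f(\{1\}) \cdot a_1 \cdot \clH f(\{1\}) \cdots a_n \cdot \clH f(\{1\})\,,
\]
using that any $H_f$-insertion into a word of this form only modifies one of the ``gap'' factors $v_i \in \clH f(\{1\})$, producing again a word in $\clH f(\{1\})$ at that position. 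We then show $\clH f(\{1\}) \subseteq \lang T$ by verifying that $\lang T = \lang{f'}^*$ contains $1$ and is closed under $H_f$, so contains the least such superset of $\{1\}$. Combining these inclusions gives $\clH f(\{w\}) \subseteq \lang T \cdot a_1 \cdot \lang T \cdots a_n \cdot \lang T = \lang{r(w)}$; and since $w \leq e$ in KA and $s$ is a homomorphism, $r(w) = T \cdot s(w) \leq T \cdot s(e) = r(e)$, so $\lang{r(w)} \subseteq \lang{r(e)}$.

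The main obstacle is the closure of $\lang{f'}^*$ under $H_f$: this requires a short case analysis on whether an insertion site lies strictly inside one of the $\lang{f'}$-factors of a decomposition $x_1 \cdots x_m \in \lang{f'}^*$ or at the boundary between two such factors. In the interior case, closure of $\lang{f'}$ itself under $H_f$ (the second clause of assumption~(3)) keeps the modified factor in $\lang{f'}$; in the boundary case, the inserted $w \in \lang f$ already lies in $\lang{f'}$ by the first clause of assumption~(3), so the result is still in $\lang{f'}^{m+1}$. Once that point is settled, the rest of the argument is a matter of careful bookkeeping on the form of words in $\clH f(\lang e)$.
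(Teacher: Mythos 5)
Your proof is correct, and it takes a genuinely different route from the paper's, even though both hinge on the same combinatorial core. The paper constructs $r(e)$ via automata: take an automaton for $\lang e$, add $f'$-labelled self-loops on every state, and read back an expression; the provability condition $\proves[H,f\leq 1] r(e)=e$ then goes through the compatible-labelling machinery of \cref{prop:compat} (assumption~(2) is what keeps the canonical labelling compatible after adding the loops), and the semantic condition is handled in the direction opposite to yours: the paper shows that $\lang{r(e)}$ is $H_f$-closed --- by exactly the interior/boundary case analysis on insertion sites that you carry out, using the two clauses of assumption~(3) --- and then invokes assumption~(1), in the equivalent form $\clH f\circ\cl H\subseteq \cl H\circ\clH f$ of \cref{lem:dl:2}, to conclude that $\sem H{r(e)}$ is $H_f$-closed and hence contains $\sem{(H\cup H_f)}{e}$. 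Your expression-level construction $r(e)=(f')^*\cdot s(e)$ with $s(a)=a\cdot(f')^*$ denotes exactly the same padded language as the paper's automaton (the letter-granular decomposition with $\lang{f'}^*$-blocks refines their $u_i/v_i$ decomposition and vice versa), but it makes condition~(2) of \cref{def:red} nearly free: from $\proves[H,f\leq 1]f'\leq 1$ you get $(f')^*=1$ under the hypotheses, so no automata, no Kleene's theorem, and no \cref{prop:compat} are needed --- and you are right that \cref{prop:hred} is unavailable, both because $r$ is not a homomorphism and because $s(f)\leq 1$ is not derivable under $H$ alone. The price is paid on the semantic side, where you need linearity of $\clH f$ (\cref{lem:hyp:add} applies since the right-hand side $1$ is a word) and a word-by-word gap analysis; note that your ``induction on the number of rewrite steps'' can be phrased more directly via minimality in \cref{def:clo}: the gap language $\clH f(\set 1)\,a_1\,\clH f(\set 1)\cdots a_n\,\clH f(\set 1)$ contains $w$ and satisfies the one-step closure property, hence contains $\clH f(\set w)$ (though linearity does also justify the step-counting formulation). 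In short, your argument is more elementary and self-contained, while the paper's deliberately reuses the general automata toolbox of \cref{ssec:automata:red} that it is showcasing; both uses of assumption~(1) are mirror images of each other.
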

\begin{proof}
  Let $H_1\eqdef H\cup H_f$. We trivially have $\proves[H_1]H$.
  We use the automata technique from \cref{ssec:automata:red} to build the reduction: 
  given a regular expression $e$, compute an automaton for its language, add $f'$-labelled self-loops to all states, and read back a regular expression $r(e)$.

  By assumption~2, and using \cref{prop:compat} like in \cref{ssec:automata:red}, we get $\proves[H_1] r(e)\leq e$. It follows that $\proves[H_1] r(e)=e$ by KA completeness, since $\lang e\subseteq\lang{r(e)}$.
  Therefore, it only remains to show the third condition of a reduction.
  By construction, we have
  \begin{align*}
    \lang{r(e)}=\set{u_0v_0\cdots v_{n-1}u_n\mid u_0\cdots u_n\in \lang e\text{ and }\forall i, v_i\in \lang{f'}}
  \end{align*}
  We prove that this language is $H_f$-closed $(H_f\lang{r(e)}\subseteq \lang{r(e)})$. Indeed, if we insert a word of $\lang{f}$ into a word of $\lang{r(e)}$, then it has to be either into one of the $u_i$ in the above formula, or into one of the $v_i$. In both cases, we deduce by assumption~3 that the resulting word belongs to $\lang{r(e)}$.

  The third condition of a reduction is $\semH1 e \subseteq \sem H{r(e)}$.
  By general closure properties, it suffices to show that $\sem H{r(e)}$ is $H_f$-closed $(H_f(\sem H{r(e)})\subseteq \sem H{r(e)})$. In turn, since $H_f\circ \cl{H}\subseteq\cl{H}\circ\cl{H_f}$ by assumption~1, this follows from the fact that $\lang{r(e)}$ is $H_f$-closed, which we have just proved.
\end{proof}

\subsection{Composing reductions}
\label{ssec:compo}

The previous subsection gives reductions for uniform sets of equations. However,
in the examples we often start with a \emph{collection} of hypotheses of different shapes,
which we wish to reduce to the empty set.
Therefore, we now discuss a few techniques for combining reductions.

First, reductions can be composed:
\begin{lem}
  \label{lem:red:comp}
  Let $H_1, H_2$ and $H_3$ be sets of hypotheses.
  If $H_1$ reduces to $H_2$ and $H_2$ reduces to $H_3$ then $H_1$ reduces to $H_3$.
\end{lem}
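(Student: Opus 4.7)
The plan is to take the obvious composition $r \eqdef r_2 \circ r_1$, where $r_1\colon \termska(\Sigma_1)\to \termska(\Sigma_2)$ witnesses the reduction from $H_1$ to $H_2$ and $r_2\colon \termska(\Sigma_2)\to \termska(\Sigma_3)$ witnesses the reduction from $H_2$ to $H_3$ (with $\Sigma_3\subseteq \Sigma_2\subseteq \Sigma_1$, and $H_i$ over $\Sigma_i$), and then check the three conditions of \cref{def:red} in turn.

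First, condition~\eqref{red:proves} amounts to $\KAproves[H_1] H_3$. From the hypothesis we have $\KAproves[H_1] H_2$ and $\KAproves[H_2] H_3$; by chaining derivations (every use of a hypothesis from $H_2$ can be replaced by its $H_1$-derivation) we conclude $\KAproves[H_1] H_3$. Second, condition~\eqref{red:proves:r} requires $\KAproves[H_1] e = r_2(r_1(e))$ for every $e\in\termska(\Sigma_1)$. We have $\KAproves[H_1] e = r_1(e)$ and $\KAproves[H_2] r_1(e) = r_2(r_1(e))$; the latter derivation becomes an $H_1$-derivation using $\KAproves[H_1] H_2$ established above, and we conclude by transitivity.

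The third condition is the only one requiring a little care because of the alphabet restrictions: we need $\sem{H_1}{e} \cap \Sigma_3^* \subseteq \sem{H_3}{r_2(r_1(e))}$. The key observation is that $\Sigma_3^*\subseteq \Sigma_2^*$, so we can split the intersection. Then:
\begin{align*}
  \sem{H_1}{e}\cap \Sigma_3^*
  &= \paren{\sem{H_1}{e}\cap \Sigma_2^*}\cap \Sigma_3^*\\
  &\subseteq \sem{H_2}{r_1(e)}\cap \Sigma_3^*
  \tag{first reduction, item~\ref{red:lang:r}}\\
  &\subseteq \sem{H_3}{r_2(r_1(e))}\,.
  \tag{second reduction, item~\ref{red:lang:r}}
\end{align*}
This is essentially the only step where the definition is used substantively; no real obstacle is expected, the subtlety being simply the bookkeeping of the three alphabets and making sure to intersect with the smaller alphabet $\Sigma_3^*$ before applying the first reduction.
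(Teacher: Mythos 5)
Your proposal is correct and coincides with the paper's own proof: both take $r_2\circ r_1$, verify the three conditions of \cref{def:red} by chaining derivability for the first two, and use the same rewriting $\sem{H_1}{e}\cap\Sigma_3^* = (\sem{H_1}{e}\cap\Sigma_2^*)\cap\Sigma_3^*$ (justified by $\Sigma_3\subseteq\Sigma_2$) before applying the two reductions in turn for the third.
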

\begin{proof}
  Let $r_1 \colon \termska(\Sigma_1) \rightarrow \termska(\Sigma_2)$ and $r_2 \colon \termska(\Sigma_2) \rightarrow \termska(\Sigma_3)$ be reductions from $H_1$ to $H_2$ and from $H_2$ to $H_3$, respectively.
  We show that $r_2 \circ r_1$ is a reduction from $H_1$ to $H_3$.
  \begin{enumerate}
  \item We have $\proves[H_1]H_2$ and $\proves[H_2]H_3$, and therefore $\proves[H_1]H_3$.
  \item For $e\in\termska(\Sigma_1)$, we have $\proves[H_1] e = r_1(e)$.
    Since $r_2(e)\in\termska(\Sigma_2)$, we also have  $\proves[H_2] r_1(e) = r_2(r_1(e))$.
    Since $\proves[H_1] H_2$, we deduce $\proves[H_1] e = r_1(e) = r_2(r_1(e))$.
  \item For $e\in\termska(\Sigma_1)$, we have:
    \begin{align*}
    \sem{H_1}e \cap \Sigma_3^*
    = (\sem{H_1}e \cap \Sigma_2^*) \cap \Sigma_3^*
    \subseteq \sem{H_2}{r_1(e)} \cap \Sigma_3^*
    \subseteq \sem{H_3}{r_2(r_1(e))}
    \end{align*}
  \end{enumerate}
  using that $\Sigma_3 \subseteq \Sigma_2$ for the first equality, and then that $r_1$ and $r_2$ are reductions.
\end{proof}

Second, the identity function gives non-trivial reductions under certain assumptions:
\begin{lem}
  \label{lem:red:id}
  Let $H, H'$ be sets of hypotheses over a common alphabet.
  If $\proves[H] H'$ and $H \subseteq \cl{H'}$, then $H$ reduces to $H'$.
\end{lem}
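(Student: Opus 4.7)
The plan is to show that the identity function on $\termska(\Sigma)$ witnesses a reduction from $H$ to $H'$. Since $H$ and $H'$ share the same alphabet, we set $\Gamma = \Sigma$ in Definition~\ref{def:red}, so the intersection with $\Gamma^*$ in condition~\eqref{red:lang:r} is vacuous. Condition~\eqref{red:proves} is exactly the first assumption $\proves[H] H'$, and condition~\eqref{red:proves:r} is trivial: $\proves[H] e = e$ by reflexivity.

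It remains to establish condition~\eqref{red:lang:r}, which in our setting reduces to $\sem H e \subseteq \sem{H'} e$ for every expression $e$, that is, $\cl H \subseteq \cl{H'}$ as functions on $\pow(\Sigma^*)$. Here the assumption $H \subseteq \cl{H'}$ is to be read as pointwise inclusion of the one-step function $H$ (as defined in \cref{ssec:clo:props}) into the closure $\cl{H'}$.

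The key step is a general fact from the theory of least closures: if $g$ is any monotone function and $f$ is a closure with $g \subseteq f$, then the least closure $\cl g$ containing $g$ also satisfies $\cl g \subseteq f$. This follows directly from the minimality of $\cl g$: the closure $f$ is a closure that contains $g$, hence it must contain $\cl g$ as well. Applying this to $g = H$ and $f = \cl{H'}$ yields $\cl H \subseteq \cl{H'}$, as required. The corresponding abstract statement is developed in \cref{app:closures}.

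No real obstacle is expected; the only subtlety is ensuring that the assumption $H \subseteq \cl{H'}$ is interpreted at the level of the associated one-step functions on $\pow(\Sigma^*)$, so that the minimality characterisation of $\cl H$ as the least closure above $H$ applies cleanly.
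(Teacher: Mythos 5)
Your proof is correct and follows essentially the same route as the paper: the identity map witnesses the reduction, with the only non-trivial condition, $\cl H \subseteq \cl{H'}$, obtained from the minimality of $\cl H$ as the least closure above the one-step function $H$ (\cref{prop:least:closure}). The only cosmetic difference is that the paper's proof also invokes \cref{lem:clo:incl} to get the reverse inclusion $\cl{H'}\subseteq\cl H$ and hence the equality $\cl H=\cl{H'}$, whereas you correctly observe that the single inclusion suffices for \cref{def:red}.
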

\begin{proof}
  We have $\cl{H'}\subseteq \cl H$ by \cref{lem:clo:incl}, whence $\cl H=\cl{H'}$ via the second assumption.
  Therefore, the identity map on terms fulfils the requirements of \cref{def:red}.
\end{proof}
In particular, equiderivable sets of hypotheses reduce to each other:
\begin{cor}
  \label{cor:red:id}
  Let $H, H'$ be sets of hypotheses over a common alphabet.
  If $\proves[H] H'$ and $\proves[H'] H$ then $H$ and $H'$ reduce to each other.
\end{cor}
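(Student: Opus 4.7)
The plan is to derive both reductions from \cref{lem:red:id}, exploiting that the statement of the corollary is symmetric in $H$ and $H'$. Thus it suffices to explain how to deduce that $H$ reduces to $H'$ from the two hypotheses $\proves[H] H'$ and $\proves[H'] H$; swapping the roles of $H$ and $H'$ gives the other direction for free.

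To apply \cref{lem:red:id}, I need to check two things: the derivability condition $\proves[H]H'$, which is given as part of the assumption, and the inclusion of functions $H\subseteq \cl{H'}$. For the latter, I would first invoke \cref{lem:clo:incl} with the other hypothesis $\proves[H'] H$ (i.e., with the roles of $H$ and $H'$ swapped from how that lemma is stated) to conclude $\cl H \subseteq \cl{H'}$. Combined with the elementary fact that the one-step function $H$ satisfies $H \subseteq \cl H$ pointwise (since $\cl H$ is by construction the least closure containing $H$, as recalled in \cref{ssec:clo:props} and formalised in \cref{app:closures}), this yields $H \subseteq \cl H \subseteq \cl{H'}$ as required.

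Having established that $H$ reduces to $H'$ via the identity map on terms, the same argument with $H$ and $H'$ exchanged throughout yields that $H'$ reduces to $H$, completing the proof. There is no real obstacle here: the corollary is essentially a direct specialisation of \cref{lem:red:id} combined with \cref{lem:clo:incl}, so all the work has already been done in the two preceding lemmas. The only mildly subtle point is keeping track of the two distinct uses of the symbol $H$ (set of hypotheses versus associated one-step function on languages), but this overloading is harmless and the inclusion $H\subseteq\cl H$ holds by the very construction of $\cl H$.
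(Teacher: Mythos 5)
Your proposal is correct and takes essentially the same route as the paper: its one-line proof (``Thanks to \cref{lem:clo:incl}, \cref{lem:red:id} applies in both directions'') unfolds to exactly your argument, namely applying \cref{lem:clo:incl} to $\proves[H'] H$ to get $\cl H\subseteq\cl{H'}$, combining this with $H\subseteq\cl H$ to discharge the second hypothesis of \cref{lem:red:id}, and concluding by symmetry. The details you make explicit, including the harmless overloading of $H$ as a set of hypotheses versus its one-step function, are precisely what the paper leaves implicit.
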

\begin{proof}
  Thanks to \cref{lem:clo:incl}, \cref{lem:red:id} applies in both directions.
\end{proof}

Third, the following lemma makes it possible to obtain reductions for unions of hypotheses by considering those hypotheses separately.

\begin{lem}
  \label{lem:cup:2}
  Let $H_1,H_2,H'_1,H'_2$ be sets of hypotheses over a common alphabet $\Sigma$.\\
  If\quad
  $\begin{cases}
   \text{$H_1$ reduces to $H'_1$ and $H_2$ reduces to $H'_2$}\\
   \text{$\cl{(H_1\cup H_2)} \subseteq\clH2 \circ \clH1$} \\
   \text{$\proves[H'_2] H'_1$}
  \end{cases}$
  then $H_1\cup H_2$ reduces to $H'_1\cup H'_2$\,.
\end{lem}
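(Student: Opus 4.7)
The plan is to use $r_2 \circ r_1$ as the witness of the reduction, where $r_1$ and $r_2$ witness the two assumed reductions (both being endo-functions on $\termska(\Sigma)$ since all sets of hypotheses live over the common alphabet $\Sigma$). Then verify the three conditions of \cref{def:red} in turn.

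Conditions \eqref{red:proves} and \eqref{red:proves:r} are routine. For \eqref{red:proves}, since $\proves[H_1]H'_1$ and $\proves[H_2]H'_2$, both $H'_1$ and $H'_2$ are derivable under $H_1 \cup H_2$. For \eqref{red:proves:r}, chain the two individual derivabilities: $\proves[H_1]e=r_1(e)$ yields $\proves[H_1\cup H_2]e=r_1(e)$, and $\proves[H_2]r_1(e)=r_2(r_1(e))$ yields $\proves[H_1\cup H_2]r_1(e)=r_2(r_1(e))$; combine by transitivity.

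The main work is for condition \eqref{red:lang:r}. The key observation is a chain of closure inclusions: from $\proves[H_2]H'_2$ (given by the reduction from $H_2$) and the assumed $\proves[H'_2]H'_1$, transitivity yields $\proves[H_2]H'_1$, so by \cref{lem:clo:incl} we obtain $\cl{H'_1}\subseteq\clH{2}$. Using this together with the closure assumption and \cref{lem:red:opt} applied to each of $r_1$ and $r_2$, I will chain:
\begin{align*}
  \sem{H_1\cup H_2}{e}
  &\subseteq \clH{2}\bigl(\clH{1}(\lang e)\bigr)
   = \clH{2}\bigl(\semH{1}{e}\bigr)
   = \clH{2}\bigl(\cl{H'_1}(\lang{r_1(e)})\bigr) \\
  &\subseteq \clH{2}\bigl(\clH{2}(\lang{r_1(e)})\bigr)
   = \clH{2}(\lang{r_1(e)})
   = \semH{2}{r_1(e)}
   = \sem{H'_2}{r_2(r_1(e))} \\
  &\subseteq \sem{H'_1\cup H'_2}{r_2(r_1(e))}\,.
\end{align*}
The penultimate inequality uses $\cl{H'_2}\subseteq\cl{(H'_1\cup H'_2)}$ (monotonicity of $\cl{(-)}$ in its argument), and the second line uses idempotence of the closure $\clH{2}$.

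The step I expect to be the subtle one is the factorisation $\sem{H_1\cup H_2}{e}\subseteq \clH{2}(\semH{1}{e})$, which is exactly where the closure hypothesis $\cl{(H_1\cup H_2)}\subseteq \clH{2}\circ\clH{1}$ is used; everything else is bookkeeping with \cref{lem:red:opt}, \cref{lem:clo:incl}, and monotonicity. Note that the assumption $\proves[H'_2]H'_1$ is what lets us absorb the intermediate $\cl{H'_1}$ into $\clH{2}$; without it, the chain would stall precisely at that step.
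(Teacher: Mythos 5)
Correct, and essentially the paper's own proof: you use the same witness $r_2\circ r_1$, the same derivability bookkeeping for the first two conditions, and the same key step of absorbing $\cl{H'_1}$ into $\clH{2}$ via $\proves[H_2]H'_1$ (obtained by chaining $\proves[H_2]H'_2$ with $\proves[H'_2]H'_1$) and \cref{lem:clo:incl}, followed by idempotence of $\clH{2}$. The only cosmetic deviations are that you invoke \cref{lem:red:opt} to get equalities where the paper uses the bare inclusions of \cref{def:red}, and that you finish with the trivial inclusion $\cl{H'_2}\subseteq\cl{(H'_1\cup H'_2)}$ where the paper records the equality $\cl{H'_2}=\cl{(H'_1\cup H'_2)}$.
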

\begin{proof}
  Let $H=H_1\cup H_2$ and $H'=H'_1\cup H'_2$.
  For $i=1,2$, let $r_i$ be the reduction from $H_i$ to $H'_i$.
  We show that $r_2 \circ r_1$ is a reduction from $H$ to $H'$.
  \begin{enumerate}
  \item Since $\proves[H_i] H'_i$ for $i=1,2$, we have $\proves[H] H'$.
  \item For all expressions $e$, we have
    $\proves[H]\proves[H_i] e = r_i(e)$ for $i=1,2$.
    We deduce that for all expressions $e$, $\proves[H] e = r_1(e)=r_2(r_1(e))$.
  \item For the third requirement, note that the restriction to the codomain alphabet is void since there is a single alphabet (so that we simply have $\sem{H_i}e \subseteq \sem{H'_i}{r_i(e)}$ for all $i,e$).
    
    Also observe that by \cref{lem:clo:incl} and the assumptions $\proves[H_2]\proves[H'_2] H'_1$, we have
    $\cl{H'_1}\subseteq \cl{H'_2} \subseteq \cl{H_2}$. From the first inclusion, we deduce $(\dagger)~\cl{H'_2}=\cl{H'}$; from the composed one, we obtain $(\ddagger)~\cl{H_2}\cl{H'_1}=\cl{H_2}$.
    We deduce
    \begin{align*}
      \tag{by assumption}
      \sem H e &\subseteq \clH2(\semH1 e)\\
      \tag{$r_1$ a reduction}
               &\subseteq \clH2(\sem{H'_1}{r_1(e)})\\
      \tag{$\ddagger$}
               &= \semH2{r_1(e)}\\
      \tag{$r_2$ a reduction}
               &\subseteq \sem{H'_2}{r_2(r_1(e))}\\
      \tag{$\dagger$}
               &= \sem{H'}{r_2(r_1(e))}
    \end{align*}
    which concludes the proof.
    \qedhere
  \end{enumerate}
\end{proof}

In most cases, we use the above lemma with $H'_1=H'_2$, and the third condition is trivially satisfied.
We now state a few results that make it possible to fulfil its second requirement.
Those results hold in a much more general setting, they are proved in \cref{app:closures}.

\begin{lem}
  \label{lem:dl:2}
  Let $H_1,H_2$ be two sets of hypotheses over a common alphabet.\\
  We have $\cl{(H_1\cup H_2)}=\clH2\circ\clH1$ if and only if $\clH1\circ \clH2\subseteq \clH2 \circ \clH1$.
\end{lem}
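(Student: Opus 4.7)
My plan is to abbreviate $c_i \eqdef \cl{H_i}$ for $i=1,2$ and $c \eqdef \cl{H_1 \cup H_2}$, then work purely in terms of the abstract closure algebra on $\pow(\Sigma^*)$. The key observation is that $c$ is, by definition, the least closure above the one-step function $H_1 \cup H_2$ (equivalently, the least closure above both $H_1$ and $H_2$, hence above $c_1$ and $c_2$). In particular, $c_1, c_2 \subseteq c$, and consequently
\[
  c_2 \circ c_1 \;\subseteq\; c \circ c \;=\; c
\]
always holds, using monotonicity and idempotency of $c$. So the equation in the statement always comes down to the reverse inclusion $c \subseteq c_2 \circ c_1$.

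For the forward direction, assume $c = c_2 \circ c_1$. Since $c_1, c_2 \subseteq c$ and $c$ is idempotent,
\[
  c_1 \circ c_2 \;\subseteq\; c \circ c \;=\; c \;=\; c_2 \circ c_1,
\]
which is exactly the desired inequality.

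For the backward direction, assume $c_1 \circ c_2 \subseteq c_2 \circ c_1$. I would prove that $c_2 \circ c_1$ is itself a closure, and then invoke the minimality of $c$ (using the characterisation of $\cl H$ as the least closure containing $H$, which is developed abstractly in \cref{app:closures}) to conclude $c \subseteq c_2 \circ c_1$. The three closure properties go through as follows: monotonicity is immediate from monotonicity of $c_1$ and $c_2$; extensivity follows from $\id \subseteq c_1$ and $\id \subseteq c_2$, giving $\id \subseteq c_2 \circ c_1$; and idempotency is precisely where the hypothesis is used,
\[
  (c_2 \circ c_1) \circ (c_2 \circ c_1)
  \;=\; c_2 \circ (c_1 \circ c_2) \circ c_1
  \;\subseteq\; c_2 \circ c_2 \circ c_1 \circ c_1
  \;=\; c_2 \circ c_1,
\]
using the assumption at the middle step and idempotency of $c_1, c_2$ at the last. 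To finally apply minimality of $c$, it suffices to note that $H_1 \cup H_2 \subseteq c_2 \circ c_1$ pointwise: indeed $H_1 \subseteq c_1 = \id \circ c_1 \subseteq c_2 \circ c_1$ and $H_2 \subseteq c_2 = c_2 \circ \id \subseteq c_2 \circ c_1$.

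The only non-routine step is the idempotency computation, and even this is little more than unfolding definitions once the assumption is in hand; the main conceptual content is the appeal to the abstract least-closure characterisation, which is precisely the kind of reasoning \cref{app:closures} is designed to support, so I expect no real obstacle beyond correctly citing the abstract lemma that gives minimality.
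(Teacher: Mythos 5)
Your proof is correct and takes essentially the same route as the paper's: the nontrivial direction is exactly \cref{prop:gsos:dl:2:gen} in \cref{app:closures}, which likewise shows $\clH2\circ\clH1$ is a closure (with the same idempotency computation $\clH2\clH1\clH2\clH1\subseteq\clH2\clH2\clH1\clH1=\clH2\clH1$ using the commutation hypothesis) and concludes by minimality of the least closure (\cref{prop:least:closure}). The only cosmetic differences are that the paper obtains the reverse inclusion $\clH2\circ\clH1\subseteq\cl{(H_1\cup H_2)}$ from monotonicity of the star operation (\cref{lem:clot:mon}) rather than directly from minimality, and leaves the trivial forward implication implicit, which you spell out.
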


\begin{lem}
  \label{lem:gsos:pre}
  Let $H_1,H_2$ be two sets of hypotheses over a common alphabet.\\
  If $H_1\circ \clH2\subseteq \clH2 \circ \clH1$, then $\clH1\circ\clH2\subseteq\clH2\circ\clH1$.
\end{lem}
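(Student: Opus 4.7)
The plan is to argue pointwise, exploiting the characterisation of $\clH1$ as the least closure containing $H_1$. Fix an arbitrary language $L$ and set $N \eqdef \clH1(L)$. The target inequation $\clH1(\clH2(L)) \subseteq \clH2(\clH1(L)) = \clH2(N)$ will follow if I can establish two things: (a) that $\clH2(L)\subseteq \clH2(N)$, which is immediate from $L\subseteq N$ and monotonicity of $\clH2$; and (b) that $\clH2(N)$ is itself $\clH1$-closed, so that closing $\clH2(L)$ under $\clH1$ cannot escape it.

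The crux is (b). A general characterisation of least closures (from \cref{app:closures}) tells us that a set is $\clH1$-closed iff it is $H_1$-closed. So it suffices to prove $H_1(\clH2(N)) \subseteq \clH2(N)$. Here is where the hypothesis $H_1\circ\clH2 \subseteq \clH2\circ\clH1$ enters: applied to $N$, it gives
\begin{equation*}
H_1(\clH2(N)) \;\subseteq\; \clH2(\clH1(N)) \;=\; \clH2(N)\,,
\end{equation*}
the final equality being idempotency of $\clH1$, which holds because $N = \clH1(L)$. This establishes (b), and combined with (a) and the defining property of a closure ($M\subseteq h$-closed set implies $\clH1(M)\subseteq$ that set) yields the desired $\clH1(\clH2(L)) \subseteq \clH2(N)$.

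I expect the main (and only) obstacle to be a purely bureaucratic one: the abstract closure framework in \cref{app:closures} is stated in terms of arbitrary complete lattices and monotone maps, not specifically the powerset $\pow(\Sigma^*)$ with the one-step functions $H_i$. So the actual presentation in the appendix will likely phrase the argument in terms of a generic monotone map $h_1$ generating a closure $\cl{h_1}$ and another closure $k_2$, and prove: if $h_1 \circ k_2 \subseteq k_2 \circ \cl{h_1}$, then $\cl{h_1} \circ k_2 \subseteq k_2 \circ \cl{h_1}$. The reasoning above transports verbatim, using the abstract ``characterisation of $\cl{h_1}$-closed elements as $h_1$-closed elements'' in place of the concrete language-level statement. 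No transfinite induction is needed: the whole argument is a one-step application of the universal property of the least closure.
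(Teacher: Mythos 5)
Your proof is correct and is essentially the paper's: the paper proves the abstract \cref{prop:clo:iter:l} ($sf\leq fc \Rightarrow s^\star f\leq fc$ for $c$ a closure) by the least-fixpoint induction principle~\eqref{cind:fun}, checking exactly your two inclusions ($f\leq fc$, corresponding to your step (a), and $sfc\leq fcc\leq fc$, corresponding to your step (b)), and then instantiates $s=H_1$, $f=\clH2$, $c=\clH1$. Your pointwise argument via $H_1$-closed sets is this same computation unfolded at a fixed $L$ (your ``$\clH1$-closed iff $H_1$-closed'' appeal is precisely~\eqref{cind}), the only difference being that the paper's abstract statement allows an arbitrary closure $c$ in place of $\clH1$, a generality that plays no role for \cref{lem:gsos:pre} itself.
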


\noindent
Given a function $f$, we write $f^=$ for the function $f\cup\id$ (i.e., $f^=(L)=f(L)\cup L$).
\begin{lem}
  \label{lem:gsos:2}
  Let $H_1,H_2$ be two sets of hypotheses over a common alphabet.\\
  If $H_1$ is affine and either
  \begin{enumerate}
  \item $H_1\circ H_2\subseteq \clH2 \circ H_1^=$, or
  \item $H_1\circ H_2\subseteq H_2^= \circ \clH1$,
  \end{enumerate}
  then $\clH1\circ\clH2\subseteq\clH2\circ\clH1$.
\end{lem}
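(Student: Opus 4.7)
The plan is to apply \cref{lem:gsos:pre} to reduce the statement to the one-step commutation $H_1\circ\clH2\subseteq\clH2\circ\clH1$, and then to push $H_1$ through the closure $\clH2$ by transfinite induction, using that $H_1$ is affine.

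First, I would unify the two hypotheses. In case~(1), $H_1^=\subseteq\clH1$ (because $H_1\subseteq\clH1$ and $\id\subseteq\clH1$), so the hypothesis yields $H_1\circ H_2\subseteq \clH2\circ H_1^=\subseteq \clH2\circ\clH1$. In case~(2), $H_2^=\subseteq\clH2$, so the hypothesis yields $H_1\circ H_2\subseteq H_2^=\circ\clH1\subseteq \clH2\circ\clH1$. Thus in both cases we obtain the common one-step fact $H_1\circ H_2\subseteq \clH2\circ\clH1$.

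Second, I would lift this one-step fact to $H_1\circ\clH2\subseteq\clH2\circ\clH1$ by invoking an abstract transfer principle from \cref{app:closures}: the affine analogue of the proposition used for linear functions in the proof of \cref{prop:hred}. Concretely, viewing $\clH2(L)$ as a transfinite iteration of $H_2^=$ from $L$, one would prove by induction on the stage $L_\alpha$ that $L_\alpha\subseteq\clH2(\clH1(L))$ and $H_1(L_\alpha)\subseteq\clH2(\clH1(L))$; the affinity of $H_1$ is used to distribute $H_1$ over the non-empty unions that appear at successor and limit stages, and the one-step fact handles the $H_2^=$ step. Taking the supremum over all stages gives $H_1\circ\clH2\subseteq\clH2\circ\clH1$, and \cref{lem:gsos:pre} then concludes.

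The main obstacle is choosing the right induction invariant. The naive choice $H_1(L_\alpha)\subseteq\clH2(\clH1(L))$ is insufficient: at a successor stage, unfolding the one-step hypothesis (especially in case~(1), where $H_1^=$ reappears on the right) forces one to also bound $L_\alpha$ itself inside $\clH2(\clH1(L))$. The conjunction $L_\alpha\subseteq\clH2(\clH1(L))$ and $H_1(L_\alpha)\subseteq\clH2(\clH1(L))$ does propagate through successor and limit stages, since affinity of $H_1$ lets it commute with the non-empty unions appearing along the iteration, and $\clH2(\clH1(L))$ is $H_2$-closed.
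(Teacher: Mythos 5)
There is a genuine gap, and it lies in your very first step. Weakening both hypotheses to the common consequence $H_1\circ H_2\subseteq \clH2\circ\clH1$ discards exactly the strength that makes the lemma provable. That merged fact is the analogue of \emph{local} commutation in abstract rewriting, which is well known not to imply commutation of the reflexive-transitive closures (just as local confluence does not imply confluence without termination). The whole point of the alternatives (1) and (2) is that one of the two sides is iterated \emph{at most once} ($H_1^=$ in case (1), $H_2^=$ in case (2)) --- a strong-commutation condition in the style of Hindley/Huet --- and that bounded side is what allows an induction to converge. Once both sides are starred, as in your unified fact, no induction along the stages of $\clH2$ can close, and affinity of $H_1$ does not rescue it: affinity only lets you distribute $H_1$ over joins, it does not absorb iterations.

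Concretely, your invariant fails at the successor stage. With $L_{\alpha+1}=H_2^=(L_\alpha)$, you must bound $H_1(H_2(L_\alpha))$; the merged fact yields only $H_1(H_2(L_\alpha))\subseteq \clH2(\clH1(L_\alpha))$, and now you need $\clH1(L_\alpha)$ --- arbitrarily many $H_1$-steps from $L_\alpha$ --- inside $\clH2(\clH1(L))$, whereas your invariant records a single $H_1$-step ($H_1(L_\alpha)\subseteq\clH2(\clH1(L))$, plus $L_\alpha\subseteq\clH2(\clH1(L))$). Bounding $\clH1(L_\alpha)$ via the invariant amounts to showing $\clH1\clH2\clH1\subseteq\clH2\clH1$, i.e., the conclusion itself: the argument is circular. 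The paper instead keeps the two cases separate and proves the abstract \cref{prop:clo:iter} in \cref{app:closures}: in case (2) one first lifts $H_1 H_2\subseteq H_2^=\clH1$ to $\clH1 H_2\subseteq H_2^=\clH1$ by fixpoint induction (\cref{prop:clo:iter:l}), then to $\clH1\clH2\subseteq\clH2\clH1$ via the adjoint-based \cref{prop:clo:iter:r} --- this is where affinity enters, through \cref{lem:aff:lin} and \cref{lem:clo:add}; case (1) is handled symmetrically after replacing $H_1$ by $H_1^=$, using $(s^=)^\star=s^\star$ (\cref{lem:clo:cup1}). Note also that the paper deliberately eliminated the transfinite inductions of the conference version in favour of these lattice-theoretic arguments; a transfinite induction \emph{can} be made to work, but only if you keep the hypotheses in their strong forms and strengthen the invariant to $\clH1(L_\alpha)\subseteq\clH2(\clH1(L))$, maintained (in case (2)) via the lifted law $\clH1 H_2\subseteq H_2^=\clH1$ --- not via the merged weak fact.
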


We conclude this section by returning to hypotheses of the form $e \leq 0$. We can indeed strengthen \cref{lem:reds}\ref{it:red:e0} to \cref{lem:red:0} below, which gives a general treatment of hypotheses of the from $e \leq 0$: we can always get rid of finite sets of hypotheses of this form.
A similar result, in terms of Horn formulas and in the context of KAT, is shown in~\cite{Hardin05}.
\begin{lem}
  \label{lem:dl:0}
  Let $H_0 = \set{e \leq 0}$ for some term $e$. The set $H_0$ is affine, and for all sets $H$ of hypotheses we have $H_0 \circ H \subseteq H_0$ and $\cl{(H_0\cup H)} = \cl H \circ \clH0$.
\end{lem}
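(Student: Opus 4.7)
The plan has three parts, corresponding to the three claims. The key observation that drives everything is that the one-step function associated to $H_0$ is a \emph{constant} function. Indeed, the only hypothesis in $H_0$ has right-hand side $0$, so the guard condition ``$u\lang{0}v \subseteq L$'' reduces to ``$\emptyset \subseteq L$'' and is vacuously satisfied for every $u,v,L$. Unfolding the definition of the one-step function therefore yields
\begin{align*}
H_0(L) \;=\; \Sigma^*\,\lang{e}\,\Sigma^* \;=:\; K
\end{align*}
for every language $L$.

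From this I would derive the first two claims immediately. For affineness, note that a constant map is affine iff it preserves non-empty joins, which is obvious: for any non-empty family $(L_i)_i$, $H_0(\bigcup_i L_i) = K = \bigcup_i K = \bigcup_i H_0(L_i)$. (It is not linear because the empty join on the right would give $\emptyset \neq K$ in general.) For $H_0 \circ H \subseteq H_0$, both functions evaluate to $K$ on every input, so in fact equality holds.

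For the third claim, the plan is to invoke \cref{lem:dl:2} with $H_1 = H_0$ and $H_2 = H$, which reduces the target equality $\cl{(H_0\cup H)} = \cl{H}\circ\clH0$ to showing the commutation inequality $\clH0 \circ \cl{H} \subseteq \cl{H} \circ \clH0$. I would then dispatch this via \cref{lem:gsos:2}, option~(1), using that $H_0$ is affine. The side-condition $H_0 \circ H \subseteq \cl{H}\circ H_0^=$ is trivial given what we have already established: by the second claim, $H_0\circ H = H_0$, and $H_0 \subseteq H_0^= \subseteq \cl{H}\circ H_0^=$ since $\id \subseteq \cl{H}$.

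There is no real obstacle here; the only thing to be careful about is remembering that ``$u\lang{0}v \subseteq L$'' is vacuously true, which is what collapses $H_0$ to a constant and makes the whole argument go through without any computation on closures.
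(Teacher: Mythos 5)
Your proof is correct and takes essentially the same route as the paper: the paper's proof likewise observes that $H_0$ is the constant function $L\mapsto\Sigma^*\lang e\Sigma^*$, reads off affineness and $H_0\circ H\subseteq H_0$ from that, and obtains the closure equality from \cref{lem:dl:2,lem:gsos:2}. The only difference is that you make the instantiation explicit (option~(1) of \cref{lem:gsos:2}, with the side-condition $H_0\circ H\subseteq \cl{H}\circ H_0^=$ following from $H_0\circ H= H_0\subseteq H_0^=$ and $\id\subseteq\cl{H}$), which the paper leaves implicit.
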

\begin{proof}
  In fact the function $H_0$ is constant: $H_0(L)=\Sigma^*\lang e\Sigma^*$.
  Whence affineness and the first containment.
  The remaining equality follows from \cref{lem:dl:2,lem:gsos:2}.
\end{proof}

\begin{lem}
  \label{lem:red:0}
  For any set of hypotheses $H$ and any term $e$, $H \cup \set{e \leq 0}$ reduces to $H$.
\end{lem}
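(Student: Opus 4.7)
The plan is to assemble this from results already in the toolbox, combining the basic reduction for a single hypothesis of the form $e \leq 0$ with the composition machinery developed for unions of hypotheses. Specifically, I would apply \cref{lem:cup:2} with $H_1 \eqdef \set{e\leq 0}$, $H_2 \eqdef H$, $H'_1 \eqdef \emptyset$, and $H'_2 \eqdef H$, so that $H_1 \cup H_2 = H \cup \set{e\leq 0}$ and $H'_1 \cup H'_2 = H$.

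Checking the three premises of \cref{lem:cup:2}: the first premise asks that $H_1$ reduces to $H'_1$ and $H_2$ reduces to $H'_2$. For $H_1$, this is precisely the content of \cref{lem:reds}\ref{it:red:e0}, which gives a reduction from $\set{e\leq 0}$ to $\emptyset$. For $H_2$, the identity map witnesses a reduction from $H$ to itself (alternatively invoke \cref{cor:red:id} with $H = H'$). The third premise $\proves[H'_2] H'_1$ is vacuous since $H'_1 = \emptyset$.

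The only real point to verify is the middle premise $\cl{(H_1\cup H_2)} \subseteq \clH2 \circ \clH1$, i.e., $\cl{(H \cup \set{e\leq 0})} \subseteq \cl{H} \circ \cl{\set{e\leq 0}}$. But this is exactly the equality supplied by \cref{lem:dl:0}, where it is shown that the set $H_0 = \set{e\leq 0}$ is affine, satisfies $H_0 \circ H \subseteq H_0$ (since $H_0$ is a constant function), and consequently $\cl{(H_0 \cup H)} = \cl{H}\circ\cl{H_0}$ via \cref{lem:dl:2,lem:gsos:2}.

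Putting these pieces together, \cref{lem:cup:2} yields the desired reduction from $H \cup \set{e\leq 0}$ to $H$. There is no real obstacle here: the heavy lifting was done in \cref{lem:dl:0}, which already reorders any closure step arising from the Boolean-annihilator hypothesis past steps coming from $H$. The lemma is essentially a packaging result stating that hypotheses of the form $e \leq 0$ can always be ``absorbed'' without affecting the remaining set of hypotheses.
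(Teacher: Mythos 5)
Your proposal is correct and matches the paper's own proof essentially verbatim: both apply \cref{lem:cup:2} with $H_1=\set{e\leq 0}$ reducing to $\emptyset$ via \cref{lem:reds}\ref{it:red:e0}, the identity reduction from $H$ to itself, the closure decomposition from \cref{lem:dl:0} for the middle premise, and the observation that the third premise is vacuous since $H'_1=\emptyset$. Nothing is missing.
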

\begin{proof}
  This is a direct application of \cref{lem:cup:2}:
  we combine the reduction from $\set{e \leq 0}$ to the empty set provided by  \cref{lem:reds}\ref{it:red:e0} with the identity reduction from $H$ to itself.
  The second requirement holds by \cref{lem:dl:0}; the third requirement is void, since $H'_1$ is chosen to be empty.
\end{proof}

\section{Kleene algebra with tests}
\label{sec:kat}

In this section we apply the machinery from the previous sections to obtain a modular completeness proof for Kleene algebra with tests~\cite{kozens96:kat:completeness:decidability}.

A \emph{Kleene algebra with tests (KAT)} is a
Kleene algebra $K$ containing a Boolean algebra $L$ such that the meet of $L$
coincides with the product of $K$, the join of $L$ coincides with the
sum of $K$, the top element of $L$ is the multiplicative identity of $K$, and the bottom elements of $K$ and $L$ coincide.

Syntactically, we fix two finite sets $\Sigma$ and $\Omega$ of primitive
actions and primitive tests.
We denote the set of Boolean expressions over alphabet $\Omega$ by $\termsba$:
\[
  \phi,\psi ::= \phi \vee \psi \pipe \phi \wedge \psi \pipe \lnot\phi \pipe \bot \pipe \top \pipe o \in \Omega
\]
We write $\BAproves\phi=\phi'$ when this equation is derivable from Boolean algebra axioms~\cite{birkhoff-bartee-1970,DaveyPriestley90}, and similarly for inequations.

We let $\alpha,\beta$ range over \emph{atoms}: elements of the finite set $\At\eqdef2^\Omega$. Those may be seen as valuations for Boolean expressions, or as complete conjunctions of literals: $\alpha$ is implicitly seen as the Boolean formula $\bigwedge_{\alpha(o)=1}o\wedge\bigwedge_{\alpha(o)=0}\lnot o$. They form the atoms of the Boolean algebra generated by $\Omega$.
We write $\alpha\models\phi$ when $\phi$ holds under the valuation $\alpha$. A key property of Boolean algebras is that for all atoms $\alpha$ and formulas $\phi$, we have
\begin{align*}
  \alpha\models\phi ~\Leftrightarrow~ \BAproves\alpha\leq\phi \qquad\text{and}\qquad
  \BAproves\phi = \bigvee_{\alpha\models \phi}\alpha
\end{align*}

The KAT terms over alphabets $\Sigma$ and $\Omega$ are the regular expressions over the alphabet $\Sigma+\termsba$: $\termskat\eqdef\termska(\Sigma+\termsba)$.
We write $\KATproves e=f$ when this equation is derivable from the axioms of KAT, and similarly for inequations.

The standard interpretation of KAT associates to each term a language of guarded strings.
A \emph{guarded string} is a sequence of the form $\alpha_0 a_0 \alpha_1 a_1 \ldots a_{n-1} \alpha_n$
with $a_i \in \Sigma$ for all $i<n$, and $\alpha_i \in \At$ for all $i\leq n$.
We write $\GS$ for the set $\At\times(\Sigma\times \At)^*$ of such guarded strings.
Now, the interpretation $\G \colon \termska(\Sigma+\termsba) \rightarrow 2^\GS$
is defined as the homomorphic extension of
the assignment $\G(a)=\set{\alpha a \beta \mid \alpha,\beta\in\At}$ for $a\in\Sigma$ and $\G(\phi)=\set{\alpha\mid \alpha\models\phi}$ for $\phi\in\termsba$, where for sequential composition of guarded strings the \emph{coalesced product} is used. The coalesced product of
guarded strings $u\alpha $ and $\beta v$ is defined as $u\alpha v$ if $\alpha=\beta$ and undefined otherwise.

\begin{thmC}[{\cite[Theorem 8]{kozens96:kat:completeness:decidability}}]
  \label{thm:kat:classic}
  For all $e,f \in \termskat$, we have $\KATproves e = f$ iff $\G(e) = \G(f)$.
\end{thmC}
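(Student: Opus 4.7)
The plan is to recast Kozen and Smith's theorem in the present framework of Kleene algebra with hypotheses and apply the reduction toolbox from \cref{sec:tools}. Working over the finite extended alphabet $\Sigma \cup \At$, I would first translate each KAT term $e$ into a KA term $\hat e \in \termska(\Sigma \cup \At)$ by replacing every Boolean subexpression $\phi$ with the sum $\sum_{\alpha \models \phi} \alpha$. The relevant set of hypotheses is
\[
  H_\hkat \eqdef \set{\alpha\beta \leq 0 \mid \alpha \neq \beta \in \At} \cup \set{1 = \sum_{\alpha \in \At} \alpha}\,,
\]
and the first task is to verify that $\KATproves e = f$ if and only if $\KAproves[H_\hkat] \hat e = \hat f$. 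This is a mostly syntactic bookkeeping step: one checks that the KAT axioms on the $\hat\phi$'s (idempotence, commutativity, complement, distributivity) reduce to elementary manipulations of sums of atoms using the two kinds of hypotheses (for example $\alpha\alpha = \alpha$ follows from $\alpha \cdot 1 = \alpha \cdot \sum_\beta \beta$ combined with $\alpha\beta = 0$ for $\beta \neq \alpha$).

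Second, I would relate the guarded-string semantics $\G$ to the canonical closed language semantics $\sem{H_\hkat}-$, aiming to show $\G(e) = \G(f)$ if and only if $\sem{H_\hkat}{\hat e} = \sem{H_\hkat}{\hat f}$. The intuition is that every $H_\hkat$-closed language over $\Sigma \cup \At$ is uniquely determined by its intersection with $\GS$: the hypotheses $\alpha\beta \leq 0$ collapse all words containing a forbidden pair of adjacent atoms, while $1 = \sum_\alpha \alpha$ forces identification with all atom-interleaved variants. Concretely, I would establish $\G(e) = \sem{H_\hkat}{\hat e} \cap \GS$, together with the fact that any $H_\hkat$-closed regular language $L$ is the $H_\hkat$-closure of $L \cap \GS$. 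This correspondence is the main technical obstacle, and matches what one would expect to find in the dedicated appendix (\cref{app:kat:gs}).

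Third, I would exhibit a reduction from $H_\hkat$ to the empty set, by splitting it into its two components. The family $\set{\alpha\beta \leq 0 \mid \alpha \neq \beta}$ reduces to the empty set by \cref{lem:reds}\ref{it:red:e0} (after bundling into the single hypothesis $\sum_{\alpha \neq \beta} \alpha\beta \leq 0$, which is equi-derivable with the family). The equation $1 = \sum_\alpha \alpha$ reduces to the empty set by \cref{lem:reds}\ref{it:red:1S}. To combine them via \cref{lem:cup:2}, the commutation condition $\cl{(H_1 \cup H_2)} \subseteq \cl{H_2} \circ \cl{H_1}$ comes for free from \cref{lem:dl:0}, since the $\alpha\beta \leq 0$ component yields a constant one-step function.

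Assembling these pieces, \cref{thm:red} yields the non-trivial direction: $\G(e) = \G(f)$ implies $\KATproves e = f$. The converse direction, soundness, is the easy step, and amounts to checking that the guarded-string interpretation is a KAT model, or equivalently to verifying soundness of the hypotheses $H_\hkat$ w.r.t.~$\G$ and invoking \cref{thm:soundness}. The only genuinely delicate ingredient is the second step above, where the combinatorics of guarded strings must be shown to exactly capture $H_\hkat$-equivalence classes of languages over $\Sigma \cup \At$.
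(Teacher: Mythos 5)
Your proposal is correct and follows essentially the same route as the paper: the same homomorphic translation $\phi\mapsto\sum_{\alpha\models\phi}\alpha$, the same target hypotheses over atoms (your $H_\hkat$ is exactly $\hatom$), the same basic reductions combined in the same way (\cref{lem:reds}\ref{it:red:1S} and \ref{it:red:e0} via \cref{lem:dl:0} and \cref{lem:cup:2}, which the paper packages as \cref{lem:red:0}), and the same guarded-string correspondence as the genuine technical core (\cref{lem:gs}, \cref{lem:kat:from:gs}, \cref{cor:kat:gs}). The only difference is organizational: the paper first proves completeness of $\KA_{\hkat}$ for the canonical closed semantics $\sem\hkat-$ over $\Sigma+\termsba$ (\cref{thm:kat}) and then transfers to $\G$, whereas you elide that intermediate layer and work with $\hatom$-closed languages over $\Sigma\cup\At$ throughout, which merely relocates, without easing, the combinatorial work done in the paper's appendix.
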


We now reprove this result using Kleene algebra with hypotheses.
We start by defining the additional axioms of KAT as hypotheses.

\begin{defi}
  We write \hbool for the set of all instances of Boolean algebra axioms
  over $\termsba$ and $\hglue$ for the following set
  of hypotheses relating the Boolean algebra connectives to the Kleene algebra ones:
  \begin{align*}
    \hglue &\eqdef
             \set{\phi\wedge\psi = \phi\cdot\psi,~
             \phi\vee\psi = \phi+\psi \mid \phi,\psi\in\termsba }
             \cup
             \set{\bot=0,~\top=1}
  \end{align*}
  We then define $\hkat \eqdef \hbool \cup \hglue$.
\end{defi}
\noindent
(Note that all these equations are actually understood as double inequations.)

We prove completeness of $\KA_{\hkat}$ in Section~\ref{sec:kat:complete} below, by constructing a suitable reduction. Recall that this means completeness w.r.t.\ the interpretation $\sem\hkat-$ in terms of closed languages.
Before proving completeness of $\KA_{\hkat}$, we compare it to the
classical completeness (Theorem~\ref{thm:kat:classic}).
First note that $\KA_{\hkat}$ contains the same axioms as Kleene algebra with tests, so that provability in $\KA_{\hkat}$ and KAT coincide: $\KAproves[\hkat] e=f$ iff $\KATproves e=f$.
Comparing the interpretation $\sem\hkat-$ to the guarded string interpretation $\G$
is slightly more subtle, and is the focus of the next subsection.

\subsection{Relation to guarded string interpretation}
\label{ssec:kat:gs}

We prove here that the (standard) guarded string interpretation of KAT and the $\hkat$-closed language interpretation we use in the present paper both yield the same semantics.
We only state the main lemmas we use to establish the correspondence, delegating detailed proofs to \cref{app:kat:gs}.

\begin{textAtEnd}[allend, category=kat]
We give a detailed proof of \cref{cor:kat:gs} in this appendix, following the path sketched in \cref{ssec:kat:gs}. This result relates the (standard) guarded string interpretation of KAT to the $\hkat$-closed language interpretation we use in the present paper.
\end{textAtEnd}

The key step consists in characterising the words that are present in the closure of a language of guarded strings (\cref{lem:gs} below). First observe that a guarded string may always be seen as a word over the alphabet $\Sigma+\termsba$. Conversely, a word over the alphabet $\Sigma+\termsba$ can always be uniquely decomposed as a sequence $\phis_0 a_0 \cdots \phis_{n-1} a_{n-1} \phis_n$ where $a_i \in \Sigma$ for all $i<n$ and each $\phis_i$ is a possibly empty sequence of Boolean expressions. We let $\phis$ range over such sequences, and we write $\overline\phis$ for the conjunction of the elements of $\phis$.

\begin{theoremEnd}[default,category=kat]{lem}
  \label{lem:gs}
  Let $L$ be a language of guarded strings. We have
  \begin{gather*}
    \phis_0 a_0 \cdots \phis_{n-1} a_{n-1} \phis_n\in\cl\hkat(L)\\
    \Leftrightarrow \qquad \forall~(\alpha_i)_{i\leq n},\quad (\forall i\leq n, \alpha_i\models\overline{\phis_i})\quad\Rightarrow\quad \alpha_0 a_0 \cdots \alpha_{n-1} a_{n-1} \alpha_n\in L
  \end{gather*}
\end{theoremEnd}
\begin{proofEnd}
  The proof for the left to right direction proceeds by induction on the closure---c.f.~\eqref{cind}.
  In the base case, where the word already belongs to $L$, and thus is already a guarded string, the $\phis_i$ must all be single atoms. Since $\alpha\models\beta$ iff $\alpha=\beta$ for all atoms $\alpha,\beta$, the condition is immediately satisfied.

  Otherwise, we have $\phis_0 a_0 \cdots \phis_{n-1} a_{n-1} \phis_n\in u\lang{e}v$ for $u,v\in (\Sigma+\termsba)^*$ and $e\leq f\in\hkat$, where guarded strings in $u\lang{f}v$ satisfy the property by induction.
  We proceed by a case distinction on the hypothesis in $e\leq f\in \hkat$.
  \begin{itemize}
  \item $e\leq f\in\hbool$: in this case $e=\phi$ and $f=\psi$ for
    some formulas $\phi,\psi$ such that $\BAproves \phi=\psi$.
    Therefore, we have $\alpha\models\phi$ iff $\alpha\models\psi$ for all atoms $\alpha$, so that the induction hypothesis on the unique word of $u\psi v$ is equivalent to the property we have to prove about the unique word of $u\phi v$.
  \item  $e=\phi\wedge \psi$ and $f=\phi\cdot \psi$:
    there must be $j$ such that $\phis_j=\phis (\phi\wedge\psi) \psis$ for $\phis,\psis\in\termsba^*$.
    The word of $u\phi\psi v$ has the same decomposition as $u(\phi\wedge\psi)v$, except that
    $\phis_j$ is replaced by $\phis\phi\psi\psis$. Since $\BAproves\overline{\phis \phi\psi\psis}=\overline{\phis(\phi\wedge\psi)\psis}$, the induction hypothesis immediately applies.
    \item The case for $e=\phi\cdot\psi$ and $f=\phi\wedge\psi$ is handled similarly.
    \item $e=\phi\vee \psi$ and $f=\phi+\psi$:
      there must be $j$ such that $\phis_j=\phis (\phi\vee\psi) \psis$ for $\phis,\psis\in\termsba^*$.
      We have $\lang{f}=\set{\phi,\psi}$ and the two words of $u\lang{f}v$ have the same decomposition as $u(\phi\vee\psi)v$, except that
      $\phis_j$ is replaced by $\phis \phi\psis$ or $\phis \psi\psis$.
      We use the induction hypothesis and the fact that $\alpha\models\phi\vee\psi$ iff $\alpha\models\phi$ or $\alpha\models\psi$.
    \item $e=\phi+\psi$ and $f=\phi\vee\psi$:
      there must be $j$ such that w.l.o.g. $\phis_j=\phis (\phi) \psis$ for $\phis,\psis\in\termsba^*$.
      We have $\lang{f}=\set{\phi\vee\psi}$ and the word in $u\lang{f}v$ has the same decomposition as $u(\phi)v$, except that
      $\phis_j$ is replaced by $\phis (\phi\vee\psi) \psis$.
      We use the induction hypothesis and the fact that if $\alpha\models\phi$ then $\alpha\models\phi\vee\psi$.
    \item $e=\bot$ and $f=0$: there must be $j$ such that $\bot$ belongs to $\phis_j$, so that there are no atoms $\alpha$ such that $\alpha\models\phis_j$: the condition is trivially satisfied.
    \item $e=0$ and $f=\bot$: in this case $u\lang{e}v$ is empty so this case is trivially satisfied.
    \item $e=\top$ and $f=1$: there is $j,\phis,\psis$ such that $\phis_j=\phis\top\psis$. The word of $uv$ decomposes like $u\top v$, except that $\phis_j$ is replaced by $\phis\psis$. Since $\BAproves \overline{\phis\psis}=\overline{\phis\top\psis}$, the induction hypothesis suffices to conclude.
    \item The case where $e=1$ and $f=\top$ is handled similarly.
  \end{itemize}
  We now prove the right-to-left implication.
  From the assumptions we obtain
  \begin{align*}
    \lang{
    \sum_
    {\alpha\models\overline{\phis_0}}
    \alpha} a_0 \ldots
    \lang{
    \sum_
    {\alpha\models\overline{\phis_{n-1}}}\alpha}
    a_{n-1}
    \lang{
    \sum_
    {\alpha\models\overline{\phis_n}}\alpha}\subseteq L
  \end{align*}
  We deduce using the glueing inequations for $\vee$ and $\bot$ that
  \begin{align*}
    \left(\bigvee_
    {\alpha\models\overline{\phis_0}}
    \alpha\right) a_0 \ldots
    \left(
    \bigvee_
    {\alpha\models\overline{\phis_{n-1}}}\alpha\right)
    a_{n-1}
    \left(
    \bigvee_
    {\alpha\models\overline{\phis_n}}\alpha\right)\in \cl\hkat(L)
  \end{align*}
  Since $\BAproves \phi=\bigvee_{\alpha\models\phi}\alpha$ for all $\phi$, the inequations from \hbool yield
  \begin{align*}
    \overline{\phis_0}
    a_0 \ldots
    \overline{\phis_{n-1}}
    a_{n-1}
    \overline{\phis_n}\in \cl\hkat(L)
  \end{align*}
  We conclude with the glueing inequations for $\wedge$ and $\top$, which give
  \begin{align*}
    \phis_0
    a_0 \ldots
    \phis_{n-1}
    a_{n-1}
    \phis_n\in \cl\hkat(L)\tag*\qedhere
  \end{align*}
\end{proofEnd}

\begin{textAtEnd}[allend, category=kat]
Now we turn to showing that the \hkat-closures of $\lang{e}$ and $\G(e)$ coincide (\cref{lem:kat:from:gs}), ultimately yielding \cref{cor:kat:gs}.
We first prove two lemmas about \hkat-closures of guarded string languages.
\end{textAtEnd}
\begin{theoremEnd}[allend, category=kat]{lem}
  \label{lem:kat:dot}
  For all guarded string languages $L,K$, $\cl\hkat(L\cdot K)\subseteq \cl\hkat(L\diamond K)$.
\end{theoremEnd}
\begin{proofEnd}
  We show that $L\cdot K\subseteq \cl\hkat(L\diamond K)$, which is sufficient. Take $w\in L\cdot K$. Hence, $w=xy$
  with $x\in L$ and $y\in K$. Because $x$ and
  $y$ are guarded strings, we know both of them begin and end in an atom. So we can write $x=x'\alpha$ and $y=\beta y'$ for some $\alpha,\beta\in \At$.
  \begin{itemize}
  \item If $\alpha=\beta$, then $x\alpha y\in L\diamond K$. In that
    case, we have $xy = x'\alpha \alpha y'$, but then, since $x$ and
    $y$ are guarded strings and $\alpha$ is the only atom
    s.t. $\alpha\models \alpha \wedge \alpha$, we get
    $xy \in \cl\hkat(L\diamond K)$ by \cref{lem:gs}.
  \item Otherwise, if $\alpha\neq\beta$, then \emph{a priori}
    $xy\not\in L\diamond K$. However, since there are no atoms
    $\gamma$ s.t. $\gamma \models \alpha \wedge \beta$, we have
    $xy \in \cl\hkat(L\diamond K)$ by \cref{lem:gs}.\qedhere
  \end{itemize}
\end{proofEnd}

\begin{textAtEnd}[allend, category=kat]
Given a guarded string language $L$, let us denote by $L^\diamond$ the language $L$ iterated w.r.t. the coalesced product $\diamond$: $L^\diamond=\bigcup_n L^{\diamond n}$, where $L^{\diamond 0}=\set{\alpha\mid\alpha\in\At}$ and $L^{\diamond n+1}=L\diamond L^{\diamond n}$.
\end{textAtEnd}
\begin{theoremEnd}[allend, category=kat]{lem}
  \label{lem:kat:str}
  For all guarded string languages $L$, $\cl\hkat(L^*)\subseteq \cl\hkat(L^\diamond)$.
\end{theoremEnd}
\begin{proofEnd}
  It suffices to prove
  $L^* \subseteq \cl\hkat(L^\diamond)$, which in turn
  follows once we prove $\varepsilon \in \cl\hkat(L^\diamond)$
  and $L\cdot \cl\hkat(L^\diamond) \subseteq \cl\hkat(L^\diamond)$.
  The former follows from the fact that $\proves[\hkat] 1\leq\sum_\alpha\alpha$, since $L^\diamond$ contains $L^{\diamond0}$ and thus all atoms $\alpha$.
  For the latter, we have
  \begin{align*}
    L \cdot \cl\hkat(L^\diamond)
    &\subseteq \cl\hkat(L\cdot L^\diamond) \tag{$\cl\hkat$ is contextual---\cref{lem:con}} \\
    &\subseteq \cl\hkat(L\diamond L^\diamond) \tag{\cref{lem:kat:dot}} \\
    &\subseteq \cl\hkat(L^\diamond)\tag*\qedhere
  \end{align*}
\end{proofEnd}

\begin{textAtEnd}[allend, category=kat]
We can finally prove \cref{lem:kat:from:gs}:
\end{textAtEnd}
Then we show that the \hkat-closures of $\lang{e}$ and $\G(e)$ coincide:
\begin{theoremEnd}[default,category=kat,no link to proof]{lem}
  \label{lem:kat:from:gs}
  For all KAT expressions $e$, $\sem\hkat e=\cl\hkat(\G(e))$.
\end{theoremEnd}
\begin{proofEnd}
  It suffices to prove that $\lang{e}\subseteq \cl\hkat(\G(e))$ and $\G(e)\subseteq \sem\hkat e$. We prove both results via induction on $e$.

  We start with $\lang{e}\subseteq \cl\hkat(\G(e))$.
  \begin{itemize}
  \item $0$: trivial as $\lang{0}=\emptyset$.
  \item $1$: $\lang{1} = \set\varepsilon$ and $\G(1)=\set{\alpha \mid \alpha\in\At}$,
    therefore $\varepsilon \in \cl\hkat(\G(1))$ by \cref{lem:gs}.

  \item $a \in \Sigma$: $\lang{a} = \set a$ and $\G(a)=\set{\alpha a \beta \mid \alpha,\beta\in\At}$,
    therefore $a \in \cl\hkat(\G(a))$ by \cref{lem:gs}.

  \item $\phi\in\termsba$: $\lang{\phi} = \set\phi$ and $\G(\phi)=\set{\alpha\mid \alpha\models\phi}$,
    hence $\phi \in \cl\hkat(\G(\phi))$ by \cref{lem:gs}.
  \item $e+f$: we have
    \begin{align*}
      \lang{e+f}=\lang{e}\cup\lang{f}
                 &\subseteq \cl\hkat(\G(e))\cup \cl\hkat(\G(f)) \tag{IH}\\
                 &\subseteq \cl\hkat(\G(e)\cup \G(f)) \tag{$\cl\hkat$ monotone}
                 = \cl\hkat(\G(e+f))
    \end{align*}
  \item $e\cdot f$: we have
    \begin{align*}
      \lang{e\cdot f}=\lang{e}\cdot\lang{f}
                      &\subseteq \cl\hkat(\G(e))\cdot \cl\hkat(\G(f)) \tag{IH}\\
                      &\subseteq \cl\hkat(\G(e)\cdot \G(f)) \tag{\cref{lem:con:clo}} \\
                      &\subseteq \cl\hkat(\G(e)\diamond\G(f)) \tag{\cref{lem:kat:dot}}
                      = \cl\hkat(\G(e\cdot f))
    \end{align*}
  \item $e^*$: we have
    \begin{align*}
      \lang{e^*}=\lang{e}^*
                 &\subseteq \cl\hkat(\G(e))^* \tag{IH}\\
                 &\subseteq \cl\hkat(\G(e)^*) \tag{\cref{lem:con:clo:str}} \\
                 &\subseteq \cl\hkat(\G(e)^\diamond) \tag{\cref{lem:kat:str}}
                 = \cl\hkat(\G(e^*))
    \end{align*}
  \end{itemize}
  Next we prove $\G(e)\subseteq \sem\hkat e$:
  \begin{itemize}
  \item $0$: trivial as $\G(0)=\emptyset$.
  \item $1$: $\G(1)=\set{\alpha \mid \alpha\in\At }$;
    for all $\alpha$, $\proves[\hkat] \alpha\leq 1$, whence $\alpha\in\cl\hkat{\set\varepsilon}=\sem\hkat1$.
  \item $a \in \Sigma$: $\G(a)=\set{\alpha a \beta \mid \alpha,\beta\in\At}$;
    like above, for all $\alpha,\beta$, $\alpha a \beta\in\cl\hkat{\set a}=\sem\hkat a$.
  \item $\phi\in\termsba$: $\G(\phi)=\set{\alpha\mid \alpha\models\phi}$,
    for all $\alpha$ such that $\alpha\models\phi$, $\proves[\hkat] \alpha\leq \phi$, whence $\alpha\in\cl\hkat{\set\phi}=\sem\hkat\phi$.
  \item $e+f$: we have
    \begin{align*}
      \G(e+f)&=\G(e)\cup\G(f) \\
                 &\subseteq \sem\hkat e\cup \sem\hkat f \tag{IH}\\
                 &\subseteq \cl\hkat(\lang e\cup \lang f) \tag{$\cl\hkat$ monotone} \\
                 &= \sem\hkat{e+f})
    \end{align*}
  \item $e\cdot f$: suppose $w\in\G(e\cdot f)$, i.e., $w=x\alpha y$
    for $x\alpha\in\G(e)$ and $\alpha y\in\G(f)$. Via the induction
    hypothesis we know that $x\alpha\in\sem\hkat e$ and
    $\alpha y\in\sem\hkat f$.  Hence
    $x\alpha\alpha y\in \sem\hkat e\cdot
    \sem\hkat f\subseteq \sem\hkat{e\cdot f}$ via \cref{lem:con:clo}.  As
    $\proves[\hkat] \alpha\leq \alpha \alpha$, we get that
    $x\alpha y \in \sem\hkat{e\cdot f}$.
  \item $e^*$: a similar argument works.\qedhere
  \end{itemize}
\end{proofEnd}

Let $\GS$ be the set of all guarded strings; we have:
\begin{textAtEnd}[allend,category=kat]
Recall that $\GS$ is the set of all guarded strings.
We deduce from \cref{lem:gs} that closing a guarded string language under \hkat does not add new guarded strings:
\begin{lem}\label{lem:noextraguard}
For all guarded string language $L$, $L=\cl\hkat(L)\cap \GS$.
\end{lem}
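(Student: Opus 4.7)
\begin{proofEnd}
The inclusion $L \subseteq \cl\hkat(L) \cap \GS$ is immediate: $L \subseteq \cl\hkat(L)$ since $\cl\hkat$ is a closure, and $L \subseteq \GS$ by assumption.

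For the reverse inclusion, let $w \in \cl\hkat(L) \cap \GS$. Since $w$ is a guarded string, it has the form $w = \alpha_0 a_0 \alpha_1 \cdots a_{n-1} \alpha_n$ with each $\alpha_i \in \At$ and each $a_i \in \Sigma$. Viewing this as the canonical decomposition $\phis_0 a_0 \cdots \phis_{n-1} a_{n-1} \phis_n$ of a word over $\Sigma + \termsba$, we have $\phis_i = \alpha_i$ (a single Boolean expression), so $\overline{\phis_i} = \alpha_i$. Applying \cref{lem:gs} to the assumption $w \in \cl\hkat(L)$, we obtain that for all atoms $\beta_0,\dots,\beta_n$ with $\beta_i \models \alpha_i$ for every $i$, the guarded string $\beta_0 a_0 \cdots a_{n-1} \beta_n$ lies in $L$. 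Since for atoms, $\beta \models \alpha$ iff $\beta = \alpha$, the only choice is $\beta_i = \alpha_i$ for all $i$, yielding $w \in L$.
\end{proofEnd}
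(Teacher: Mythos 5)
Your proof is correct and takes essentially the same route as the paper's: the forward inclusion is immediate from closure, and the reverse inclusion instantiates \cref{lem:gs} at the guarded string's own atoms, using $\alpha_i\models\alpha_i$ (equivalently, that $\beta\models\alpha$ forces $\beta=\alpha$ for atoms). The only difference is cosmetic—you spell out why the universally quantified atoms must equal the $\alpha_i$, which the paper leaves implicit.
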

\begin{proof}
The left-to-right direction is trivial, as $L\subseteq \cl\hkat(L)$ and all strings in $L$ are guarded.
For the right-to-left direction,
given a guarded string $\alpha_0 a_0\dots a_{n-1} \alpha_n$ in $\cl\hkat(L)$, we simply use \cref{lem:gs} with the sequence of $\alpha_i$ themselves: we have $\alpha_i\models\alpha_i$ for all $i$, so that $\alpha_0 a_0\dots a_{n-1} \alpha_n$ actually belongs to $L$.
\end{proof}
\end{textAtEnd}
\begin{theoremEnd}[default,category=kat]{lem}
  \label{lem:gs:from:kat}
  For all KAT expressions $e$, $\G(e)=\sem\hkat e\cap\GS$.
\end{theoremEnd}
\begin{proofEnd}
  Immediate from \cref{lem:noextraguard} and \cref{lem:kat:from:gs}.
\end{proofEnd}

As an immediate consequence of \cref{lem:kat:from:gs} and \cref{lem:gs:from:kat}, we can finally relate
the guarded strings languages semantics to the \hkat-closed languages
one:
\begin{theoremEnd}[default,category=kat]{cor}
  \label{cor:kat:gs}
  Let $e,f\in\termskat$. We have $\G(e)=\G(f) \Leftrightarrow \sem\hkat e=\sem\hkat f$.
\end{theoremEnd}

\subsection{Completeness}
\label{sec:kat:complete}
To prove completeness of the \hkat-closed language model, we take the following steps:
\begin{enumerate}
\item We reduce the hypotheses in \hkat to a simpler set of axioms:
  by putting the Boolean expressions into normal forms via the atoms,
  we can get rid of the hypotheses in \hbool. We do not remove
  the hypotheses in \hglue directly: we transform them into the
  following hypotheses about atoms:
  \begin{align*}
    \hatom_0 &\eqdef \set{\alpha\cdot\beta \leq 0\mid \alpha,\beta\in\At,~\alpha\neq\beta}\\
    \hatom_1 &\eqdef \set{\alpha\leq 1\mid \alpha\in\At}\\
    \hatom_2 &\eqdef \set{1\leq \textstyle\sum_{\alpha\in\At}\alpha }
  \end{align*}
  We thus first show that \hkat reduces to $\hatom \eqdef \hatom_0 \cup \hatom_1 \cup \hatom_2$.
\item Then we use results from \cref{sec:tools} to construct a reduction from \hatom to the empty set, and thereby obtain completeness of $\KA_{\hkat}$.
\end{enumerate}

Let $r\colon \termska(\Sigma + \termsba) \to \termska(\Sigma+\At)$ be the homomorphism defined by
\begin{align*}
  r(x) &\eqdef
         \begin{cases}
           a & x = a \in \Sigma\\
           \sum_{\alpha \models \phi} \alpha & x = \phi \in \termsba
         \end{cases}
\end{align*}
We show below that $r$ yields a reduction from \hkat to \hatom, using \cref{prop:hred}.

\begin{lem}\label{lem:atomlaws}
  We have $\proves[\hatom] 1=\sum_{\alpha\in\At}\alpha$, and
  for all $\alpha$, $\proves[\hatom] \alpha\alpha=\alpha$.
\end{lem}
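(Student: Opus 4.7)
The plan is to prove each of the two derivabilities by a direct chain of inequations, using one family of hypotheses from $\hatom$ for each direction.

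For the first claim, $\proves[\hatom] 1=\sum_{\alpha\in\At}\alpha$, I would prove the two inequations separately. The inequation $1\leq \sum_{\alpha\in\At}\alpha$ is immediate: it is literally the single hypothesis in $\hatom_2$. For the converse, $\sum_{\alpha\in\At}\alpha \leq 1$, I would argue that each summand satisfies $\alpha\leq 1$ by the corresponding hypothesis in $\hatom_1$, so the whole sum is below $1$ by monotonicity of $+$ (equivalently, by the universal property of the join).

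For the second claim, $\proves[\hatom]\alpha\alpha=\alpha$ for a fixed atom $\alpha$, I would again split into two inequations. The inequation $\alpha\alpha\leq \alpha$ follows from $\alpha\leq 1\in\hatom_1$ by multiplying on the right (or left) by $\alpha$ and using $\alpha\cdot 1=\alpha$. For the converse, $\alpha\leq \alpha\alpha$, I would start from $\alpha=\alpha\cdot 1$, then use the first claim of the lemma to rewrite $1$ as $\sum_{\beta\in\At}\beta$, distribute to obtain $\alpha=\sum_{\beta\in\At}\alpha\beta$, and then use the hypotheses $\alpha\beta\leq 0$ from $\hatom_0$ to kill every summand with $\beta\neq\alpha$, leaving only $\alpha\alpha$.

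No step should be delicate: the only small point to watch is that in the second claim I am allowed to reuse the first claim, which is fine since they are proved in order. Everything else is pure KA reasoning (distributivity, monotonicity, the identity $\alpha\cdot 1=\alpha$, and the fact that $0$ is neutral for $+$), combined with single applications of the hypotheses in $\hatom_0$, $\hatom_1$, $\hatom_2$.
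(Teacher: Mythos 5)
Your proposal is correct and follows essentially the same route as the paper: the first equation from $\hatom_1$ and $\hatom_2$, and the second via $\alpha=\alpha\cdot 1=\alpha\cdot\sum_{\beta\in\At}\beta=\sum_{\beta\in\At}\alpha\beta=\alpha\alpha$, using $\hatom_0$ to eliminate the cross-terms. The only difference is cosmetic: your separate derivation of $\alpha\alpha\leq\alpha$ from $\hatom_1$ is redundant, since the chain you give for the other direction already establishes the equality outright.
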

\begin{proof}
  The first equation comes from $\hatom_1$ and $\hatom_2$
  For the second one,
  \begin{align*}
    \proves[\hatom] \alpha  = \alpha \cdot 1 =\alpha \cdot \sum_{\beta\in\At}\beta = \sum_{\beta\in\At} \alpha\cdot \beta = \alpha\cdot\alpha
  \end{align*}
  where we use $\hatom_0$ in the last step.
\end{proof}

\begin{lem}\label{lem:atom:r:kat}
  For all $e\leq f\in \hkat$, we have that $\proves[\hatom] r(e)\leq r(f)$.
\end{lem}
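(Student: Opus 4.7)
The plan is to do a case analysis on the inequation $e\leq f\in\hkat=\hbool\cup\hglue$, exploiting the fact that $r$ sends every Boolean expression $\phi$ to the sum of its satisfying atoms. Throughout, we will use \cref{lem:atomlaws}, which gives us the two fundamental KA-level identities between atoms: $\sum_{\alpha\in\At}\alpha=1$ and $\alpha\alpha=\alpha$, the latter following from $\hatom_0$ plus orthogonality.

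First I would deal with the $\hbool$ part. If $\phi\leq\psi$ is a Boolean algebra inequation, then by soundness of Boolean algebra $\alpha\models\phi$ implies $\alpha\models\psi$ for every atom $\alpha$, so that the set $\{\alpha\mid\alpha\models\phi\}$ is included in $\{\alpha\mid\alpha\models\psi\}$. Thus $r(\phi)=\sum_{\alpha\models\phi}\alpha\leq\sum_{\alpha\models\psi}\alpha=r(\psi)$ already in plain KA, with no need for $\hatom$.

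Next I would treat the four families of glueing inequations. For $\bot=0$ the sum defining $r(\bot)$ is empty so $r(\bot)=0=r(0)$ syntactically. For $\top=1$ we get $r(\top)=\sum_{\alpha\in\At}\alpha$ which equals $1=r(1)$ under $\hatom$ by the first part of \cref{lem:atomlaws}. For $\phi\vee\psi=\phi+\psi$, observe that $\{\alpha\mid\alpha\models\phi\vee\psi\}=\{\alpha\mid\alpha\models\phi\}\cup\{\alpha\mid\alpha\models\psi\}$, so by idempotency of $+$ in KA the two sums $r(\phi\vee\psi)$ and $r(\phi)+r(\psi)=r(\phi+\psi)$ are equal already in KA.

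The main obstacle is the remaining case $\phi\wedge\psi=\phi\cdot\psi$, which is the only place where the hypotheses of $\hatom$ are really needed on both sides. Distributing the product gives
\[
r(\phi\cdot\psi)=r(\phi)\cdot r(\psi)=\sum_{\alpha\models\phi,\;\beta\models\psi}\alpha\cdot\beta.
\]
Using the inequations of $\hatom_0$, every cross-term $\alpha\cdot\beta$ with $\alpha\neq\beta$ collapses to $0$, and the diagonal terms $\alpha\cdot\alpha$ simplify to $\alpha$ by \cref{lem:atomlaws}. What remains is exactly $\sum_{\alpha\models\phi\wedge\psi}\alpha=r(\phi\wedge\psi)$, so $\proves[\hatom] r(\phi\cdot\psi)=r(\phi\wedge\psi)$. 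Since each hypothesis in $\hglue$ is understood as a pair of inequations, this case and the previous ones together establish the lemma in both directions.
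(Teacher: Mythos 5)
Your proof is correct and follows essentially the same route as the paper's: a case analysis over the five families of hypotheses, using \cref{lem:atomlaws} for the $\top=1$ case and for collapsing the diagonal terms $\alpha\alpha$ in the $\phi\wedge\psi=\phi\cdot\psi$ case, with $\hatom_0$ killing the cross-terms. The only (immaterial) difference is in the $\hbool$ case, where you derive a KA inequality from the inclusion of satisfying-atom sets, while the paper observes the slightly stronger fact that $r(\phi)$ and $r(\psi)$ are syntactically identical since an instance of a Boolean algebra axiom is satisfied by exactly the same atoms on both sides.
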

\begin{proof}
  We have five families of equations to consider, and thus ten families of inequations. We derive  equalities directly, so that we only have to consider five cases:
  \begin{itemize}
  \item $\bot=0$: we derive
    \begin{align*}
      \proves r(\bot)=\sum_{\alpha \models \bot} \alpha =0=r(0)
    \end{align*}
  \item $\phi\vee\psi = \phi+\psi$: we derive
    \begin{align*}
        \proves
         r(\phi+\psi)
      = r(\phi)+r(\psi)
      = \sum_{\alpha\models\phi}\alpha+\sum_{\alpha\models\psi}\alpha 
      = \sum_{\alpha\models\phi\text{ or }\alpha\models\psi}\alpha 
      = \sum_{\alpha\models\phi\vee\psi}\alpha
       = r(\phi\vee\psi)
    \end{align*}
  \item $\top=1$: we derive
    \begin{align*}
      \proves[\hatom] r(\top)=\sum_{\alpha \models\top}\alpha=\sum_{\alpha\in\At}\alpha= 1=r(1)
    \end{align*}
    (using \cref{lem:atomlaws} for the last but one step)
  \item $\phi\wedge\psi = \phi\cdot \psi$: we derive
    \begin{align*}
      \proves[\hatom]
         r(\phi\cdot \psi)
       = r(\phi)\cdot r(\psi)
       = \sum_{\alpha\models\phi}\alpha \sum_{\beta \models\psi} \beta
      &= \sum_{\alpha\models\phi,\text{ } \beta\models\psi}\alpha\beta \\
      &= \sum_{\alpha\models\phi,\text{ } \alpha\models\psi}\alpha\alpha \tag{$\hatom_0$}\\
      &= \sum_{\alpha\models\phi,\text{ } \alpha\models\psi}\alpha \tag{\cref{lem:atomlaws}}\\
      &= \sum_{\alpha\models\phi\wedge\psi}\alpha
       = r(\phi\wedge  \psi)
    \end{align*}
  \item $\phi = \psi$, an instance of a Boolean algebra axiom: in this case, $\alpha\models\phi$ iff $\alpha\models\psi$ for all $\alpha$, so that $r(\phi)$ and $r(\psi)$ are identical.\qedhere
  \end{itemize}
\end{proof}

\begin{lem}
  \label{lem:kat:atom}
The homomorphism $r$ yields a reduction from \hkat to \hatom.
\end{lem}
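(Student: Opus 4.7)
The plan is to apply \cref{prop:hred} with source alphabet $\Sigma+\termsba$ and target alphabet $\Gamma=\Sigma+\At$, noting that $\At\subseteq\termsba$ (each atom $\alpha$ is a well-formed Boolean expression, viz.\ the complete conjunction of literals it represents), so that $\Gamma\subseteq\Sigma+\termsba$ as required. Since $r$ is already defined as a homomorphism, it remains to check the four conditions of \cref{prop:hred}.

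For condition~\eqref{hred:proves}, i.e., $\proves[\hkat]\hatom$, each family of hypotheses in $\hatom$ is derivable from \hkat through the combined use of \hbool and \hglue. For $\hatom_0$, distinct atoms $\alpha\neq\beta$ satisfy $\BAproves\alpha\wedge\beta=\bot$ (two complete and distinct conjunctions of literals are mutually exclusive), and then \hglue converts $\wedge$ to $\cdot$ and $\bot$ to $0$. For $\hatom_1$, $\BAproves\alpha\leq\top$ combined with $\top=1$ from \hglue yields $\proves[\hkat]\alpha\leq 1$. For $\hatom_2$, the identity $\BAproves\top=\bigvee_{\alpha\in\At}\alpha$ recalled in the preliminaries on atoms, together with the hypotheses $\top=1$ and $\phi\vee\psi=\phi+\psi$ of \hglue, gives $\proves[\hkat] 1 \leq \sum_{\alpha\in\At}\alpha$.

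For condition~\eqref{hred:proves:r}, on letters $a\in\Sigma$ there is nothing to prove since $r(a)=a$. On a Boolean expression $\phi\in\termsba$, we invoke $\BAproves\phi=\bigvee_{\alpha\models\phi}\alpha$ together with the equation $\phi\vee\psi=\phi+\psi$ from \hglue to conclude $\proves[\hkat]\phi=\sum_{\alpha\models\phi}\alpha=r(\phi)$. Condition~\eqref{hred:idem} is trivial on both parts of $\Gamma$: on $a\in\Sigma$ we have $r(a)=a$; on $\alpha\in\At$, since $\alpha$ is the unique atom satisfying $\alpha$, we get $r(\alpha)=\sum_{\beta\models\alpha}\beta=\alpha$, so $\proves a\leq r(a)$ in both cases.

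Finally, condition~\eqref{hred:hyps} is exactly the content of \cref{lem:atom:r:kat}, which has just been established. The only conceptually nontrivial step is this last one — and it has already been handled — so the overall proof is a direct, modular assembly using \cref{prop:hred}: no transfinite or closure-level manipulation of \hkat is needed thanks to the homomorphic shape of $r$.
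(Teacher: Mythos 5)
Your proof is correct and takes essentially the same route as the paper's: both verify the four conditions of \cref{prop:hred} in the same way, deriving $\hatom$ from $\hkat$ via the glueing and Boolean-algebra hypotheses, using $\BAproves\phi=\bigvee_{\alpha\models\phi}\alpha$ for condition~\eqref{hred:proves:r}, observing that $r$ acts as the identity on letters of $\Sigma+\At$ for condition~\eqref{hred:idem}, and delegating condition~\eqref{hred:hyps} to \cref{lem:atom:r:kat}. Your explicit observation that $\At\subseteq\termsba$, so that the target alphabet $\Sigma+\At$ is indeed included in the source alphabet as \cref{def:red} requires, merely makes precise a point the paper leaves implicit.
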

\begin{proof}
  We use \cref{prop:hred}. We first need to show $\proves[\hkat]\hatom$: for
  $\alpha,\beta\in\At$ with $\alpha\neq\beta$, we have the following derivations in $\KA_\hkat$
  \begin{mathpar}
  \alpha\cdot \beta = \alpha\wedge\beta = \bot = 0
  \and
  \alpha \leq \top = 1
  \and
  1= \top = \bigvee_{\alpha\models\top} \alpha = \sum_{\alpha\in\At} \alpha
  \end{mathpar}
  Now for $a\in \Sigma+\At$, we have $a=r(a)$ (syntactically):
  if $a=\ltr a\in\Sigma$, then $r(\ltr a)=\ltr a$; if $a=\alpha\in\At$, then  $r(\alpha)=\sum_{\alpha\models\alpha} \alpha = \alpha$.
  Condition~\ref{hred:idem} is thus satisfied, and it suffices to verify condition~\ref{hred:proves:r} for $\phi\in\termsba$. In this case, we have
  we have $\proves[\hkat] r(\phi)=\sum_{\alpha \models\phi} \alpha =\bigvee_{\alpha\models\phi} \alpha = \phi$. The last condition~(\ref{hred:hyps}) was proven in \cref{lem:atom:r:kat}.
\end{proof}

We can finally conclude:
\begin{thm}\label{thm:kat}
  For all $e,f\in\termskat$, $\sem\hkat e=\sem\hkat f$ implies $\KAproves[\hkat] e=f$.
\end{thm}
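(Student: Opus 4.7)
The plan is to build a reduction from $\hkat$ all the way down to the empty set, and then invoke \cref{thm:red} together with completeness of plain KA (\cref{thm:ka}). The first step of this chain has already been performed as \cref{lem:kat:atom}: $\hkat$ reduces to $\hatom = \hatom_0 \cup \hatom_1 \cup \hatom_2$. So the remaining work is to show that $\hatom$ (over the alphabet $\Sigma + \At$) reduces to $\emptyset$, which I would do by peeling off the two "natural blocks" of $\hatom$ in turn.

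First I would handle the block $\hatom_1 \cup \hatom_2$. These hypotheses together are equiderivable with the single equation $1 = \sum_{\alpha \in \At}\alpha$: $\hatom_1$ yields $\sum_{\alpha} \alpha \leq 1$ and $\hatom_2$ gives the converse inequality, while conversely each $\alpha \leq 1$ and $1 \leq \sum_\alpha \alpha$ follows from this equation and KA. By \cref{cor:red:id}, the two sets reduce to each other, and the singleton set reduces to $\emptyset$ by \cref{lem:reds}(\ref{it:red:1S}), applied with $I = \set{1}$ and $S_1 = \At$ (finite since $\Omega$ is). Composing with \cref{lem:red:comp} gives a reduction $\hatom_1 \cup \hatom_2 \to \emptyset$.

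Next I would bring $\hatom_0$ back in. As $\At$ is finite, so is $\hatom_0$, and the note after \cref{lem:reds} shows that $\hatom_0$ is equiderivable with the single hypothesis $e \leq 0$ where $e = \sum_{\alpha \neq \beta}\alpha\beta$. By \cref{cor:red:id} again, $\hatom = \hatom_0 \cup (\hatom_1 \cup \hatom_2)$ and $(\hatom_1 \cup \hatom_2) \cup \set{e \leq 0}$ reduce to each other. Then \cref{lem:red:0}, taken with $H = \hatom_1 \cup \hatom_2$, reduces the latter to $\hatom_1 \cup \hatom_2$. Chaining everything via \cref{lem:red:comp} yields the desired reduction
\[
\hkat \;\to\; \hatom \;\to\; \hatom_1 \cup \hatom_2 \;\to\; \emptyset,
\]
and then \cref{thm:red} combined with \cref{thm:ka} delivers completeness of $\KA_\hkat$. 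I do not expect any serious obstacle here: the key tools (\cref{lem:kat:atom}, \cref{lem:reds}, \cref{lem:red:0}, \cref{cor:red:id}, \cref{lem:red:comp}) are already in place, and the only mild bookkeeping is translating between $\hatom_0$, $\hatom_1\cup\hatom_2$ and their single-hypothesis reformulations via \cref{cor:red:id}.
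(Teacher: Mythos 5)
Your proposal is correct and follows essentially the same route as the paper's proof: reduce $\hkat$ to $\hatom$ via \cref{lem:kat:atom}, eliminate $\hatom_0$ via \cref{lem:red:0}, dispatch $\hatom_1\cup\hatom_2$ via \cref{lem:reds}(\ref{it:red:1S}), and compose with \cref{lem:red:comp} before invoking \cref{thm:red} and \cref{thm:ka}. Your extra detours through \cref{cor:red:id} (packaging $\hatom_0$ as a single $e\leq 0$ and rephrasing $\hatom_1\cup\hatom_2$ as the equation $1=\sum_{\alpha\in\At}\alpha$) only make explicit bookkeeping that the paper leaves implicit in the remark following \cref{lem:reds}.
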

\begin{proof}
  By completeness of KA (\cref{thm:ka}) and \cref{thm:red}, it suffices to show that
  \hkat reduces to the empty set.
  By \cref{lem:kat:atom}, \hkat reduces to \hatom, 
  which reduces to $\hatom_{1,2} \eqdef \hatom_1 \cup \hatom_2$ by \cref{lem:red:0}, which reduces to the empty set by~\cref{lem:reds}\ref{it:red:1S}. We conclude by \cref{lem:red:comp}.
\end{proof}

\section{More tools}
\label{sec:moretools}

Before dealing with other examples, we define a few more tools for combining reductions.
To alleviate notation in the sequel, we often omit the functional composition symbol $(\circ)$, thus writing $H_1H_2$ for the composition $H_1\circ H_2$, or $\clH n\dots \clH 0$ for $\clH n\circ\dots\circ \clH 0$.

\subsection{Combining more than two reductions}
\label{ssec:more:compo}

We start with a variant of \cref{lem:cup:2} to compose more than two reductions, in the case they all share a common target.
A similar lemma is formulated in the setting of bi-Kleene algebra~\cite[Lemma~4.49]{kappethesis}.
\begin{prop}
  \label{prop:cup}
  Let $H_0, \ldots, H_n, H'$ be sets of hypotheses over a common alphabet $\Sigma$.
  Let $H=\bigcup_{i \leq n} H_i$.
  If $H_i$ reduces to $H'$ for all $i$ and  $\cl H \subseteq\clH n \dots \clH0$,
  then $H$ reduces to $H'$.
\end{prop}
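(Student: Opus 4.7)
The plan is to generalize the proof of \cref{lem:cup:2} to a composition of $n+1$ reductions that all share the same target $H'$. For each $i$, let $r_i$ be the reduction witnessing $H_i \leq H'$ (note that since everything is over the common alphabet $\Sigma$, each $r_i$ maps $\termska(\Sigma)$ to itself). I would take as candidate reduction $r \eqdef r_n \circ \cdots \circ r_0$, and verify the three conditions of \cref{def:red}.

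Condition~\eqref{red:proves}, $\proves[H]H'$, is immediate since $H_0 \subseteq H$ and $\proves[H_0]H'$. For condition~\eqref{red:proves:r}, I would argue by a finite induction on $i$ that $\proves[H] e = r_i(\cdots r_0(e))$: each step uses that $r_i$ is a reduction from $H_i$ to $H'$, so $\proves[H_i] r_{i-1}(\cdots r_0(e)) = r_i(\cdots r_0(e))$, which lifts to $\proves[H]$ because $H_i \subseteq H$.

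The interesting point is condition~\eqref{red:lang:r}, which (since there is only one alphabet) reduces to $\sem H e \subseteq \sem{H'}{r(e)}$. The starting point is the assumption $\cl H \subseteq \clH n \cdots \clH 0$, applied to $\lang e$. The key auxiliary observation is that $\cl{H'} \subseteq \clH i$ for every $i$: this follows from $\proves[H_i]H'$ via \cref{lem:clo:incl}, and, combined with idempotence and monotonicity of $\clH i$, it yields $\clH i \circ \cl{H'} = \clH i$. This identity lets me absorb the outer $\cl{H'}$ produced by applying each $r_i$, so that the next $\clH{i+1}$ can in turn be consumed by $r_{i+1}$. Concretely, I would prove by induction on $i$ that
\begin{equation*}
\clH i \cdots \clH 0 (\lang e) \,\subseteq\, \sem{H'}{r_i(\cdots r_0(e))}\,,
\end{equation*}
the base case ($i=0$) being exactly condition~\eqref{red:lang:r} for $r_0$, and the inductive step using monotonicity of $\clH{i+1}$, the absorption identity $\clH{i+1}\cl{H'} = \clH{i+1}$, and condition~\eqref{red:lang:r} for $r_{i+1}$. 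Taking $i = n$ and combining with the hypothesis $\cl H \subseteq \clH n \cdots \clH 0$ gives the desired inclusion $\sem H e \subseteq \sem{H'}{r(e)}$.

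The only slightly delicate point is the absorption step, which crucially requires $\cl{H'} \subseteq \clH i$; everything else is bookkeeping. No transfinite or ordinal arguments are needed since $n$ is finite, and the proof remains essentially the same as that of \cref{lem:cup:2}, just iterated.
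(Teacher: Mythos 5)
Your proposal is correct and follows essentially the same route as the paper's proof: it composes the witnesses as $r_n\circ\cdots\circ r_0$, handles conditions~(1) and~(2) by an easy finite induction, and establishes condition~(3) by induction on the inclusion $\clH i\cdots\clH 0(\lang e)\subseteq\sem{H'}{r_i(\cdots r_0(e))}$, finally composing with the hypothesis $\cl H\subseteq\clH n\cdots\clH 0$. Your ``absorption identity'' $\clH{i+1}\circ\cl{H'}=\clH{i+1}$ is exactly the step the paper performs via $\proves[H_{i+1}]H'$, \cref{lem:clo:incl}, and the closure property of $\clH{i+1}$.
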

\begin{proof}
  The proof follows the same pattern as that of \cref{lem:cup:2}. Unfortunately, we cannot simply iterate that lemma as a black box: the assumption $\cl H =\clH n \dots \clH0$ does not provide enough information about closures associated to subsets of $H_0,\dots,H_n$.

  Let $r_0, \ldots, r_n$ be the reductions from $H_0, \ldots, H_n$ to $H'$.
  We show that $r_n \circ \ldots \circ r_0$ is a reduction from $H$ to $H'$.
  \begin{enumerate}
  \item Since $\proves[H_0] H'$, we have $\proves[H] H'$.
  \item By an easy induction on  $k\le n$, we obtain that for all $e$, $\proves[H] e = r_k \ldots r_0(e)$; whence the second requirement for $k=n$ holds.
  \item For the third requirement, note that the restriction to the codomain alphabet is void since there is a single alphabet (so that we simply have $\sem{H_i}e \subseteq \sem{H'}{r_i(e)}$ for all $i,e$).
    We prove by induction on $k\le n$ that for all $e$,  $\clH k \ldots \semH0e \subseteq \sem{H'}{r_k \ldots r_0(e)}$.
    \begin{itemize}
    \item for $k=0$, the statement holds by assumption about $r_0$: $\semH0e\subseteq\sem{H'}{r_0(e)}$;
    \item for $0<k<n$, we have
      \begin{align*}
        & \clH k \clH {k-1} \ldots \semH0e  \\
        \subseteq~& \clH k\sem{H'}{r_{k-1} \ldots r_0(e)} \tag{IH}\\
        \tag{$\proves[H_k] H'$ and \cref{lem:clo:incl}}
        \subseteq~& \clH k\sem{H_k}{r_{k-1} \ldots r_0(e)}\\
        \tag{$\clH k$ a closure}
        =~& \semH k{r_{k-1} \ldots r_0(e)}\\
        \tag{$r_k$ is a reduction}
        \subseteq~& \sem{H'}{r_k r_{k-1} \ldots r_0(e)}
      \end{align*}
    \end{itemize}
    Together with $\cl H\subseteq\clH n \dots \clH0$, the case $k=n$ finally gives $\sem H e\subseteq\sem{H'}{r_n \ldots r_0(e)}$, as required.
    \qedhere
  \end{enumerate}
\end{proof}

\subsection{Organising closures}
\label{ssec:more:gsos}

The previous proposition makes it possible to combine reductions for various sets of hypotheses, provided the closure associated to the union of those hypotheses can be presented as a sequential composition of the individual closures.

The two results below help to obtain such presentations more easily.
They are stronger (i.e., have weaker hypotheses) than similar results proposed in the past (e.g., \cite[Lemma 4.50]{kappethesis}, \cite[Lemma 3.10]{prw:ramics21:mkah})
As before, these results are best presented at a much higher level of generality; they are proved in \cref{app:closures}.

\begin{lem}
  \label{lem:gsos:dl}
  Let $H_0,\dots,H_n$ be sets of hypotheses; write $\Hlt j$ for $\bigcup_{i<j}H_i$.
  Suppose that for all $j\leq n$, we have $\clHlt j\clH j \subseteq \clH j\clHlt j$.
  Then $\clHlt{n+1}=\clH n\dots \clH 0$.
\end{lem}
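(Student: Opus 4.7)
My plan is to prove the statement by induction on $n$, using \cref{lem:dl:2} at each inductive step to peel off the last closure $\clH n$.

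\textbf{Base case.} When $n=0$, we have $\Hlt 1 = H_0$, so $\clHlt 1 = \clH 0$, which matches the right-hand side $\clH 0$. (The hypothesis for $j=0$ degenerates, since $\clHlt 0 = \id$, so it is trivially satisfied.)

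\textbf{Inductive step.} Assume the claim for $n-1$, so that $\clHlt n = \clH{n-1}\cdots\clH 0$. Observe that $\Hlt{n+1} = \Hlt n \cup H_n$. The hypothesis of the lemma, instantiated at $j=n$, reads
\[
\clHlt n \clH n \subseteq \clH n \clHlt n.
\]
By \cref{lem:dl:2} applied to the two sets of hypotheses $\Hlt n$ and $H_n$, this inclusion is equivalent to $\cl{(\Hlt n\cup H_n)} = \clH n \circ \clHlt n$, i.e., $\clHlt{n+1} = \clH n\,\clHlt n$. Combining this with the induction hypothesis gives $\clHlt{n+1} = \clH n\,\clH{n-1}\cdots\clH 0$, as required.

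\textbf{Main obstacle.} There is no serious obstacle once \cref{lem:dl:2} is in place: the lemma essentially packages the induction and the non-trivial exchange direction is already subsumed by \cref{lem:dl:2}. The only subtlety is that the hypothesis of the present lemma is stated uniformly for \emph{all} $j\le n$, which is exactly what is needed to fire \cref{lem:dl:2} at each level of the induction; one should be careful to note that instances of the hypothesis for $j<n$ are the ones used (implicitly, through the induction hypothesis) to establish $\clHlt n = \clH{n-1}\cdots\clH 0$, while the instance for $j=n$ is what peels off $\clH n$.
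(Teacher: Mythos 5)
Your proof is correct and matches the paper's own argument: the paper proves this lemma (via \cref{prop:gsos:dl:gen}) by exactly the same induction, applying the two-set exchange lemma (\cref{lem:dl:2}, in its abstract form \cref{prop:gsos:dl:2:gen}) at each step to peel off the last closure. Your handling of the base case and of which instances of the hypothesis are consumed at each level is also accurate, so nothing is missing.
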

\begin{proof}[Proof idea]
  By iterating \cref{lem:dl:2}---see~\cref{prop:gsos:dl:gen}.
\end{proof}

\begin{prop}
  \label{prop:gsos}
  Let $H_0,\dots,H_n$ be sets of hypotheses such that $H_i$ is affine except possibly for $H_n$;
  write $\Hlt j$ for $\bigcup_{i<j}H_i$ as before.

  \noindent
  If for all $j\leq n$ we have either
  $\begin{cases}
  (1)~\forall i<j,~H_i\,H_j\subseteq \clH j\,\Hlt j^=\,, \text{ or}\\
  (2)~\forall i<j,~H_i\,H_j\subseteq H_j^=\,\clHlt j\,,
  \end{cases}$
  then $\clHlt{n+1}=\clH n\dots \clH 0$.
\end{prop}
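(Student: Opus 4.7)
The plan is to reduce the claim to \cref{lem:gsos:dl} applied to the sequence $H_0,\dots,H_n$: it suffices to show that for each $j\leq n$ we have $\clHlt j\,\clH j\subseteq\clH j\,\clHlt j$. Each such inclusion will in turn be obtained by applying \cref{lem:gsos:2} with $H_1\eqdef\Hlt j$ and $H_2\eqdef H_j$.

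To apply \cref{lem:gsos:2}, we first need $H_1=\Hlt j$ to be affine. Since $j\leq n$ and the possibly non-affine set is only $H_n$, all the $H_i$ with $i<j$ are affine. Now the one-step function associated to $\Hlt j=\bigcup_{i<j}H_i$ is (by its very definition) the pointwise union $\bigcup_{i<j}H_i$ of the one-step functions; and the pointwise union of affine functions on a complete lattice is again affine, by the commutation of joins with joins. So $\Hlt j$ is affine.

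Next we transfer hypothesis (1) or (2) from the individual $H_i$ to their union. Observe that for any function $G$ and any family $(F_i)_{i\in I}$ of linear functions (in particular, the one-step functions $H_i$ are linear since they preserve arbitrary unions), one has $(\bigcup_i F_i)\circ G = \bigcup_i (F_i\circ G)$ by computing pointwise. Hence
\[
  \Hlt j\,H_j \;=\; \bigcup_{i<j}\,H_i\,H_j\,.
\]
If condition~(1) holds for $j$, each term on the right is contained in $\clH j\,\Hlt j^=$, so $\Hlt j\,H_j\subseteq \clH j\,\Hlt j^=$, which is exactly hypothesis~(1) of \cref{lem:gsos:2} for $H_1=\Hlt j$ and $H_2=H_j$. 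Similarly, condition~(2) for $j$ gives $\Hlt j\,H_j\subseteq H_j^=\,\clHlt j$, i.e., hypothesis~(2) of \cref{lem:gsos:2}. In either case, we conclude $\clHlt j\,\clH j\subseteq\clH j\,\clHlt j$.

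Having established these commutation inequalities for every $j\leq n$, \cref{lem:gsos:dl} immediately yields $\clHlt{n+1}=\clH n\cdots\clH 0$, as desired. No step is really an obstacle: the only small care is the observation that pointwise unions of affine (resp.\ linear) functions remain affine (resp.\ linear), which justifies both that $\Hlt j$ is affine and that composition distributes over the union $\bigcup_{i<j}H_i$ on the left—everything else is a direct application of the two lemmas already proved in \cref{app:closures}.
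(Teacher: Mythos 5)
Your proof is correct and follows essentially the same route as the paper's: the paper proves this statement as \cref{cor:gsos:gen} in \cref{app:closures}, by summing the hypotheses over $i<j$ to obtain either $\Hlt j\,H_j\subseteq \clH j\,\Hlt j^=$ or $\Hlt j\,H_j\subseteq H_j^=\,\clHlt j$, applying (the abstract form of) \cref{lem:gsos:2} to each $j$ to get $\clHlt j\,\clH j\subseteq\clH j\,\clHlt j$, and concluding by (the abstract form of) \cref{lem:gsos:dl}. One caveat on a side remark: your parenthetical claim that the one-step functions $H_i$ are linear is false in general---the paper's own example $\set{1\leq a+b}$ yields a function that is not even affine---but this does not affect the argument, since the identity $\paren{\bigcup_{i<j}H_i}\circ H_j=\bigcup_{i<j}\paren{H_i\circ H_j}$ holds for arbitrary functions: joins of functions into a complete lattice are computed pointwise, so distributing a join over composition on the left requires no preservation property of the $H_i$ whatsoever (this is exactly the paper's ``summing the hypotheses'' step).
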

\begin{proof}[Proof idea]
  By combining \cref{lem:gsos:dl} and \cref{lem:gsos:2}---see~\cref{cor:gsos:gen}.
\end{proof}

\subsection{Proving partial commutations}
\label{ssec:more:overlaps}

The last proposition above is convenient as it makes it possible to analyse only compositions of single rewriting steps (i.e., functions of the shape $H_iH_j$). In the remainder of this section we provide additional tools to analyse such compositions (\cref{prop:overlaps}).

We first need to discuss an important property satisfied by all closures associated to sets of hypotheses.
\begin{defi}
  \label{def:con}
  A monotone function $f$ on languages is \emph{contextual} if for all words $u,v$ and languages $K$, we have $u\cdot f(K)\cdot v\subseteq f(u\cdot K\cdot v)$.
\end{defi}
\begin{lem}
  \label{lem:con}
  The identity function is contextual, the composition of contextual functions is contextual, the union of contextual functions is contextual. Given a set $H$ of hypotheses, the functions $H$ and $H^\star$ are contextual.
\end{lem}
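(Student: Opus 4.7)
The plan is to handle each claim in order of increasing subtlety, drawing only on \cref{def:con} and, for the last claim, the abstract theory of \cref{app:closures}. The first three assertions should be direct. For the identity, $u \cdot \id(K) \cdot v = uKv$, so the required inclusion is actually an equality. For composition, if $f$ and $g$ are contextual, I will chain two inclusions: $u \cdot f(g(K)) \cdot v \subseteq f(u \cdot g(K) \cdot v) \subseteq f(g(uKv))$, using contextuality of $f$ for the first step and monotonicity of $f$ applied to contextuality of $g$ for the second. For the binary union, I will distribute concatenation over union on both sides, $u \cdot (f(K) \cup g(K)) \cdot v = u f(K) v \cup u g(K) v \subseteq f(uKv) \cup g(uKv) = (f \cup g)(uKv)$, and note that the same distributivity covers arbitrary suprema, which will be useful in the last step.

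For contextuality of $H$ itself, I will unfold its definition directly from \cref{ssec:clo:props}. Any element of $u \cdot H(K) \cdot v$ lies in $(uu') \lang{e} (v'v)$ for some hypothesis $e \leq f \in H$ and some words $u', v'$ such that $u' \lang{f} v' \subseteq K$. Multiplying this inclusion on the left by $u$ and on the right by $v$ yields $(uu') \lang{f} (v'v) \subseteq uKv$, so the same hypothesis together with the enlarged context $(uu', v'v)$ witnesses membership of the element in $H(uKv)$, as required.

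The main obstacle is contextuality of $\cl H$, and my plan is to obtain it as a corollary of the general theory of least closures recalled from \cref{app:closures}. That theory describes the least closure above a monotone function as a supremum of iterates built from the function, the identity, composition, and (arbitrary) joins. Since I will have already shown that $\id$ and $H$ are contextual and that composition and arbitrary unions preserve contextuality, every stage of such a construction produces a contextual function, and the resulting supremum $\cl H$ inherits the property. The only delicate point I expect is matching the precise form of the construction given in the appendix to the operations whose preservation of contextuality I have established; once this correspondence is in place, contextuality of $\cl H$ follows by an entirely routine induction on the construction.
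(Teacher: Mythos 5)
Your handling of the identity, composition, union, and of the one-step function $H$ is correct and coincides with the paper's: in \cref{app:con} the first three claims are recorded as a Fact about contextual functions (closure under composition and arbitrary joins, containing the identity), and contextuality of $H$ is dismissed as easy, by exactly the unfolding you give. The genuine problem is your last step, contextuality of $\cl H$. You appeal to a description of the least closure ``as a supremum of iterates'' that you attribute to \cref{app:closures}, but no such description exists there: the appendix defines $\cl s(x)$ as the Knaster--Tarski least fixpoint $\mu(\lambda y.\,x+s(y))$ and develops everything by fixpoint induction --- the authors state explicitly that this appendix was written to eliminate the transfinite-induction proofs of the conference version. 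The distinction is not cosmetic: the naive reading of your plan, $\cl H=\bigcup_n (H\cup\id)^n$, is false in general, because $H$ need not preserve directed unions (the side condition $u\lang f v\subseteq L$ may involve an infinite language $\lang f$). Concretely, for $H=\set{a\leq b^*,\ b\leq 1}$ one computes $\bigcup_n(H\cup\id)^n(\set 1)=\lang{b^*}$, which is not $H$-closed: the word $a$ enters the closure only after the $\omega$-th stage, so an induction over finite iterates never reaches $\cl H$. Your argument can be repaired by iterating through the ordinals --- each transfinite stage is built from $\id$ and $H$ by composition and arbitrary joins, all of which you have shown preserve contextuality --- but then you must yourself establish that ordinal iteration converges to the least closure, i.e.\ you are re-importing precisely the transfinite machinery the paper discarded.

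The paper's actual argument avoids iteration altogether and is worth knowing. Writing $f_{u,v}\colon K\mapsto uKv$ for the context maps, contextuality of a function $s$ is exactly the family of semi-commutations $f_{u,v}\circ s\subseteq s\circ f_{u,v}$. Each $f_{u,v}$ is linear, hence affine, and $\cl H$ is a closure; so from $f_{u,v}\circ H\subseteq H\circ f_{u,v}\subseteq \cl H\circ f_{u,v}$, \cref{prop:clo:iter:r} (proved once and for all via upper adjoints and fixpoint induction) yields $f_{u,v}\circ \cl H\subseteq \cl H\circ f_{u,v}$, which is \cref{lem:clo:con}; combined with the easy contextuality of $H$ this gives \cref{lem:hyp:con}. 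If you want a proof in the spirit of the journal version, substitute this fixpoint-induction step for your induction on iterates; otherwise, make the transfinite iteration explicit rather than deferring it as routine.
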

\begin{proof}
  See \cref{app:con}.
\end{proof}

For closures, we actually have:
\begin{lem}
  \label{lem:con:clo}
  Let $H$ be a set of hypotheses.
  For all languages $L,K$ we have
  \begin{align*}
  \cl H(L)\cdot \cl H(K)\subseteq \cl H(L\cdot K)\,.
  \end{align*}
\end{lem}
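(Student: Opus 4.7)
The plan is to reduce the language-level inclusion to word-level statements, then apply contextuality of $\cl H$ (from \cref{lem:con}) twice, once on each side of the concatenation. Since $\cdot$ distributes over arbitrary unions, it suffices to show that for every $u \in \cl H(L)$ and every $v \in \cl H(K)$, we have $uv \in \cl H(L\cdot K)$.

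First I would apply contextuality of $\cl H$ on the right, with the word $u$ on the left and the empty word on the right, to obtain $u \cdot \cl H(K) \subseteq \cl H(u \cdot K)$. This already gives $uv \in \cl H(u \cdot K)$. The task is then to push this into $\cl H(L\cdot K)$, which we do by showing $u \cdot K \subseteq \cl H(L \cdot K)$ and then appealing to monotonicity together with idempotence of $\cl H$ (i.e., $\cl H \circ \cl H = \cl H$).

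To establish $u \cdot K \subseteq \cl H(L \cdot K)$, I would fix an arbitrary $w \in K$ and apply contextuality a second time, now with the empty word on the left and $w$ on the right: $\cl H(L) \cdot w \subseteq \cl H(L \cdot w) \subseteq \cl H(L \cdot K)$, the last inclusion by monotonicity. Since $u \in \cl H(L)$, this yields $uw \in \cl H(L \cdot K)$. Taking the union over all $w \in K$ gives the desired inclusion $u \cdot K \subseteq \cl H(L \cdot K)$, and then monotonicity plus idempotence yields $\cl H(u \cdot K) \subseteq \cl H(L \cdot K)$. Combining this with the first step delivers $uv \in \cl H(L\cdot K)$.

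I do not expect a significant obstacle here: the proof is essentially a two-sided application of contextuality, the only mild subtlety being that contextuality as stated in \cref{def:con} applies with \emph{words} on either side, so we must decompose $\cl H(L)\cdot \cl H(K)$ pointwise and handle one side at a time rather than try to apply contextuality directly to the pair of languages.
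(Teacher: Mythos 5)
Your proof is correct and takes essentially the same route as the paper's: the paper establishes the statement abstractly for any contextual closure (\cref{lem:con:clo:gen}) via $c(x)\cdot c(y)\leq c(c(x)\cdot y)\leq c(c(x\cdot y))=c(x\cdot y)$, i.e.\ two applications of contextuality followed by idempotence, which is exactly the skeleton of your argument. Your pointwise decomposition into words $u$ and $v$ simply inlines the equivalence between the word-level contextuality of \cref{def:con} and its language-level form, which the paper records separately as \cref{lem:con:iff}.
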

\begin{proof}
  See \cref{lem:con:clo:app}.
\end{proof}
Note that such a property typically does not hold for single step functions $H$: take for instance $H=\set{a\leq b}$ and $L=K=\set{b}$, for which we have $H(L\cdot K)=H(\set{bb})=\set{ab,ba}$ while $H(L)\cdot H(K)=H(\set{b})\cdot H(\set{b})=\set{a}\cdot \set{a}=\set{aa}$.

Also note that the converse inequality does not hold in general: take for instance $H=\set{1\leq aa}$ and $L=K=\set{a}$, for which we have $\cl H(L\cdot K)=\cl H(\set{aa})=\set{1,aa}$ while $\cl H(L)\cdot \cl H(K)=\set{a}\cdot\set{a}=\set{aa}$.

The above property actually makes it possible to construct models of $\KA_H$~\cite{dkpp:fossacs19:kah}, and to prove soundness (\cref{thm:soundness}, cf. \cref{app:soundness}).

\medskip

Now we define a notion of \emph{overlap} between words; together with contextuality, this will help to analyse all potential interactions between two sets of hypotheses.

\begin{defi}
  An \emph{overlap} of two words $u,v$ is a tuple $\tuple{x,y,s,t}$ of words such that $xuy=svt$ and either:

  \noindent
  \begin{minipage}{.5\linewidth}
    \begin{itemize}
    \item $x,t$ are empty and $|y|<|v|$; or
    \item $y,s$ are empty and $|x|<|v|$; or
    \item $x,y$ are empty and $s,t$ are non-empty; or
    \item $s,t$ are empty and $x,y$ are non-empty.
    \end{itemize}
  \end{minipage}
  \begin{minipage}{.5\linewidth}
    \begin{center}
  \begin{tikzpicture}[xscale=.5,yscale=.45]
    \node at (1.5,3) {$\cdot$};
    \draw (0,3) -- +(0,+0.2)-| (2,3);
    \draw (1,3) -- +(0,-0.2)-| (3,3);
    \node at (1.5,2) {$\cdot$};
    \draw (0,2) -- +(0,-0.2)-| (2,2);
    \draw (1,2) -- +(0,+0.2)-| (3,2);
    \node at (0.5,1) {$\cdot$};
    \node at (2.5,1) {$\cdot$};
    \draw (0,1) -- +(0,+0.2)-| (3,1);
    \draw (1,1) -- +(0,-0.2)-| (2,1);
    \node at (0.5,0) {$\cdot$};
    \node at (2.5,0) {$\cdot$};
    \draw (0,0) -- +(0,-0.2)-| (3,0);
    \draw (1,0) -- +(0,+0.2)-| (2,0);
  \end{tikzpicture}    
\end{center}
  \end{minipage}
\end{defi}
The idea is that overlaps of $u,v$ correspond to the various ways $u$ and $v$ may overlap.
Those situations are depicted on the right of each item. In each case the upper segment denotes $u$ while the lower segment denotes $v$, and a dot indicates a non-empty word: we do forbid trivial overlaps (for instance, that the end of $u$ overlaps with the beginning of $v$ via the empty word does not count.)

\begin{fact}
  If $\tuple{x,y,s,t}$ is an overlap of $u,v$ then
 $\tuple{s,t,x,y}$ is an overlap of $v,u$. Moreover,
  \begin{itemize}
  \item There is no overlap of $1$ and a word of length smaller or equal to one.
  \item There is a single overlap of $1$ and a word $ab$ of size two: $\tuple{a,b,1,1}$.
  \item The only overlap between a letter and itself is $\tuple{1,1,1,1}$.
  \item There is no overlap of two distinct letters.
  \item There are at most two overlaps of a letter and a word of size two.
  \item There are at most three overlaps of words of size two.
  \end{itemize}
\end{fact}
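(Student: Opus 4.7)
The plan is to dispatch both parts of the statement by case analysis on the four conditions defining an overlap, together with straightforward length accounting on the equation $xuy = svt$.

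For the symmetry, observe that the equation $xuy = svt$ rewrites to $svt = xuy$, which is exactly the equation needed for $\tuple{s,t,x,y}$ to be an overlap of $v,u$. I would check the side-condition by a short case analysis: an original case~1 overlap (with $x = t = 1$, $|y| < |v|$) yields the reversed tuple $\tuple{s,1,1,y}$, which satisfies case~2 for $v,u$ thanks to the length computation $|s| = |u| + |y| - |v| < |u|$; case~2 maps to case~1 symmetrically; cases~3 and~4 swap directly, since the reversed tuple simply exchanges the roles of the ``outer'' pair $(x,y)$ and the ``inner'' pair $(s,t)$.

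For the six bullets, the main observations are: (a) in cases~3 and~4 one has respectively $|u| \geq |v| + 2$ and $|v| \geq |u| + 2$, because the pair $(s,t)$ (resp.\ $(x,y)$) contributes at least one letter on each side; and (b) in cases~1 and~2 the length identity $|u| + |y| = |s| + |v|$ (resp.\ $|x| + |u| = |v| + |t|$) combined with $|y| < |v|$ (resp.\ $|x| < |v|$) forces $|s| < |u|$ (resp.\ $|t| < |u|$). With these in hand each bullet reduces to a finite check: for $u = 1$ and $|v| \leq 1$, all four cases violate either (a) or (b); for $u = 1, v = ab$, only case~4 survives and uniquely yields $\tuple{a,b,1,1}$; for letters $u,v$ of length one, cases~3 and~4 are ruled out by (a), and cases~1 and~2 reduce to forcing $y = s = 1$ and $u = v$, producing $\tuple{1,1,1,1}$ only in the equal-letter case; for a letter $u$ and a word $v$ of length two, (a) kills cases~3 and~4, and (b) fixes $|s| = |y| = 1$ in case~1 and $|x| = |t| = 1$ in case~2, each yielding at most one overlap depending on whether $u$ matches the relevant letter of $v$, for a total of at most two.

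For two words of length two, cases~3 and~4 are again excluded by (a). In case~1 the constraint $|s| = |y| < 2$ leaves two sub-cases: $|y| = 0$ gives $\tuple{1,1,1,1}$ when $u = v$, while $|y| = 1$ gives $\tuple{1, d, a, 1}$ conditioned on $b = c$. Case~2 symmetrically produces $\tuple{c, 1, 1, b}$ when $a = d$, together with $\tuple{1,1,1,1}$ once more when $u = v$. Since the last tuple coincides with the one already found in case~1, there are at most three distinct overlaps. The only bookkeeping subtlety throughout---and the mild obstacle to watch---is recognising that a single tuple, notably $\tuple{1,1,1,1}$, may be produced by more than one of the four defining cases and so should not be double-counted when bounding the number of overlaps.
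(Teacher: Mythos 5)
The paper states this Fact without any proof at all---it is offered as a routine finite verification---so there is no official argument to compare against; your exhaustive case analysis over the four defining conditions of an overlap, driven by the length observations (a) and (b), is precisely the check the authors leave implicit, and it is sound. In particular the symmetry argument (case~1 maps to case~2 via $|s| = |u| + |y| - |v| < |u|$, and cases~3 and~4 swap), and the explicit care not to double-count $\tuple{1,1,1,1}$ when bounding the overlaps of two length-two words, are exactly right.

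One local slip should be corrected. In the bullet for a letter $u$ and a word $v$ of length two, you claim that (b) fixes $|s| = |y| = 1$ in case~1 and $|x| = |t| = 1$ in case~2. This contradicts your own observation (b), which gives $|s| < |u| = 1$ and hence $|s| = 0$, as well as the length identity $|u| + |y| = |s| + |v|$ (your stated values would give $1 + 1 = 1 + 2$). The correct values are $|s| = 0$, $|y| = 1$ in case~1 and $|x| = 1$, $|t| = 0$ in case~2 (or $|s| = 1$, $|y| = 0$ and $|x| = 0$, $|t| = 1$ if the letter is taken as $v$ instead). The conclusion you draw---at most one overlap from each of cases~1 and~2, so at most two in total---is unaffected, since in each case the equation $xuy = svt$ still pins the tuple down uniquely subject to $u$ matching the appropriate letter of $v$; but as written the intermediate assertion is false and should be fixed.
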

As an example of the last item, $\tuple{1,c,a,1}$ is the only overlap of $ab,bc$ (provided $a,b,c$ are pairwise distinct letters); and $\tuple{1,1,1,1}$, $\tuple{a,1,1,a}$, and $\tuple{1,a,a,1}$ are the three overlaps between $aa$ and itself.

The key property of overlaps is the following lemma.
\begin{lem}
  \label{lem:overlaps}
  For all words $x,u,y,s,v,t$ such that $xuy=svt$, we have either:
  \begin{itemize}
  \item $|xu|\le|s|$, or
  \item $|sv|\le|x|$, or
  \item there are words $l,r$ such that $x=lx'$, $s=ls'$, $y=y'r$, and $t=t'r$ for some overlap
    $\tuple{x',y',s',t'}$ of $u,v$.
  \end{itemize}
\end{lem}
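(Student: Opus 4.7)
The plan is to proceed by a direct case analysis on the relative positions of the two distinguished occurrences in the common ambient string $xuy=svt$. By the evident symmetry of the statement under the swap $(x,u,y)\leftrightarrow(s,v,t)$ (which is already noted in the Fact preceding the lemma), I may assume without loss of generality that $|x|\le|s|$. Since $x$ and $s$ are both prefixes of $xuy=svt$, the inequality $|x|\le|s|$ forces $x$ to be a prefix of $s$, and I can write $s=xs''$ for a unique word $s''$. I set $l\eqdef x$ and $s'\eqdef s''$ once and for all; this already gives $x'=1$ in whatever overlap I end up exhibiting.

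Now suppose neither of the first two alternatives of the conclusion holds. From $|xu|>|s|$ together with $|x|\le|s|$ I immediately obtain the strict inequality $|s''|<|u|$. Moreover, the negation of the second alternative together with $|x|\le|s|$ forces either $|x|<|s|$ (that is, $s''\neq 1$) or $|v|>0$: indeed the only way $|sv|\le|x|$ could hold under $|x|\le|s|$ is if $|x|=|s|$ and $|v|=0$. These two derived facts are the only strictness information I will need; the rest of the proof is pure length arithmetic.

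The next step is to split on the sign of $|xu|-|sv|$. If $|xu|\le|sv|$, then $|t|\le|y|$, so $t$ is a suffix of $y$; write $y=y''t$ and take $r\eqdef t$, $y'\eqdef y''$, $t'\eqdef 1$. Cancelling $x$ on the left and $t$ on the right of $xuy = xs''vt$ yields $uy''=s''v$, hence $|y''|=|s''|+|v|-|u|<|v|$ by the observation $|s''|<|u|$. Thus $\tuple{1,y'',s'',1}$ is an overlap of $u,v$ of type~1. If instead $|xu|>|sv|$, then symmetrically $|y|<|t|$, so $y$ is a suffix of $t$; write $t=t'''y$ and take $r\eqdef y$, $y'\eqdef 1$, $t'\eqdef t'''$. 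The analogous cancellation gives $u=s''vt'''$, and the strict inequality $|xu|>|sv|$ forces $t'''\neq 1$. If $s''\neq 1$ as well, then $\tuple{1,1,s'',t'''}$ is an overlap of type~3; otherwise $s''=1$, so $|x|=|s|$, and the edge-case observation forces $|v|>0$, whence $\tuple{1,1,1,t'''}$ is an overlap of type~2 (the condition $|x'|=0<|v|$ being ensured by $|v|>0$).

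The argument is entirely elementary, so the only real obstacle is bookkeeping: one must make sure that each of the four overlap types is reached correctly, and in particular that the strictness requirements embedded in the definition of overlap (such as $|y|<|v|$ in type~1 or the non-emptiness of $s,t$ in type~3) are always matched. This is exactly where the two strict observations derived above are used, so once they are in place every subcase closes immediately.
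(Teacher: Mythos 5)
Your proof is correct. The paper in fact states this lemma \emph{without} proof, treating it as an elementary fact of word combinatorics, so there is no official argument to compare against; your write-up supplies exactly the missing details, and in the natural way. The structure checks out completely: the symmetry reduction to $|x|\le|s|$ is legitimate (the preceding Fact makes the third alternative invariant under the swap $(x,u,y)\leftrightarrow(s,v,t)$ together with $u\leftrightarrow v$, and the first two alternatives exchange); the two strictness facts $|s''|<|u|$ and ($s''\neq 1$ or $|v|>0$) are correctly extracted from the negated alternatives; and the split on $|xu|$ versus $|sv|$ reaches overlap types~1, 3 and~2 via the factorisations $uy''=s''v$ and $u=s''vt'''$, with type~4 arising only through the symmetric case, as it should. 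You also correctly handle the one genuine edge case: when $s''=1$ and $v=1$, the candidate tuple $\tuple{1,1,1,t'''}$ would fail every overlap clause, but this situation is exactly excluded by $|sv|>|x|$, which is the point of your second strictness observation. All cancellations are valid in the free monoid and the length arithmetic is consistent (e.g.\ case~A with $|v|=0$ is vacuous because $|y''|\ge 0$ would contradict $|s''|<|u|$), so nothing is missing.
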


\begin{prop}
  \label{prop:overlaps}
  Let $H_i$ be a set of hypotheses whose RHS (right-hand side) are words.
  Let $H_j$ be a set of hypotheses whose LHS (left-hand side) are words.
  Let $g$ be a contextual function. 
  
  If for all hypotheses $e\leq u$ of $H_i$, all hypotheses $v\leq f$ of $H_j$, and
  all overlaps $\tuple{x,y,s,t}$ of $u,v$, we have $x\lang e y\subseteq g(s\lang f t)$, then $H_iH_j\subseteq H_jH_i\cup g$.
\end{prop}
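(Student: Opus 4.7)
The plan is to unfold the definitions of the single-step functions $H_i$ and $H_j$, pick an arbitrary word $w \in H_iH_j(L)$, and use \cref{lem:overlaps} on the induced factorisations to land in one of three cases, two of which realise $w$ as an element of $H_jH_i(L)$ (by swapping the order of the two rewriting steps), while the third is handled by contextuality of $g$.

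In more detail: since the RHSs of $H_i$ are words and the LHSs of $H_j$ are words, unfolding the definition of the one-step function gives
\begin{align*}
  H_j(L) &= \bigcup\{\,svt \mid (v\leq f) \in H_j,\ s\lang f t \subseteq L\,\},\\
  H_i(K) &= \bigcup\{\,x\lang e y \mid (e\leq u) \in H_i,\ xuy \in K\,\}.
\end{align*}
So a typical $w \in H_iH_j(L)$ has the form $w = xe'y$ with $e' \in \lang e$, $(e\leq u)\in H_i$, and $xuy = svt$ for some $(v\leq f)\in H_j$ and $s\lang f t \subseteq L$. I then apply \cref{lem:overlaps} to the equality $xuy = svt$.

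If $|xu|\le|s|$, there is $\sigma$ with $s = xu\sigma$ and $y = \sigma v t$. For every $w' \in \lang f$ we have $xu\sigma w' t \in L$, hence $x\lang e\sigma w' t \subseteq H_i(L)$; in particular $x e' \sigma \lang f t \subseteq H_i(L)$, and applying the hypothesis $v \leq f$ of $H_j$ yields $xe'\sigma v t = w \in H_jH_i(L)$. The case $|sv|\le|x|$ is symmetric: then $x = sv\tau$ and $t = \tau u y$ for some $\tau$, $sw'\tau uy \in L$ for every $w'\in\lang f$, so $sw'\tau e' y \in H_i(L)$ for every $w',e'$, i.e.\ $s\lang f \tau e' y \subseteq H_i(L)$, and one application of $v\leq f$ gives $w = sv\tau e' y \in H_jH_i(L)$.

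The interesting case is the overlap case, where $x = lx'$, $s = ls'$, $y = y'r$, $t = t'r$ for some overlap $\tuple{x',y',s',t'}$ of $u,v$. By the assumption of the proposition, $x'\lang e y' \subseteq g(s'\lang f t')$, and hence by contextuality of $g$
\begin{align*}
  l\,(x'\lang e y')\,r \subseteq l\,g(s'\lang f t')\,r \subseteq g(l\,s'\lang f t'\,r) \subseteq g(L),
\end{align*}
the last inclusion by monotonicity of $g$ since $ls'\lang f t'r = s\lang f t \subseteq L$. Since $w = xe'y = lx'e'y'r \in l(x'\lang e y')r$, we obtain $w \in g(L)$, completing the case analysis. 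The only subtlety worth double-checking is the correct bookkeeping in the first two cases, which is routine once the factorisation given by \cref{lem:overlaps} is unpacked; the third case is where contextuality of $g$, and not merely monotonicity, is needed.
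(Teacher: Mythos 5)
Your proof is correct and follows essentially the same route as the paper's: unfold the one-step functions, invoke \cref{lem:overlaps} on the factorisation $xuy=svt$, swap the two rewriting steps in the two non-overlap cases, and use contextuality of $g$ in the overlap case. The only differences are notational, plus your spelling out of the symmetric case $|sv|\le|x|$ that the paper dispatches with ``symmetrically''.
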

\begin{proof}
  Suppose $w\in H_i(H_j(L))$.
  There is a hypothesis $e\leq u$ of $H_i$ such that $w=xu_0y$ with $u_0\in\lang e$ and $xuy\in H_j(L)$.
  In turn, there is a hypothesis $v\leq f$ of $H_j$ such that $xuy=svt$ with $s\lang f t\subseteq L$.
  According to \cref{lem:overlaps}, there are three cases to consider.
  \begin{itemize}
  \item Either $|xu|\le|s|$, in which case $s=xus'$ and $y=s'vt$ for some word $s'$.
    For all words $v_0\in\lang f$, we have $xus'v_0 t\in s\lang f t\subseteq L$, whence $xu_0 s'v_0 t\in x\lang e s'v_0 t\subseteq H_i(L)$; in other words, $xu_0 s'\lang f t\subseteq H_i(L)$.
    We deduce that $xu_0y=xu_0s'vt \in H_j(H_i(L))$.
  \item Or $|sv|\le|x|$, in which case we proceed symmetrically.
  \item Or $x=lx'$, $s=ls'$, $y=y'r$, and $t=t'r$ for some words $l,r$ and some overlap
    $\tuple{x',y',s',t'}$ of $u,v$. We have $x'\lang e y'\subseteq g(s'\lang f t')$ by assumption, we conclude by contextuality of $g$:
    \begin{align*}
      \tag*{\qedhere}
      xu_0y=lx'u_0y'r\in lx' \lang e y'r\subseteq l\cdot g(s'\lang f t')\cdot r\subseteq g( ls'\lang f t'r)=g( s\lang f t)\subseteq g(L)
    \end{align*}
  \end{itemize}
\end{proof}

Given two sets of hypotheses $H_i,H_j$, we say there is an overlap between $H_i$ and $H_j$ if there is any overlap between a RHS of $H_i$ and an LHS of $H_j$.

\begin{cor}
  \label{cor:nooverlap}
  Let $H_i$ be a set of hypotheses whose RHS are words.
  Let $H_j$ be a set of hypotheses whose LHS are words.
  If there are no overlaps between $H_i$ and $H_j$, then we have $H_iH_j\subseteq H_jH_i$.
\end{cor}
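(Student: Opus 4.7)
The plan is to derive this as an immediate corollary of \cref{prop:overlaps} by choosing the witness function $g$ to be trivial. Specifically, I would take $g$ to be the constant function mapping every language to $\emptyset$. This function is obviously monotone, and it is contextual since $u \cdot g(K) \cdot v = u \cdot \emptyset \cdot v = \emptyset \subseteq \emptyset = g(u \cdot K \cdot v)$ for all words $u,v$ and languages $K$. So $g$ satisfies the side condition on $g$ required by \cref{prop:overlaps}.

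Now, the universally quantified premise of \cref{prop:overlaps} ranges over all overlaps $\tuple{x,y,s,t}$ of pairs $(u,v)$ where $u$ is the RHS of a hypothesis of $H_i$ and $v$ is the LHS of a hypothesis of $H_j$. Under the assumption of the corollary there are no such overlaps, so the premise holds vacuously, regardless of the choice of $g$. Applying \cref{prop:overlaps} thus yields
\begin{align*}
H_i H_j \subseteq H_j H_i \cup g = H_j H_i \cup \emptyset = H_j H_i\,,
\end{align*}
which is exactly the statement of the corollary.

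There is no genuine obstacle here: the work has already been done in \cref{prop:overlaps}, and the only thing to check is that the constant empty function is an admissible choice of $g$, which is immediate. The corollary is really just highlighting the degenerate case of the proposition, showing that when all interactions between $H_i$ and $H_j$ are via disjoint (or properly nested with trivial gluing, excluded by the definition of overlap) occurrences of their respective sides, the single-step functions commute up to swapping their order.
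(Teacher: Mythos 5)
Your proposal is correct and matches the paper's intended argument: the corollary is stated without proof precisely because it is the degenerate instance of \cref{prop:overlaps} in which the overlap premise holds vacuously, and your choice of $g$ as the constant-$\emptyset$ function (easily checked contextual) makes this explicit. One could equally take $g=H_jH_i$, contextual by \cref{lem:con}, but your choice is the cleanest and nothing is missing.
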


In the sequel we will see examples where we have to combine more than two sets of hypotheses, and where there are some overlaps. In those cases, we will use \cref{prop:overlaps} with a composite function $g$ obtained from the various sets of hypotheses at hand: such functions will always be contextual by \cref{lem:con}, and they will be contained in functions like $\clH j\,\Hlt j^=$ or $H_j^=\,\clHlt j$, so that \cref{prop:gsos} may subsequently be applied.

\begin{exa}
  A typical use of the above proposition is the following. Let $H_0=\set{ac\leq cc}$, $H_1=\set{aaa\leq ab}$, and $H_2=\set{bc\leq ccc}$ for three distinct letters $a,b,c$.
  To analyse the composition $H_1H_2$, it suffices by \cref{prop:overlaps} to consider the overlaps of $ab$ and $bc$. There is only one such overlap, $\tuple{1,c,a,1}$, so that we have only to show
  $1\lang e c\subseteq g(a\lang f 1)$, where $e=aaa$ is the LHS of $H_1$, $f=ccc$ is the RHS of $H_2$, and $g$ is a function to be defined. Simplifying both sides, we get to prove
  $aaac\in g(\set{accc})$. Since $aaac\leftsquigarrow_{H_0}aacc\leftsquigarrow_{H_0}accc$, we can take $g=H_0H_0$ (or $g=\clH0$). We depict the overall situation as follows:
  \closeoverlapii{H_1}{H_2}{H_0}{H_0}{aaac}{abc}{accc}{aacc}%
  Thus we conclude that $H_1H_2\subseteq H_2H_1\cup H_0H_0$, and we can weaken this bound to
  \begin{align*}
    H_1H_2\subseteq H_2H_1\cup H_0H_0\subseteq H_2H_1\cup \clH0 \subseteq H^=_2(H_1\cup \clH0)\subseteq H^=_2\clHlt2\,.
  \end{align*}
\end{exa}

\subsection{Universal hypotheses}
\label{ssec:univ}

We prove one last generic result before moving to examples, which makes it possible to simplify the presentation of certain sets of hypotheses which are universally quantified on terms to their restriction where the quantification is only on words. This is useful in \cref{sec:katf,sec:katc}.

We call a function $h$ on regular expressions \emph{monotone} if for all expressions $e,f$, $\lang e\subseteq \lang f$ implies $\lang{h(e)}\subseteq\lang{h(f)}$; we call it \emph{linear} if for all expressions $e$, $\lang{h(e)}=\bigcup_{w\in\lang e} \lang{h(w)}$.
\begin{prop}
  \label{prop:reg:ctx}
  Let $l,r$ be functions on regular expressions, with $l$ linear and $r$ monotone.
  Let $H\eqdef \set{l(e)\leq r(e) \mid e\in\termska(\Sigma)}$ and $H'\eqdef \set{l(w)\leq r(w) \mid w\in \Sigma^*}$.
  We have $H=H'$, as functions on languages.
\end{prop}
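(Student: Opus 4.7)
The plan is to unfold the definition of the one-step functions associated to $H$ and $H'$ and show that their images on every language $L\subseteq\Sigma^*$ coincide. Recall that by the definition given in \cref{ssec:clo:props},
\begin{align*}
  H(L) &= \bigcup\set{u\lang{l(e)}v \mid e\in\termska(\Sigma),~u\lang{r(e)}v\subseteq L}\,,\\
  H'(L) &= \bigcup\set{u\lang{l(w)}v \mid w\in \Sigma^*,~u\lang{r(w)}v\subseteq L}\,.
\end{align*}

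The inclusion $H'(L)\subseteq H(L)$ is immediate: every contribution to $H'(L)$ arises from a hypothesis $l(w)\leq r(w)$ with $w\in\Sigma^*$, which is the instance of $l(e)\leq r(e)$ obtained by taking $e=w$ (viewing the word $w$ as a regular expression). Hence the same tuple $(u,v)$ witnesses the same contribution to $H(L)$.

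For the reverse inclusion, suppose $x\in H(L)$, so that $x=uyv$ with $y\in\lang{l(e)}$ for some $e\in\termska(\Sigma)$ such that $u\lang{r(e)}v\subseteq L$. Since $l$ is linear, $\lang{l(e)}=\bigcup_{w\in\lang e}\lang{l(w)}$, so there exists $w\in\lang e$ with $y\in\lang{l(w)}$. Viewing $w$ as a regular expression, $\lang w=\set w\subseteq \lang e$; monotonicity of $r$ then yields $\lang{r(w)}\subseteq\lang{r(e)}$, and therefore $u\lang{r(w)}v\subseteq u\lang{r(e)}v\subseteq L$. This shows $x=uyv\in u\lang{l(w)}v\subseteq H'(L)$, as required.

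There is no real obstacle here: the argument is just a careful unfolding of definitions combined with the two hypotheses on $l$ and $r$. The mildly delicate point to remember is that a word $w$ must be interpreted as a regular expression when fed to $l$ or $r$, with $\lang w=\set w$, so that monotonicity of $r$ applied to the inclusion $\set w\subseteq\lang e$ is exactly what is needed to propagate the side-condition $u\lang{r(e)}v\subseteq L$.
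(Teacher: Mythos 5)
Your proof is correct and follows essentially the same route as the paper's: the inclusion $H'\subseteq H$ is immediate since each $l(w)\leq r(w)$ is the instance of $l(e)\leq r(e)$ at $e=w$, and the converse uses linearity of $l$ to extract a word $w\in\lang e$ witnessing the element, then monotonicity of $r$ applied to $\set w\subseteq\lang e$ to transfer the side-condition $u\lang{r(e)}v\subseteq L$ to $u\lang{r(w)}v\subseteq L$. Your remark about viewing words as expressions with $\lang w=\set w$ is exactly the implicit step in the paper's argument.
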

\begin{proof}
  We have $H'\subseteq H$ as sets and thus as functions, so that it suffices to prove the converse inclusion.
  Suppose $u\in H(L)$: $u\in x\lang{l(e)} y$ for some $x,e,y$ such that $x\lang{r(e)}y\subseteq L$.
  By linearity of $l$, $u\in x\lang{l(w)} y$ for some word $w\in\lang e$.
  By monotonicity of $r$, we have $x\lang{r(w)}y\subseteq x\lang{r(e)}y$, whence $x\lang{r(w)}y\subseteq L$, and thus $u\in H'(L)$.
\end{proof}

\section{Kleene Algebra with Observations}
\label{sec:kao}

A \emph{Kleene algebra with Observations (KAO)} is a Kleene algebra which also contains a Boolean algebra,
but the connection between the Boolean
algebra and the Kleene algebra is different than for KAT: instead of
having the axioms $\top=1$ and $\phi\wedge \psi=\phi\cdot \psi$ for all
$\phi,\psi\in\termsba$, we only have
$\phi\wedge \psi \leq \phi\cdot \psi$~\cite{kao-2019}. This system was
introduced to allow for concurrency and tests in a Kleene algebra
framework, because associating $\phi\cdot \psi$ and $\phi\wedge \psi$
in a concurrent setting is no longer appropriate: $\phi\wedge \psi$ is
one event, where we instantaneously test whether both $\phi$ and $\psi$ are
true, while $\phi\cdot \psi$ performs first the test $\phi$, and then
$\psi$, and possibly other things can happen between those tests in
another parallel thread. Hence, the behaviour of $\phi\wedge \psi$
should be included in $\phi\cdot \psi$, but they are no longer
equivalent.

Algebraically this constitutes a small change, and an ad-hoc completeness proof is in~\cite{kao-2019}. Here we show how to obtain completeness within our framework. We also show how to add back the natural axiom $\top=1$, which is not present in~\cite{kao-2019}, and thereby
emphasise the modular aspect of the approach.

Note that even if we add the axiom $\top=1$, in which case we have that $\phi\cdot \psi$ is below both $\psi$ and $\phi$, $\phi\cdot \psi$ and $\phi\wedge\psi$ do not collapse in this setting, because $\phi\cdot \psi$ need not be an element of the Boolean algebra. In particular, in contrast to KAT, we do not have $\alpha\cdot\beta=0$ when $\alpha,\beta$ are distinct atoms.

Similar to KAT, we add the additional axioms of KAO to KA as hypotheses. The additional axioms of KAO are the axioms of Boolean algebra and the axioms specifying the interaction between the two algebras. The KAO-terms are the same as the KAT-terms: regular expression over the alphabet $\Sigma+\termsba$.

\begin{defi}
  We define the set of hypotheses $\hkao\eqdef\hbool\cup\hgluekao$, where
  \begin{align*}
    \hgluekao &\eqdef
                \set{\phi\wedge\psi \leq \phi\cdot\psi,~
                \phi\vee\psi = \phi+\psi \mid \phi,\psi\in\termsba }
                \cup
                \set{\bot=0}
  \end{align*}
\end{defi}
We prove completeness with respect to the \hkao-closed interpretation $(\sem\hkao-)$. As shown below, this also implies completeness for the language model presented in~\cite{kao-2019}.
We take similar steps as for KAT:
\begin{enumerate}
\item Reduce \hkao to a simpler set of axioms, $\hcontr \eqdef \set{\alpha\leq \alpha\cdot\alpha \mid \alpha \in \At}$, where $\At = 2^{\Omega}$ is the same set of atoms as in \cref{sec:kat}.
\item Use results from \cref{sec:tools} to reduce \hcontr to the empty set.
\end{enumerate}
For the first step, we use the same homomorphism $r$ as for KAT.
\begin{lem}\label{lem:contr:r:kao}
  For all $e\leq f\in \hkao$, we have $\proves[\hcontr] r(e)\leq r(f)$.
\end{lem}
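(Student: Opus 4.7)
The plan is to case split on the hypothesis $e \leq f$ in $\hkao = \hbool \cup \hgluekao$, and show that $r$ behaves essentially as for KAT, but with one crucial difference in the $\wedge/\cdot$ case. This mirrors the structure of the proof of \cref{lem:atom:r:kat}, and in fact for the three classes of equational hypotheses (Boolean axioms from $\hbool$, the glueing equation $\bot = 0$, and $\phi \vee \psi = \phi + \psi$) the calculations go through without using any target hypotheses at all. For a Boolean algebra axiom $\phi = \psi$, the fact that $\alpha \models \phi$ iff $\alpha \models \psi$ makes $r(\phi)$ and $r(\psi)$ literally the same sum; for $\bot = 0$, both sides map to $0$; for $\phi \vee \psi = \phi + \psi$, the definition of $r$ combined with $\alpha \models \phi \vee \psi$ iff $\alpha \models \phi$ or $\alpha \models \psi$ (and idempotence of $+$) yields $r(\phi \vee \psi) = r(\phi) + r(\psi) = r(\phi + \psi)$ directly in KA.

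The interesting case, and the one for which $\hcontr$ is needed, is the inequation $\phi \wedge \psi \leq \phi \cdot \psi$. Expanding,
\[
  r(\phi \wedge \psi) = \sum_{\alpha \models \phi \wedge \psi} \alpha
  \qquad\text{and}\qquad
  r(\phi \cdot \psi) = r(\phi)\cdot r(\psi) = \sum_{\substack{\alpha \models \phi \\ \beta \models \psi}} \alpha \cdot \beta.
\]
For each summand $\alpha$ on the left we have $\alpha \models \phi$ and $\alpha \models \psi$, so $\alpha \cdot \alpha$ is among the summands on the right; using the hypothesis $\alpha \leq \alpha \cdot \alpha$ from $\hcontr$ we obtain $\proves[\hcontr] \alpha \leq \alpha\cdot\alpha \leq r(\phi\cdot\psi)$, and summing over $\alpha \models \phi\wedge\psi$ yields $\proves[\hcontr] r(\phi\wedge\psi) \leq r(\phi\cdot\psi)$.

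No other case needs $\hcontr$, so this finishes the verification. The main conceptual point (rather than a technical obstacle) is to see why $\hcontr$ is exactly the right target: in the KAT proof of \cref{lem:atom:r:kat} the equality $r(\phi\wedge\psi) = r(\phi\cdot\psi)$ required both $\alpha\beta \leq 0$ for $\alpha \neq \beta$ (to kill off-diagonal cross-terms) and $\alpha\alpha = \alpha$ (to collapse the diagonal); here we only need the inequality and only in one direction, so the off-diagonal terms may safely remain, and on the diagonal we do not need to collapse $\alpha\alpha$ down to $\alpha$ but rather to blow $\alpha$ up to $\alpha\alpha$, which is precisely what $\hcontr$ provides.
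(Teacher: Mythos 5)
Your proof is correct and follows essentially the same route as the paper's: only the inequation $\phi\wedge\psi\leq\phi\cdot\psi$ requires $\hcontr$, and it is handled via $\alpha\leq\alpha\cdot\alpha$ followed by absorbing the diagonal terms $\alpha\cdot\alpha$ into the double sum $\sum_{\alpha\models\phi,\,\beta\models\psi}\alpha\cdot\beta = r(\phi)\cdot r(\psi)$, exactly as in the paper (which phrases the same computation as a chain of inequalities between whole sums rather than summand-by-summand). Your closing observation about why the one-directional axiom lets the off-diagonal terms survive matches the paper's remark that, compared with \cref{lem:atom:r:kat}, the $\top=1$ case disappears and the $\wedge/\cdot$ case weakens to a single inequation.
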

\begin{proof}
  Similar to the proof of \cref{lem:atom:r:kat}.  First observe that
  in that proof, we needed the hypotheses of \hkat only for the
  $\top=1$ case, which is not there, and for the
  $\phi\wedge\psi=\phi\cdot\psi$ case, which is now only an
  inequation, and which is dealt with as follows:
  \begin{align*}
    \proves[\hcontr]
       r(\phi\wedge  \psi)
    = \sum_{\alpha\models\phi\wedge \psi} \alpha
    &\leq \sum_{\alpha\models\phi\wedge \psi} \alpha \cdot  \alpha \tag{\hcontr}\\
    &\leq \sum_{\alpha\models\phi,\text{ }\beta\models\psi} \alpha \cdot  \beta
    =  \sum_{\alpha\models\phi} \alpha \cdot \sum_{\beta\models\psi} \beta
    = r(\phi)\cdot r(\psi)=r(\phi\cdot \psi)\tag*{\qedhere}
  \end{align*}
\end{proof}

\begin{lem}
  \label{lem:kao:contr}
  The homomorphism $r$ yields a reduction from \hkao to \hcontr.
\end{lem}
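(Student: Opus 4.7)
The plan is to apply \cref{prop:hred} with the same homomorphism $r$ used for KAT (viewed now as a map $\termska(\Sigma+\termsba)\to\termska(\Sigma+\At)$), and verify its four conditions. The fourth and technically most substantive condition is exactly \cref{lem:contr:r:kao}, so only the first three remain to be checked, and they should follow essentially as in the proof of \cref{lem:kat:atom}, with two adjustments: the targets of the hypotheses are now the weaker $\hcontr$ rather than $\hatom$, and the inequation $\phi\wedge\psi \leq \phi\cdot\psi$ in $\hgluekao$ is weaker than the KAT equation.

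First I would establish condition~\eqref{hred:proves}, that is $\KAproves[\hkao] \hcontr$. For each $\alpha\in\At$, we use $\BAproves \alpha = \alpha\wedge\alpha$ together with the hypothesis $\phi\wedge\psi \leq \phi\cdot\psi$ from $\hgluekao$ (instantiated at $\phi=\psi=\alpha$) to derive
\[
  \KAproves[\hkao] \alpha \;=\; \alpha\wedge\alpha \;\leq\; \alpha\cdot\alpha,
\]
which is precisely the single family of hypotheses in $\hcontr$.

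Next I would verify conditions \eqref{hred:proves:r} and \eqref{hred:idem}. For $a\in\Sigma$ both are trivial since $r(a)=a$. For $\alpha\in\At\subseteq \Sigma+\At$ in the target alphabet, $r(\alpha)=\sum_{\beta\models\alpha}\beta = \alpha$ syntactically, so condition~\eqref{hred:idem} is immediate. For a Boolean expression $\phi\in\termsba$, we have $\BAproves \phi = \bigvee_{\alpha\models\phi}\alpha$, and the $\hbool$ and $\hgluekao$ hypotheses for $\vee$ yield
\[
  \KAproves[\hkao] \phi \;=\; \bigvee_{\alpha\models\phi}\alpha \;=\; \sum_{\alpha\models\phi}\alpha \;=\; r(\phi),
\]
which is condition~\eqref{hred:proves:r}. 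Condition~\eqref{hred:idem} only asks about letters in the target alphabet $\Sigma+\At$, so nothing more is needed.

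Finally, condition~\eqref{hred:hyps} is exactly \cref{lem:contr:r:kao}, already proved. Applying \cref{prop:hred} concludes. The only real subtlety compared to KAT is that we do not have $\top=1$ in $\hkao$ and the meet/product gluing is a mere inequation; both are already absorbed by \cref{lem:contr:r:kao} and neither affects the other three conditions, which is why this proof reads as a minor variation of that of \cref{lem:kat:atom}.
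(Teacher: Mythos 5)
Your proof is correct and follows essentially the same route as the paper: both apply \cref{prop:hred}, establish $\KAproves[\hkao]\hcontr$ via $\alpha=\alpha\wedge\alpha\leq\alpha\cdot\alpha$, check conditions~\eqref{hred:proves:r} and~\eqref{hred:idem} exactly as in the KAT case (noting the $\wedge$-gluing laws are not needed there), and discharge condition~\eqref{hred:hyps} by \cref{lem:contr:r:kao}. You merely spell out the KAT-case verifications that the paper handles by reference to \cref{lem:kat:atom}.
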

\begin{proof}
  Like for \cref{lem:kat:atom}, we use \cref{prop:hred}.
  We show $\proves[\hkao]\hcontr$: for
  $\alpha\in\At$, we have
  $
    \proves[\hkao] \alpha = \alpha \wedge \alpha \leq \alpha\cdot \alpha
  $.
  The first and second condition about $r$ are obtained like in the
  KAT case: the glueing equations for $\wedge$ were not necessary
  there. The third and last condition was proven in
  \cref{lem:contr:r:kao}.
\end{proof}

\begin{thm}\label{thm:kao}
  For all $e,f\in\termskat$, $\sem\hkao e=\sem\hkao f$ implies $\KAproves[\hkao] e=f$.
\end{thm}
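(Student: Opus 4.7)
The plan is to follow the exact same blueprint used for KAT in \cref{thm:kat}: exhibit a chain of reductions from \hkao all the way down to the empty set, and then conclude by combining completeness of KA (\cref{thm:ka}), \cref{thm:red}, and \cref{lem:red:comp}. Concretely, I would factor the problem through the auxiliary set \hcontr, which has already been shown to be reachable in one step.

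The first step of the chain is given: \cref{lem:kao:contr} already provides a reduction from \hkao to \hcontr via the homomorphism $r$ that expands each Boolean formula into the sum of the atoms below it. So the only real remaining task is to reduce $\hcontr = \set{\alpha \leq \alpha\cdot\alpha \mid \alpha \in \At}$ to the empty set. I would do this by observing that every hypothesis in \hcontr has shape $u \leq w$ with $u = \alpha$ a single letter of the alphabet $\Sigma+\At$ and $w = \alpha\alpha$ a word, so $|u|\leq 1$. Thus \hcontr falls squarely under the form covered by \cref{lem:reds}\ref{it:red:aw}, and a reduction to the empty set is immediate.

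Combining these two reductions using \cref{lem:red:comp} gives that \hkao reduces to the empty set. Applying \cref{thm:red} together with completeness of plain KA (\cref{thm:ka}) then yields the desired completeness statement: $\sem\hkao e = \sem\hkao f$ implies $\KAproves[\hkao] e = f$.

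There is really no obstacle here: in contrast to the KAT case, we do not even need \cref{lem:red:0} to dispose of the ``annihilation'' part of atoms (i.e.\ $\alpha\cdot\beta\leq 0$ for $\alpha\neq\beta$), because in KAO atoms are not forced to be mutually disjoint — which is precisely why \hcontr is simpler than \hatom. The only subtlety worth double-checking is the direction of the inequality in \hcontr: it is $\alpha \leq \alpha\cdot\alpha$ (a contraction hypothesis), so its left-hand side is the single letter $\alpha$ and its right-hand side is the two-letter word $\alpha\alpha$; this matches the $|u_i|\leq 1$ condition of \cref{lem:reds}\ref{it:red:aw} and not the dual item \ref{it:red:aa}, which would require an expansion hypothesis $\alpha\alpha\leq\alpha$.
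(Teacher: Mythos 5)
Your proof is correct and follows exactly the paper's own argument: reduce \hkao to \hcontr via \cref{lem:kao:contr}, then reduce \hcontr to the empty set by \cref{lem:reds}\ref{it:red:aw} (the hypotheses $\alpha\leq\alpha\cdot\alpha$ having single-letter left-hand sides and word right-hand sides), and conclude via \cref{lem:red:comp}, \cref{thm:red}, and completeness of KA. Your side remark about the direction of the inequality --- item \ref{it:red:aw} rather than \ref{it:red:aa} --- is also accurate and matches the paper's justification.
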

\begin{proof}
  The set of hypotheses \hkao reduces to \hcontr (\cref{lem:kao:contr}), which reduces to $\emptyset$ by
 \cref{lem:reds}\ref{it:red:aw}, as both $\alpha$ and $\alpha\cdot\alpha$ are words and $\alpha$ is a word of length $1$. 
\end{proof}
Note that the semantics defined
in~\cite{kao-2019} actually corresponds to $\sem\hcontr{r(-)}$ rather than $\sem\hkao-$. These semantics are nonetheless equivalent, \hkao reducing to \hcontr via $r$ (the proof of \cref{thm:red} actually establishes that when $H$ reduces to $H'$ via $r$ and $\KA_{H'}$ is complete, we have
$\sem H e=\sem H f$ iff
$\sem{H'}{r(e)}=\sem{H'}{r(f)}$ for all expressions $e,f$).

\medskip

Because we set up KAO in a modular way, we can now easily extend it with the extra axiom $\top=1$ (KA with ``bounded observations''---KABO). Revisiting the proofs that $r$ is a reduction from \hkat to \hatom and from \hkao to $\hcontr$, we can see that $r$ is also a reduction from $\hkabo\eqdef\hkao\cup\set{\top=1}$ to $\hatom_{1,2}\cup\hcontr$. Therefore, it suffices to find a reduction from $\hatom_{1,2}\cup\hcontr$ to the empty set.
Thanks to \cref{prop:cup}, it suffices to decompose the closure $\cl{(\hatom_{1,2}\cup\hcontr)}$ into a sequential composition of closures for which we do have reductions.
We use \cref{prop:gsos} and~\cref{prop:overlaps} to obtain such a decomposition: by analysing the overlaps between the various components of $\hatom_{1,2}\cup\hcontr$, we can find how to order them sequentially.

\begin{lem}
  \label{lem:kao:com}
  We have
  \begin{enumerate}[(i)]
  \item $\hatom_1\circ \hcontr \subseteq \hcontr \circ \hatom_1$\,,
  \item $\hatom_1\circ \hatom_2 \subseteq \hatom_2 \circ \hatom_1$\,,
  \item $\hcontr\circ \hatom_2 \subseteq \hatom_2\circ\hcontr \cup \hcontr \circ \hcontr$\,.
  \end{enumerate}
\end{lem}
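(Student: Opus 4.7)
The plan is to obtain all three inclusions by a direct application of \cref{prop:overlaps}, together with its corollary \cref{cor:nooverlap} when there are no overlaps. In each case one checks that the RHS of the first set and the LHS of the second consist of words, so that the proposition applies: $\hatom_1$ has RHS~$1$, $\hatom_2$ has LHS~$1$ and RHS $\sum_{\beta\in\At}\beta$, and $\hcontr$ has both LHS~$\alpha$ and RHS~$\alpha\alpha$ (all words).

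For inequality (i), the RHS $1$ of $\hatom_1$ and the LHS $\alpha$ of $\hcontr$ admit no overlap, by the fact that $1$ has no overlap with words of length at most one. Hence \cref{cor:nooverlap} yields $\hatom_1\circ\hcontr\subseteq\hcontr\circ\hatom_1$. For (ii), likewise, $1$ and $1$ admit no overlap, giving $\hatom_1\circ\hatom_2\subseteq\hatom_2\circ\hatom_1$ by the same corollary.

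The interesting case is (iii). The RHS $\alpha\alpha$ of $\hcontr$ and the LHS $1$ of $\hatom_2$ admit a single overlap, namely $\tuple{1,1,\alpha,\alpha}$, corresponding to splitting $\alpha\alpha$ between its two letters and ``inserting the empty pattern''. To apply \cref{prop:overlaps} with target $H_jH_i\cup g = \hatom_2\circ\hcontr\cup g$, one takes $g\eqdef \hcontr\circ\hcontr$, which is contextual by \cref{lem:con}. The condition to verify for this unique overlap is
\[
  1\cdot\lang\alpha\cdot 1 \;\subseteq\; (\hcontr\circ\hcontr)\!\left(\alpha\cdot\lang{\textstyle\sum_{\beta}\beta}\cdot\alpha\right),
\]
i.e.\ $\alpha\in \hcontr(\hcontr(\alpha\,\At\,\alpha))$. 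Instantiating the middle atom with $\alpha$ gives $\alpha\alpha\alpha\in\alpha\,\At\,\alpha$, and two successive contractions $\alpha\alpha\alpha\leftsquigarrow_\hcontr\alpha\alpha\leftsquigarrow_\hcontr\alpha$ deliver $\alpha$ in two applications of $\hcontr$, as required.

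I expect no real obstacle here: the three statements are exactly of the form that the overlap machinery of \cref{sec:moretools} was designed to handle, and the only even mildly non-routine step is identifying $g=\hcontr\circ\hcontr$ in case (iii) — which is dictated by the need to shrink the ``padded'' word $\alpha\beta\alpha$ (with $\beta=\alpha$) back to a single $\alpha$ using only $\hcontr$-steps, since $\hatom_1$ and $\hatom_0$ are not available here.
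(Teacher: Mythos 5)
Your proposal is correct and follows essentially the same route as the paper's own proof: items (i) and (ii) by \cref{cor:nooverlap} (no overlaps of $1$ with $\alpha$, resp.\ of $1$ with itself), and item (iii) by \cref{prop:overlaps} with $g=\hcontr\circ\hcontr$, the unique overlap $\tuple{1,1,\alpha,\alpha}$ of $\alpha\alpha$ and $1$, the instantiation $\beta=\alpha$ in the sum, and the two contraction steps $\alpha\leftsquigarrow_\hcontr\alpha\alpha\leftsquigarrow_\hcontr\alpha\alpha\alpha$. There is nothing to add.
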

\begin{proof}
  For the first two items, we use \cref{cor:nooverlap}: there are no overlaps between $1$ and $\alpha$, and no overlaps between $1$ and itself.
  For the third item, we use \cref{prop:overlaps} with the function $g=\hcontr \circ \hcontr$: the only overlap between $\alpha\alpha$ and $1$ is $\tuple{1,1,\alpha,\alpha}$, so that it suffices to show
  that $1\alpha1\subseteq g(\alpha\lang{\sum_{\beta\in\At}\beta}\alpha)$.
  We chose $\beta=\alpha$ in the sum: $\alpha\alpha\alpha\in \alpha\lang{\sum_{\beta\in\At}\beta}\alpha$, and we check that $\alpha\leftsquigarrow_\hcontr\alpha\alpha\leftsquigarrow_\hcontr\alpha\alpha\alpha$ (so that $\alpha\in\hcontr(\hcontr(\alpha\lang{\sum_{\beta\in\At}\beta}\alpha)))$.
\end{proof}

\begin{lem}
  \label{lem:kao:com:end}
  We have $\cl{(\hatom_{1,2}\cup \hcontr)}=\cl{\hatom_2}\circ\cl\hcontr\circ\cl{\hatom_1}$.
\end{lem}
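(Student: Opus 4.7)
The plan is to apply \cref{prop:gsos} with the ordering $H_0 = \hatom_1$, $H_1 = \hcontr$, $H_2 = \hatom_2$, so that $\Hlt 3 = \hatom_{1,2}\cup\hcontr$ and the conclusion $\clHlt 3 = \clH 2\clH 1\clH 0$ becomes exactly the identity we want.

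First I would check the affineness precondition. Both $\hatom_1$ (right-hand side $1$) and $\hcontr$ (right-hand side $\alpha\alpha$) have word right-hand sides, so by \cref{lem:hyp:add} the functions $\hatom_1$ and $\hcontr$ are linear, hence affine. The last hypothesis $H_2 = \hatom_2$ does not need to be affine, which is fortunate since its right-hand side $\sum_{\alpha\in\At}\alpha$ is not a word.

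Then I would verify condition~(2) of \cref{prop:gsos} for each $j\leq 2$, namely $H_iH_j \subseteq H_j^=\,\clHlt j$ for all $i<j$. The case $j=0$ is vacuous. For $j=1$, i.e.\ $\hatom_1\,\hcontr \subseteq \hcontr^=\,\cl{\hatom_1}$, this follows from \cref{lem:kao:com}(i) together with $\hatom_1\subseteq\cl{\hatom_1}$. For $j=2$, the two subcases go as follows: for $i=0$, \cref{lem:kao:com}(ii) gives $\hatom_1\,\hatom_2 \subseteq \hatom_2\,\hatom_1 \subseteq \hatom_2\,\clHlt 2$, since $\hatom_1\subseteq\clHlt 2$; for $i=1$, \cref{lem:kao:com}(iii) gives $\hcontr\,\hatom_2 \subseteq \hatom_2\,\hcontr \cup \hcontr\,\hcontr$, and both summands sit inside $\hatom_2^=\,\clHlt 2$: the first via $\hatom_2\cdot\clHlt 2$ (as $\hcontr\subseteq\clHlt 2$), and the second via $\id\cdot\clHlt 2$ (as $\hcontr\,\hcontr \subseteq \clHlt 2\,\clHlt 2 = \clHlt 2$, using that $\clHlt 2$ is a closure and $\hcontr\subseteq\clHlt 2$).

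The only subtlety is the choice of ordering and the decision to apply form~(2) rather than form~(1): the residual term $\hcontr\,\hcontr$ produced by the overlap between $\hcontr$ and $\hatom_2$ must be absorbed into $\clHlt 2$ through the identity component of $\hatom_2^=$, which is exactly what form~(2) permits. With all the hypotheses of \cref{prop:gsos} verified, the conclusion $\cl{(\hatom_{1,2}\cup\hcontr)} = \cl{\hatom_2}\circ\cl{\hcontr}\circ\cl{\hatom_1}$ follows immediately.
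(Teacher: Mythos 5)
Your proposal is correct and takes essentially the same route as the paper's proof: both apply \cref{prop:gsos} with the ordering $\hatom_1$, $\hcontr$, $\hatom_2$, invoke \cref{lem:hyp:add} for affineness of the first two (noting $\hatom_2$ may be non-affine since it is last), and discharge the commutation side-conditions via \cref{lem:kao:com}, with the residual $\hcontr\,\hcontr$ term absorbed exactly as in the paper. The only immaterial difference is that you commit to alternative~(2) for every column, whereas the paper observes that the $\hcontr$ column happens to satisfy both alternatives.
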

\begin{proof}
  We use \cref{prop:gsos} with $\hatom_1$, $\hcontr$, and $\hatom_2$. 
  The right-hand sides of $\hcontr$ and $\hatom_1$ are words so that their closures are affine (\cref{lem:hyp:add}). This is not the case for $\hatom_2$ which is fine because it is placed in last position. 
  We finally have to show the commutation properties:
  \begin{itemize}
  \item there is nothing to show for the first set $(\hatom_1)$;
  \item for the second set $(\hcontr)$, we only have to bound $\hatom_1\circ\hcontr$, and we may choose any of the two alternatives ($(1)~\cl{\hcontr}\hatom_1^=$ or $(2)~\hcontr^=\cl{\hatom_1}$) since \cref{lem:kao:com}(i) gives a bound ($\hcontr\circ\hatom_1$) which is below both of them;
  \item for the third set $(\hatom_2)$, we have to bound both $\hatom_1\circ\hatom_2$ and $\hcontr\circ\hatom_2$ by either $(1)~\cl{\hatom_2}(\hatom_1\cup\hcontr)^=$ or $(2)~\hatom_2^=\cl{(\hatom_1\cup\hcontr)}$.
    We have that $\hatom_1\circ\hatom_2$ is below both by \cref{lem:kao:com}(ii); in contrast $\hcontr\circ\hatom_2$ is only (known to be) below the latter, by \cref{lem:kao:com}(iii);
    therefore we chose the latter option. \qedhere
  \end{itemize}
\end{proof}
Unfortunately, we cannot directly apply \cref{prop:cup} with this decomposition, because we do not have a reduction for $\hatom_2$ alone, but only one for $\hatom_{1,2}$ (cf.\ \cref{rem:wrong:red}).
We can nevertheless conclude just by assembling the results collected so far:
\begin{thm}
  For all $e,f\in\termskat$, if $\sem\hkabo e=\sem\hkabo f$ then $\KAproves[\hkabo] e=f$\,.
\end{thm}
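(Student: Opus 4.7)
The plan is to combine the reductions established earlier in the section. Since $r$ has already been observed to reduce $\hkabo$ to $\hatom_{1,2}\cup\hcontr$, it suffices by \cref{lem:red:comp} and \cref{thm:red} (together with completeness of $\KA$, \cref{thm:ka}) to produce a reduction from $\hatom_{1,2}\cup\hcontr$ to the empty set, and then simply compose the two reductions.

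To obtain this last reduction, I would apply \cref{prop:cup} to the three sets $H_0\eqdef\hatom_1$, $H_1\eqdef\hcontr$ and $H_2\eqdef\hatom_{1,2}$, whose union is precisely $\hatom_{1,2}\cup\hcontr$ (since $\hatom_1\subseteq\hatom_{1,2}$). Each of these three sets reduces individually to the empty set: $\hatom_1$ and $\hcontr$ by \cref{lem:reds}\ref{it:red:aw}, since all their left-hand sides are single letters, and $\hatom_{1,2}$ by \cref{lem:reds}\ref{it:red:1S}. The remaining hypothesis of \cref{prop:cup} is the closure inclusion
\[\cl{(\hatom_{1,2}\cup\hcontr)}\;\subseteq\;\cl{\hatom_{1,2}}\circ\cl\hcontr\circ\cl{\hatom_1}\,,\]
which I would derive from the decomposition $\cl{(\hatom_{1,2}\cup\hcontr)}=\cl{\hatom_2}\circ\cl\hcontr\circ\cl{\hatom_1}$ already established in \cref{lem:kao:com:end}, combined with the elementary containment $\cl{\hatom_2}\subseteq\cl{\hatom_{1,2}}$ (an instance of \cref{lem:clo:incl}, since trivially $\proves[\hatom_{1,2}]\hatom_2$).

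The main step is the slight repackaging hinted at in the discussion preceding the theorem: although the sequential decomposition of $\cl{(\hatom_{1,2}\cup\hcontr)}$ involves the factor $\cl{\hatom_2}$, for which no reduction to the empty set is known (cf.\ \cref{rem:wrong:red}), \cref{prop:cup} only requires an \emph{inclusion} into a composition of closures of sets that do reduce. Hence we may harmlessly enlarge the outermost factor from $\cl{\hatom_2}$ to $\cl{\hatom_{1,2}}$, and all pieces fall into place without any further work. I do not anticipate any serious obstacle: every ingredient—the reduction via $r$, the three basic reductions, and the closure decomposition—is already in hand.
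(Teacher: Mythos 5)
Your proposal is correct and follows essentially the same route as the paper's own proof: the paper likewise composes the reduction $r$ from $\hkabo$ to $\hatom_{1,2}\cup\hcontr$ with an application of \cref{prop:cup} to the three sets $\hatom_1$, $\hcontr$, $\hatom_{1,2}$, obtaining the required inclusion $\cl{(\hatom_{1,2}\cup\hcontr)}\subseteq\cl{\hatom_{1,2}}\circ\cl\hcontr\circ\cl{\hatom_1}$ exactly by enlarging the outer factor $\cl{\hatom_2}$ of \cref{lem:kao:com:end} to $\cl{\hatom_{1,2}}$, precisely because no reduction for $\hatom_2$ alone is available (cf.\ \cref{rem:wrong:red}). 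Your identification of this repackaging as the main step, and your justifications via \cref{lem:reds} and \cref{lem:clo:incl}, match the paper's argument in full.
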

\begin{proof}
  From \cref{lem:kao:com:end} we deduce
  \begin{align*}
    \cl{(\hatom_{1,2}\cup \hcontr\cup \hatom_1)}
    = \cl{(\hatom_2\cup \hcontr\cup \hatom_1)}
    = \cl{\hatom_2}\circ\cl\hcontr\circ\cl{\hatom_1}
    \subseteq\cl{\hatom_{1,2}}\circ\cl\hcontr\circ\cl{\hatom_1}
  \end{align*}
  (This is actually an equality since the right hand-side is contained in the left hand-side.) 
  Thus we can apply \cref{prop:cup} to $\hatom_1$, $\hcontr$, and $\hatom_{1,2}$, for which we have reductions by \cref{lem:reds}.
\end{proof}

The commutation properties used to combine the three sets of hypotheses $\hcontr$, $\hatom_1$ and $\hatom_2$ in the above proof can be presented in a table which makes it easier to see that \cref{prop:gsos} indeed applies (or to experiment when looking for a proof). This is \cref{table:kabo}.

\begin{table}[t]
  \begin{align*}
    \begin{array}{crr@{\,}l|c|c|c}
      &&&
      &1&c&2 \\
      \hline
      (\hatom_1) & 1:&\alpha&\leq 1&-&.&. \\
      (\hcontr) & c:&\alpha&\leq \alpha\alpha&&-&cc \\
      (\hatom_2) & 2:&1&\leq \sum\alpha&&&- \\
      \hline
      &&&&&.&2^=({<}2)^\star\\
    \end{array}
  \end{align*}
  \caption{Summary of commutations for KABO.}
  \label{table:kabo}
\end{table}

There we use single letter shorthands for the three sets of hypotheses, which are recalled in the first column.
In an entry $(i,j)$, an expression $g$ means we have $ij\subseteq ji\cup g$ and a dot indicates there is no overlap between $i$ and $j$, so that $ij\subseteq ji$ by \cref{cor:nooverlap}.
The diagonal is filled with dashes only for the sake of readability.
The last line collects, for each column $j$, an upper bound we may deduce for $({<}j)j$; a dot indicates the strong bound $j^=({<}j)^=$ (in both cases, ${<}j$ denotes the union of all functions appearing strictly before $j$).

The idea is that it suffices to cover all entries strictly above the diagonal, in such a way that the collected bound for each column (but the first one) fits into one of the two alternatives provided by \cref{prop:gsos}.

Since dotted entries comply with both options, they pose no constraint: once they have been verified, they can be ignored. Other entries might be constraining: a column must comply to either (1) or (2) and cannot mix between them: for each column we have to make a common choice for all its rows.
In \cref{table:kabo} we see that entry $(c,2)$ forces us to use the second option for column 2: with the order $1<c<2$, the function $cc$ is below $2^=({<}2)^\star$ (not using $2^=$ and entering the starred expression twice), but not below $2^\star({<}2)^=$.

\section{KAT with a full element}
\label{sec:katf}

Zhang et al.~observed in~\cite{ZhangAG22} that extending KAT with a constant for the full relation makes it possible to model incorrectness logic~\cite{OHearn20}.
They left open the question of finding an axiomatisation which is complete w.r.t.\ relational models.
We obtained a partial answer in~\cite{pw:concur22:katop}: for plain Kleene algebras (without tests), we get a complete axiomatisation by adding the following two axioms, where $\full$ is the new constant.
\begin{align}
  \label{ax:T}\tag{T}
  x &\leq \full\\
  \label{ax:F}\tag{F}
  x &\leq x\cdot \full\cdot x
\end{align}
This completeness result was obtained in two steps: first we proved that with this new constant, the equational theory of relational models is characterised by languages closed under the above two axioms, and then we proved completeness of these axioms w.r.t.\ closed languages, by exhibiting a (non-trivial) reduction.
In this section we show how to use this reduction as a black box and combine it with our reductions for KAT, so as to obtain completeness of KAT with the above two axioms (KATF).

Note that we use the symbol $\full$ here, unlike in \cite{ZhangAG22,pw:concur22:katop} where the symbol $\top$ is used: this makes it possible to distinguish this constant from the greatest element of the Boolean algebra of tests, and to emphasise that it should be understood as a \emph{full} element rather than just a \emph{top} element: axiomatically, a top element should only satisfy \eqref{ax:T} (as was done in \cref{ex:katop}).

KATF terms ($\termskatf$) are regular expressions over the alphabet
$\Sigma_\hkatf\eqdef\Sigma\uplus\termsba\uplus\set\full$,
where $\termsba$ are the Boolean algebra terms as defined in \cref{sec:kat}.
From now on we will often use commas to denote unions of named sets of hypotheses.
As expected, we extend the set of hypotheses we used for KAT (\hkat) by setting
$\hkatf\eqdef\hkat,\hfull$, where
\begin{align*}
  \hfull\eqdef
  \set{e\leq \full\mid e\in\termskatf}\cup
  \set{e\leq e\cdot\full\cdot e\mid e\in\termskatf}
\end{align*}
The reduction $r$ from $\hkat$ to $\hatom$ we defined in \cref{sec:kat:complete} actually extends to a reduction  from $\hkatf$ to $\hatom,\hfull$.
\begin{lem}
  The homomorphism $r$ from \cref{sec:kat:complete} is a reduction from $\hkatf$ to $\hatom,\hfull$.
\end{lem}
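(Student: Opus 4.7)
The plan is to apply \cref{prop:hred}, extending the homomorphism $r$ from \cref{sec:kat:complete} to $r\colon \termskatf\to\termska(\Sigma+\At+\set\full)$ by setting $r(\full)=\full$. With this extension, all four conditions of \cref{prop:hred} split into two kinds of cases: those concerning letters in $\Sigma+\termsba$ and hypotheses in $\hkat$, which are already handled by \cref{lem:kat:atom} (and its auxiliary \cref{lem:atom:r:kat}); and those concerning the new letter $\full$ and the new hypotheses in $\hfull$, which I would treat separately.

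For the first condition ($\proves[\hkatf]\hatom,\hfull$), the $\hatom$ part is immediate from \cref{lem:kat:atom}, while a target hypothesis of the form $e\leq\full$ or $e\leq e\cdot\full\cdot e$ with $e\in\termska(\Sigma+\At+\set\full)$ is directly an instance of a source hypothesis in $\hfull$, since each atom $\alpha\in\At$ is a Boolean expression in $\termsba$ (as per the convention recalled in \cref{sec:kat}). The second and third conditions are trivial for $\full$ because $r(\full)=\full$ syntactically, and inherited from \cref{lem:kat:atom} for the other letters. For the fourth condition, given a hypothesis $e\leq f$ in $\hkatf$, either $e\leq f\in\hkat$ and \cref{lem:atom:r:kat} applies, or $e\leq f\in\hfull$, in which case $r$ being a homomorphism with $r(\full)=\full$ yields $r(e\cdot\full\cdot e)=r(e)\cdot\full\cdot r(e)$, so that $r(e)\leq r(f)$ is itself an instance of the target $\hfull$, hence derivable in a single step.

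The argument is entirely modular: because $r$ acts as the identity on $\full$, every $\hfull$-hypothesis is preserved in shape by $r$, and no reasoning beyond the KAT case is needed. I do not foresee any real obstacle here; the point of the lemma is precisely to confirm that $\hfull$ slots into the KAT reduction framework without disturbing the work already done for $\hkat$.
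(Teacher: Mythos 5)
Your proposal is correct and follows essentially the same route as the paper: the paper likewise adapts the proof of \cref{lem:kat:atom} by extending the homomorphism with $r(\full)=\full$ and extending \cref{lem:atom:r:kat} with the two new cases, observing that $\proves[\hfull] r(e)\leq \full=r(\full)$ and $\proves[\hfull] r(e)\leq r(e)\cdot\full\cdot r(e)=r(e\cdot\full\cdot e)$ because $r$ treats $\full$ as an uninterpreted letter. If anything, you are slightly more explicit than the paper in noting that target instances over $\Sigma+\At+\set\full$ are genuine instances of $\hfull$ since atoms are themselves Boolean expressions.
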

\begin{proof}
  It suffices to adapt the proof of \cref{lem:kat:atom}.
  In turn, this essentially means extending \ref{lem:atom:r:kat} to deal with the two new axioms:
  \begin{itemize}
  \item for $e\leq\full$, we have $\proves[\hfull] r(e)\leq \full = r(\full)$;
  \item for $e\leq e\cdot\full\cdot e$, we have $\proves[\hfull] r(e)\leq r(e)\cdot\full\cdot r(e)=r(e\cdot\full\cdot e)$.
  \end{itemize}
  (In both cases using the fact that $r$ is a homomorphism defined such that $r(\full)=\full$: $r$ sees $\full$ as an arbitrary, non-interpreted, letter.)
\end{proof}
Let us now rephrase a central result of~\cite{pw:concur22:katop}:
\begin{thmC}[{\cite{pw:concur22:katop}}]
  \label{thm:kaf}
  There is a reduction from $\hfull$ to the empty set.
\end{thmC}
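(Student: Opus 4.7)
The plan is to construct a reduction $r\colon\termska(\Sigma_\hkatf)\to\termska(\Sigma_\hkatf)$ using the automata-based technique from \cref{ssec:automata:red}, after first simplifying $\hfull$ via \cref{prop:reg:ctx}.
Since both axiom schemas in $\hfull$ have the identity as left-hand side (which is linear) and a monotone right-hand side ($e\mapsto\full$ is constant and $e\mapsto e\cdot\full\cdot e$ is monotone), \cref{prop:reg:ctx} lets me replace $\hfull$ by the equivalent word-indexed set
\[
H \eqdef \set{w\leq \full\mid w\in\Sigma_\hkatf^*}\cup\set{w\leq w\cdot\full\cdot w\mid w\in\Sigma_\hkatf^*}\,.
\]

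Given a regular expression $e$, compute a finite automaton $\A$ for $\lang{e}$, then enrich it to $\A'$ by adding shortcut transitions realizing the two closure rules: (i) for every $\full$-transition $i\to j$, allow any word to label a path from $i$ to $j$ (e.g.\ by plugging a universal subautomaton between $i$ and $j$); (ii) whenever there exist states $k,l$ and a word $w$ with $i\xrightarrow{w}k\xrightarrow{\full}l\xrightarrow{w}j$ in $\A$, add a $w$-labelled shortcut from $i$ to $j$, detected via a product-with-itself construction on $\A$ (pairs of states $(p,q)$ with synchronised transitions, so that a path reading $w$ in the product encodes two copies of $w$ in $\A$). Define $r(e)$ to be a regular expression for $\A'$.

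The three conditions of a reduction are then verified in turn. Condition~\eqref{red:proves} is vacuous since the target is $\emptyset$. For condition~\eqref{red:proves:r}, $\KAproves e\leq r(e)$ holds by KA completeness since $\lang{e}\subseteq\lang{r(e)}$ by construction, whereas $\KAproves[\hfull] r(e)\leq e$ is obtained via the compatible labelling technique of \cref{prop:compat}: a canonical labelling of $\A$ is also compatible with $\A'$ since each added shortcut is licensed by an axiom in $\hfull$---type (i) shortcuts by $e\leq\full$, type (ii) shortcuts by $e\leq e\cdot\full\cdot e$. For condition~\eqref{red:lang:r}, $\sem{\hfull}{e}\subseteq\lang{r(e)}$ follows by checking that $\lang{r(e)}$ is closed under the two elementary rewriting rules of $\cl\hfull$, using the product construction to witness the collapse rule $w\full w\to w$.

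The main obstacle is the axiom $e\leq e\cdot\full\cdot e$: it requires matching two \emph{identical} occurrences of a word surrounding $\full$, a non-local condition that escapes the simple shortcut-adding scheme that sufficed for the hypotheses of \cref{lem:reds}\ref{it:red:aw}. Designing the product-of-automata device so that every matched pair of paths is recorded as a single shortcut, and then aligning this construction with the compatible labelling argument needed for $\KAproves[\hfull] r(e)\leq e$, is where most of the technical work lies.
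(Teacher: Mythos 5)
The first thing to note is that the paper does not actually prove \cref{thm:kaf}: it imports the reduction wholesale from~\cite{pw:concur22:katop}, remarking only that it is an implicit step in the proof of Theorem~4.16 there, assembled from Propositions~3.4, 4.7 and 4.14 of that paper. The surrounding text even flags the reduction as ``non-trivial'' and uses it strictly as a black box. Your proposal, which tries to rebuild the reduction from scratch with the automata technique of \cref{ssec:automata:red}, is therefore necessarily a different route --- and it has a genuine gap at exactly the point you flag in your last paragraph.

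The preprocessing via \cref{prop:reg:ctx} is fine (the paper performs the same split of $\hfull$ into $\htop$ and $\hful$ in \cref{sec:katf}), and for $\htop$ alone your step (i) works, much as in \cref{ex:katop}. The gap is in step (ii) together with condition~\ref{red:lang:r} of \cref{def:red}: your shortcuts are computed from matched pairs of $w$-paths in the \emph{original} automaton $\A$, but $\cl\hfull$ is a least fixpoint, and the witness $u\,w\full w\,v$ of an application of $w\leq w\full w$ may only become available after earlier rewrites, i.e., only along paths through the gadgets you have just added. Concretely, take $\lang e=\set{ab\full b\full ba}$: then $ab\full ba\in\sem{\hfull}{e}$ by one application of (F) (with $u=a$, $w=b$, $v=\full ba$), and hence $aba\in\sem{\hfull}{e}$ by a second application whose witness path runs through the shortcut created by the first; a direct check shows that $aba\notin\lang{r(e)}$ for your one-round construction, so $\lang{r(e)}$ is \emph{not} closed under the rewriting rules and condition~\ref{red:lang:r} fails. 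If you instead iterate the construction, every round enlarges the state space (your product/triple gadgets add fresh states), so the termination argument that rescues \cref{lem:reds}\ref{it:red:aw} --- shortcuts there are added on a fixed state set --- is unavailable, and nested patterns of the form $ab\full b\full b\cdots\full ba$ force unboundedly many rounds. Proving that this saturation can nevertheless be organised so as to terminate, and then threading the result through the compatible-labelling argument of \cref{prop:compat}, is precisely the non-trivial content of the cited propositions of~\cite{pw:concur22:katop}; your closing paragraph concedes this step rather than supplying it, so what you have is a plausible (and accurately diagnosed) plan, not a proof.
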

\begin{proof}
  This is not stated explicitly in that paper, but this is an implicit step in the proof of Theorem~4.16, which follows from Propositions 3.4, 4.7, and 4.14.
\end{proof}

As a consequence, since we have reductions for $\hatom_0$, $\hatom_1$ and $\hatom_{1,2}$ to the empty set, we can follow the same strategy as before: it suffices to find how to organise the closures $\hatom_0$, $\hatom_1$, $\hatom_{1,2}$ and $\hfull$, so that we can use \cref{prop:cup} and \cref{prop:gsos}.

Since they are quantified over expressions, the inequations in $\hfull$ are not easy to work with.
Fortunately, \cref{prop:reg:ctx} gives us that $\cl\hfull=\cl{(\htop,\hful)}$, where
\begin{align*}
  \htop\eqdef\set{w\leq \full\mid w\in\Sigma_\hkatf^*}&&
  \hful\eqdef\set{w\leq w\cdot\full\cdot w\mid w\in\Sigma_\hkatf^*}
\end{align*}
As a consequence, we can work with these simpler sets.
From now on we simply write 0, 1 and 2 for $\hatom_0$, $\hatom_1$ and $\hatom_2$, and we use the following ordering:
\begin{align*}
  0<1<\htop<\hful<2
\end{align*}
Accordingly, we study overlaps to build \cref{table:katf}.
We use the same conventions as for \cref{table:kabo}, with the addition that we use double dotted entries to denote that \cref{lem:dl:0} applies (in such a cell $(i,j)$, we thus have $ij\subseteq i$, which is not constraining.)
\begin{table}[t]
  \begin{align*}
    \begin{array}{lr@{\,}l|c|c|c|c|c}
      &&
      &0&1&\htop&\hful&2 \\
      \hline
      0:&\alpha\beta&\leq 0~(\alpha\neq\beta)&-&..&..&..&.. \\
      1:&\alpha&\leq 1&&-&\htop&\hful11&. \\
      \htop:&w&\leq \full&&&-&\hful\htop\htop&. \\
      \hful:&w&\leq w\full w&&&&-&2\hful1 \\
      2:&1&\leq \sum\alpha&&&&&- \\
      \hline
      &&&&.&.&\hful({<}\hful)^\star&2({<}2)^\star\\
    \end{array}
  \end{align*}
  \caption{Summary of commutations for KATF.}
  \label{table:katf}
\end{table}
More formally, we prove:
\begin{lem}
  \label{lem:katf:com}
  We have the following inclusions of functions:
  \begin{enumerate}[(i)]
  \item\label{it:kaft0s} $0s \subseteq 0$ for $s\in\set{1,\htop,\hful,2}$
  \item\label{it:kaft1t} $1\htop \subseteq \htop1\cup \htop$
  \item\label{it:kaft1f} $1\hful \subseteq \hful1\cup \hful11$
  \item\label{it:kafttf} $\htop\hful \subseteq \hful\htop\cup \hful\htop\htop$
  \item\label{it:kaft12} $12 \subseteq 21$
  \item\label{it:kaftt2} $\htop2 \subseteq 2\htop$
  \item\label{it:kaftf2} $\hful2 \subseteq 2\hful\cup 2\hful1$
  \end{enumerate}
\end{lem}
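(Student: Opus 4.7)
The plan is to follow the structure of \cref{table:katf}, deriving each inclusion from one of three tools: \cref{lem:dl:0} for~(i), \cref{cor:nooverlap} for the dotted entries~(v) and~(vi), and \cref{prop:overlaps} for~(ii), (iii), (iv), and~(vii). Each application of \cref{prop:overlaps} proceeds by first enumerating the overlaps between the right-hand side of the first set of hypotheses and the left-hand side of the second, and then exhibiting, for every overlap, a short rewriting scheme that witnesses the required inclusion of languages.

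Item~(i) is immediate from \cref{lem:dl:0}, since $\hatom_0$ can be presented as the single hypothesis $\sum_{\alpha\neq\beta}\alpha\beta \leq 0$. For~(v) and~(vi), neither the empty RHS of~$1$ nor the single-letter RHS~$\full$ of~$\htop$ overlaps with the empty LHS of~$2$, so \cref{cor:nooverlap} applies.

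For~(ii), (iii), and~(iv), the overlap analysis is explicit but elementary. In~(ii) and~(iii), the overlaps of the empty RHS of~$1$ with the LHS~$w$ of~$\htop$ (resp.~$\hful$) are the splits $\tuple{x',y',\varepsilon,\varepsilon}$ with $x'y'=w$ and $x',y'$ non-empty. The condition for~(ii), namely $x'\alpha y' \in \htop(\{\full\})$, is trivial since $\htop$ may expand~$\full$ to any word. For~(iii), I would insert the same atom~$\alpha$ at position~$|x'|$ in each copy of~$w$ via two applications of~$1$, obtaining $x'\alpha y'\full x'\alpha y' \in 11(\{w\full w\})$, and then contract via~$\hful$ with witness~$x'\alpha y'$. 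For~(iv), each occurrence of~$\full$ inside the LHS $w' = p\full q$ of~$\hful$ gives one overlap $\tuple{p,q,\varepsilon,\varepsilon}$; I would then expand via~$\htop$ the two~$\full$s sitting inside the outer copies of~$w'$ in $p\full q\full p\full q$ into~$w$ (leaving the middle~$\full$ untouched), thus producing $pwq\full pwq$, which~$\hful$ then contracts to~$pwq$ with witness~$pwq$.

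Item~(vii) is the main obstacle. Its overlaps are the non-trivial splits $st = w\full w$ with $s,t$ non-empty, and the condition to verify is $w \in 2\hful 1(s\At t)$ for each such split. I would distinguish four sub-cases according to whether the split falls strictly inside the left copy of~$w$ (i.e.~$s=w_1$, $t=w_2\full w$, with $w=w_1w_2$), on the boundary just before~$\full$ ($s=w$, $t=\full w$), just after~$\full$ ($s=w\full$, $t=w$), or strictly inside the right copy. The unifying idea is to use~$1$ to insert an atom~$\beta$ at the position in~$s\At t$ that mirrors the starting~$\alpha$; setting $\beta=\alpha$ then produces a word of the form $w''\full w''$ that~$\hful$ contracts to~$w''$, and taking the union over all~$\alpha$ yields a set of the form $u\At v$ with $uv=w$, from which~$2$ produces~$w$. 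The delicate point is that the appropriate witness~$w''$ depends on the sub-case (e.g.~$w_1\alpha w_2$ when the split lies inside the left copy, but~$w\alpha$ and~$\alpha w$ at the two boundary splits, with the insertion placed at the end or the beginning of the word rather than inside a copy of~$w$), so one must verify in each sub-case that the post-contraction set is precisely of the form~$u\At v$ on which~$2$ collapses back to~$w$.
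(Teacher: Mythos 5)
Your proposal is correct and takes essentially the same route as the paper: item~(i) via \cref{lem:dl:0} (after summing $\hatom_0$ into a single $e\leq 0$ hypothesis), the dotted entries~(v) and~(vi) via \cref{cor:nooverlap}, and the rest via \cref{prop:overlaps} with exactly the paper's rewriting witnesses ($g=\htop$ for~(ii), $\hful11$ for~(iii), $\hful\htop\htop$ for~(iv), and the universally-quantified insert-then-contract-then-sum scheme giving $2\hful1$ for~(vii)). The only cosmetic difference is that you split the overlap analysis of~(vii) into four sub-cases, where the paper groups them into two families (allowing an empty half of the split) handled symmetrically.
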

\begin{proof}
  As before, we use \cref{prop:overlaps}, except for (i) which is just \cref{lem:dl:0}.
  \begin{enumerate}[(i)]
    \setcounter{enumi}{1}
  \item $1\htop \subseteq \htop1\cup \htop$:
    overlaps are of the form $\tuple{u,v,1,1}$ which we solve as follows: %
    \closeoverlapi1\htop\htop{u\alpha v}{uv}{\full}
  \item $1\hful \subseteq \hful1\cup \hful11$:
    overlaps are of the form $\tuple{u,v,1,1}$ which we solve as follows: %
    \closeoverlapiii1\hful\hful11{u\alpha v}{uv}{uv\full uv}%
    {u\alpha v\full u\alpha v}{u\alpha v\full u v}
  \item $\htop\hful \subseteq \hful\htop\cup \hful\htop\htop$:
    overlaps are of the form $\tuple{u,w,1,1}$ which we solve as follows: %
    \closeoverlapiii\htop\hful\hful\htop\htop{uvw}{u\full w}{u\full w\full u\full w}%
    {uvw\full uvw}{uvw\full u\full w}
  \item $12 \subseteq 21$: there are no overlaps.
  \item $\htop2 \subseteq 2\htop$: there are no overlaps.
  \item $\hful2 \subseteq 2\hful\cup 2\hful1$: there are two kinds of overlaps:
    \begin{itemize}
    \item $\tuple{1,1,u,v\full uv}$ which we intuitively solve as follows: %
      \closeoverlapiii\hful22\hful1{uv}{uv\full uv}{u(\sum\alpha)v\full uv}%
      {u\beta v}{u\beta v\full u\beta v}%
      Here the diagram is a bit sloppy: there is an implicit universal quantification on the atom $\beta$ in the second line. Formally, we have to show $uv\in 2\hful1(u(\sum\alpha)v\full uv)$.
      To this end, we prove that for all $\beta$, $u\beta v\in \hful1(u(\sum\alpha)v\full uv)$.
      We fix such a $\beta$ and we prove
      $u\beta v\full u\beta v\in 1(u(\sum\alpha)v\full uv)$.
      This follows from
      $u\beta v\full uv\in u(\sum\alpha)v\full uv$, which holds by choosing $\alpha=\beta$ in the sum.
    \item $\tuple{1,1,uv\full u,v}$, which are handled symmetrically.
      \qedhere
    \end{itemize}
  \end{enumerate}
\end{proof}

Thanks to these partial commutation properties, we obtain:
\begin{lem}
  \label{lem:katf:fullcommutations}
  We have $\cl{(0,1,\hfull,2)}=\cl2\cl\hful\cl\htop\cl1\cl0=\cl{(1,2)}\cl\hfull\cl1\cl0$.
\end{lem}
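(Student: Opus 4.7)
The plan is to establish the first equality via \cref{prop:gsos} applied to the five sets of hypotheses $0, 1, \htop, \hful, 2$ in this order, using the commutation properties from \cref{lem:katf:com} (summarised in \cref{table:katf}); the second equality will then follow by elementary inclusions between closures, starting from the first equality.

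For the first equality, observe first that $\cl{(0,1,\hfull,2)} = \cl{(0,1,\htop,\hful,2)}$ since $\cl\hfull = \cl{(\htop,\hful)}$ by \cref{prop:reg:ctx}. To apply \cref{prop:gsos}, I check that the first four sets are affine: $0$ by \cref{lem:dl:0}, and $1$, $\htop$, $\hful$ by \cref{lem:hyp:add} since their right-hand sides are words. The set $2$ need not be affine as it occupies the last position. It then remains to verify, for each column $j \in \set{1,\htop,\hful,2}$, that the compositions $H_i H_j$ for $i<j$ can uniformly be bounded by either option~$(1)$ or option~$(2)$ of \cref{prop:gsos}. Columns $1$ and $\htop$ are handled trivially using items~\ref{it:kaft0s} and~\ref{it:kaft1t} of \cref{lem:katf:com} (both options apply). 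For column $\hful$, items~\ref{it:kaft1f} and~\ref{it:kafttf} give $1\hful \subseteq \hful 1 \cup \hful 11$ and $\htop\hful \subseteq \hful\htop \cup \hful\htop\htop$; this forces option~$(2)$, which works because $\hful 11 \subseteq \hful\cdot\cl 1$ and $\hful\htop\htop \subseteq \hful\cdot\cl\htop$ both sit inside $\hful^=\clHlt\hful$. Column $2$ is handled analogously with option~$(2)$ using items~\ref{it:kaft12}--\ref{it:kaftf2}, the key check being $2\hful 1 \subseteq 2\cdot\clHlt 2$. The main potential obstacle here is the proper treatment of these two-step bounds, but option~$(2)$ absorbs them cleanly because the extra factor lies in $\clHlt j$.

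For the second equality, the inclusion $\cl{(1,2)}\cl\hfull\cl 1\cl 0 \subseteq \cl{(0,1,\hfull,2)}$ is immediate: each factor on the left is contained in $\cl{(0,1,\hfull,2)}$ by \cref{lem:clo:incl}, and since $\cl{(0,1,\hfull,2)}$ is a closure (hence idempotent under composition), the composition lies inside it. For the reverse inclusion, I use the first equality together with monotonicity: by \cref{lem:clo:incl}, $\cl 2 \subseteq \cl{(1,2)}$ and $\cl\hful, \cl\htop \subseteq \cl\hfull$, whence $\cl\hful\cl\htop \subseteq \cl\hfull\cl\hfull = \cl\hfull$, and composing with $\cl 1\cl 0$ on the right gives $\cl 2\cl\hful\cl\htop\cl 1\cl 0 \subseteq \cl{(1,2)}\cl\hfull\cl 1\cl 0$.
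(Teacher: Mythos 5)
Your proof is correct and follows essentially the same route as the paper's: replace $\hfull$ by $\htop,\hful$ via \cref{prop:reg:ctx}, apply \cref{prop:gsos} with the ordering $0<1<\htop<\hful<2$ using the bounds of \cref{table:katf} (alternative (2)), and obtain the second equality by general closure properties. Your explicit verification of the affineness hypotheses and of why columns $\hful$ and $2$ force alternative (2) (while columns $1$ and $\htop$ admit both) merely spells out what the paper delegates to the table conventions.
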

\begin{proof}
  We have $\hfull=\htop,\hful$ as functions by \cref{prop:reg:ctx}. Then we obtain:
  \begin{align*}
     \cl{(0,1,\htop,\hful,2)}
    =\cl2\cl\hful\cl\htop\cl1\cl0
    \subseteq\cl{(1,2)}\cl{(\htop,\hful)}\cl1\cl0
    \subseteq\cl{(0,1,\htop,\hful,2)}
  \end{align*}
  The equality is a direct application of \cref{prop:gsos}: all requirements are provided in \cref{table:katf} (using alternative (2) for each column). The subsequent inclusions hold by general closure properties.
\end{proof}

\begin{thm}
  \label{thm:katf}
  The set of hypotheses $\hkatf$ reduces to the empty set; $\KA_\hkatf$ is complete and decidable.
\end{thm}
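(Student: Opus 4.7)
The plan is to obtain a reduction from $\hkatf$ to the empty set by combining the reductions already established in this section with previous machinery; completeness and decidability of $\KA_\hkatf$ will then follow from \cref{thm:red} (together with \cref{thm:ka}) and \cref{thm:dec} respectively.

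The first step is to apply the homomorphism $r$, which was shown earlier in the section to be a reduction from $\hkatf$ to $\hatom \cup \hfull$. It then remains to reduce $\hatom \cup \hfull = \hatom_0 \cup \hatom_1 \cup \hatom_2 \cup \hfull$ to the empty set. The naive idea of reducing each piece separately fails, because $\hatom_2$ in isolation is not known to reduce to the empty set (cf.\ \cref{rem:wrong:red}). Following the same workaround as for KABO, I would group $\hatom_1$ and $\hatom_2$ into $\hatom_{1,2}$ and work with the four sets $\hatom_0$, $\hatom_1$, $\hfull$, $\hatom_{1,2}$, all of which do reduce to the empty set: $\hatom_0$ by \cref{lem:reds}\ref{it:red:e0}, $\hatom_1$ by \cref{lem:reds}\ref{it:red:aw}, $\hatom_{1,2}$ by \cref{lem:reds}\ref{it:red:1S}, and $\hfull$ by \cref{thm:kaf}.

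The key step is then to invoke \cref{prop:cup} with these four sets and common target $H' = \emptyset$. This requires the sequential containment
\begin{align*}
  \cl{(\hatom_0 \cup \hatom_1 \cup \hfull \cup \hatom_{1,2})} \subseteq \cl{\hatom_{1,2}} \circ \cl\hfull \circ \cl{\hatom_1} \circ \cl{\hatom_0}\,,
\end{align*}
which is precisely the content of \cref{lem:katf:fullcommutations} (noting that the left-hand union equals $\hatom_0 \cup \hatom_1 \cup \hatom_2 \cup \hfull$). This yields a reduction from $\hatom \cup \hfull$ to the empty set, which composes with $r$ via \cref{lem:red:comp} to give the desired reduction from $\hkatf$ to the empty set. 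Completeness of $\KA_\hkatf$ then follows from \cref{thm:red} and \cref{thm:ka}. For decidability, all the reductions we compose are given by explicit constructions and are therefore computable (the reductions of \cref{lem:reds} by inspection; the reduction of \cref{thm:kaf} by the explicit constructions in \cite{pw:concur22:katop}), so \cref{thm:dec} applies.

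The substantive work is not in this proof itself but in the preceding lemmas: the hard, standalone completeness result for the full-element axioms is imported as a black box via \cref{thm:kaf}, while the commutation analysis summarised in \cref{table:katf} and established in \cref{lem:katf:com} is what makes \cref{lem:katf:fullcommutations} available. The proof of the theorem is thus essentially book-keeping, checking that the pieces of $\hkatf$ are covered by reductions whose composites are legitimised by \cref{prop:cup}. The only mild subtlety to flag is the non-standalone treatment of $\hatom_2$, which forces the grouping with $\hatom_1$ and hence the specific ordering used in \cref{lem:katf:fullcommutations}.
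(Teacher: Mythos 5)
Your proposal is correct and follows essentially the same route as the paper: the paper likewise composes the homomorphic reduction from $\hkatf$ to $\hatom,\hfull$ with an application of \cref{prop:cup} to the four sets $\hatom_0$, $\hatom_1$, $\hfull$, $\hatom_{1,2}$ (justified by \cref{lem:katf:fullcommutations}, with $\hfull$ handled as a black box via \cref{thm:kaf}), and explicitly notes---as you do---that $\hatom_2$ must be grouped with $\hatom_1$ because no standalone reduction for $\hatom_2$ is known (\cref{rem:wrong:red}). Your additional remarks on computability for decidability are consistent with the paper's appeal to \cref{thm:dec}.
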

\noindent
(Note that we must apply \cref{prop:cup} to $0$, $1$, $\hfull$, and $(1,2)$ rather than to $0,1,\htop,\hful,2$: while there is a reduction to the empty set for \htop alone---cf. \cref{ex:katop}, we do not know if there is such a reduction for \hful alone; similarly for $1$ and $2$.)

\section{KAT with converse}
\label{sec:katc}

Another important extension of Kleene algebra is Kleene algebra with converse (KA$^\circ$), where a unary operation of \emph{converse} is added $(\cdot^\circ)$, to represent transposition of relations, or language reversal.
One obtains a complete axiomatisation w.r.t.\ language models fairly easily: it suffices to state that converse is an involution that distribute over sums and products, reversing the arguments of the latter:
\begin{align}
  \tag{I}
  \label{ax:I}
  x^{\circ\circ}=x && (x+y)^\circ=x^\circ+y^\circ && (x\cdot y)^\circ=y^\circ\cdot x^\circ
\end{align}
(That $0^\circ=0$, $1^\circ=1$, and $x^{{*}{\circ}}=x^{{\circ}{*}}$ follow.)
Like with the constant $\full$, dealing with relational models requires more work because converse in relational models satisfies more laws than in language models. In particular, it satisfies the law
\begin{align}
  \tag{C}
  \label{ax:C}
  x \leq x\cdot x^\circ\cdot x
\end{align}
Indeed, if a relation $x$ contains a pair $\tuple{i,j}$, then this pair also belongs to the relation $x\cdot x^\circ\cdot x$, by going from $i$ to $j$, then back to $i$ via $x^\circ$, and then to $j$ again.

Bloom, Bernátsky, Ésik, and Stefanescu have shown that adding this axiom actually yields a complete axiomatisation w.r.t.\ relational models~\cite{BES95,EB95}.
We show in this section that we can integrate these axioms for converse together with the axioms for KAT, and prove completeness of KAT with converse (KAT$^\circ$).

\medskip

First of all, we need to get rid of the converse operation: our framework deals only with plain regular expressions. To this end, we use the laws~\eqref{ax:I} and their consequences mentioned below as left-to-right rewrite rules: this makes it possible to normalise every regular expression with converse into a regular expression on a duplicated alphabet.

To this end, let us setup some conventions on sets of shape $2X\eqdef X+X$ for some set $X$:
\begin{itemize}
\item we let bold letters range over their elements;
\item we write $\rho$ for the codiagonal surjection from $2X$ to $X$;
\item we write $\inl,\inr$ for the two natural injections of $X$ into $2X$;
\item for $x\in X$, we simply write $x$ for $\inl(x)$; in contrast, we write $x^\bullet$ for $\inr(x)$;
\item we define an involution $\cdot^\circ$ on $2X$ by letting $x^\circ\eqdef x^\bullet$ and $x^{\bullet\circ}\eqdef x$ for all $x\in X$;
\end{itemize}

We extend the above involution $\cdot^\circ$ on $2X$ to an involution on regular expressions on the alphabet $2X$: $e^\circ$ is obtained from $e$ by applying $\cdot^\circ$ on all letter leaves, and swapping the arguments of all products. (E.g., $(a\cdot b^*+c^\bullet)^\circ=b^{{\bullet}{*}}\cdot a^\bullet+c$.).
This involution restricts to an involution on words over $2X$. For instance, we have $(ab^\bullet c)^\circ=c^\bullet ba^\bullet$.

Bloom, Bernátsky, Ésik, and Stefanescu first reduce the problem to a problem on (converse-free) regular expressions over the duplicated alphabet, and then they (implicitly) provide a reduction from~\eqref{ax:C} to the empty set (axioms~\eqref{ax:I} being dealt with by normalisation).
To do so, we need to translate this axiom into a set of hypotheses over the duplicated alphabet:
\begin{align*}
  \hcnv_X \eqdef \set{e \leq e\cdot e^\circ\cdot e \mid e \in \termska(2X)}
\end{align*}

\begin{thmC}[{\cite{EB95}}]
  \label{thm:kac}
  For all alphabets $X$, there is a reduction from $\hcnv_X$ to the empty set.
\end{thmC}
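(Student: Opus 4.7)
The plan is to repackage the construction of~\cite{EB95} as a reduction in our framework.

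First, I would simplify the universally-quantified hypotheses by applying \cref{prop:reg:ctx} with $l(e) \eqdef e$ (linear) and $r(e) \eqdef e \cdot e^\circ \cdot e$. The latter is monotone on languages, since concatenation and letter-wise involution are both monotone. The proposition then yields that $\hcnv_X$ agrees, as a function on languages, with
\begin{align*}
\hcnv'_X \eqdef \{w \leq w \cdot w^\circ \cdot w \mid w \in (2X)^*\}.
\end{align*}
Since $\proves[\hcnv_X]\hcnv'_X$ holds trivially, \cref{lem:red:comp} and \cref{cor:red:id} reduce the problem to constructing a reduction from $\hcnv'_X$ to the empty set.

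Second, I would extract from~\cite{EB95} a computable map $r$ on regular expressions over $2X$ satisfying two properties: (a) $\lang{r(e)} = \cl{\hcnv'_X}(\lang e)$, and (b) $\proves[\hcnv'_X] e = r(e)$. In that paper, $r(e)$ is built by taking a finite automaton for $\lang e$ and saturating it with shortcuts corresponding to single applications of the shrinking rule $w w^\circ w \to w$; the key technical point is that only boundedly many essentially distinct shortcuts arise, because the information they encode is determined by pairs of automaton states together with the involution structure, so the saturation terminates.

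Granted $r$ with properties (a) and (b), the three conditions of a reduction are straightforward: condition~(1) is vacuous, condition~(3), $\sem{\hcnv'_X}{e} \subseteq \lang{r(e)}$, follows from (a), and condition~(2) is (b). The main obstacle is establishing (a) and (b) in tandem. The inclusion $\cl{\hcnv'_X}(\lang e) \subseteq \lang{r(e)}$ requires showing that the saturated automaton realises every shrinking derivable in the closure, by a delicate induction on closure derivations. The derivability $\proves[\hcnv'_X] r(e) \leq e$ must be obtained by exhibiting an explicit derivation for each added shortcut, since invoking completeness of $\KA_{\hcnv'_X}$ would be circular; this is most naturally done via \cref{prop:compat}, after checking that a suitable labelling of the automaton for $e$ remains compatible with all added shortcuts modulo the hypothesis $w \leq w w^\circ w$. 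These two intertwined arguments are precisely the technical contribution of~\cite{EB95}.
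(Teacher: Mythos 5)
Your proposal matches the paper's treatment: the paper offers no in-house proof of \cref{thm:kac} and cites \cite{EB95} as a black box for exactly the automaton-saturation construction you sketch (with the word-level reformulation via \cref{prop:reg:ctx} and the compatible-labelling argument of \cref{prop:compat} appearing in the paper only later, in \cref{sec:katc} and \cref{ssec:automata:red}, in just the way you describe). The one nit is your appeal to \cref{cor:red:id}, which requires mutual derivability $\proves[\hcnv'_X]\hcnv_X$ that you have not established; the correct citation is \cref{lem:red:id}, which needs only $\proves[\hcnv_X]\hcnv'_X$ together with the function-level equality of the two closures that \cref{prop:reg:ctx} already gives you.
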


Now let us consider the combination of tests (from KAT) and converse, to obtain KAT$^\circ$.
There we also need to specify the behaviour of converse on tests $(\phi)$: we need the following law, which, unlike $1^\circ=1$ and $0^\circ=0$, is not derivable from~\eqref{ax:I}.
\begin{align}
  \tag{I'}
  \label{ax:II}
  \phi^\circ = \phi
\end{align}
Concerning terms, we move to regular expressions over the duplicated alphabet $2(\Sigma+\termsba)$.
According to the above axiom, we define:
\begin{align*}
  \hi_X &\eqdef \set{\pphi^\circ \leq \pphi \mid \pphi \in 2X}\\
  \hkatc &\eqdef \inl(\hkat) \cup \hcnv_{\Sigma+\termsba} \cup \hi_\termsba
\end{align*}
(Where $\pphi^\circ$ and $\pphi$ in the definition of $\hi_X$ are seen as one letter words over the alphabet $2(\Sigma+X)$, and where $\inl(\hkat)$ denotes the injection via $\inl$ of all inequations in $\hkat$ (cf. \cref{sec:kat}), which are inequations between terms in $\termska(\Sigma+\termsba)$, into inequations between terms in $\termska(2(\Sigma+\termsba))$.

Recall the reduction $r\colon \termska(\Sigma + \termsba) \to \termska(\Sigma+\At)$ from $\hkat$ to $\hatom$ we defined in \cref{sec:kat:complete},
and let $s\colon \termska(2(\Sigma + \termsba)) \to \termska(2(\Sigma+\At))$ be the homomorphism defined by
\begin{align*}
  \begin{cases}
    s(\mathbf a)\eqdef \mathbf a & \mathbf a \in 2\Sigma\\
    s(\mathbf\phi)\eqdef s(\phi^\bullet)\eqdef r(\phi)+r(\phi)^\circ & \phi \in \termsba
  \end{cases}
\end{align*}
Following \cref{sec:kat:complete}, we define the following sets of inequations over the alphabet $2(\Sigma+\At)$:
\begin{itemize}
\item $0\eqdef\set{\alpha\beta\leq0\mid\alpha,\beta\in\At,~\alpha\neq\beta}$
\item $1\eqdef\set{\aalpha\leq1\mid\aalpha\in2\At}$
\item $2\eqdef\set{1\leq \sum_{\alpha\in\At}\alpha}$
\end{itemize}
Note that $0=\inl(\hatom_0)$ and $2=\inl(\hatom_2)$: these sets only deal with atoms on the left; in contrast, 1 is extended to deal with atoms on the right: $1=\inl(\hatom_1),\inr(\hatom_1)$. We also set $1'\eqdef\inl(\hatom_1)$, so that $1',2$ corresponds to $\hatom_{1,2}$.

\begin{lem}
  The homomorphism $s$ is a reduction from $\hkatc$ to $0,1,2,\hcnv_{\Sigma+\At},\hi_\At$.
\end{lem}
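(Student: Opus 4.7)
The plan is to invoke \cref{prop:hred} and verify its four conditions, following the same template as the proof of \cref{lem:kat:atom}. The target hypotheses $0,1,2,\hcnv_{\Sigma+\At},\hi_\At$ are over the alphabet $2(\Sigma+\At) \subseteq 2(\Sigma+\termsba)$, so the alphabet condition is fine.

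For the first condition, $\proves[\hkatc]$ target: the sets $0$ and $2$ follow from $\inl(\hkat)$ exactly as in the KAT case; for $1$, $\alpha\leq1$ comes from $\inl(\hkat)$ and $\alpha^\bullet\leq1$ additionally uses $\hi_\termsba$ (which gives $\alpha^\bullet\leq\alpha$); and $\hcnv_{\Sigma+\At}\subseteq\hcnv_{\Sigma+\termsba}$ and $\hi_\At\subseteq\hi_\termsba$ are direct inclusions, since atoms are Boolean formulas. For the second condition, $\proves[\hkatc]\mathbf{a}=s(\mathbf{a})$, the case $\mathbf{a}\in 2\Sigma$ is trivial; for a Boolean letter $\phi$ (respectively $\phi^\bullet$) we combine the KAT-style identity $\phi=\sum_{\alpha\models\phi}\alpha=r(\phi)$ derivable under $\inl(\hkat)$ with the fact that $\alpha=\alpha^\bullet$ is derivable under $\hi_\termsba$ (applying the hypothesis both to $\pphi=\alpha$ and to $\pphi=\alpha^\bullet$), to identify $\phi$ with $r(\phi)+r(\phi)^\circ=s(\phi)$. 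For the third condition, $\proves\mathbf{a}\leq s(\mathbf{a})$, note that $s(\alpha)=s(\alpha^\bullet)=\alpha+\alpha^\bullet$, which dominates both $\alpha$ and $\alpha^\bullet$.

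The fourth condition, $\proves[\text{target}]s(e)\leq s(f)$ for $e\leq f\in\hkatc$, splits into three cases. When $e\leq f\in\inl(\hkat)$, using $\hi_\At$ to collapse every $r(\phi)^\circ$ to $r(\phi)$ we see that $s$ applied to a left-copy term $\inl(g)$ is provably equal to $r(g)$; the desired inequation thus reduces to $\proves[0,1,2]r(e)\leq r(f)$, which is \cref{lem:atom:r:kat} (noting $\inl(\hatom_1)\subseteq 1$). When $e\leq f\in\hi_\termsba$ (so $e=\pphi^\circ$, $f=\pphi$), the symmetric design of $s$ gives $s(\phi)=s(\phi^\bullet)$, hence $s(e)=s(f)$ and the inequation is trivial. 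When $e\leq f\in\hcnv_{\Sigma+\termsba}$, so $f=e\cdot e^\circ\cdot e$, a routine induction on terms (checked at letters only) shows that $s$ commutes with $\cdot^\circ$; then $s(f)=s(e)\cdot s(e)^\circ\cdot s(e)$, and since $s(e)\in\termska(2(\Sigma+\At))$ the inequation $s(e)\leq s(e)\cdot s(e)^\circ\cdot s(e)$ is a direct instance of $\hcnv_{\Sigma+\At}$.

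The delicate step is the last one: we must check that $s$ is compatible with the converse operation, because $\hcnv$ axioms mention $e^\circ$. This hinges on the symmetric choice $s(\phi)=s(\phi^\bullet)=r(\phi)+r(\phi)^\circ$, which ensures $s(\phi)^\circ=s(\phi)$ at the letter level and therefore propagates homomorphically to $s(g^\circ)=s(g)^\circ$ for every term $g$. Once this is in place, the inequations of $\hcnv_{\Sigma+\termsba}$ translate cleanly to instances of $\hcnv_{\Sigma+\At}$, while $\inl(\hkat)$ and $\hi_\termsba$ are handled as above; this is the specific design decision that makes the whole reduction go through uniformly.
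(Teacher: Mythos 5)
Your proof is correct and follows essentially the same route as the paper's: both invoke \cref{prop:hred}, use the key observation that $\proves[\hi_\At] s(\phi)=r(\phi)$ to reuse \cref{lem:atom:r:kat} for the $\inl(\hkat)$ hypotheses, dispatch $\hi_\termsba$ by the syntactic symmetry $s(\pphi^\circ)=s(\pphi)$, and handle $\hcnv$ via the (KA-provable, by commutativity of sum at Boolean letters) commutation $s(e)^\circ=s(e^\circ)$, reducing to an instance of $\hcnv_{\Sigma+\At}$. Your treatment of conditions 2 and 3 for letters in $2\At$ and $2\termsba$ likewise matches the paper's, so the only difference is that you spell out condition~1 explicitly where the paper leaves it to the adaptation of \cref{lem:kat:atom}.
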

\begin{proof}
  We adapt the proof of \cref{lem:kat:atom}.
  First observe that $\proves[\hi_\At] s(\phi)=r(\phi)$ for all $\phi\in\termsba$.
  This makes it possible to reuse \cref{lem:atom:r:kat}, which we only need to extend to deal with the two new axioms:
  \begin{itemize}
  \item for $e\leq e\cdot e^\circ\cdot e$,
    we first prove by induction on $e$ that $\proves[] s(e)^\circ=s(e^\circ)$, using that $s$ is a homomorphism, and commutativity of sum for the base case of a formula $\pphi\in2\termsba$;
    then we deduce $\proves[\hcnv_{\Sigma+\At},\hi_\At] s(e)\leq s(e)\cdot s(e)^\circ\cdot s(e)=s(e)\cdot s(e^\circ)\cdot s(e)=s(e\cdot e^\circ \cdot e)$.
  \item for $\pphi^\circ\leq\pphi$ with $\pphi\in 2\termsba$, we have $s(\pphi^\circ)=s(\pphi)$, syntactically.
  \end{itemize}
  For $\mathbf a\in 2\Sigma$, we still have $\mathbf a=s(\mathbf a)$ syntactically.
  We no longer have such a strong property for $\aalpha\in 2\At$, for which  $s(\aalpha)=r(\alpha)+r(\alpha)^\circ=\alpha+\alpha^\bullet$ with $\alpha=\rho(\aalpha)$; but then $\proves \aalpha\leq s(\aalpha)$, and Condition 3 is satisfied.
  It remains to check Condition 2 for $\pphi\in 2\termsba$; thanks to the first observation, we have $\proves[\hi_\At] s(\pphi)=r(\phi)$ with $\phi=\rho(\pphi)$, and we can use the fact that $r$ satisfies Condition 2: $\proves[\hkat]\phi=r(\phi)$, so that $\proves[\hkatc]\pphi=\phi=r(\phi)=s(\pphi)$.
\end{proof}
We have reductions to the empty set for each set in $0,1,(1',2),\hcnv_{\Sigma+\At},\hi_\At$: for
 $\hcnv_{\Sigma+\At}$ this is \cref{thm:kac}, and the rest follows by \cref{lem:reds}.
Therefore, it suffices to find how to organise their closures.
Like for KATF, we first simplify $\hcnv_{\Sigma+\At}$ into the following set, via \cref{prop:reg:ctx}:
\begin{align*}
  \hc &\eqdef \set{w \leq w\cdot w^\circ\cdot w \mid w \in (2(\Sigma+\termsba))^*}
\end{align*}
We also write $\hi$ for $\hi_\At$, and we use the following ordering:
\begin{align*}
  0<1<\hi<\hc<2
\end{align*}
Accordingly, we study overlaps to build \cref{table:katc}.
\begin{table}[t]
  \begin{align*}
    \begin{array}{lr@{\,}l|c|c|c|c|c}
      &&
      &0&1&\hi&\hc&2 \\
      \hline
      0:&\alpha\beta&\leq 0~(\alpha\neq\beta)&-&..&..&..&.. \\
      1:&\aalpha&\leq 1&&-&.&c111&. \\
      \hi:&\aalpha^\circ&\leq \aalpha&&&-&\hc\hi\hi\hi&. \\
      \hc:&w&\leq ww^\circ w&&&&-&2\hc\hi^=11 \\
      2:&1&\leq \sum\alpha&&&&&- \\
      \hline
      &&&&.&.&\hc({<}\hc)^\star&2({<}2)^\star\\
    \end{array}
  \end{align*}
  \caption{Summary of commutations for KAT$^\circ$.}
  \label{table:katc}
\end{table}
More formally, we prove:
\begin{lem}
  \label{lem:katc:com}
  We have the following inclusions of functions:
  \begin{enumerate}[(i)]
  \item\label{it:kafc1i} $1\hi \subseteq \hi1$
  \item\label{it:kafc1c} $1\hc \subseteq \hc1\cup \hc111$
  \item\label{it:kafcic} $\hi\hc \subseteq \hc\hi\cup \hc\hi\hi\hi$
  \item\label{it:kafc12} $12 \subseteq 21$
  \item\label{it:kafci2} $\hi2 \subseteq 2\hi$
  \item\label{it:kafcc2} $\hc2 \subseteq 2\hc\cup 2\hc\hi^=11$
  \end{enumerate}
\end{lem}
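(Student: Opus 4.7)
The plan is to mirror Lemma~\ref{lem:katf:com} essentially line by line: for each inclusion I apply Proposition~\ref{prop:overlaps} to the pair of hypothesis sets involved, classify the overlaps of the RHS of the first against the LHS of the second, and exhibit an explicit closure chain for the required $g$.

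Items (\ref{it:kafc1i}), (\ref{it:kafc12}), and (\ref{it:kafci2}) should be immediate: the relevant overlaps are between a word of length at most one and the empty word (and in (\ref{it:kafc1i}), between two words of length at most one), but the Fact following the definition of overlap rules these out. Hence Corollary~\ref{cor:nooverlap} delivers the inclusions with no $g$.

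For items (\ref{it:kafc1c}) and (\ref{it:kafcic}), each overlap pairs a short RHS (empty, or a single letter $\aalpha$) against the LHS $w$ of $\hc$, and the strategy is to contract with a cleverly modified $w'$. For (\ref{it:kafc1c}), an overlap $\tuple{u,v,1,1}$ decomposes $w=uv$ with $u,v$ non-empty; taking $w' \eqdef u\aalpha v$, the word $w'w'^\circ w' = u\aalpha v \cdot v^\circ \aalpha^\circ u^\circ \cdot u\aalpha v$ arises from $ww^\circ w$ by inserting $\aalpha$, $\aalpha^\circ$, $\aalpha$ at three symmetric positions, absorbed by three applications of $1$ before the $\hc$-contraction; this witnesses $g = \hc\,1\,1\,1$. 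For (\ref{it:kafcic}), an overlap fixes a position of $\aalpha$ in $w = u\aalpha v$ (with $u,v$ possibly empty, covering prefix, suffix, and interior cases uniformly); taking $w' \eqdef u\aalpha^\circ v$, the words $w'w'^\circ w'$ and $ww^\circ w$ have the same length and differ at exactly the three $\aalpha/\aalpha^\circ$ positions, corrected by three $\hi$-toggles before the $\hc$-contraction, giving $g = \hc\,\hi\,\hi\,\hi$.

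Item (\ref{it:kafcc2}) is the most delicate and I expect it to be the main obstacle. Overlaps here are decompositions $ww^\circ w = st$ with $s,t$ non-empty, so the plain atom $\alpha$ inserted by $2$ falls in one of three regions of $ww^\circ w$: the first $w$ (case A), the middle $w^\circ$ (case B), or the last $w$ (case C). In each case I will take $w'$ to be $w$ with $\alpha$ inserted at the position matching the overlap (so $|w'|=|w|+1$), and then build $w'w'^\circ w'$ from $s\alpha t$ by the two $1$-insertions of the two missing letters in the regions that $s\alpha t$ left untouched. In cases A and C the two missing letters are one $\alpha$ and one $\alpha^\bullet$ inserted directly by $1$, with $\hi^{=}=\id$; in case B the two missing letters are both plain $\alpha$'s, and additionally the $\alpha$ of $s\alpha t$ (which sits in the $w^\circ$ region) must be toggled to $\alpha^\bullet$ by a single $\hi$-step to match $w'^\circ$. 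Then $\hc$ contracts $w'w'^\circ w'$ to $w'$, and $2$ removes the plain $\alpha$ from $w'$ to yield $w$. The three-way case split and the precise role of the optional $\hi$-toggle make this the subtlest part of the argument.
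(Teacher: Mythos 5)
Your proposal is correct and follows the paper's proof essentially verbatim: the paper likewise dispatches (i), (iv), (v) via \cref{cor:nooverlap} and solves (ii), (iii), (vi) through \cref{prop:overlaps} with exactly your chains ($g=\hc111$ with $w'=u\aalpha v$, $g=\hc\hi\hi\hi$ with $w'=u\aalpha^\circ v$, and the three-region split in (vi) yielding $2\hc\hi^=11$, the $\hi$-toggle occurring precisely when the insertion lands in the middle $w^\circ$). One point worth making explicit (the paper flags it too, as an ``implicit universal quantification''): since $2$ rewrites the whole sum $\sum_{\alpha\in\At}\alpha$ rather than a single atom, the final $2$-step in (vi) requires your chain to be run for every atom $\beta$ simultaneously---which goes through because your construction is uniform in the inserted atom.
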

\begin{proof}
  As before, we use \cref{prop:overlaps}.
  \begin{enumerate}[(i)]
  \item there are no overlaps.
  \item $1\hc \subseteq \hc1\cup \hc111$:
    overlaps are of the form $\tuple{u,v,1,1}$ which we solve as follows: %
    \closeoverlapiiii1\hc\hc111{u\aalpha v}{uv}{uvv^\circ u^\circ uv}%
    {u\aalpha vv^\circ \aalpha^\circ u^\circ u\aalpha v}%
    {u\aalpha vv^\circ \aalpha^\circ u^\circ u v}%
    {u\aalpha vv^\circ u^\circ u v}
    (Note that this case forces us to define 1 so that it is able to produce both kinds of atoms.)
  \item $\hi\hc \subseteq \hc\hi\cup \hc\hi\hi\hi$:
    overlaps are of the form $\tuple{u,v,1,1}$ which we solve as follows: %
    \closeoverlapiiii\hi\hc\hc\hi\hi\hi{u\aalpha^\circ v}{u\aalpha v}%
    {u\aalpha v v^\circ\aalpha^\circ u^\circ u\aalpha v}%
    {u\aalpha^\circ v v^\circ\aalpha u^\circ u\aalpha^\circ v}%
    {u\aalpha^\circ v v^\circ\aalpha u^\circ u\aalpha v}%
    {u\aalpha^\circ v v^\circ\aalpha^\circ u^\circ u\aalpha v}%
  \item $12 \subseteq 21$: there are no overlaps.
  \item $\hi2 \subseteq 2\hi$: there are no overlaps.
  \item $\hc2 \subseteq 2\hc\cup 2\hc\hi^=11$: there are three kinds of overlaps:
    \begin{itemize}
    \item $\tuple{1,1,u,vv^\circ u^\circ uv}$ which we intuitively solve as follows: %
      \closeoverlapiiii\hc22\hc11{uv}{uvv^\circ u^\circ uv}{u(\sum\alpha)vv^\circ u^\circ uv}%
      {u\beta v}%
      {u\beta v v^\circ\beta^\bullet u^\circ u\beta v}%
      {u\beta v v^\circ u^\circ u \beta v}%
      (Using the same implicit universal quantification on $\beta$ in the bottom line, like in the proof of \cref{lem:katf:com}---restricted to atoms on the left.)
    \item $\tuple{1,1,uvv^\circ u^\circ u,v}$, which is handled symmetrically.
    \item $\tuple{1,1,uvv^\circ,u^\circ uv}$, for which we have
      \closeoverlapiiii\hc22\hc\hi{11}{uv}{uvv^\circ u^\circ uv}{uvv^\circ (\sum\alpha)u^\circ uv}%
      {u\beta v}%
      {u\beta v v^\circ\beta^\bullet u^\circ u\beta v}%
      {u\beta v v^\circ \beta u^\circ u\beta v}%
      (Again with an implicit universal quantification on left atom $\beta$ in the bottom line.)
      Note that we need one more step than in the previous cases, to turn the left $\beta$ produced from the sum into a $\beta^\bullet$.
      \qedhere
    \end{itemize}
  \end{enumerate}
\end{proof}

Thanks to these partial commutation properties, we obtain:
\begin{lem}
  \label{lem:katc:fullcommutations}
  We have $\cl{(0,1,\hi,\hc,2)}=\cl2\cl\hc\cl\hi\cl1\cl0=\cl{(1',2)}\cl\hc\cl\hi\cl1\cl0$.
\end{lem}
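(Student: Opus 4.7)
The plan is to follow the same recipe as in the proof of \cref{lem:katf:fullcommutations}: apply \cref{prop:gsos} with the ordering $0<1<\hi<\hc<2$ to obtain the middle equality $\cl{(0,1,\hi,\hc,2)}=\cl 2\cl\hc\cl\hi\cl 1\cl 0$, then deduce the rightmost equality by general monotonicity of closures together with the observation that $1'\subseteq 1$.

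To apply \cref{prop:gsos}, I would first check its affineness hypothesis: all sets except the last must be affine. This holds for $0$ because its associated function is constant (\cref{lem:dl:0}); and for $1$, $\hi$, $\hc$ by \cref{lem:hyp:add}, since their right-hand sides ($1$, $\aalpha$, $ww^\circ w$) are words. The remaining requirements of \cref{prop:gsos} are the column-by-column commutations, which are exactly what \cref{table:katc} summarises and what \cref{lem:katc:com} verifies. Reading the table along its bottom row, I would select alternative~(2) of \cref{prop:gsos} (namely $H_j^=\cl{\Hlt j}$) for columns $\hi$, $\hc$ and $2$; the dotted entries in columns $1$ and $2$ are compatible with either alternative so they impose no constraint. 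Each entry above the diagonal then fits: for columns $\hi$ and $\hc$, the tails $\hc111$ and $\hc\hi\hi\hi$ from items~\ref{it:kafc1c} and~\ref{it:kafcic} live inside $H_j^=\cl{\Hlt j}$ because $\cl{\Hlt j}$ permits arbitrarily many applications of lower-priority sets.

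The main point requiring care is the column for $2$, and in particular entry $(\hc,2)$: the bound $\hc\cdot 2\subseteq 2\hc\cup 2\hc\hi^=11$ from \cref{lem:katc:com}\ref{it:kafcc2} has a multi-step tail $\hc\hi^=11$, which fits inside $\cl{\Hlt 2}=\cl{(0,1,\hi,\hc)}$ but would \emph{not} fit inside the single-step $\Hlt 2^=$ of alternative~(1). This is what forces alternative~(2) on the rightmost column and is the only genuine obstacle in the bookkeeping.

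With the middle equality in hand, the rightmost equality is immediate. The inclusion $\cl 2\cl\hc\cl\hi\cl 1\cl 0\subseteq\cl{(1',2)}\cl\hc\cl\hi\cl 1\cl 0$ follows from $\cl 2\subseteq\cl{(1',2)}$; conversely, since $1'\subseteq 1$ we have $\cl{(1',2)}\subseteq\cl{(0,1,\hi,\hc,2)}$, and each of $\cl\hc,\cl\hi,\cl 1,\cl 0$ is also dominated by $\cl{(0,1,\hi,\hc,2)}$, so their composition is too (using that a composition of functions below a closure stays below that closure). Closing the loop with the first equality yields the claimed chain.
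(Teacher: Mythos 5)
Your proof is correct and follows essentially the same route as the paper, which likewise applies \cref{prop:gsos} with the ordering $0<1<\hi<\hc<2$ (using alternative~(2), with all requirements supplied by \cref{table:katc}) and then obtains the second equality by general closure properties, exactly as you spell out in more detail. One cosmetic slip worth noting: the multi-step tails $\hc111$ and $\hc\hi\hi\hi$ are both entries of column $\hc$ (not of columns $\hi$ and $\hc$), and they already force alternative~(2) on column $\hc$, so entry $(\hc,2)$ is not quite the \emph{only} constraining point---but since you chose and verified alternative~(2) for every relevant column anyway, this does not affect the validity of your argument.
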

\begin{proof}
  The first equality is a direct application of \cref{prop:gsos}: all requirements are provided in \cref{table:katc} (using alternative (2) for all columns). The second one follows by general closure properties.
\end{proof}

Recall that $(1',2)$ corresponds to $\hatom_{1,2}$, for which we have a reduction to the empty set (like for $0$, $1$, $\hi$ and $\hc$). Therefore we can conclude: 
\begin{thm}
  \label{thm:katc}
  $\hkatc$ reduces to the empty set; $\KA_\hkatc$ is complete and decidable.
\end{thm}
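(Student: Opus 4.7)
The plan is to combine the machinery already assembled for KAT$^\circ$ into a single reduction from $\hkatc$ to the empty set, and to read off completeness and decidability from there. The homomorphism $s$ of the preceding lemma already gives a reduction from $\hkatc$ to $H \eqdef 0 \cup 1 \cup 2 \cup \hcnv_{\Sigma+\At} \cup \hi$; once I produce a reduction from $H$ to $\emptyset$, \cref{lem:red:comp} finishes the job.

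To build a reduction from $H$ to $\emptyset$, I would apply \cref{prop:cup} with the five component sets $0$, $1$, $\hi$, $\hcnv_{\Sigma+\At}$, $(1',2)$. Since $1' \subseteq 1$ their union is exactly $H$, and each of the five sets reduces to $\emptyset$ individually: $0$ by \cref{lem:reds}(iii); $1$ and $\hi$ by \cref{lem:reds}(i), since every left-hand side is a single letter $\aalpha$ or $\aalpha^\circ$ and the right-hand sides are words; $(1',2)$ by \cref{lem:reds}(ii); and $\hcnv_{\Sigma+\At}$ by \cref{thm:kac}. The closure factorisation $\cl H \subseteq \cl{(1',2)} \circ \cl{\hcnv_{\Sigma+\At}} \circ \cl\hi \circ \cl 1 \circ \cl 0$ required by \cref{prop:cup} follows from \cref{lem:katc:fullcommutations}, once one observes that $\cl\hc = \cl{\hcnv_{\Sigma+\At}}$: the conditions of \cref{prop:reg:ctx} are met by $l(e) = e$ (linear) and $r(e) = e \cdot e^\circ \cdot e$ (monotone, since language reversal preserves inclusion), so the word-restricted $\hc$ and the expression-quantified $\hcnv_{\Sigma+\At}$ induce the same one-step function on languages and hence the same closure.

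With a reduction from $\hkatc$ to $\emptyset$ in hand, completeness of $\KA_\hkatc$ follows from \cref{thm:red} together with completeness of $\KA$ (\cref{thm:ka}), and decidability follows from \cref{thm:dec}, since $s$ and each of the component reductions are computable (the one provided by \cref{thm:kac} is constructive, being extracted from~\cite{EB95}).

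The main obstacle in the whole KAT$^\circ$ development sits upstream, in \cref{lem:katc:fullcommutations}: the partial commutation analysis of \cref{lem:katc:com} has to find a global ordering of the five sets against which every overlap can be closed by a composition of later-or-equal rewrites, and the triple overlaps between $\hc$ and $2$ and between $\hi$ and $\hc$ force $\hc$ to sit between $\hi$ and $2$; with that ordering fixed and the pieces from \cref{sec:kat,sec:moretools} in place, the present theorem is essentially an assembly step.
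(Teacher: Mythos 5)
Your proposal is correct and takes essentially the same route as the paper: the paper likewise composes the homomorphic reduction $s$ with an application of \cref{prop:cup} to the components $0$, $1$, $\hi$, $\hcnv_{\Sigma+\At}$ and $(1',2)$, justified by the factorisation of \cref{lem:katc:fullcommutations} together with $\cl\hc=\cl{\hcnv_{\Sigma+\At}}$ from \cref{prop:reg:ctx}, and by the individual reductions from \cref{lem:reds} and \cref{thm:kac}, concluding via \cref{lem:red:comp}, \cref{thm:red}, and \cref{thm:dec}. Your care in feeding \cref{prop:cup} the expression-quantified set $\hcnv_{\Sigma+\At}$ rather than the word-restricted $\hc$ (for which no direct reduction is known) matches the paper's intent exactly, cf.\ the analogous remark after \cref{thm:katf}.
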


\section{Kleene algebra with positive tests}
\label{sec:kapt}

In KAT, tests are assumed to form a Boolean algebra. This is sometimes
too strong; for instance, this prevents using the Coq tactic for
KAT~\cite{pous:itp13:ra} in situations where tests are only
intuitionistic propositions.
Here we study the structure obtained by assuming that they only form a distributive
lattice. A \emph{Kleene algebra with positive tests (KAPT)} is a
Kleene algebra $K$ containing a lattice $L$ such that the meet of $L$
coincides with the product of $K$, the join of $L$ coincides with the
sum of $K$, and all elements of $L$ are below the multiplicative identity of $K$.
(We discuss the variant where we have a bounded lattice at the
end, see \cref{rem:kaptt}).  Since the product distributes over sums in
$K$, $L$ must be a distributive lattice.  Also note that
there might be elements of $K$ below $1$ that do not belong to $L$.

As before, we fix two finite sets $\Sigma$ and $\Omega$ of primitive
actions and primitive tests. We consider regular expressions over
the alphabet $\Sigma+\termsdl$, where $\termsdl$ is the set of lattice
expressions over $\Omega$: expressions built from elements of $\Omega$
and two binary connectives $\vee$, $\wedge$.

We write $\hdl$ for the set of all instances of distributive lattice
axioms over $\termsdl$~\cite{DaveyPriestley90}, and we set
$\hkapt\eqdef\hdl\cup\hgluedl$ where
\begin{align*}
  \hgluedl &\eqdef \set{\phi\wedge  \psi = \phi\cdot \psi \text{, }\phi\vee  \psi = \phi+\psi\mid \phi,\psi\in\termsdl }
  \cup \set{\phi\leq 1\mid \phi\in\termsdl}
\end{align*}

Like for Boolean algebras, the free distributive lattice over $\Omega$ is finite and can be described easily. An \emph{atom} $\alpha$ is a non-empty subset of $\Omega$, and we write $\At$ for the set of such atoms as before. However, while an atom $\set{a,b}$ of Boolean algebra was implicitly interpreted as the term $a\wedge b\wedge \lnot c$ (when $\Omega=\set{a,b,c}$), the same atom in the context of distributive lattices is implicitly interpreted as the term $a\wedge b$---there are no negative literals in distributive lattices.
Again similarly to the case of Boolean algebras, the key property for atoms in distributive lattices is the following: for all atoms $\alpha$ and formulas $\phi$, we have
\begin{align*}
  \alpha\models\phi ~\Leftrightarrow~ \DLproves\alpha\leq\phi \qquad\text{and}\qquad
  \DLproves\phi = \bigvee_{\alpha\models \phi}\alpha
\end{align*}
Like for KAT, such a property makes it possible to reduce $\hkapt$ to the following set of equations on the alphabet $\Sigma+\At$.
\begin{align*}
  \hatomdl \eqdef \set{\alpha\cdot \beta = \alpha\cup\beta\mid \alpha,\beta\in\At }
  \cup \set{\alpha\leq 1\mid \alpha\in\At}
\end{align*}
(Note that in the right-hand side of the first equation, $\alpha\cup\beta$ is a single atom, whose implicit interpretation is $\alpha\wedge\beta$.)

\begin{lem}
  \label{lem:kapt:atomdl}
  There is a reduction from $\hkapt$ to $\hatomdl$, witnessed by the homomorphism
  $r \colon \termska(\Sigma+\termsdl) \to \termska(\Sigma+\At)$ defined by
  \begin{align*}
    r(x) &=
           \begin{cases}
             a & x = a \in \Sigma\\
             \sum_{\alpha \models \phi} \alpha & x = \phi \in \termsdl
           \end{cases}
  \end{align*}
\end{lem}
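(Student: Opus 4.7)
The plan is to mirror the structure of the KAT proof (\cref{lem:kat:atom}), invoking \cref{prop:hred} with $H=\hkapt$ and $H'=\hatomdl$. Since $r$ is declared as a homomorphism on generators, it suffices to check the four conditions of that proposition on letters and hypotheses.

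First I would dispatch condition~\ref{hred:proves} ($\KAproves[\hkapt]\hatomdl$): the inequation $\alpha\leq 1$ is a direct instance of $\phi\leq 1 \in \hgluedl$, and for $\alpha\cdot\beta=\alpha\cup\beta$ one observes that in a DL, the meet of the DL-terms associated to the atoms $\alpha$ and $\beta$ is precisely the DL-term associated to $\alpha\cup\beta$, so that $\proves[\hgluedl]\alpha\cdot\beta=\alpha\wedge\beta=\alpha\cup\beta$. Next, condition~\ref{hred:proves:r} is trivial for letters of $\Sigma$; for $\phi\in\termsdl$ one uses the key identity $\DLproves\phi=\bigvee_{\alpha\models\phi}\alpha$ and then the glueing axioms for $\vee$ and $\wedge$ to turn the lattice join into a Kleene sum, giving $\proves[\hkapt]\phi=\sum_{\alpha\models\phi}\alpha=r(\phi)$.

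Condition~\ref{hred:idem} is where the present case diverges slightly from the Boolean one: unlike in BA, here several atoms may satisfy a given atom (namely all its DL-supersets, i.e.\ its subsets as sets), so $r(\alpha)\neq \alpha$ syntactically. Nevertheless, since $\alpha\models\alpha$ always holds, $\alpha$ occurs as a summand of $r(\alpha)=\sum_{\beta\models\alpha}\beta$, and thus $\KAproves\alpha\leq r(\alpha)$ immediately. For $\ltr a\in\Sigma$ this is trivial since $r(\ltr a)=\ltr a$.

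The main obstacle, and the step I would spend most effort on, is condition~\ref{hred:hyps}: showing $\proves[\hatomdl] r(e)\leq r(f)$ for each hypothesis of $\hkapt$. The DL axioms are free: if $\DLproves\phi=\psi$ then $\alpha\models\phi$ iff $\alpha\models\psi$, so $r(\phi)=r(\psi)$ syntactically. The axiom $\phi\leq 1$ follows because every summand $\alpha$ of $r(\phi)$ satisfies $\alpha\leq 1\in\hatomdl$. The axiom $\phi\vee\psi=\phi+\psi$ is immediate by the homomorphism property and the characterisation $\alpha\models\phi\vee\psi\Leftrightarrow\alpha\models\phi$ or $\alpha\models\psi$. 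The delicate case is $\phi\wedge\psi=\phi\cdot\psi$: one computes
\begin{align*}
\proves[\hatomdl] r(\phi\cdot\psi)=\sum_{\alpha\models\phi,\,\beta\models\psi}\alpha\cdot\beta=\sum_{\alpha\models\phi,\,\beta\models\psi}(\alpha\cup\beta)
\end{align*}
using the first family of equations in $\hatomdl$. One then checks this sum equals $\sum_{\gamma\models\phi\wedge\psi}\gamma=r(\phi\wedge\psi)$: the $\subseteq$ direction holds since $\alpha\cup\beta\leq\alpha\leq\phi$ and $\alpha\cup\beta\leq\beta\leq\psi$ in DL, hence $\alpha\cup\beta\models\phi\wedge\psi$; conversely, every $\gamma\models\phi\wedge\psi$ satisfies both $\phi$ and $\psi$, and it appears as the summand $\gamma\cup\gamma$ (taking $\alpha=\beta=\gamma$). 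This closes the verification and yields the reduction.
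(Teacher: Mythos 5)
Your proposal is correct and takes exactly the route the paper intends for this lemma (which it states without explicit proof, by direct analogy with \cref{lem:kat:atom}): an application of \cref{prop:hred}, with condition~\ref{hred:idem} weakened to $\alpha\leq r(\alpha)$ via $\alpha\models\alpha$, and the delicate case $\phi\wedge\psi=\phi\cdot\psi$ discharged through $\alpha\cdot\beta=\alpha\cup\beta$ together with the two-directional matching of summands (taking $\alpha=\beta=\gamma$ for the converse), all of which checks out. One immaterial slip in an aside: the atoms satisfying an atom $\alpha$ are its set-theoretic \emph{supersets}, i.e.\ the atoms \emph{below} $\alpha$ in the lattice order (since $\beta\models\alpha$ iff $\DLproves\beta\leq\alpha$ iff $\alpha\subseteq\beta$), not its subsets as you wrote — but since your argument only ever uses $\alpha\models\alpha$, nothing breaks.
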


As a consequence, in order to get decidability and completeness for KAPT (i.e., $\hkapt$), it suffices to reduce $\hatomdl$ to the empty set. Let us number the three kinds of inequations that appear in this set:
\begin{align*}
1&{\eqdef}\set{\alpha{\cup}\beta \leq \alpha{\cdot}\beta \mid \alpha,\beta\in\At}&
2&{\eqdef}\set{\alpha{\cdot}\beta \leq \alpha{\cup}\beta \mid \alpha,\beta\in\At}&
4&{\eqdef}\set{\alpha \leq 1 \mid \alpha\in\At}
\end{align*}
We number the third set with 4 by anticipation: we will need the number 3 for another set of hypotheses later.
\cref{lem:reds}\ref{it:red:aw} gives reductions to the empty set for 1 and 4, but so far we have no reduction for 2.
We actually do not know if there is a reduction from 2 to the empty set. Instead, we establish a reduction from 2 together with 4 to 4 alone.
\begin{lem}
  \label{lem:kapt:24:4}
  There is a reduction from 2,4 to 4, witnessed by the homomorphism
  $r \colon \termska(\Sigma+\At)\to \termska(\Sigma+\At)$ defined by
  \begin{align*}
    r(x) &=
           \begin{cases}
             a & x = a \in \Sigma\\
             \sum
                   \set{\alpha_1\cdot\ldots\cdot\alpha_n\mid \alpha=\bigcup_{i\leq n}\alpha_i,~i\neq j \Rightarrow \alpha_i\neq \alpha_j} & x = \alpha \in \At
           \end{cases}
  \end{align*}
\end{lem}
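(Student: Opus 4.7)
The plan is to apply \cref{prop:hred} to the homomorphism $r$. The first condition $\proves[2,4] 4$ is trivial since $4$ already sits in the source set of hypotheses, and the third condition $\proves a\leq r(a)$ for $a\in\Sigma+\At$ is immediate: on $\Sigma$, $r$ is the identity, and for $\alpha\in\At$, the summand corresponding to the singleton decomposition ($n=1$) is $\alpha$ itself, so $\KAproves \alpha \leq r(\alpha)$.

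For the second condition, I would argue that for $\alpha\in\At$ and any summand $\alpha_1\cdots\alpha_n$ of $r(\alpha)$ (with $\alpha=\bigcup_i\alpha_i$ and the $\alpha_i$ pairwise distinct), an easy induction on $n$ using the hypotheses in~$2$ gives $\KAproves[2] \alpha_1\cdots\alpha_n\leq \alpha_1\cup\cdots\cup\alpha_n = \alpha$. Summing over all decompositions yields $\KAproves[2] r(\alpha)\leq \alpha$, and combined with the third condition this gives $\KAproves[2,4] \alpha = r(\alpha)$.

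The real work is in the fourth condition, which splits into two cases. For a hypothesis $\alpha\leq 1\in 4$, each summand $\alpha_1\cdots\alpha_n$ of $r(\alpha)$ is below $1$ by $n$ applications of $4$, so $\KAproves[4] r(\alpha)\leq 1 = r(1)$. The main obstacle is the hypothesis $\alpha\cdot\beta\leq\alpha\cup\beta\in 2$, for which I need to show $\KAproves[4] r(\alpha)\cdot r(\beta) \leq r(\alpha\cup\beta)$. The idea is that a typical summand of $r(\alpha)\cdot r(\beta)$ has the shape $\alpha_1\cdots\alpha_n\cdot\beta_1\cdots\beta_m$ where the $\alpha_i$ are pairwise distinct and likewise the $\beta_j$, but the concatenated sequence may contain duplicates (whenever some $\beta_j$ equals some $\alpha_i$). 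I would extract from this concatenation a subsequence $\gamma_1,\dots,\gamma_k$ of pairwise distinct atoms by deleting, say, every $\beta_j$ already present among the $\alpha_i$; deletion does not change the union, so $\bigcup_l\gamma_l = \alpha\cup\beta$ and $\gamma_1\cdots\gamma_k$ is therefore a summand of $r(\alpha\cup\beta)$. Each deleted atom $\beta_j$ contributes a factor that can be removed using $\beta_j\leq 1\in 4$, yielding $\KAproves[4] \alpha_1\cdots\alpha_n\cdot\beta_1\cdots\beta_m \leq \gamma_1\cdots\gamma_k \leq r(\alpha\cup\beta)$. Summing over all choices of decompositions of $\alpha$ and $\beta$ gives the desired inequation, which completes the verification of the four conditions of \cref{prop:hred}.
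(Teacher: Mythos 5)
Your proposal is correct and takes essentially the same route as the paper's own proof: both apply \cref{prop:hred}, derive $\proves[2,4] r(\alpha)\leq\alpha$ by iterating the inequations in $2$, and resolve the key case $\proves[4] r(\alpha)\cdot r(\beta)\leq r(\alpha\cup\beta)$ by observing that the concatenated sequence of atoms has the right union but possibly duplicate entries, which are removed using the inequations $\gamma\leq 1$ from $4$. Your duplicate-removal step is even slightly more explicit than the paper's (you specify deleting each $\beta_j$ already occurring among the $\alpha_i$, so that the surviving sequence is pairwise distinct and genuinely a summand of $r(\alpha\cup\beta)$), but the argument is the same.
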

\begin{proof}
  We use \cref{prop:hred}. The first condition is trivially satisfied since $2\cup4$ contains $4$.
  For a letter $\ltr a\in \Sigma$, $r(\ltr a)=\ltr a$ so that second and third conditions are trivial for such letters, and we need to prove them only for atoms $\alpha\in\At$.
  $\proves \alpha\leq r(\alpha)$ follows by using the singleton sequence $\alpha$ which is a term in the sum $r(\alpha)$. For the other condition, it thus suffices to show $\proves[2,4] r(\alpha)\leq \alpha$, i.e., $\proves[2,4] \alpha_1\dots\alpha_n\leq \alpha$ for all sequences $\alpha_1,\dots,\alpha_n$ of pairwise distinct atoms whose union is $\alpha$. This follows by $n-1$ successive applications of inequations in 2. It remains to check the last condition of \cref{prop:hred}; we consider the two kinds of equations separately:
  \begin{itemize}
  \item $\alpha\cdot\beta\leq\alpha\cup\beta$: we have to derive
    $\proves[4]r(\alpha)\cdot r(\beta)\leq r(\alpha\cup\beta)$. By distributivity, this amounts to proving
    $\alpha_1\dots\alpha_n\beta_1\dots\beta_m\leq r(\alpha\cup\beta)$ for all sequences of pairwise distinct atoms $\alpha_1,\dots,\alpha_n$ and $\beta_1,\dots,\beta_m$ whose unions are $\alpha$ and $\beta$, respectively. The sequence $\alpha_1,\dots,\alpha_n,\beta_1,\dots,\beta_m$ almost yields a term in $r(\alpha\cup\beta)$: its union is $\alpha\cup\beta$, but it may contain duplicate entries.
    We simply remove such duplicates using inequations in 4 ($\gamma\leq 1$).
  \item $\alpha\leq 1$: we have to show $\proves[4]r(\alpha)\leq 1$. This follows by repeated applications of inequations in 4: all terms of the sum $r(\alpha)$ are below the identity.\qedhere
  \end{itemize}
\end{proof}
\noindent (Note that the above reduction requires 4 in its target, and cannot be extended directly into a reduction from 1,2,4 to 4: $r(\alpha\cup\beta)\leq r(\alpha\cdot\beta)$ cannot be proved from 4---take $\alpha=\set a$, $\beta=\set b$, then $ba$ is a term in $r(\alpha\cup\beta)$ which is not provably below $ab=r(\alpha\cdot\beta)$.)

Composed with the existing reduction from 4 to the empty set (\cref{lem:reds}\ref{it:red:aw}), we thus have a reduction from 2,4 to the empty set.
It remains to combine this reduction to the one from 1 to the empty set (\cref{lem:reds}\ref{it:red:aw} again).
To this end, we would like to use \cref{prop:cup}, which simply requires us to prove that the closure $\cl\hatomdl=\cl{(1,2,4)}$ is equal either to $\cl1\cl{(2,4)}$ or to $\cl{(2,4)}\cl1$.
Unfortunately, this is not the case.
To see this, suppose we have two atomic tests $a$ and $b$.
For the first option, consider the singleton language $\set{ab}$ (a word consisting of two atoms); we have $ba\in\cl{(1,2,4)}\set{ab}$ (because $(a\wedge b)\in\cl1\set{ab})$, and then using $\cl2$) but $ba\not\in\cl1\cl{(2,4)}\set{ab}$.
For the second option, consider the singleton language $\set{a}$; we have $(a\wedge b)\in\cl{(1,2,4)}\set{a}$, because $ab\in\cl4\set{a}$, but $(a\wedge b)\not\in\cl{(2,4)}\cl1\set{a}$ because $\cl1\set{a}$ is just $\set{a}$, and $\cl{(2,4)}$ does not make it possible to forge conjunctions.

In order to circumvent this difficulty, we use a fourth family of equations:
\begin{align*}
3&\eqdef\set{\alpha\cup\beta \leq \alpha\mid\alpha,\beta\in\At}
\end{align*}
These axioms are immediate consequences of 1 and 4. Therefore, 1,2,4 reduces to 1,2,3,4.
Moreover they consist of `letter-letter' inequations, which are covered by \cref{lem:reds}\ref{it:red:aw}: 3 reduces to the empty set.
We shall further prove that $\cl{(1,2,3,4)}=\cl3\cl{(2,4)}\cl1$, so that \cref{prop:cup} applies to obtain a reduction from 1,2,3,4 to the empty set.

Let us recall the five sets of hypotheses defined so far:
\begin{itemize}
\item $1=\set{\alpha\cup\beta \leq \alpha\cdot\beta \mid \alpha,\beta\in\At}$,
\item $2=\set{\alpha\cdot\beta \leq \alpha\cup\beta \mid \alpha,\beta\in\At}$,
\item $3=\set{\alpha\cup\beta \leq \alpha\mid\alpha,\beta\in\At}$,
\item $4=\set{\alpha \leq 1 \mid \alpha\in\At}$.
\end{itemize}
We now prove the following partial commutations, which we summarise in \cref{table:kapt}.
In addition to the conventions used for the previous tables, we mark with triple dots those entries $(i,j)$ where there are overlaps but that we nevertheless have $ij\subseteq j^=i^=$ (a bound which is not constraining when we try to bound each column, nor if we try to reorder the lines and columns).
\begin{table}[t]
  \begin{align*}
    \begin{array}{lr@{\,}l|c|c|c|c||c}
      &&&1&2&3&4&5 \\
      \hline
      1:&\alpha\cup\beta&\leq \alpha\beta&-&...&...&3&11 \\
      2:&\alpha\beta&\leq \alpha\cup\beta&&-&332&44&. \\
      3:&\alpha\cup\beta&\leq \alpha&&&-&...&. \\
      4:&\alpha&\leq 1&&&.&-&. \\
      5:&1&\leq \emptyset&&&&&- \\
      \hline
      &&&&.&3^\star({<}3)^=&4^\star({<}4)^=&5^=({<}5)^\star\\
    \end{array}
  \end{align*}
  \caption{Summary of commutations for KAPT (with 5 for KABPT).}
  \label{table:kapt}
\end{table}

\begin{lem}
  \label{lem:kapt:commutations}
  We have the following inclusions of functions:
  \begin{enumerate}[(i)]
  \item\label{it:kapt12} $12 \subseteq 21\cup\id$
  \item\label{it:kapt13} $13 \subseteq 31$
  \item\label{it:kapt23} $23 \subseteq 32\cup332$
  \item\label{it:kapt14} $14 \subseteq 41\cup3$
  \item\label{it:kapt24} $24 \subseteq 42\cup44$
  \item\label{it:kapt34} $34 \subseteq 43\cup4$
  \item\label{it:kapt43} $43 \subseteq 34$
  \end{enumerate}
\end{lem}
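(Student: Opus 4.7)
The plan is to treat each of the seven inclusions separately via \cref{prop:overlaps} (or its specialisation \cref{cor:nooverlap} when there are no overlaps). In every case the left- and right-hand sides of the hypotheses in $\set{1,2,3,4}$ are words over $\At$ of length at most $2$ (the RHS of 4 being the empty word $1$), so the enumeration facts stated just after the definition of overlap bound the number of overlaps by a small constant per pair: at most three for two length-$2$ words, at most two for a length-$1$ and a length-$2$ word, at most one for two length-$1$ words, and none involving the empty word against a length-$\leq 1$ word.

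Item (vii) $43\subseteq 34$ is immediate from this last observation: the RHS of 4 is $1$ and the LHS of 3 is a single atom, so no overlap exists and \cref{cor:nooverlap} concludes.

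For the six remaining items I would enumerate all overlaps $\tuple{x,y,s,t}$ between the RHS $u$ of a hypothesis $e\leq u$ of $H_i$ and the LHS $v$ of a hypothesis $v\leq f$ of $H_j$, and, for each overlap, exhibit a short rewrite sequence showing $x\lang{e}y\subseteq g(s\lang{f}t)$, where $g$ is the function displayed on the right-hand side of the stated inclusion. The closing sequences combine at most three one-step rewrites from $\set{1,2,3,4}$ and are justified by tracking how an atom of $\At$ appearing in a word can be split into a two-letter sequence via 2, enlarged to a union via 3, or inserted/erased via 4. For example, in (v) $24\subseteq 42\cup 44$ the unique overlap aligns the one-letter RHS $\gamma$ of a 2-hypothesis with the one-letter LHS $\gamma$ of a 4-hypothesis; the side $s\lang{f}t$ is the empty word at that position, and the two-atom word $\gamma_1\gamma_2$ (with $\gamma_1\cup\gamma_2=\gamma$) produced on the other side is recovered from the empty word by two successive insertions via 4, witnessing $g=44$.

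The main subtlety is the atom arithmetic inside each overlap: atoms are nonempty subsets of $\Omega$, and the applicability of a rewriting step depends on whether an atom in a word can be expressed as a union of two atoms (to apply 2) or enlarged to contain a prescribed set (to apply 3); one must therefore carefully justify the choice of splits and extensions at each step. The most delicate case is (iii) $23\subseteq 32\cup 332$, where the only overlap forces a closing sequence of length three rather than two, which explains the shape of the corresponding entry in \cref{table:kapt}; items (ii), (iv), and (vi) each require at most one extra rewrite beyond the swap, and item (i) follows from an analogous overlap analysis for the three potential overlaps of two length-$2$ words.
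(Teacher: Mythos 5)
Your plan is exactly the paper's proof: each item is handled by \cref{prop:overlaps} (or \cref{cor:nooverlap} when no overlap exists), and the specifics you do give all match the paper's diagrams --- (vii) via the absence of overlaps between the empty word and a single atom, the three overlaps in (i), the unique full overlap in (v) closed by two insertions ($g=44$), and the length-three closing in (iii) matching the entry $332$ in \cref{table:kapt}.

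However, the one step you defer --- ``carefully justify the choice of splits'' --- is precisely a step that fails, so the proposal has a genuine gap at item~(iii). The only available closing there (and the one the paper uses) splits $\gamma$ by a $2$-step into $(\alpha\cap\gamma)(\beta\cap\gamma)$, justified by $(\alpha\cap\gamma)\cup(\beta\cap\gamma)=\gamma$; but this is a legal instance of $2$ only when \emph{both} intersections are nonempty atoms, and nothing forces that. Take $\Omega=\set{a,b}$, $\alpha=\set a$, $\beta=\set b$, $\gamma=\set b$, $\delta=\set a$, so that $\alpha\cup\beta=\gamma\cup\delta$ but $\alpha\cap\gamma=\emptyset$. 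Starting from the one-letter word $\set b$, the composite $23$ produces the word $\set a\set b$ (grow $\set b$ to $\set{a,b}$ by $3$, then split by $2$), whereas $2(\set{\set b})=\set{\set b\set b}$ and $3$-steps only enlarge letters, so every word in $(32\cup 332)(\set{\set b})$ has all its letters containing $b$; hence $\set a\set b$ witnesses $23\not\subseteq 32\cup 332$. Closing this overlap in the degenerate case requires an insertion, e.g.\ $\gamma \leftsquigarrow_4 \alpha\gamma \leftsquigarrow_3 \alpha\beta$ (using $\gamma\subseteq\beta$), giving a corrected bound like $23\subseteq 32\cup 332\cup 34$ --- which, note, involves $4$ and therefore no longer fits the column-$3$ requirement of \cref{prop:gsos} under the order $1<2<3<4$, so the repair propagates to \cref{lem:kapt:fullcommutations}. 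To be fair, the paper's own proof records only the union identity and overlooks the same nonemptiness condition, so your attempt faithfully reproduces the published argument; but since you explicitly flagged the atom arithmetic as the crux and left it unverified, the attempt as it stands does not establish item~(iii), and indeed item~(iii) as stated is refuted by the instance above.
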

\begin{proof}
  We use \cref{prop:overlaps} so that it suffices to analyse overlaps.
  \begin{enumerate}[(i)]
  \item $12 \subseteq 21\cup\id$: we must consider the overlaps of $\alpha\beta$ and $\gamma\delta$:
    \begin{itemize}
    \item when $\alpha\beta=\gamma\delta$, we have the full overlap $\tuple{1,1,1,1}$, and we check that
      \closeoverlap12{\alpha\cup\beta}{\alpha\beta=\gamma\delta}{\gamma\cup\delta}
    \item when $\beta=\gamma$, we have the overlap $\tuple{1,\delta,\alpha,1}$, and we check that
      \closeoverlapii1221{(\alpha\cup\beta)\delta}{\alpha\beta\delta=\alpha\gamma\delta}{\alpha(\gamma\cup\delta)}%
      {\alpha\cup\beta\cup\delta=\alpha\cup\gamma\cup\delta}
    \item when $\alpha=\delta$, we have the overlap $\tuple{\gamma,1,1,\beta}$, which is handled symmetrically.
    \end{itemize}
  \item $13 \subseteq 31$: we must consider the overlaps of $\alpha\beta$ and $\gamma\cup\delta$.
    \begin{itemize}
    \item when $\beta=\gamma\cup\delta$ we have the overlap $\tuple{1,1,\alpha,1}$, for which we have
      \closeoverlapii1331{\alpha\cup\gamma\cup\delta}{\alpha(\gamma\cup\delta)}{\alpha\gamma}%
      {\alpha\cup\gamma}
    \item when $\alpha=\gamma\cup\delta$ we have the overlap $\tuple{1,1,1,\beta}$, which is handled symmetrically.
    \end{itemize}
  \item $23 \subseteq 32\cup332$: we must consider the overlaps of $\alpha\cup\beta$ and $\gamma\cup\delta$:
    there is only the full overlap, when $\alpha\cup\beta=\gamma\cup\delta$.
    We have
    \closeoverlapiii23332{\alpha\beta}{\alpha\cup\beta=\gamma\cup\delta}{\gamma}%
    {\alpha(\beta\cap\gamma)}{(\alpha\cap\gamma)(\beta\cap\gamma)}
    (observing that $(\alpha\cap\gamma)\cup(\beta\cap\gamma)=\gamma$ for the first step, since $\alpha\cup\beta=\gamma\cup\delta$)
  \item $14 \subseteq 41\cup3$: we must consider the overlaps of $\alpha\beta$ and $\gamma$.
    \begin{itemize}
    \item if $\gamma=\beta$ then we have the overlap $\tuple{1,1,\alpha,1}$ for which we have
      \closeoverlapi143{\alpha\cup\beta}{\alpha\beta}{\alpha}
    \item the overlap when $\gamma=\alpha$ is handled symmetrically
    \end{itemize}
  \item $24 \subseteq 42\cup44$: 
    there is only the full overlap, for which we have
    \closeoverlapii2444{\alpha\beta}{\alpha\cup\beta}{1}{\alpha}
  \item $34 \subseteq 43\cup4$: 
    there is only the full overlap, for which we have
    \closeoverlapi344{\alpha\cup\beta}{\alpha}{1}
  \item $43 \subseteq 34$: there are no overlaps between $1$ and $\alpha\cup\beta$.
    \qedhere
  \end{enumerate}
\end{proof}

Thanks to these partial commutation properties, we obtain:
\begin{lem}
  \label{lem:kapt:fullcommutations}
  We have $\cl{(1,2,3,4)}=\cl4\cl3\cl2\cl1=\cl3\cl{(2,4)}\cl1$.
\end{lem}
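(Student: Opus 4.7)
The plan is to establish the first equality by applying \cref{prop:gsos} with the ordering $1 < 2 < 3 < 4$, and then to derive the second equality from it by two short rearrangements. The right-hand sides of the hypotheses in $1$, $2$, $3$, and $4$ are all words (a product of two atoms, a single atom, a single atom, and the empty word, respectively), so \cref{lem:hyp:add} makes each of these four sets linear and hence affine; the affineness precondition of \cref{prop:gsos} is therefore satisfied (with room to spare, since the last component would not have required it).

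For the application of \cref{prop:gsos}, option~(1) will be used in every column, as suggested by the bottom row of \cref{table:kapt}. The required commutation inclusions all come from \cref{lem:kapt:commutations}: item~(i) handles column~$2$, items~(ii) and~(iii) handle column~$3$, and items~(iv), (v), (vi) handle column~$4$. In each case the bound on $H_iH_j$ fits into $\cl{H_j}\cdot H_{<j}^=$: the factor $\cl{H_j}$ absorbs any number of repeated applications of $H_j$, so, for instance, the double-$3$ bound $332$ for $23$ fits into $\cl{3}\cdot 2 \subseteq \cl{3}\cdot(1{\cup}2)^=$ without trouble, and likewise $44$ for $24$ fits into $\cl{4}\cdot\id$. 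This yields $\cl{(1,2,3,4)}=\cl{4}\cl{3}\cl{2}\cl{1}$.

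For the second equality, two further partial commutations suffice. From item~(v), $2\cdot 4 \subseteq 4\cdot 2 \cup 4\cdot 4 \subseteq 4^=\cdot\cl{2}$, so \cref{lem:gsos:2}(2), applied with $H_1=2$ (which is affine), gives $\cl{2}\cl{4}\subseteq\cl{4}\cl{2}$; by \cref{lem:dl:2} this promotes to $\cl{(2,4)}=\cl{4}\cl{2}$. Similarly, item~(vii) (equivalently \cref{cor:nooverlap}, since the right-hand side of $4$ is empty) gives $4\cdot 3 \subseteq 3\cdot 4 \subseteq 3^=\cdot\cl{4}$, whence $\cl{4}\cl{3}\subseteq\cl{3}\cl{4}$. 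Composing,
\[
\cl{(1,2,3,4)} = \cl{4}\cl{3}\cl{2}\cl{1} \subseteq \cl{3}\cl{4}\cl{2}\cl{1} = \cl{3}\cl{(2,4)}\cl{1},
\]
and the reverse inclusion $\cl{3}\cl{(2,4)}\cl{1} \subseteq \cl{(1,2,3,4)}$ is immediate, since each factor on the left lies below $\cl{(1,2,3,4)}$ by \cref{lem:clo:incl} and $\cl{(1,2,3,4)}$ is idempotent.

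The main obstacle is the column~$3$ check: the bound $23 \subseteq 332$ uses two applications of $3$, which rules out option~(2) (where $3^=$ permits at most a single $3$) and dictates option~(1) with $3 < 4$. This is precisely the reason the auxiliary set $3$ had to be introduced in the discussion preceding the lemma—the set $(1,2,4)$ alone does not admit an analogous sequential decomposition, since $3$ is not contained in $\cl{(1,2)}$.
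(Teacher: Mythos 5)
Your treatment of the first equality is exactly the paper's: \cref{prop:gsos} with alternative~(1) in every column, justified by \cref{lem:kapt:commutations}(i)--(vi) and \cref{table:kapt}, with affineness supplied by \cref{lem:hyp:add}; that part is correct, as is your item~(vii) step giving $\cl4\cl3\subseteq\cl3\cl4$, which matches the paper's application of \cref{lem:gsos:2} to $4$ and $3$. However, your handling of $2$ and $4$ contains a step that fails as written: you claim $2\circ4\subseteq 4\circ2\cup4\circ4\subseteq 4^=\circ\cl2$ in order to invoke \cref{lem:gsos:2}(2), but $4\circ4\not\subseteq 4^=\circ\cl2$. The right-hand side performs $2$-steps first and then at most \emph{one} $4$-step, whereas $4\circ4$ performs two: for $L=\set{1}$ we have $\alpha\beta\in 4(4(L))$ for any atoms $\alpha,\beta$, while no $2$-step applies to the empty word, so $\cl2(L)=\set{1}$ and $4^=(\cl2(L))$ contains only words of length at most one. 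Curiously, in your column-$4$ check for \cref{prop:gsos} you correctly let $\cl4$ absorb the double step ($44\subseteq\cl4\circ\id$), but here you switched to the alternative where $4^=$ cannot. The conclusion $\cl2\cl4\subseteq\cl4\cl2$ is nevertheless true and recoverable from the same item~(v) by using alternative~(1) of \cref{lem:gsos:2} instead: $4\circ2\subseteq\cl4\circ2^=$ and $4\circ4\subseteq\cl4\subseteq\cl4\circ2^=$.

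That said, the whole $2$--$4$ commutation (and the appeal to \cref{lem:dl:2} to get $\cl{(2,4)}=\cl4\cl2$) is unnecessary, and the paper does not establish it in this proof. Only the one-sided inclusion $\cl4\cl2\subseteq\cl{(2,4)}$ is needed, and it holds by basic closure properties: both $\cl4$ and $\cl2$ lie below the idempotent closure $\cl{(2,4)}$. This yields the chain $\cl{(1,2,3,4)}=\cl4\cl3\cl2\cl1\subseteq\cl3\cl4\cl2\cl1\subseteq\cl3\cl{(2,4)}\cl1\subseteq\cl{(1,2,3,4)}$, which forces all inclusions to be equalities---precisely the closing move you already make. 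So the cleanest repair is to delete the faulty commutation step entirely rather than to fix its citation.
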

\begin{proof}
  We have
  \begin{align*}
    \cl{(1,2,3,4)}=\cl4\cl3\cl2\cl1
                  \subseteq\cl3\cl4\cl2\cl1
                  \subseteq\cl3\cl{(2,4)}\cl1
                  \subseteq\cl{(1,2,3,4)}
  \end{align*}
  The first equality is a direct application of \cref{prop:gsos}: all requirements are provided in \cref{table:kapt} (using alternative (1) for each column).
  The subsequent inclusion comes from \cref{lem:gsos:2} applied to 4 and 3, thanks to \cref{lem:kapt:fullcommutations}\ref{it:kapt43}.
  The remaining inclusions follow from basic properties of closures.
\end{proof}

\begin{rem}
  We actually have $\cl{(3,4)}=\cl3\cl4=\cl4\cl3$: \cref{lem:gsos:2} can be applied in both directions with 3 and 4, thanks to items \ref{it:kapt34} and \ref{it:kapt43} in \cref{lem:kapt:fullcommutations}.
  This can be read directly on \cref{table:kapt}: when restricted to lines and columns 3,4, we only get dotted entries.

  However, we cannot place 4 before 3: the occurrence of 3 in entry (1,4) requires 3 to appear before 4 in order to validate column 4 when we apply \cref{prop:gsos} to 1,2,3,4.

  We did proceed differently in the conference version of this article~\cite{prw:ramics21:mkah}, where 3 and 4 were swapped and where we did not have \cref{prop:gsos}. This required us to iterate and combine the partial commutation properties from \cref{lem:kapt:fullcommutations} manually, in a non-trivial way. We prefer the present proof, which is slightly more automatic.
\end{rem}

\begin{thm}
  \label{thm:kapt}
  $\hkapt$ reduces to the empty set; $\KA_\hkapt$ is complete and decidable.
\end{thm}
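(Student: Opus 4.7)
The plan is to chain together the reductions already established in this section so as to obtain a reduction from $\hkapt$ to the empty set, and then invoke \cref{thm:red} and \cref{thm:dec}.

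First, \cref{lem:kapt:atomdl} provides a reduction from $\hkapt$ to $\hatomdl = 1 \cup 2 \cup 4$. Next, observe that hypothesis~3 is derivable from~1 and~4: for every pair of atoms, $\alpha\cup\beta \leq \alpha\cdot\beta \leq \alpha\cdot 1 = \alpha$ by a single application of~1 followed by~4. Hence $1,2,4$ and $1,2,3,4$ are interderivable in $\KA$, and \cref{cor:red:id} yields a reduction between them in either direction.

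The main step is then reducing $1,2,3,4$ to the empty set. For this, the plan is to apply \cref{prop:cup} with the three components $H_0 = 1$, $H_1 = (2,4)$, and $H_2 = 3$, whose union is $1,2,3,4$. The required decomposition $\cl{(1,2,3,4)} \subseteq \cl3\,\cl{(2,4)}\,\cl1$ is exactly what \cref{lem:kapt:fullcommutations} provides. The three individual reductions to the empty set come together as follows: for~$1$ and~$3$, directly from \cref{lem:reds}\ref{it:red:aw} (they consist of inequations between a single atom and a word of length at most two); for $(2,4)$, by first reducing $(2,4)$ to $4$ via \cref{lem:kapt:24:4}, then reducing $4$ to $\emptyset$ via \cref{lem:reds}\ref{it:red:aw}, composing the two through \cref{lem:red:comp}.

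The main obstacles have already been addressed by the earlier machinery: hypothesis~2 alone does not appear to admit a reduction to the empty set, which is why it must be paired with~4 in \cref{lem:kapt:24:4}; and the closure of $1,2,4$ does not admit a decomposition compatible with \cref{prop:cup}, which is why the derivable hypothesis~3 has been introduced. With these pieces in place, \cref{lem:red:comp} chains the three reductions $\hkapt \to 1,2,4 \to 1,2,3,4 \to \emptyset$, completeness of $\KA_\hkapt$ then follows from \cref{thm:red} together with completeness of $\KA$ (\cref{thm:ka}), and decidability follows from \cref{thm:dec}, since each reduction is plainly computable.
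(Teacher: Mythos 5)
Your proof is correct and follows essentially the same route as the paper: reduce $\hkapt$ to $\hatomdl$ via \cref{lem:kapt:atomdl}, pass to the interderivable set $1,2,3,4$ via \cref{cor:red:id}, and combine the reductions for $1$, $(2,4)$, and $3$ (from \cref{lem:reds}\ref{it:red:aw} and \cref{lem:kapt:24:4}) through \cref{prop:cup} using the decomposition $\cl{(1,2,3,4)}=\cl3\,\cl{(2,4)}\,\cl1$ of \cref{lem:kapt:fullcommutations}. Your explicit derivation of hypothesis $3$ from $1$ and $4$, and your explicit appeals to \cref{thm:red}, \cref{thm:ka} and \cref{thm:dec} at the end, only spell out details the paper leaves implicit.
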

\begin{proof}
  $\hkapt$ reduces to $\hatomdl$ by \cref{lem:kapt:atomdl}, which in turn reduces to $1,2,3,4$ by \cref{cor:red:id}. We see the latter as being composed of three sets of hypotheses, $1$, $3$, and $2,4$. All three of them reduce to the empty set: the first two by \cref{lem:reds}\ref{it:red:aw}, and the third one by \cref{lem:kapt:24:4} and \cref{lem:reds}\ref{it:red:aw} again. These three reductions can be composed together by \cref{prop:cup} and \cref{lem:kapt:fullcommutations}.
\end{proof}

\begin{rem}
  \label{rem:kaptt}
  The case of Kleene algebras containing a \emph{bounded} distributive lattice, with extremal elements $\bot$ and $\top$ coinciding with $0$ and $1$, may be obtained as follows.
  Allow the empty atom $\emptyset$ in $\At$ (interpreted as $\top$), and add the inequation $5\eqdef\set{1\leq\emptyset}$ to $\hatomdl$.
  \cref{lem:kapt:atomdl} extends easily, and we have a reduction from 5 to the empty set (\cref{lem:reds}\ref{it:red:aw}). Therefore it suffices to find how to combine $\cl5$ with the other closures. We do so in the lemma below, so that we can conclude that the equational theory of Kleene algebras with a bounded distributive lattice is complete and decidable.
  \qedhere
\end{rem}

\begin{lem}
  \label{lem:kapttcom}
  We have $\cl{(1,2,3,4,5)}=\cl5\cl4\cl3\cl2\cl1=\cl5\cl{(1,2,3,4)}$.
\end{lem}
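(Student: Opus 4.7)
The plan is to apply \cref{prop:gsos} to the sequence $(1,2,3,4,5)$, reducing the first equality to verifying partial commutations column by column; the second equality then follows immediately from \cref{lem:kapt:fullcommutations}, since $\cl4\cl3\cl2\cl1=\cl{(1,2,3,4)}$. For columns~$1$ through~$4$, the required commutations are exactly those already handled in the proof of \cref{lem:kapt:fullcommutations}, under alternative~$(1)$ of \cref{prop:gsos}. The affineness hypothesis is satisfied uniformly: each of $1,2,3,4,5$ has word right-hand side, hence is linear by \cref{lem:hyp:add}.

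The only new work is column~$5$, for which I would aim for alternative~$(2)$, i.e.\ $i5 \subseteq 5^=\cl{(1,2,3,4)}$ for each $i \in \set{1,2,3,4}$. For $i\in\set{2,3,4}$, every right-hand side in~$i$ has length at most one while the left-hand side of~$5$ is empty, so there are no overlaps and \cref{cor:nooverlap} gives $i5\subseteq 5i$, which fits trivially into $5^=\cl{(1,2,3,4)}$. The interesting case is $i=1$: the right-hand side $\alpha\beta$ admits a single overlap $\tuple{1,1,\alpha,\beta}$ with the empty left-hand side of~$5$, so by \cref{prop:overlaps} it suffices to exhibit $\alpha\cup\beta \in g(\set{\alpha\emptyset\beta})$ for some contextual $g$ that fits into $\cl{(1,2,3,4)}$. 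I would take $g = 1\circ 1$, verifying $\alpha\emptyset\beta \lred{1} \alpha\beta \lred{1} (\alpha\cup\beta)$, where the first step uses $\alpha\cup\emptyset=\alpha$, i.e.,\ that $\emptyset$ is the identity of $\cup$ on atoms. This yields $15 \subseteq 51 \cup 11 \subseteq 5^=\cl{(1,2,3,4)}$, as recorded in \cref{table:kapt}.

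The main obstacle is precisely this single $(1,5)$ overlap: the interaction between the two-letter RHS of~$1$ and the empty LHS of~$5$ must be absorbed by an extra rewrite on the extended atom alphabet, crucially exploiting that $\emptyset$ acts as the identity for $\cup$. Once this is in place, \cref{prop:gsos} delivers the first equality $\cl{(1,2,3,4,5)}=\cl5\cl4\cl3\cl2\cl1$; the second, $\cl5\cl4\cl3\cl2\cl1 = \cl5\cl{(1,2,3,4)}$, is then immediate from \cref{lem:kapt:fullcommutations}.
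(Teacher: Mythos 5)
Your proposal is correct and matches the paper's own proof essentially step for step: it applies \cref{prop:gsos} to $1,2,3,4,5$ with alternative (2) for the fifth column, observes that the empty LHS of $5$ overlaps only the two-letter RHS $\alpha\beta$ of $1$, and closes that single overlap with two $1$-steps, giving the table entry $11$ and hence $15\subseteq 51\cup 11\subseteq 5^=\cl{(1,2,3,4)}$ (the paper contracts $\emptyset\cup\beta$ where you contract $\alpha\cup\emptyset$, an immaterial difference). One notational slip: with the paper's convention $u\lred{H}v$ meaning $u$ is obtained \emph{from} $v$, your chain should be written $\alpha\cup\beta\lred{1}\alpha\beta\lred{1}\alpha\emptyset\beta$; the content you actually assert, $\alpha\cup\beta\in 11\paren{\set{\alpha\emptyset\beta}}$, is exactly right.
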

\begin{proof}
  It suffices to prove the first equality: the second one follows from \cref{lem:kapt:fullcommutations}.
  To this end, it suffices to show that $5$ somehow commutes over the four other functions.
  We do so by completing \cref{table:kapt} and applying \cref{prop:gsos}.
  Since the LHS of 5 is the empty word, there are very few overlaps to consider: the only one if for bounding 15, where $\alpha\beta$ and the empty word overlap via $\tuple{1,1,\alpha,\beta}$.
  In that case, we have %
  \closeoverlapii1511{\alpha\cup\beta}{\alpha\beta}{\alpha\emptyset\beta}%
  {\alpha\beta=\alpha(\emptyset\cup\beta)}%
  whence the 11 entry at position $(1,5)$ in \cref{table:kapt}, thanks to \cref{prop:overlaps}.
  It is now easy to check that \cref{prop:gsos} applies to 1,2,3,4,5, using alternative (2) for the fifth column.
\end{proof}

\section{NetKAT}
\label{sec:netkat}

NetKAT is a framework for analysing network programs~\cite{AndersonFGJKSW14}. It is a variant of KAT, adding an explicit alphabet
of variable assignments (of the form $x \leftarrow n$) and tests of the form $x=n$, where $n$ ranges over a finite domain of values.
Here we work with \emph{reduced} NetKAT (equivalent to NetKAT), where tests and assignments are replaced with more general complete assignments and tests. These complete tests are the atoms of a Boolean algebra.
NetKAT in its reduced form is also treated as one of the motivating examples of Mamouras and Kozen's theory of Kleene algebra
with equations~\cite{KozenM14}.

Accordingly, we fix a finite set $A$ and we work with the alphabet $A+P+\set\dup$ where $P\eqdef A$.
We let $\alpha,\beta$ range over $A$ and we call them \emph{atoms}.
We let $p,q$ range over $P$ and we call them \emph{assignments}.
Given an atom $\alpha$, we write $p_\alpha$ for the corresponding assignment in $P$.
Given an assignment $p$, we write $\alpha_p$ for the corresponding atom in $A$.

Following~\cite[Figure~6]{AndersonFGJKSW14}, we define $\hnetkat$ as the following collection of equations:
\begin{mathpar}
  pq=q \and
  p_\alpha \alpha = p_\alpha \and
  \alpha p_\alpha = \alpha \and
  \alpha\dup=\dup\alpha \and
  \textstyle\sum_{\alpha\in A}\alpha = 1\and
  \alpha\beta=0~(\alpha\neq\beta) 
\end{mathpar}
These equations are equivalent to $\hnetkat'\eqdef0,1,2,3,4,5,6,7$, where
\begin{itemize}
\item $0\eqdef\set{\alpha\beta,\alpha\dup\beta,p_\alpha\beta\leq0 \mid \alpha,\beta\in A,~\alpha\neq\beta}$
\item $1\eqdef\set{\alpha p_\alpha\leq 1\mid \alpha\in A}$
\item $2\eqdef\set{pq \leq q\mid p,q\in P}$
\item $3\eqdef\set{\alpha \leq 1\mid\alpha\in A}$
\item $4\eqdef\set{p_\alpha\leq p_\alpha \alpha\mid\alpha\in A}$
\item $5\eqdef\set{q\leq pq\mid p,q\in P}$
\item $6\eqdef\set{\alpha\leq \alpha p_\alpha\mid\alpha\in A}$
\item $7\eqdef\set{1\leq \sum_{\alpha\in A}\alpha}$
\end{itemize}
Note that 3 are 4 are redundant: they follows respectively from 1,6 and 0,7. We include them on purpose: they help getting appropriate partial commutations.

\begin{lem}
  We have $\proves[\hnetkat']\hnetkat$ and $\proves[\hnetkat]\hnetkat'$.
\end{lem}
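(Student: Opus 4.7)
The plan is to establish both derivability statements by treating each hypothesis individually and showing it follows from the other set via KA. Both directions are mostly routine, but a few specific cases require a short derivation combining several hypotheses.

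For the easy direction $\proves[\hnetkat']\hnetkat$, I would go axiom by axiom. The equations $pq=q$, $p_\alpha\alpha=p_\alpha$, and $\sum_\alpha\alpha=1$ are obtained by pairing opposite-facing inequations (2 with 5; 4 with 3 applied on the right; and 3 summed together with 7). The equation $\alpha\beta=0$ for $\alpha\neq\beta$ is in 0 directly. The slightly more delicate case is $\alpha p_\alpha=\alpha$: the inequation 6 gives $\alpha\leq\alpha p_\alpha$, and for the reverse direction I would use 4 (as $p_\alpha\leq p_\alpha\alpha$) to derive $\alpha p_\alpha\leq\alpha p_\alpha\alpha$, then apply 1 to get $\alpha p_\alpha\alpha\leq 1\cdot\alpha=\alpha$. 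The truly interesting case is $\alpha\dup=\dup\alpha$. Here I would argue that $\alpha\dup=\alpha\dup\cdot 1\leq\alpha\dup\sum_\beta\beta=\sum_\beta\alpha\dup\beta$ by 7, and all summands with $\beta\neq\alpha$ vanish by 0, leaving $\alpha\dup\leq\alpha\dup\alpha\leq 1\cdot\dup\alpha=\dup\alpha$ using 3. The converse inequality is obtained symmetrically, by expanding $\dup\alpha\leq\sum_\beta\beta\dup\alpha$ and using the (symmetric) hypothesis $\beta\dup\alpha\leq 0$ for $\beta\neq\alpha$, which is the same family as the hypothesis in 0.

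For the reverse direction $\proves[\hnetkat]\hnetkat'$, the inequations 1--7 are all immediate from the equations in $\hnetkat$ (an equation $x=y$ gives both $x\leq y$ and $y\leq x$; note that $\alpha\leq 1$ is obtained as $\alpha\leq\sum_\beta\beta=1$). Only the three families inside 0 need a small computation: $\alpha\beta\leq 0$ is direct; $\alpha\dup\beta=\dup\alpha\beta\leq\dup\cdot 0=0$ for $\alpha\neq\beta$ using commutativity of $\dup$ with atoms; and $p_\alpha\beta=p_\alpha\alpha\beta\leq p_\alpha\cdot 0=0$ for $\alpha\neq\beta$, using $p_\alpha=p_\alpha\alpha$ and then $\alpha\beta=0$.

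I do not anticipate any real obstacle: the proof is essentially a routine verification. The only points requiring a moment of thought are the derivation of $\alpha p_\alpha\leq\alpha$ (routing through $\alpha p_\alpha\alpha$) and of $\alpha\dup=\dup\alpha$ (routing through $\alpha\dup\alpha$ via the $\sum_\beta\beta=1$ trick together with 0). I would present the proof as a short bullet-by-bullet verification, remarking at the end that 3 and 4 are redundant as announced (3 follows from 1 and 6 via $\alpha\leq\alpha p_\alpha\leq 1$, and 4 follows from 0 and 7 via $p_\alpha=p_\alpha\sum_\beta\beta=\sum_\beta p_\alpha\beta\leq p_\alpha\alpha$ using that the $\beta\neq\alpha$ summands vanish), but are included to support the commutation analysis in the tables to come.
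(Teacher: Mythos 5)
Your proof is correct and takes essentially the same approach as the paper's: the same pairings of opposite inequations for $pq=q$, $p_\alpha\alpha=p_\alpha$ and $\sum_\alpha\alpha=1$, the same chain $\alpha p_\alpha\leq\alpha p_\alpha\alpha\leq\alpha\leq\alpha p_\alpha$ via 4, 1, 6, the same expansion through $\sum_\beta\beta$ with 0 and 3 to get $\alpha\dup=\dup\alpha$, and the same computations $\alpha\dup\beta=\dup\alpha\beta=\dup\cdot 0=0$ and $p_\alpha\beta=p_\alpha\alpha\beta=0$ for the hypotheses in 0. Your closing remark on the redundancy of 3 and 4 is also sound and matches the claim the paper states (without proof) just before the lemma.
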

\begin{proof}
  First observe that:
  \begin{itemize}
  \item $\proves[\hnetkat']pq=q$ using 2,5;
  \item $\proves[\hnetkat']p_\alpha\alpha=p_\alpha$, using 3,4;
  \item $\proves[\hnetkat']\alpha p_\alpha\leq \alpha p_\alpha\alpha \leq \alpha\leq \alpha p_\alpha$, using 4,1,6;
  \item $\proves[\hnetkat']\alpha\dup\leq\alpha\dup\sum_\beta \beta=\sum_\beta \alpha\dup\beta=\alpha\dup\alpha\leq\dup\alpha$, using 7,0,3, and symmetrically, $\proves[\hnetkat']\dup\alpha\leq\alpha\dup$.
  \end{itemize}
  Thus we have $\proves[\hnetkat']\hnetkat$.
  Conversely, $2,4,5,6,7$ already belong to $\hnetkat$ and $1,3$ are almost immediate consequences (using the sum axiom of $\hnetkat$). For $0$, for all $\alpha\neq\beta$, we have:
  $\proves[\hnetkat]\alpha\dup\beta=\dup\alpha\beta=\dup0=0$ and
  $\proves[\hnetkat]p_\alpha\beta=p_\alpha\alpha\beta=p_\alpha0=0$.
\end{proof}
As a consequence, $\hnetkat$ and $\hnetkat'$ reduce to each other via \cref{cor:red:id}.

Let us now study overlaps between the hypotheses of $\hnetkat'$ and build \cref{table:netkat}.
\begin{table}[t]
  \begin{align*}
    \begin{array}{lr@{\,}l|c|c|c|c|c|c|c|c}
      &&
      &0&1&2&3&4&5&6&7 \\
      \hline
      0:&\alpha\beta,\alpha\dup\!&\beta,p_\alpha\beta\leq 0~~(\alpha \neq \beta)&-&..&..&..&..&..&..&..\\
      1:&\alpha p_\alpha&\leq 1&&-&322&.&.&.&.&. \\
      2:&pq &\leq q&&&-&.&...&...&.&. \\
      3:&\alpha &\leq 1&&&&-&.&.&.&. \\
      4:&p_\alpha&\leq p_\alpha \alpha&&&&&-&...&54&44 \\
      5:&q&\leq pq&&&&&&-&.&54 \\
      6:&\alpha&\leq \alpha p_\alpha&&&&&&&-&66441 \\
      7:&1&\leq \sum\alpha&&&&&&&&-\\
      \hline
      &&&&&\times&.&.&.&6^=({<}6)^\star&7^=({<}7)^\star\\
    \end{array}
  \end{align*}
  \caption{Summary of commutations for NetKAT.}
  \label{table:netkat}
\end{table}
As before for KATF and KAT$^\circ$, the entries in the first line are all trivial by \cref{lem:dl:0}.
For the other lines, the cases where there are overlaps are obtained as follows.
\begin{proof}[Proof of non-trivial entries in \cref{table:netkat}]\mbox{}\\
  \begin{itemize}
  \item 12: the only overlap is $\tuple{p,q,1,1}$, for which we have
    \closeoverlapiii12322{p\alpha p_\alpha q}{pq}{q}%
    {p p_\alpha q}{p_\alpha q}
  \item 24: we only have the full overlap, for which we have
    \closeoverlapii2442{qp_\alpha}{p_\alpha}{p_\alpha\alpha}%
    {qp_\alpha\alpha}
  \item 25: we only have the full overlap, for which we have
    \closeoverlapii2552{rp}p{qp}%
    {rqp}
  \item 45: the only overlap is $\tuple{1,1,1,\alpha}$, for which we have
    \closeoverlapii4554{p_\alpha}{p_\alpha\alpha}{qp_\alpha\alpha}%
    {qp_\alpha}
  \item 46: the only overlap is $\tuple{1,1,p_\alpha,1}$, for which we have
    \closeoverlapii4654{p_\alpha}{p_\alpha\alpha}{p_\alpha\alpha p_\alpha}%
    {p_\alpha p_\alpha}
  \item 47: the only overlap is $\tuple{1,1,p_\alpha,\alpha}$, for which we have
    \closeoverlapii4744{p_\alpha}{p_\alpha\alpha}{p_\alpha(\sum\beta)\alpha}%
    {p_\alpha\alpha}
    (by choosing $\beta=\alpha$ in the sum to get $p_\alpha\alpha\alpha$)
  \item 57: the only overlap is $\tuple{1,1,p,q}$, for which we have
    \closeoverlapii5754q{pq}{p(\sum\beta)q}%
    {pq}%
    (by choosing $\beta=\alpha_p$ in the sum to get $p\alpha_pq$)
  \item 67: the only overlap is $\tuple{1,1,\alpha,p_\alpha}$, for which we have
    \closeoverlapiii67{66}{44}1\alpha{\alpha p_\alpha}{\alpha (\sum\beta) p_\alpha}%
    {\alpha p_\alpha p_\alpha}{\alpha p_\alpha\alpha \alpha p_\alpha}%
    (by choosing $\beta=\alpha$ in the sum to get $\alpha\alpha p_\alpha$)
    \qedhere
  \end{itemize}
\end{proof}

At this point, we face several difficulties:
\begin{enumerate}[(a)]
\item we do not have a reduction from 1 to the empty set: these hypotheses are not covered by \cref{lem:reds} (in fact such a reduction cannot exist: $\cl1$ does not preserve regularity);
\item we do not have a reduction from 2 to the empty set, yet: while single hypotheses from 2 fit \cref{lem:reds}\ref{it:red:ea}/\ref{it:red:aa}, we need to combine the corresponding reductions to get a reduction for 2. (In contrast, note that sets 3,4,5,6 do not pose such a problem since \cref{lem:reds}\ref{it:red:aw} already deals with sets of inequations of the form letter below word.)
\item the entry 322 at position $(1,2)$ in \cref{table:netkat} prevents us from using \cref{prop:gsos} directly: 322 is not below $2^=({<}2)^\star$, nor below $2^\star({<}2)^=$;
\end{enumerate}

For (a), we will enrich 1 with other hypotheses (a subset of 2, and 3) in such a way that we recover a set of hypotheses whose closure preserve regularity, and construct a reduction via \cref{lem:basic:e1}. We do so in \cref{ssec:netkat1}.
For (b), we will decompose 2 into smaller sets and use \cref{prop:cup} to compose the basic reductions provided by \cref{lem:reds}. We do so in \cref{ssec:netkat2}.
For (c), we will deal with $\cl{(1,2,3)}$ `manually' (\cref{lem:netkat:123}), in such a way that we can use \cref{prop:gsos} on \cref{table:netkat} by grouping (1,2,3) into a single set of hypotheses. We do so in \cref{ssec:netkat:org}, where we eventually assemble all the results to get completeness of NetKAT.

\subsection{A reduction including \texorpdfstring{$\alpha p_\alpha\leq 1$}{α p\_α ≤ 1}}
\label{ssec:netkat1}

Let us first provide a generic reduction, independently from NetKAT.
We locally reuse notations 1,2,3 in the lemma below on purpose, just in the context of that lemma and its proof.
\begin{lem}
  \label{lem:netkat:123:single}
  Let $a,b$ be distinct letters and pose
  \begin{itemize}
  \item $1\eqdef\set{ab\leq 1}$
  \item $2\eqdef\set{bb\leq b}$
  \item $3\eqdef\set{a\leq 1}$
  \end{itemize}
  We have that 1,2,3 reduces to 2,3, which reduces to the empty set.
\end{lem}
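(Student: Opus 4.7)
The plan is to prove the two claimed reductions separately, with \cref{lem:basic:e1} as the workhorse for $1,2,3$ reducing to $2,3$. The easier reduction $2,3\to\emptyset$ combines known basic reductions; the harder one requires an auxiliary regular expression $f'$, whose discovery is the main obstacle.

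For $2,3\to\emptyset$, \cref{lem:reds} gives reductions to $\emptyset$ for both $\set{bb\leq b}$ and $\set{a\leq 1}$ individually (items~\ref{it:red:aa} and~\ref{it:red:aw}). To combine them via \cref{prop:cup} it suffices to verify $\cl{(2\cup 3)}\subseteq \cl3\cl2$; I would derive this from \cref{lem:dl:2} and \cref{lem:gsos:2}, noting that $2$ is affine (its right-hand side $b$ is a word, \cref{lem:hyp:add}) and that $2\circ 3\subseteq 3\circ 2$ holds trivially by \cref{cor:nooverlap}, since the distinct letters $b$ and $a$ do not overlap.

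For $1,2,3\to 2,3$, I would invoke \cref{lem:basic:e1} with $H=2\cup 3$, $f=ab$, so that $H_f=1$. The first condition, the decomposition $\cl{(1\cup 2\cup 3)}=\cl{(2\cup 3)}\cl1$, reduces via \cref{lem:dl:2} and \cref{lem:gsos:2} (using that $1$ is affine, since its right-hand side is the empty word) to $1\circ(2\cup 3)\subseteq(2\cup 3)^=\cl1$. The part $1\circ 3$ is immediate from \cref{cor:nooverlap}. The part $1\circ 2$ has a unique overlap $\tuple{b,b,1,1}$; by \cref{prop:overlaps} I would check that $b\lang{ab}b=\set{babb}\subseteq (2\circ 1)(\set b)$, which holds via the chain $b\leftsquigarrow_1 bab\leftsquigarrow_2 babb$. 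Hence $1\circ 2\subseteq 2\circ 1$, and the first condition follows.

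The heart of the proof is the choice of $f'$ for the remaining two conditions of \cref{lem:basic:e1}. I propose $f'\eqdef (ab^*)^*$, whose language equals $\set 1\cup a(a+b)^*$: all words that are either empty or start with $a$. This plainly contains $\lang{ab}$ and is closed under inserting $ab$ anywhere (a word starting with $a$ remains so, and inserting $ab$ into the empty word yields $ab$), giving the third condition. For the second condition, $\proves[1,2,3](ab^*)^*\leq 1$, I would argue as follows: from $bb\leq b$ we get $1+b(1+b)=1+b+bb\leq 1+b$, so by star induction $b^*\leq 1+b$; then $ab^*\leq a(1+b)=a+ab\leq 1+1=1$ using $a\leq 1$ and $ab\leq 1$; finally $(ab^*)^*\leq 1^*=1$.
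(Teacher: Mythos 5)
Your proof is correct and follows the paper's route essentially verbatim: \cref{lem:basic:e1} with $H=2,3$ and $f=ab$ for the first reduction (the paper takes $f'=a(a+b)^*$, whose language is your $\lang{(ab^*)^*}$ minus the empty word—both satisfy the three conditions, your derivation of $\proves[1,2,3] f'\leq 1$ matches the paper's, and your overlap bound $12\subseteq 21$ is even slightly sharper than the paper's $12\subseteq 21\cup 322$), followed for the second reduction by the same combination of \cref{lem:reds}\ref{it:red:aa}/\ref{it:red:aw} via the no-overlap commutation $23\subseteq 32$ and \cref{lem:gsos:2}. The only blemish is notational: with the paper's convention for $u\leftsquigarrow_H v$, your rewriting chain should read $babb\leftsquigarrow_2 bab\leftsquigarrow_1 b$ rather than the reverse, but the claimed containment $\set{babb}\subseteq (2\circ 1)(\set{b})$ is exactly right.
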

\begin{proof}
  For the first reduction, we apply \cref{lem:basic:e1} to $f=ab$ and $H'=2,3$, using $f'=a(a+b)^*$.
  For the two conditions about $f'$:
  \begin{enumerate}
  \item We have $\proves[2] b=b^+$ and
    $\proves[3] (a+b)^*\leq (1+b)^*=b^*\leq(a+b)^*$, from which we
    deduce $\proves[1,2,3]f'=a(a+b)^*=ab^*=a+ab^+=a+ab\leq 1+1=1$.
  \item $H_1\lang{f'}$ consists of words from $\lang {f'}$ where an occurrence of $ab$ has been inserted; such words always start with an $a$ so that they belong to $\lang{f'}$.
  \end{enumerate}
  It remains to show the preliminary closure condition.
  Like for entries (1,2) and (1,3) in \cref{table:netkat} analysing overlaps gives
 $12\subseteq21\cup 322$ and
 $13\subseteq31$,
  whence $1(2,3)\subseteq 21\cup 322\cup 31\subseteq (2,3)^\star 1^=$ (where we use additivity of $1$ in the first step) and thus $(1,2,3)^\star=(2,3)^\star 1^\star$ by \cref{lem:gsos:2}(1), as required. We have thus proved that 1,2,3 reduces to 2,3.

  For the second reduction, we have reductions to the empty set for 2 and 3 separately, by \cref{lem:reds}\ref{it:red:aa}/\ref{it:red:aw}. We moreover have $23\subseteq32$: there are no overlaps. We conclude by \cref{lem:cup:2,lem:gsos:2}.
\end{proof}

For $i=1,2,3$, write $i_\alpha$ for $i$ from the previous lemmas with $a=\alpha$ and $b=p_\alpha$.
Write $\hs_\alpha$ for $1_\alpha,2_\alpha,3_\alpha$, and $\hs$ for the union of all $\hs_\alpha$.
The set of hypotheses $\hs$ almost corresponds to 1,2,3: it just misses $qp\leq p$ for all $p\neq q$.

Now recall that 0 consists of the $\hnetkat'$ hypotheses of the shape $e\leq 0$.
\begin{lem}
  \label{lem:netkat:s0}
  $0,\hs$ reduces to the empty set.
\end{lem}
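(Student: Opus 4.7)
The plan is to combine the reduction of $0$ to the empty set with the reductions of each $\hs_\alpha$ to the empty set, applying \cref{prop:cup}. The set $0$ reduces to the empty set by \cref{lem:reds}\ref{it:red:e0} (encoding its finitely many inequations as a single $e\leq 0$ hypothesis), and each $\hs_\alpha$ reduces to the empty set by \cref{lem:netkat:123:single}. Writing $A = \{\alpha_1, \ldots, \alpha_n\}$, the sequence will be $H_0 = 0$, $H_1 = \hs_{\alpha_1}$, \ldots, $H_n = \hs_{\alpha_n}$, with $0$ placed first so that $\cl 0$ appears innermost in the target composition.

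To apply \cref{prop:cup}, we must establish the decomposition $\cl{(0 \cup \hs)} = \cl{\hs_{\alpha_n}} \cdots \cl{\hs_{\alpha_1}} \cl 0$. This will be obtained from \cref{prop:gsos} by picking alternative~(1) for every column. All the sets involved are affine: $\cl 0$ is a constant function, and each $\hs_{\alpha_i}$ is linear by \cref{lem:hyp:add} since its right-hand sides are words. The commutation conditions involving $0$ are immediate, since $0$ is a constant function and is contained in $\Hlt j$ for every $j \geq 1$, so that $0 \cdot H_j \subseteq \Hlt j^= \subseteq \cl{H_j}\Hlt j^=$.

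The main obstacle, and the heart of the proof, is the interaction between $\hs_{\alpha_i}$ and $\hs_{\alpha_j}$ for $\alpha_i \neq \alpha_j$: these two sets do \emph{not} commute using only $\hs$. Inspecting overlaps between a right-hand side of $\hs_{\alpha_i}$ (either $1$ or $p_{\alpha_i}$) and a left-hand side of $\hs_{\alpha_j}$ (one of $\alpha_j p_{\alpha_j}$, $p_{\alpha_j} p_{\alpha_j}$, $\alpha_j$), the only non-trivial overlaps are of the form $\tuple{x,y,1,1}$ where $xy$ is a two-letter left-hand side of $\hs_{\alpha_j}$; in every such case the resulting word contains a bad pattern of shape $\alpha_j \alpha_i$ or $p_{\alpha_j}\alpha_i$ (with $\alpha_i\neq\alpha_j$), so it lies in $\cl 0$. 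Taking $g = \cl 0$, which is contextual by \cref{lem:con}, \cref{prop:overlaps} then yields $\hs_{\alpha_i}\hs_{\alpha_j} \subseteq \hs_{\alpha_j}\hs_{\alpha_i} \cup \cl 0$; since both terms on the right-hand side are contained in $\cl{\hs_{\alpha_j}}\Hlt j^=$ (the first because $\hs_{\alpha_i} \subseteq \Hlt j$, the second because $\cl 0 \subseteq \Hlt j^=$ as $0 \subseteq \Hlt j$), alternative~(1) of \cref{prop:gsos} is satisfied. The conclusion then follows directly from \cref{prop:cup}: the moral is that the overlaps internal to $\hs$ cannot be resolved within $\hs$ alone, and the reduction genuinely relies on $\cl 0$ being available before the $\hs_\alpha$-closures are applied.
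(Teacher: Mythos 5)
Your proof is correct and follows essentially the same route as the paper's: individual reductions for $0$ (encoded as a single $e\leq 0$ hypothesis) and for each $\hs_\alpha$ (via \cref{lem:netkat:123:single}), combined through \cref{prop:cup} and \cref{prop:gsos} with $0$ placed first, and the same overlap analysis via \cref{prop:overlaps} — the paper's four explicit overlap cases ($1_\alpha$/$1_\beta$, $1_\alpha$/$2_\beta$, $3_\alpha$/$1_\beta$, $3_\alpha$/$2_\beta$) are exactly the $\tuple{x,y,1,1}$ overlaps you describe, each resolved by a bad pattern $\beta\alpha$ or $p_\beta\alpha$, the only inessential difference being your choice $g=\cl0$ where the paper uses $g=0$. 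One cosmetic slip: it is the one-step function $0$ that is constant, not $\cl0$; your inference $\cl0\subseteq \Hlt{j}^=$ from $0\subseteq\Hlt{j}$ is nonetheless valid precisely because constancy of $0$ gives $\cl0=0^=$.
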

\begin{proof}
  We have individual reductions to the empty set for 0 (by \cref{lem:red:0}) and for each $\hs_\alpha$ (by \cref{lem:netkat:123:single}). By \cref{prop:cup,prop:gsos}, it suffices to check partial commutations. We have:
  \begin{itemize}
  \item for all $\alpha$, $0s_\alpha\subseteq \hs_\alpha^=0$ by \cref{lem:dl:0};
  \item for all $\alpha\neq\beta$, $\hs_\alpha \hs_\beta\subseteq \hs_\beta \hs_\alpha\cup 0$; indeed we have
    \begin{itemize}
    \item one overlap between $1_\alpha$ and $1_\beta$: $\tuple{\beta,p_\beta,1,1}$, which we can solve as follows:
      \closeoverlapi{1_\alpha}{1_\beta}0{\beta \alpha p_\alpha p_\beta}{\beta p_\beta}1
    \item one overlap between $1_\alpha$ and $2_\beta$: $\tuple{p_\beta,p_\beta,1,1}$, which we can solve as follows:
      \closeoverlapi{1_\alpha}{2_\beta}0{p_\beta \alpha p_\alpha p_\beta}{p_\beta p_\beta}{p_\beta}
    \item one overlap between $3_\alpha$ and $1_\beta$: $\tuple{\beta,p_\beta,1,1}$, which we can solve as follows:
      \closeoverlapi{3_\alpha}{1_\beta}0{\beta \alpha p_\beta}{\beta p_\beta}1
    \item one overlap between $3_\alpha$ and $2_\beta$: $\tuple{p_\beta,p_\beta,1,1}$, which we can solve as follows:
      \closeoverlapi{3_\alpha}{2_\beta}0{p_\beta \alpha p_\beta}{p_\beta p_\beta}{p_\beta}
    \item no overlap between the remaining pairs.
    \end{itemize}
  \end{itemize}
  We should thus place 0 first, followed by the $\hs_\alpha$ in any order.
  This concludes the proof.
\end{proof}

\subsection{A single reduction for \texorpdfstring{$\set{qp\leq p\mid p,q\in P}$}{\{qp≤ p| p,q∈ P\}}}
\label{ssec:netkat2}

Let us decompose the set of hypotheses 2 into several subsets: for $p\in P$, set
\begin{itemize}
\item $2_p\eqdef \set{qp\leq p \mid q \in P}$
\item $2_{=p}\eqdef \set{pp\leq p}$
\item $2_{\neq p}\eqdef \set{qp\leq p \mid q \in P, q\neq p}$
\end{itemize}
We have $2_p=2_{=p},2_{\neq p}$ and 2 is the union of all $2_p$s. Moreover, we have reductions to the empty set for each $2_{=p}$ by \cref{lem:reds}\ref{it:red:aa}, and for each $2_{\neq p}$ by \cref{lem:reds}\ref{it:red:ea} (by grouping the hypotheses in those latter sets into a single inequation $(\sum_{p\neq q} q)p\leq p$). We combine those elementary reductions in two steps.

\begin{lem}\label{lem:netkat2p}
  For all $p\in P$, $2_p$ reduces to the empty set.
\end{lem}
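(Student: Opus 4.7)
The plan is to decompose $2_p$ as $2_{=p}\cup 2_{\neq p}$, establish individual reductions to $\emptyset$ for each summand, and merge them via \cref{prop:cup}. For the individual reductions, \cref{lem:reds}\ref{it:red:aa} applies directly to $2_{=p}=\set{pp\leq p}$. For $2_{\neq p}$, I would first observe that it is equiderivable with the single hypothesis $(\sum_{q\neq p}q)\cdot p\leq p$ (so that by \cref{cor:red:id} it suffices to reduce the latter), then apply \cref{lem:reds}\ref{it:red:ea}, whose side condition that the left factor does not contain $p$ is satisfied.

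To combine the two reductions, \cref{prop:cup} requires a decomposition $\cl{2_p}\subseteq\cl{H_1}\circ\cl{H_0}$. The natural choice is $H_0=2_{=p}$ applied first, then $H_1=2_{\neq p}$, and I would obtain the decomposition via \cref{lem:gsos:2} and \cref{lem:dl:2}. An overlap analysis on $u=p$ (the RHS of $2_{=p}$) and $v=qp$ (the LHS of $2_{\neq p}$) reveals a single overlap $\tuple{q,1,1,1}$, which I would resolve with the diagram
\closeoverlapii{2_{=p}}{2_{\neq p}}{2_{\neq p}}{2_{=p}}{qpp}{qp}{p}{pp}
witnessing $\set{qpp}\subseteq 2_{\neq p}(2_{=p}(\set{p}))$. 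By \cref{prop:overlaps}, this gives $2_{=p}\circ 2_{\neq p}\subseteq 2_{\neq p}\circ 2_{=p}$. Since the RHS of $2_{=p}$ is a word, the function $2_{=p}$ is affine by \cref{lem:hyp:add}, and \cref{lem:gsos:2}(2) upgrades the overlap bound to $\cl{2_{=p}}\circ\cl{2_{\neq p}}\subseteq\cl{2_{\neq p}}\circ\cl{2_{=p}}$; \cref{lem:dl:2} then yields the equality $\cl{2_p}=\cl{2_{\neq p}}\circ\cl{2_{=p}}$.

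The main subtlety will be choosing the correct ordering. The reverse ordering $\cl{2_{=p}}\circ\cl{2_{\neq p}}$ cannot compute $\cl{2_p}$: indeed, the word $pqp$ (for any $q\neq p$) belongs to $\cl{2_{\neq p}}(\cl{2_{=p}}(\set{p}))$, obtained via $p\leftsquigarrow_{2_{=p}}pp\leftsquigarrow_{2_{\neq p}}pqp$, but does not belong to $\cl{2_{=p}}(\cl{2_{\neq p}}(\set{p}))$, because elements of $\cl{2_{\neq p}}(\set{p})$ are all of the form $q_1\cdots q_np$ with each $q_i\neq p$, and $\cl{2_{=p}}$ can only duplicate occurrences of $p$ from such a word, never producing a leading $p$. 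Once the right ordering is fixed, the remainder is a direct assembly of the tools from \cref{sec:tools,sec:moretools}.
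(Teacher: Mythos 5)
Your proof is correct and takes essentially the same route as the paper's: the same split $2_p = 2_{=p}\cup 2_{\neq p}$, the same elementary reductions via \cref{lem:reds}\ref{it:red:aa} and \ref{it:red:ea} (the paper groups $2_{\neq p}$ into $(\sum_{q\neq p}q)p\leq p$ just as you do, only without naming \cref{cor:red:id}), the same single overlap $\tuple{q,1,1,1}$ resolved by $g=2_{\neq p}\circ 2_{=p}$, and the same assembly of $\cl{2_p}=\cl{2_{\neq p}}\circ\cl{2_{=p}}$ via \cref{lem:gsos:2} and \cref{lem:cup:2} (your appeal to \cref{prop:cup} is the $n$-ary variant of the latter). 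Two small remarks: your bottom node $pp$ in the diagram is the right one (the paper's displayed diagram shows $qp$ there, which does not match its $2_{=p}$-labelled step and is evidently a typo), and your counterexample ruling out the ordering $\cl{2_{=p}}\circ\cl{2_{\neq p}}$, while not required, is accurate.
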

\begin{proof}
  There is a single overlap between $2_{=p}$ and $2_{\neq p}$, which we solve as follows:
  \closeoverlapii{2_{=p}}{2_{\neq p}}{2_{\neq p}}{2_{=p}}{qpp}{qp}p{qp}
  Thus we have $2_{=p}2_{\neq p}\subseteq 2_{\neq p}2_{=p}$, and we conclude with \cref{lem:cup:2,lem:gsos:2}.
\end{proof}
(Alternatively, it is also easy to build directly a homomorphic reduction from $2_p$ to the empty set by adapting the proof of \cref{lem:reds}\ref{it:red:ea}.)

\begin{lem}\label{lem:netkat2}
  $2$ reduces to the empty set.
\end{lem}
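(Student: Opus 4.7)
The plan is to decompose $2 = \bigcup_{p\in P} 2_p$ and apply \cref{prop:cup} with target $H' = \emptyset$, using the individual reductions from \cref{lem:netkat2p}. To that end, I must sequentialise $\cl 2$ as a composition of the closures $\cl{2_p}$, which I establish via \cref{prop:gsos}.

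Each $2_p$ has single-letter right-hand sides and is therefore affine by \cref{lem:hyp:add}. The substantive step is to analyse, for distinct $p,q \in P$, the overlaps between the RHS $p$ of hypotheses in $2_p$ and the LHSs $rq$ of hypotheses in $2_q$, via \cref{prop:overlaps}. A brief case distinction on the four possible overlap shapes shows that the only non-trivial one is $\tuple{1, q, 1, 1}$, arising when $r = p$ (the symmetric case $p=q$ is excluded by hypothesis). Taking $g \eqdef \cl{2_q}$, it then remains to check that $r'pq \in \cl{2_q}(\set q)$ for every $r' \in P$, which follows from the two backward rewriting steps $q \leftsquigarrow_{2_q} r'q$ (using $r'q \leq q$) and $r'q \leftsquigarrow_{2_q} r'pq$ (using $pq \leq q$). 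We thereby obtain the commutation $2_p 2_q \subseteq 2_q 2_p \cup \cl{2_q}$ for all $p \neq q$.

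Fixing any enumeration $P = \set{p_0, \ldots, p_n}$ and setting $H_i \eqdef 2_{p_i}$, for $i < j$ we have $H_i \subseteq H_{<j}$, so the above refines to $H_i H_j \subseteq \cl{H_j}\,H_{<j}^=$ (combining $H_j H_i \subseteq \cl{H_j}\,H_{<j}$ with $\cl{H_j} \subseteq \cl{H_j}\cdot\id$). Alternative~(1) of \cref{prop:gsos} then yields $\cl 2 = \cl{H_n} \circ \cdots \circ \cl{H_0}$, and \cref{prop:cup} concludes the proof.

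The main subtlety, I anticipate, is that the error term from the overlap analysis is $\cl{2_q}$ rather than $2_q$ alone: reaching $r'pq$ from $q$ genuinely requires two $2_q$-steps, and neither $r'$ nor $p$ can be generated from $q$ using any $2_{p_i}$ with $p_i \neq q$. This forces option~(1) of \cref{prop:gsos}, where $\cl{H_j}$ sits on the outside of the sequential composition and can absorb the error term via its idempotence; option~(2) would require simulating $\cl{2_q}$ by a single application of $2_q$ postcomposed with $\cl{H_{<j}}$, which is impossible since $p_j$ does not appear in $H_{<j}$.
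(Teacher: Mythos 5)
Your proof is correct and takes essentially the same route as the paper's: the paper likewise analyses the single overlap between distinct $2_q$ and $2_p$, resolves it with two backward rewriting steps (bounding the error term by $2_p2_p$ rather than your slightly weaker $\cl{2_q}$, so that $2_q2_p\subseteq 2_p2_q\cup 2_p2_p\subseteq 2_p^\star 2_q^=$), and then concludes via alternative~(1) of \cref{prop:gsos} under an arbitrary ordering of the $2_p$s together with \cref{prop:cup}, exactly as you do. The only blemish is notational: with the paper's convention $u\leftsquigarrow_H v$ means $u$ is obtained from $v$, so your rewriting chain should be written $r'pq\leftsquigarrow_{2_q} r'q\leftsquigarrow_{2_q} q$.
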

\begin{proof}
  For $p\neq q$, there is a single overlap between $2_q$ and $2_p$, which we solve as follows:
  \closeoverlapii{2_q}{2_p}{2_p}{2_p}{rqp}{qp}p{rp}
  Thus we have $2_q2_p\subseteq 2_p2_q\cup 2_p2_p\subseteq 2_p^\star2_q^=$, and we conclude with \cref{prop:cup,prop:gsos} by using any ordering of the $2_p$s.
\end{proof}

\subsection{Organising the closures}
\label{ssec:netkat:org}

We proceed in two steps to organise the closures.
\begin{lem}
  \label{lem:netkat:123}
  We have $\cl{(1,2,3)}=\cl3\cl2\cl1$.
\end{lem}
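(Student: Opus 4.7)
The plan is to follow the general strategy used in \cref{lem:kapt:fullcommutations} and \cref{lem:katc:fullcommutations}: exploit the partial commutations recorded in Table~\ref{table:netkat} via \cref{prop:gsos}-style reasoning. However, a direct application of \cref{prop:gsos} to the ordering $1 < 2 < 3$ is blocked by entry $(1,2)$, which yields $12 \subseteq 21 \cup 322$: the factor $322$ does not fit the required shape $2^=({<}2)^\star$ or $2^\star({<}2)^=$, since $3$ lies outside the columns strictly preceding $2$ in that ordering. The workaround will be to treat $2$ and $3$ together.

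First I would establish $\cl{(2,3)} = \cl3 \cl2$. Entry $(2,3)$ in Table~\ref{table:netkat} is dotted, meaning $23 \subseteq 32$ (there are no overlaps between the right-hand sides of $2$ and the left-hand sides of $3$). Since $2$ is affine by \cref{lem:hyp:add} (its right-hand sides are single letters, hence words), \cref{lem:gsos:2} gives $\cl2 \cl3 \subseteq \cl3 \cl2$, and then \cref{lem:dl:2} yields $\cl{(2,3)} = \cl3 \cl2$.

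Next I would show $\cl{(1,2,3)} = \cl{(2,3)} \cl1$, again via \cref{lem:dl:2}, which reduces the task to verifying $\cl1 \cl{(2,3)} \subseteq \cl{(2,3)} \cl1$. Apply \cref{lem:gsos:2} with $H_1 = 1$ (affine, since the right-hand side of each hypothesis in $1$ is the empty word) and $H_2 = 2 \cup 3$, using option~(1): it suffices to check $1 \cdot (2 \cup 3) \subseteq \cl{(2,3)} \cdot 1^=$. From Table~\ref{table:netkat} we have $12 \subseteq 21 \cup 322$ and $13 \subseteq 31$. The crucial observation is that $3$ now belongs to $H_2$, so the previously problematic term $322$ lies entirely within $\cl{(2,3)}$, hence $322 \subseteq \cl{(2,3)} \cdot 1^=$; and the remaining terms satisfy $21, 31 \subseteq \cl{(2,3)} \cdot 1 \subseteq \cl{(2,3)} \cdot 1^=$.

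Combining the two equalities yields $\cl{(1,2,3)} = \cl{(2,3)} \cl1 = \cl3 \cl2 \cl1$. The main obstacle is precisely the $322$ factor in the commutation of $1$ and $2$; the trick is to group $\{2,3\}$ together so that this factor is absorbed into $\cl{(2,3)}$, rather than trying to separate $\cl2$ and $\cl3$ at the same step.
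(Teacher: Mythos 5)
Your proposal is correct and takes essentially the same route as the paper's proof: both group $\{2,3\}$ precisely to absorb the problematic $322$ factor from entry $(1,2)$ of \cref{table:netkat}, apply \cref{lem:gsos:2}(1) with $1$ affine to obtain $\cl{(1,2,3)}=\cl{(2,3)}\cl1$, and split $\cl{(2,3)}=\cl3\cl2$ from $23\subseteq32$. The only difference is the (immaterial) order in which the two commutation steps are carried out.
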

\begin{proof}
  We follow the same path as in the proof of \cref{lem:netkat:123:single}:
  from \cref{table:netkat} we first deduce
  $1(2,3)=12\cup 13\subseteq 332\cup31 \subseteq (2,3)^\star 1^=$ (since 1 is linear by \cref{lem:hyp:add} for the first step), whence
  $(1,2,3)^\star=(2,3)^\star 1^\star$ by \cref{lem:gsos:2}(1).
  Then we have $(2,3)^\star=3^\star 2^\star$ by \cref{lem:gsos:2}, since $23\subseteq 32$.
  The announced equality follows.
\end{proof}

\begin{lem}
  \label{lem:netkat:com}
  We have $\cl{\hnetkat'}=\cl{(0,\hs,2,3,4,5,6,7)}=\cl{(3,7)}\cl6\cl5\cl4\cl3\cl2\cl{(0,\hs)}$.
\end{lem}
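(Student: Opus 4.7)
The first equality holds as an identity of hypothesis sets: $\hs$ contains all of 1 and 3 (via its $1_\alpha$ and $3_\alpha$ components) and is itself contained in $1 \cup 2 \cup 3$ (its $2_\alpha$ components are instances of 2), so $(0,\hs,2,3,4,5,6,7) = (0,1,2,3,4,5,6,7) = \hnetkat'$ as unions.

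For the second equality, the plan is to apply \cref{prop:gsos} to the sequence $H_0 = (0,\hs)$, $H_1 = (2,3)$, $H_2 = 4$, $H_3 = 5$, $H_4 = 6$, $H_5 = (3,7)$, and then split $\cl{(2,3)} = \cl 3 \cl 2$ separately. The choice to group $(2,3)$ is precisely what resolves the obstruction marked by $\times$ in \cref{table:netkat}: the entry $(1,2) = 322$ (coming from the $1_\alpha$ component of $\hs$ acting on $2$) cannot be placed in either $\cl 2 \cdot (0,\hs)^=$ or $2^= \cdot \cl{(0,\hs)}$, because the outer 3 cannot be commuted past $\cl 2$. Within the grouped column, however, $322 \subseteq (2,3)^3 \subseteq \cl{(2,3)}$ is immediate, fitting option~(1) of \cref{prop:gsos}. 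The grouping $(3,7)$ at the end is also needed, since 7 alone admits no reduction to the empty set (cf.~\cref{rem:wrong:red}).

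Affineness of $H_0,\ldots,H_4$ follows from \cref{lem:hyp:add} (all right-hand sides are words), with \cref{lem:dl:0} handling the constant 0 component; $H_5$ is last and need not be affine. The column commutations for $H_2, H_3, H_4$ can be read off \cref{table:netkat}: most entries are dotted, and the only non-trivial one is $(4,6)=54\subseteq({<}6)^\star$, fitting option~(2). For column $(3,7)$, one additionally needs the overlap of $6 \cdot 3$ which does not appear in the triangular table (since $3 < 6$ there): the unique overlap $\tuple{1,1,1,p_\alpha}$ with $\beta = \alpha$ requires $\alpha \in g(\{p_\alpha\})$, which holds via the chain $\alpha \leftsquigarrow_6 \alpha p_\alpha \leftsquigarrow_4 \alpha p_\alpha\alpha \leftsquigarrow_3 \alpha p_\alpha \leftsquigarrow_3 p_\alpha$, so $g$ may be taken as $\cl{({<}(3,7))}$, fitting option~(2). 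The remaining entries of the last column ($44$, $54$, $66441$) all decompose into compositions in $({<}(3,7))^\star$.

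Once \cref{prop:gsos} yields $\cl{\hnetkat'} = \cl{(3,7)} \cl 6 \cl 5 \cl 4 \cl{(2,3)} \cl{(0,\hs)}$, the final step is to split $\cl{(2,3)} = \cl 3 \cl 2$: the entry $(2,3) = \ldots$ of \cref{table:netkat} gives $2 \cdot 3 \subseteq 3^= \cdot 2^=$, and since $2$ is affine, \cref{lem:gsos:2}(2) yields $\cl 2 \cdot \cl 3 \subseteq \cl 3 \cdot \cl 2$, whence $\cl{(2,3)} = \cl 3 \cl 2$ by \cref{lem:dl:2}. Substituting produces the announced equality. The principal subtlety is the grouping choice for column $(2,3)$; the naive application of \cref{prop:gsos} with 2 alone fails precisely because of the $322$ bound.
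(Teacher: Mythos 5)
Your proof is correct, and it takes a genuinely different route from the paper's. The paper applies \cref{prop:gsos} to the grouping $0<(1,2,3)<4<5<6<7$, invokes the separate \cref{lem:netkat:123} to split $\cl{(1,2,3)}=\cl3\cl2\cl1$, and then reaches the stated form by a sandwich of inclusions, weakening $\cl7$ to $\cl{(3,7)}$ and $\cl1\cl0$ to $\cl{(0,\hs)}$ and closing the cycle via $\cl{(0,\hs,2,3,4,5,6,7)}=\cl{\hnetkat'}$; this costs only monotonicity and lets \cref{table:netkat} be reused verbatim, with no below-diagonal obligations. You instead run \cref{prop:gsos} once, directly on the coarser (overlapping) grouping $(0,\hs)<(2,3)<4<5<6<(3,7)$, and finish by splitting $\cl{(2,3)}=\cl3\cl2$ via \cref{lem:gsos:2,lem:dl:2}. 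This buys a one-shot equality with no sandwich and no separate lemma---the $322$ obstruction is absorbed into $\cl{(2,3)}$ exactly as you explain---at the price of new obligations created by repeating $3$ in the last group. There you do slightly more work than needed and verify slightly less than you claim: since $3\subseteq H_1\subseteq\Hlt5$, every composition $H_i\circ 3$ with $i<5$ is already contained in $\clHlt5\subseteq(3,7)^=\,\clHlt5$, so option~(2) holds for the entire $3$-part of the last column with no overlap analysis whatsoever. Hence your $6\cdot3$ chain is superfluous (and contains a detour: $\alpha\leftsquigarrow_6\alpha p_\alpha\leftsquigarrow_3 p_\alpha$ already suffices), and the case you did not mention---there \emph{is} an overlap $\tuple{1,1,p_\alpha,1}$ between the RHS $p_\alpha\alpha$ of $4$ and the LHS $\alpha$ of $3$, giving $4\cdot3\subseteq3\cdot4\cup\id$---is likewise harmless for the same reason, so the small inaccuracy in "one additionally needs the overlap of $6\cdot3$" does not propagate. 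One further cosmetic slip: entry $(2,3)$ of \cref{table:netkat} is a single dot ($2\cdot3\subseteq3\cdot2$, no overlap), not a triple-dotted one; this only strengthens the hypothesis you feed to \cref{lem:gsos:2} in the final split, which is otherwise applied correctly.
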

\begin{proof}
  We have
  \newlength{\abovedisplayskipsave}
  \setlength{\abovedisplayskipsave}{\abovedisplayskip}
  \setlength{\abovedisplayskip}{0pt}
  \vspace{-\baselineskip}
  \begin{align*}
     \cl{\hnetkat'}
   &= \cl{(0,1,2,3,4,5,6,7)}\\
   \tag{by \cref{prop:gsos}}
   &= \cl7\cl6\cl5\cl4\cl{(1,2,3)}\cl0\\
   \tag{by \cref{lem:netkat:123}}
   &= \cl7\cl6\cl5\cl4\cl3\cl2\cl1\cl0\\
   &\subseteq \cl{(3,7)}\cl6\cl5\cl4\cl3\cl2\cl{(0,\hs)}\\
   &\subseteq \cl{(0,\hs,2,3,4,5,6,7)} %
   = \cl{\hnetkat'}
  \end{align*}
  \setlength{\abovedisplayskip}{\abovedisplayskipsave}
  The first equality is by definition, the last one comes from the equality of the underlying sets (since $1,2,3 = \hs,2$). The last two inclusions follow from basic closure properties and $0,1\subseteq 0,\hs$. The application of \cref{prop:gsos} is justified by \cref{table:netkat}, where we consider $1,2,3$ as a single set of hypotheses. This amounts to merging the corresponding lines and columns in the table, and we observe that all subsequent columns satisfy the requirements (choosing the second alternative for columns 6,7).
\end{proof}

At this point we can easily conclude.
\begin{thm}
  \label{thm:netkat}
  $\hnetkat$ reduces to the empty set; $\KA_\hnetkat$ is complete and decidable.
\end{thm}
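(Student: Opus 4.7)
The plan is to assemble the results already established in \cref{ssec:netkat1,ssec:netkat2,ssec:netkat:org} via \cref{prop:cup}. First, since $\hnetkat$ and $\hnetkat'$ are interderivable (as noted just before \cref{table:netkat}), \cref{cor:red:id} gives mutual reductions, so by \cref{lem:red:comp} it suffices to prove that $\hnetkat'$ reduces to the empty set. Completeness and decidability of $\KA_\hnetkat$ will then follow from \cref{thm:red} and \cref{thm:dec} respectively, using completeness and decidability of plain $\KA$ (\cref{thm:ka}).

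To reduce $\hnetkat'$ to the empty set, I would apply \cref{prop:cup} with the seven component sets appearing in the decomposition of \cref{lem:netkat:com}, namely
\begin{align*}
(0,\hs),\quad 2,\quad 3,\quad 4,\quad 5,\quad 6,\quad (3,7)\,,
\end{align*}
whose union is $\hnetkat'$ (observing that $1,2,3 = \hs,2$ modulo the missing cross-assignment hypotheses in $\hs$, which are recovered from $2$). The required closure inclusion $\cl{\hnetkat'} \subseteq \cl{(3,7)}\cl6\cl5\cl4\cl3\cl2\cl{(0,\hs)}$ is precisely \cref{lem:netkat:com}, so the only remaining obligation is to supply, for each of the seven components, a reduction to the empty set.

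Each of these reductions is already available: $(0,\hs)$ reduces by \cref{lem:netkat:s0}; $2$ reduces by \cref{lem:netkat2}; each of $3,4,5,6$ is a set of letter-below-word inequations covered by \cref{lem:reds}\ref{it:red:aw}; and $(3,7)$ is an instance of \cref{lem:reds}\ref{it:red:1S}, since together $\set{\alpha \leq 1 \mid \alpha \in A}$ and $\set{1 \leq \sum_{\alpha\in A}\alpha}$ express the equation $1 = \sum_{\alpha\in A}\alpha$. Note that it is essential to group $3$ with $7$ at this stage rather than treat $7$ alone, as discussed in \cref{rem:wrong:red}.

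There is no real obstacle left: the conceptual work (the partial commutation analysis summarised in \cref{table:netkat}, the clever enrichment $\hs$ needed to handle the non-regularity-preserving hypotheses $\alpha p_\alpha \leq 1$ via \cref{lem:basic:e1}, and the decomposition of $2$ into manageable subsets) has been carried out in the preceding lemmas. The theorem is essentially a one-line invocation: $\hnetkat$ reduces to $\hnetkat'$ by \cref{cor:red:id}, which reduces to the empty set by \cref{prop:cup} applied to the above seven reductions and \cref{lem:netkat:com}; completeness and decidability of $\KA_\hnetkat$ then follow from \cref{thm:red,thm:dec}.
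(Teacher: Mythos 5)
Your proposal is correct and takes essentially the same route as the paper's own proof: the paper likewise reduces $\hnetkat$ to $\hnetkat'$, supplies reductions to the empty set for $(0,\hs)$ via \cref{lem:netkat:s0}, for $2$ via \cref{lem:netkat2}, and for $3$, $4$, $5$, $6$, and $(3,7)$ via \cref{lem:reds}, and then concludes by \cref{lem:netkat:com} and \cref{prop:cup}. Your added glosses---the set identity $1,2,3=\hs,2$ justifying that the components cover $\hnetkat'$, and the necessity of grouping $3$ with $7$ in light of \cref{rem:wrong:red}---are accurate spellings-out of details the paper leaves implicit.
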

\begin{proof}
  $\hnetkat$ reduces to $\hnetkat'=(0,\hs,2,3,4,5,6,7)$, and we have reductions to the empty set for $0,\hs$ by \cref{lem:netkat:s0}, for 2 by \cref{lem:netkat2}, and for $3$, $4$, $5$, $6$, and $(3,7)$ by \cref{lem:reds}. Given \cref{lem:netkat:com}, \cref{prop:cup} applies.
\end{proof}

\section{Related work}
\label{sec:related}

There is a range of papers on completeness and decidability of Kleene algebra together with specific forms of hypotheses, starting with~\cite{cohen94:ka:hypotheses}.
The general case of Kleene algebra with hypotheses, and reductions to prove completeness, has been studied recently in~\cite{KappeB0WZ20,dkpp:fossacs19:kah,KozenM14}.
The current paper combines and extends these results, and thereby aims to provide a comprehensive overview and a showcase of how to apply these techniques to concrete case studies (KAT, KAO, NetKAT and the new theories KATF, KAT$^\circ$ and KAPT).
Below, we discuss each of these recent works in more detail.

Kozen and Mamouras~\cite{KozenM14} consider restricted forms of hypotheses in terms of rewriting systems, and provide reductions for equations of the form $1 = w$ and $a = w$ (cf. Lemma~\ref{lem:reds}). Their general results cover completeness results which instantiate to KAT and NetKAT. In fact, the assumptions made in their technical development are tailored towards these cases; for instance, their assumption $\alpha \beta \leq \bot$ (in Assumption 2) would have to be dropped to consider KAPT. The current paper focuses on generality and how to construct reductions in a modular way.

Doumane et al.~\cite{dkpp:fossacs19:kah} define the language interpretation of regular expressions 
for an arbitrary set of hypotheses via the notion of $H$-closure (\cref{def:clo}), and study (un)decidability of the (in)equational theory of $\KA_H$, i.e., 
the problem of checking $\KAproves[H] e \leq f$, for various types of hypotheses $H$. 
In particular,
they construct a reduction for hypotheses of the form $1 \leq \sum_{a \in S} a$ (cf. Lemma~\ref{lem:reds}).
A first step towards modularity may also be found in~\cite[Proposition~3]{dkpp:fossacs19:kah}.

Kapp\'{e} et al.~\cite{KappeB0WZ20} study hypotheses on top of bi-Kleene algebra, where the canonical interpretation is based on pomset languages, and ultimately prove completeness of \emph{concurrent Kleene algebra with observations}; many of the results there apply to the word case as well. We follow this paper for the basic definitions and results about reductions, with a small change in the actual definition of a reduction (Remark~\ref{rem:red:ckah}). Compositionality in the sense of \cref{ssec:compo} is treated in Kapp\'{e}'s PhD thesis~\cite{kappethesis}. In the current paper we systematically investigate tools for combining reductions,
based on lattice theory, proposing a number of new techniques (e.g.~\cref{lem:gsos:2} and \cref{ssec:more:gsos,ssec:more:overlaps,ssec:univ}).
Further, we highlight the word case in this paper (as opposed to the pomset languages in concurrent Kleene algebra), by showcasing several examples.

\section{Conclusions and future work}
\label{sec:future}

We presented a general toolbox for proving completeness of Kleene algebra with hypotheses.
While our examples demonstrate the rather wide applicability of our techniques, there are natural extensions that we have not covered here. For instance KAT+B!~\cite{kat:bbang} and KAO with a full element and/or converse seem to fit our framework. We hope we have provided enough tools and examples such that these theories can be investigated easily and in a principled way in the future. 

Of course, there are Kleene algebra extensions that we cannot cover, a priori. For instance, for \emph{action algebras}~\cite{Pratt90} and \emph{action lattices}~\cite{K92a}, it is not clear how to interpret the new operations as letters with additional structure. 
Finding reference models and proving completeness for such theories remains an important challenge.

There are also variations of Kleene algebra where some of the axioms weakened or removed, such as left-handed Kleene algebra~\cite{KozenS12,ddp:lpar18:lefthanded} (where one of the star-induction rules is removed) or Kleene algebra with abnormal termination~\cite{Mamouras17:abnormal} (where the axiom $x0=0$ is no longer valid). In the former case, the equational theory remains the same, so that except for some of the basic reductions (e.g., \cref{lem:reds}\ref{it:red:ea}), most of the framework we developed here can be reused. In the latter case instead, it is not clear how to proceed since the remaining axioms are no longer complete w.r.t. the language interpretation.

\medskip

The general theory proposed here results in completeness with respect to a canonical
language model, defined via language closure. In several instances however, there are other
reference models for which we would like to obtain completeness. For instance, in the case of KAT, 
we provided a separate argument to relate the canonical model defined via closure to the model
of guarded strings, which is a standard model of KAT. For NetKAT, there is a similar model of guarded
strings, which we have not considered in our treatment. It would be interesting to identify a 
common pattern, and to try and develop generic tools that help transporting our results
to such reference models.

Another direction of future work is that of decidability and complexity. Whenever our reduction technique yields a proof of completeness, and the reduction itself is computable (as it is in all examples), this immediately entails a decidability result, via decidability of KA (\cref{thm:dec}). However, in general this is far from an efficient procedure: computing the reduction itself can lead to a blow-up. Developing tools that help to obtain more efficient decision procedures is left for future work. This problem may well be related to the problem discussed in the previous paragraph, on identifying suitable reference models.

\paragraph*{Acknowledgements}

We would like to thank the reviewers for all their comments, as well as Pierre Goutagny for suggesting an optimisation of our definition of reduction, and spotting an error in the previous version of this article: the need for affine functions to cover the case of constant functions.

\bibliographystyle{alphaurl}
\bibliography{pous,refs}

\appendix

\section{On least closures in complete lattices}
\label{app:closures}

For a set of hypotheses $H$, the $H$-closure $\cl H$ is formally a closure operator: for all languages $L$
we have $L \subseteq \cl H(L)$ and $\cl H \cl H (L) \subseteq \cl H(L)$. 
In this appendix, we use basic lattice theory~\cite{DaveyPriestley90} to establish various properties of such closure operators.

\medskip

Let $\tuple{X,{\leq},\bigvee}$ be a complete lattice. We write $x+y$ for binary joins $\bigvee\set{x,y}$.
We write $1$ for the identity function on $X$. Given two functions $f,g$, we write $fg$ for their composition: $fg(x)=f(g(x))$, and $f\leq g$ when $\forall x, f(x)\leq g(x)$. The finite iterations $f^i$ of a function $f$ are defined by induction on $i\in\NN$: $f^0=1$ and $f^{i+1}=ff^i$.
Functions on $X$, ordered pointwise as above, form a complete lattice where suprema are also computed pointwise.

\medskip

A function $f$ is \emph{monotone} if $\forall x,y, x\leq y \Rightarrow f(x)\leq f(y)$.
By Knaster-Tarski's theorem, every monotone function $f$ admits a \emph{least (pre)fixpoint} $\mu f$, which satisfies $f(\mu f)\leq \mu f$ and the \emph{induction principle}: $\forall x, f(x)\leq x \Rightarrow \mu f\leq x$.

\subsection{Least closures}

A \emph{closure} is a monotone function $c$ such that $1\leq c$ and $cc \leq c$.

\medskip

Given a monotone function $s$ and an element $x\in X$, we write $\cl s(x)$ for the least (pre)fixpoint of $s$ above $x$: $\cl s(x)\eqdef \mu(\lambda y.x+s(y))$. This definition gives, for all $x,y$,
\begin{align}
  \label{cpref}
  x+s(s^\star(x))\leq s^\star(x)\\
  \label{cind}
  x+s(y)\leq y \Rightarrow s^\star(x)\leq y
\end{align}
Equivalently we have that for all functions $g,h$,
\begin{align}
  \label{cpref:fun}
  \tag{\ref{cpref}'}
  1+ss^\star\leq s^\star\\
  \label{cind:fun}
  \tag{\ref{cind}'}
  g+sh\leq h \Rightarrow s^\star g\leq h
\end{align}

\begin{lem}
  \label{lem:mu:mon}
  If $f,g$ are two monotone functions such that $f\leq g$, then $\mu f\leq \mu g$.
\end{lem}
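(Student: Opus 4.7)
The plan is to apply the induction principle for $\mu f$ with $x = \mu g$. Concretely, to conclude $\mu f \leq \mu g$ it suffices to verify that $\mu g$ is a prefixpoint of $f$, i.e., $f(\mu g) \leq \mu g$.

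For this, I would chain two simple inequalities. First, by the assumption $f \leq g$ (pointwise), applied at the point $\mu g$, we get $f(\mu g) \leq g(\mu g)$. Second, since $\mu g$ is the least prefixpoint of $g$, we have $g(\mu g) \leq \mu g$. Composing, $f(\mu g) \leq \mu g$. The induction principle for $\mu f$ (the unnamed consequence of Knaster--Tarski recalled just above the lemma) then yields $\mu f \leq \mu g$, as required.

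There is no real obstacle here: the proof is a two-line application of Knaster--Tarski and monotonicity. Note that we do not even need $g$ to be monotone for the final inequality $g(\mu g) \leq \mu g$, only the existence of $\mu g$ as a prefixpoint of $g$; but since the lemma assumes both $f$ and $g$ monotone, both $\mu f$ and $\mu g$ are guaranteed to exist by Knaster--Tarski.
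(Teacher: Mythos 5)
Your proof is correct and is essentially the paper's argument: the paper observes that under $f\leq g$ every prefixpoint of $g$ is a prefixpoint of $f$, and your proof simply instantiates this at the particular prefixpoint $\mu g$ and invokes the induction principle. The chain $f(\mu g)\leq g(\mu g)\leq \mu g$ is exactly the content of the paper's one-line remark, made explicit.
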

\begin{proof}
  Under the assumption, every pre-fixpoint of $g$ is a pre-fixpoint of $f$.
\end{proof}

\begin{prop}
  \label{prop:least:closure}
  For every monotone function $s$, $s^\star$ is the least closure above $s$.
\end{prop}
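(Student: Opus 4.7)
The plan is to unfold the three requirements (monotonicity, $1\le s^\star$, idempotence $s^\star s^\star\le s^\star$) that make $s^\star$ a closure, then verify $s\le s^\star$, and finally check minimality among closures above $s$. Each bullet reduces to a one-line use of either \eqref{cpref}/\eqref{cind} or their functional counterparts \eqref{cpref:fun}/\eqref{cind:fun}, together with Lemma~\ref{lem:mu:mon} for monotonicity in the parameter.

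First I would handle the closure properties. Monotonicity of $s^\star$ in its argument follows from the fact that $x\mapsto\lambda y.\,x+s(y)$ is itself monotone (pointwise) in $x$, combined with \cref{lem:mu:mon}. The inequality $1\le s^\star$ is immediate from \eqref{cpref}, which already includes $x\le s^\star(x)$ as a conjunct. For idempotence, I would apply \eqref{cind} to the element $s^\star(x)$ in the role of $x$ and to $s^\star(x)$ in the role of $y$: the prefixpoint inequation $s^\star(x)+s(s^\star(x))\le s^\star(x)$ is exactly \eqref{cpref}, so the induction principle yields $s^\star(s^\star(x))\le s^\star(x)$.

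Next I would show $s\le s^\star$. Since $x\le s^\star(x)$ by the previous step, monotonicity of $s$ gives $s(x)\le s(s^\star(x))$, and \eqref{cpref} then gives $s(s^\star(x))\le s^\star(x)$; chaining these yields $s(x)\le s^\star(x)$.

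Finally, for minimality, let $c$ be any closure with $s\le c$. I would prove $s^\star\le c$ pointwise by applying \eqref{cind} with $y\eqdef c(x)$: we need $x+s(c(x))\le c(x)$, which splits into $x\le c(x)$ (since $c$ is a closure) and $s(c(x))\le c(c(x))\le c(x)$ (using $s\le c$ and $cc\le c$). The induction principle then delivers $s^\star(x)\le c(x)$. I do not expect a real obstacle here: everything is a direct application of the fixpoint machinery, and the only mild subtlety is remembering to use monotonicity of $s$ (rather than \eqref{cpref} alone) to derive $s\le s^\star$ at the unfixed argument $x$.
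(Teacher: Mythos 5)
Your proof is correct and follows essentially the same route as the paper's: monotonicity via \cref{lem:mu:mon}, extensivity and idempotence via \eqref{cpref}/\eqref{cind} (the paper states these steps in functional form using \eqref{cpref:fun}/\eqref{cind:fun}, but the content is identical), $s\leq s^\star$ by chaining $s\leq ss^\star\leq s^\star$, and minimality by applying the induction principle to $c(x)$ using $s\leq c$ and the closure laws of $c$. No gaps.
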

\begin{proof}
  We first prove that $s^\star$ is a closure:
  \begin{itemize}
  \item monotonicity follows from Lemma~\ref{lem:mu:mon};
  \item $1\leq s^\star$ follows from~\eqref{cpref:fun};
  \item $s^\star s^\star\leq s^\star$ follows from $s^\star+ss^\star\leq s^\star$ by~\eqref{cind:fun}, which holds by~\eqref{cpref:fun};
  \end{itemize}
  Moreover, $s^\star$ is above $s$: we have $s\leq ss^\star\leq s^\star$ by $1 \leq s^\star$, monotonicity of $s$ and \eqref{cpref:fun}.

  Now, if $c$ is a closure above $s$, then $s^\star\leq c$ follows from $1+sc\leq c$ by~\eqref{cind:fun}, which holds thanks to the assumption $s\leq c$ and the fact that $c$ is a closure.
\end{proof}

\begin{lem}
  \label{lem:clot:mon}
  Let $s,t$ be two monotone functions. If $s\leq t$ then $s^\star\leq t^\star$.
\end{lem}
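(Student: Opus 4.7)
The plan is to leverage \cref{prop:least:closure}, which characterises $s^\star$ as the \emph{least} closure above $s$. Given this characterisation, the result reduces to exhibiting some closure above $s$ that lies below $t^\star$, and the obvious candidate is $t^\star$ itself.

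Concretely, I would proceed in two short steps. First, observe that by \cref{prop:least:closure} applied to $t$, the function $t^\star$ is a closure above $t$; in particular $t \leq t^\star$. Combining this with the hypothesis $s \leq t$ and transitivity of $\leq$ in the pointwise order on functions gives $s \leq t^\star$, so $t^\star$ is a closure above $s$. Second, applying the minimality part of \cref{prop:least:closure} (this time to $s$) yields $s^\star \leq t^\star$, as required.

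There is no real obstacle here: the entire argument is one invocation of transitivity followed by one invocation of the universal property of the least closure. Alternatively, one could give a direct fixpoint-style proof using \eqref{cind:fun}: it suffices to show $1 + s\, t^\star \leq t^\star$, which follows from $s \leq t \leq t^\star$ and the fact that $t^\star$ is a closure (so $1 \leq t^\star$ and $t^\star t^\star \leq t^\star$), and then to conclude $s^\star \leq t^\star$ by \eqref{cind:fun}. Either formulation is essentially a one-liner, so I would prefer the first, which makes the conceptual content (monotonicity of the least-closure operator w.r.t.\ the pointwise order) most transparent.
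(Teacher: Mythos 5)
Your proof is correct, and both of your variants are legitimate: \cref{prop:least:closure} and \eqref{cind:fun} are established before \cref{lem:clot:mon} in the paper, so there is no circularity. However, your route differs from the paper's, which proves the lemma as a ``direct consequence of \cref{lem:mu:mon}'': since $s\leq t$ implies $\lambda y.\,x+s(y)\leq \lambda y.\,x+t(y)$ pointwise for each fixed $x$, monotonicity of the least-fixpoint operator gives $s^\star(x)=\mu(\lambda y.\,x+s(y))\leq \mu(\lambda y.\,x+t(y))=t^\star(x)$. That argument works straight from the definition of $s^\star$ and needs nothing about closures, whereas your first formulation routes through the universal property of \cref{prop:least:closure} (exhibiting $t^\star$ as a closure above $s$), and your second unfolds that universal property into its underlying induction rule \eqref{cind:fun} via $1+s\,t^\star\leq t\,t^\star\leq t^\star t^\star\leq t^\star$. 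The trade-off is as you suggest: the paper's proof is the more primitive one (it only uses that $\mu$ is monotone, \cref{lem:mu:mon}, whose proof is one line about pre-fixpoints), while yours makes the conceptual content explicit, namely that $s\mapsto s^\star$ is monotone because it is a least-closure operator; all three arguments are one-liners and equally valid.
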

\begin{proof}
  Direct consequence of Lemma~\ref{lem:mu:mon}.
\end{proof}

\begin{prop}
  \label{prop:gsos:dl:2:gen}
  If $s,t$ are monotone functions such that $s^\star t^\star\leq t^\star s^\star$, then $(s+t)^\star=t^\star s^\star$.
\end{prop}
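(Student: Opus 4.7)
The plan is to establish the equality by two inequalities, both relying on \cref{prop:least:closure}, which characterises $(s+t)^\star$ as the least closure above $s+t$.

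For the inequality $t^\star s^\star \leq (s+t)^\star$, I would argue as follows. Since $s \leq s+t$ and $t \leq s+t$, \cref{lem:clot:mon} yields $s^\star \leq (s+t)^\star$ and $t^\star \leq (s+t)^\star$. Composing and using that $(s+t)^\star$ is a closure (so idempotent up to $\leq$), I get $t^\star s^\star \leq (s+t)^\star (s+t)^\star \leq (s+t)^\star$.

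For the reverse inequality $(s+t)^\star \leq t^\star s^\star$, I would invoke the least closure characterisation: it suffices to show that $t^\star s^\star$ is itself a closure above $s+t$. Monotonicity follows from monotonicity of $s^\star$ and $t^\star$ and closure under composition. The inflationary property $1 \leq t^\star s^\star$ is immediate from $1 \leq s^\star$, $1 \leq t^\star$, and monotonicity. To show $s+t \leq t^\star s^\star$, note that $s \leq s^\star \leq t^\star s^\star$ (using $1 \leq t^\star$) and symmetrically $t \leq t^\star \leq t^\star s^\star$ (using $1 \leq s^\star$). The main step is idempotence: $t^\star s^\star t^\star s^\star \leq t^\star s^\star$. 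Here I would use the hypothesis $s^\star t^\star \leq t^\star s^\star$ to swap the middle terms, giving $t^\star s^\star t^\star s^\star \leq t^\star t^\star s^\star s^\star$, and then conclude by $t^\star t^\star \leq t^\star$ and $s^\star s^\star \leq s^\star$, which hold since $s^\star$ and $t^\star$ are closures.

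The key step — and really the only place the hypothesis is used — is the idempotence of $t^\star s^\star$, where the partial commutation $s^\star t^\star \leq t^\star s^\star$ is exactly what is needed to collapse the fourfold composition. Everything else is formal manipulation of closures.
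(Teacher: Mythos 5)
Your proposal is correct and follows essentially the same route as the paper: both directions are established exactly as in the paper's proof, namely showing $t^\star s^\star$ is a closure above $s+t$ (with the hypothesis $s^\star t^\star \leq t^\star s^\star$ used precisely once, to collapse $t^\star s^\star t^\star s^\star \leq t^\star t^\star s^\star s^\star \leq t^\star s^\star$) and invoking \cref{prop:least:closure}, then obtaining $t^\star s^\star \leq (s+t)^\star$ from \cref{lem:clot:mon} and idempotence of $(s+t)^\star$. Your write-up is merely a little more explicit about the routine monotonicity and inflationarity checks that the paper leaves implicit.
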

\begin{proof}
  Set $c=t^\star s^\star$.
  First $c$ is a closure: we have $1\leq t^\star\leq t^\star s^\star$ and $t^\star s^\star t^\star s^\star \leq t^\star t^\star s^\star s^\star\leq t^\star s^\star$.
  Since $c$ is above both $s$ and $t$, we get $(s+t)^\star\leq c$ by Proposition~\ref{prop:least:closure}.
  Finally, $c=t^\star s^\star\leq (s+t)^\star(s+t)^\star\leq (s+t)^\star$ using Lemma~\ref{lem:clot:mon} twice for the first inequality.
\end{proof}

\begin{prop}
  \label{prop:gsos:dl:gen}
  Let $s_1,\dots,s_n$ be monotone functions;
  write $c_j$ for $s_j^\star$, and $c_{<j}$ for $\paren{\bigvee_{1\leq i<j}s_i}^\star$.
  Suppose that for all $1<j\leq n$, we have $c_{<j}c_j \leq c_jc_{<j}$.
  Then $c_{<n+1}=c_n\dots c_1$.
\end{prop}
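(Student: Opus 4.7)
The plan is to proceed by induction on $n$, using \cref{prop:gsos:dl:2:gen} (the two-function case) as the key ingredient at each inductive step.

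The base case $n=1$ is immediate: there are no hypotheses to check (the range $1<j\leq 1$ is empty), and by definition $c_{<2}$ is the closure of $s_1$ alone, which equals $c_1$.

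For the inductive step, assuming the result for $n-1$, I would apply the induction hypothesis to the first $n-1$ functions $s_1,\dots,s_{n-1}$: the hypotheses $c_{<j}c_j\leq c_jc_{<j}$ for $1<j\leq n-1$ are a subset of those available, so we obtain $c_{<n}=c_{n-1}\cdots c_1$. Next, observe that $c_{<n+1}$ is the closure of $\bigvee_{1\leq i\leq n}s_i = \paren{\bigvee_{1\leq i<n}s_i} + s_n$. Setting $s\eqdef\bigvee_{1\leq i<n}s_i$ and $t\eqdef s_n$, the remaining hypothesis (the case $j=n$) reads exactly $s^\star t^\star = c_{<n}c_n \leq c_nc_{<n} = t^\star s^\star$. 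Therefore \cref{prop:gsos:dl:2:gen} applies and yields $c_{<n+1}=(s+t)^\star = t^\star s^\star = c_n c_{<n}$. Combining with the induction hypothesis, $c_{<n+1}=c_n c_{n-1}\cdots c_1$, as required.

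There is no real obstacle here: the statement is essentially a straightforward iteration of the two-function decomposition. The only point requiring a bit of care is to match the indexing precisely, so that the hypothesis for $j=n$ provides the commutation needed to fold $s_n$ on top of the already decomposed closure $c_{<n}$, while the hypotheses for $j<n$ are exactly what is needed to invoke the smaller instance of the proposition.
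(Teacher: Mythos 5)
Your proof is correct and follows essentially the same route as the paper: the paper also proceeds by induction (phrased as induction on $j$ rather than on $n$, which is the same argument), setting $s=\bigvee_{1\leq i<j}s_i$ and invoking \cref{prop:gsos:dl:2:gen} to fold $s_j^\star$ onto the already decomposed $c_{<j}$. Your indexing and the application of the induction hypothesis to the first $n-1$ functions are both sound, so there is nothing to fix.
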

(Note that for $n=2$, the statement amounts to Proposition~\ref{prop:gsos:dl:2:gen}.)
\begin{proof}
  We prove by induction on $j$ that for all $1\leq j\leq n$, $c_{<j+1}=c_j\dots c_1$.
  The case $j=1$ is trivial. For the inductive case, suppose $1<j\leq n$ and
  let $s=\bigvee_{1\leq i<j} s_i$. 
  By assumption, we have $s^\star s_j^\star=c_{<j}\,c_j\leq c_j\,c_{<j}=s_j^\star s^\star$.
  Therefore, we deduce
  \begin{align*}
    \tag{by definition}
    c_{<j+1} & = (s+s_j)^\star \\
    \tag{by Proposition~\ref{prop:gsos:dl:2:gen}}
             & = s_j^\star s^\star \\
    \tag{by definition}
             & = c_j\,c_{<j} \\
    \tag*{(by induction)\qedhere}
             & = c_j\,c_{j-1}\dots c_1
  \end{align*}
\end{proof}

\begin{prop}
  \label{prop:clo:iter:l}
  Let $s,f$ be monotone functions, and $c$ a closure. We have:
  \begin{align*}
    sf\leq fc \Rightarrow s^\star f\leq fc
  \end{align*}
\end{prop}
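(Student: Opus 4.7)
The plan is to apply the induction principle \eqref{cind:fun} for $s^\star$ directly. Recall that \eqref{cind:fun} states: if $g + sh \leq h$, then $s^\star g \leq h$. I will instantiate this with $g = f$ and $h = fc$, so that the desired conclusion $s^\star f \leq fc$ follows once I establish the premise $f + s(fc) \leq fc$.

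The premise splits into two inequalities. First, $f \leq fc$: since $c$ is a closure we have $1 \leq c$, and post-composing with the monotone function $f$ yields $f = f \cdot 1 \leq fc$. Second, $sfc \leq fc$: starting from the hypothesis $sf \leq fc$, I post-compose with $c$ (using monotonicity in the pointwise order on functions) to get $sfc \leq fcc$; then, since $c$ is a closure we have $cc \leq c$, and pre-composing with the monotone $f$ gives $fcc \leq fc$. Chaining these yields $sfc \leq fc$. Taking the join of the two inequalities gives exactly $f + sfc \leq fc$.

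I do not anticipate any serious obstacle: the proof is a textbook application of the induction principle for least fixpoints, relying only on monotonicity of $f$ and $s$, the defining inequality $cc \leq c$ of a closure, and the assumption $sf \leq fc$. The only mild care needed is to remember that composition of monotone functions is itself monotone in the pointwise order on the function lattice, so that pre- or post-composing a valid inequality preserves it.
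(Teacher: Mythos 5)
Your proof is correct and is essentially identical to the paper's: both apply the induction principle \eqref{cind:fun} with $g=f$ and $h=fc$, verify $f\leq fc$ from $1\leq c$ and monotonicity of $f$, and verify $sfc\leq fcc\leq fc$ from the hypothesis together with $cc\leq c$. No gaps.
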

\begin{proof}
  By \eqref{cind:fun}, it suffices to show $f+sfc\leq fc$.
  We have $f\leq fc$ by monotonicity of $f$ and $1\leq c$.
  We have $sfc\leq fcc\leq fc$ by assumption, monotonicity of $f$, and $cc\leq c$.
\end{proof}

\medskip
A function $f$ admits another function $f^\sharp$ as \emph{upper adjoint} if $\forall x,y,~f(x)\leq y \Leftrightarrow x \leq f^\sharp(y)$. Equivalently, for all functions $g,h$,
\begin{align}
  \label{eq:adj}
  fg \leq h \Leftrightarrow g \leq f^\sharp h
\end{align}
In such a situation, $f$ and $f^\sharp$ are monotone, and we have
\begin{align}
  \label{eq:adj:unit}
  ff^\sharp \leq 1 \leq f^\sharp f
\end{align}
A function $f$ is \emph{linear} if it preserves all joins: $\forall Y\subseteq X, f(\bigvee_{y\in Y} y) = \bigvee_{y\in Y} f(y)$; it is \emph{affine} if it preserves all non-empty joins (i.e.,\ $Y\neq\emptyset$ in the previous formula).

Linear functions are affine; affine functions are monotone; the set of linear (resp. affine) functions is closed under composition and arbitrary joins; all constant functions are affine, and only the least one is linear. Moreover, we have the following characterisations:
\begin{lem}\label{lem:aff:lin}\hfill
  \begin{enumerate}[(i)]
  \item A function is linear if and only if it admits an upper adjoint.
  \item A function is affine if and only if it is the join of a constant and a linear function.
  \end{enumerate}
\end{lem}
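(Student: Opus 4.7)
The plan is to prove the two parts separately, using standard adjoint functor arguments for (i) and an explicit construction for (ii).

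For part (i), the backward direction is straightforward: assume $f$ has upper adjoint $f^\sharp$, so $f$ is monotone by \eqref{eq:adj:unit}, which gives $\bigvee_{y\in Y} f(y)\leq f(\bigvee Y)$. For the other inequality, set $z\eqdef \bigvee_{y\in Y} f(y)$, observe that $f(y)\leq z$ for each $y\in Y$, apply \eqref{eq:adj} to get $y\leq f^\sharp(z)$, take the join to get $\bigvee Y\leq f^\sharp(z)$, and apply \eqref{eq:adj} once more. For the forward direction, assume $f$ is linear and define $f^\sharp(y)\eqdef \bigvee\{x\mid f(x)\leq y\}$. Then $g\leq f^\sharp h$ trivially implies $fg\leq ff^\sharp h\leq h$ using linearity of $f$ to compute $f(f^\sharp(y))=\bigvee\{f(x)\mid f(x)\leq y\}\leq y$; the converse is immediate from the definition.

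For part (ii), the backward direction is a direct computation: if $f=c+g$ with $c$ constant and $g$ linear, then for any non-empty $Y$, $f(\bigvee Y)=c+g(\bigvee Y)=c+\bigvee_{y\in Y} g(y)=\bigvee_{y\in Y}(c+g(y))=\bigvee_{y\in Y} f(y)$, where non-emptiness is needed to keep $c$ inside the join. For the forward direction, assume $f$ is affine; first note $f$ is monotone (take $Y=\{x,y\}$ with $x\leq y$), so $f(\bot)\leq f(x)$ for all $x$. Define $c$ to be the constant function with value $f(\bot)$, and define $g(\bot)\eqdef \bot$, $g(x)\eqdef f(x)$ for $x\neq\bot$. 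Then $f=c+g$ holds at $\bot$ trivially, and at $x>\bot$ because $c+g(x)=f(\bot)+f(x)=f(x)$ by monotonicity.

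The main obstacle is showing that this $g$ is actually linear (preserving even the empty join). The empty-join case is immediate by construction. For a non-empty $Y$ with $\bigvee Y=\bot$, every $y\in Y$ satisfies $y\leq\bot$ hence $y=\bot$, so both sides collapse to $\bot$. For a non-empty $Y$ with $\bigvee Y>\bot$, set $Y_+\eqdef Y\setminus\{\bot\}$; this is non-empty since $Y=\{\bot\}$ would give $\bigvee Y=\bot$. Then $\bigvee Y=\bigvee Y_+$, so $g(\bigvee Y)=f(\bigvee Y_+)=\bigvee_{y\in Y_+} f(y)$ using affineness of $f$ on the non-empty set $Y_+$; and $\bigvee_{y\in Y} g(y)=\bigvee_{y\in Y_+} f(y)$ since the elements of $Y\setminus Y_+$ contribute only $g(\bot)=\bot$. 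These match, concluding the proof.
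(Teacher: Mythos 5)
Your proof is correct and follows essentially the same route as the paper: for part (ii) you use exactly the paper's construction ($c=f(\bot)$ together with $g$ agreeing with $f$ away from $\bot$ and sent to $\bot$ at $\bot$), merely supplying the linearity verification that the paper leaves implicit. For part (i) you spell out the standard Galois-connection argument (with $f^\sharp(y)=\bigvee\set{x\mid f(x)\leq y}$) that the paper delegates to a citation of Davey and Priestley, and your case analysis correctly handles the empty join throughout.
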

\begin{proof}
  The first item is~\cite[Proposition~7.34]{DaveyPriestley90}.
  For the direct implication in the second one, let $f$ be an affine function. Set
  $a=f(\bot)$, and define
  $f'(x)=\begin{cases}
    \bot\text{ if }x=\bot\\
    f(x)\text{ otherwise }
  \end{cases}$. \\
  We have $f(x)=a+f'(x)$ for all $x$, and $f'$ is linear. The converse implication is immediate.
\end{proof}

We obtain a symmetrical version of \cref{prop:clo:iter:l} by restricting to affine functions.
\begin{prop}
  \label{prop:clo:iter:r}
  Let $s$ be a monotone function, $f$ an affine function, and $c$ a closure.
  We have:
  \begin{align*}
    fs\leq cf \Rightarrow fs^\star\leq cf
  \end{align*}
\end{prop}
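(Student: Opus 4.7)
The plan is to move to upper adjoints, in the style of the proof of \cref{prop:clo:iter:l}, but with a twist: since $f$ is affine and not necessarily linear, I first decompose it via \cref{lem:aff:lin} as $f = a + f'$, where $a$ is the constant function with value $f(\bot)$ and $f'$ is linear with upper adjoint $f'^\sharp$ (the existence of which is guaranteed by the same lemma). Affineness then gives $fs^\star = a + f's^\star$, and since $a \leq f \leq cf$ (pointwise, using $1 \leq c$), the goal $fs^\star \leq cf$ reduces to $f's^\star \leq cf$; equivalently, via the adjoint, to $s^\star \leq f'^\sharp cf$.

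I would then apply the induction principle \eqref{cind:fun} with $g = 1$ and $h = f'^\sharp cf$, which splits the task into showing $1 \leq f'^\sharp cf$ and $s\,f'^\sharp cf \leq f'^\sharp cf$. The first, via the adjoint, says $f' \leq cf$, which holds since $f' \leq f \leq cf$. The second, via the adjoint, becomes $f's\,f'^\sharp cf \leq cf$. From the assumption $fs \leq cf$ and $f' \leq f$ we get $f's \leq cf$; applying this pointwise leaves us needing $c\bigl(f(f'^\sharp(cf(x)))\bigr) \leq cf(x)$. The key computation is $ff'^\sharp(y) = a + f'f'^\sharp(y) \leq a + y$, using \eqref{eq:adj:unit}; instantiated at $y = cf(x)$, the constant $a$ is absorbed because $a \leq cf(x)$, so $ff'^\sharp(cf(x)) \leq cf(x)$, and a final monotone application of $c$ together with $cc \leq c$ finishes the job.

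The main obstacle is precisely handling the constant part $a$: for linear $f$ the whole argument would be a direct adjoint transformation mirroring the proof of \cref{prop:clo:iter:l}, but the decomposition $f = a + f'$ forces two separate appeals to $a \leq cf$ (once at the top level, once inside the computation $ff'^\sharp$) in order to keep the bookkeeping clean. This affineness hypothesis is exactly what lets the argument go through without a transfinite iteration of $s$, consistently with the stated purpose of this appendix.
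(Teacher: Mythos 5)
Your proof is correct and takes essentially the same route as the paper: decompose $f = a + f'$ via \cref{lem:aff:lin}, then run the adjoint argument of \cref{prop:clo:iter:l} for the linear part, establishing $1 + s\,h \leq h$ for $h = f'^\sharp c f$ and concluding by \eqref{cind:fun} and \eqref{eq:adj}. The only difference is organizational: the paper packages the absorption of the constant $a$ into an auxiliary closure $c'(x) = c(a+x)$ and invokes the linear case as a black box, whereas you absorb $a$ by hand at the two points where it appears --- and since $cf = c'f'$, your induction target $f'^\sharp c f$ coincides with the one the paper's proof implicitly uses.
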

\begin{proof}
  We first prove the statement when $f$ is linear.
  In such a case $f$ has an upper adjoint $f^\sharp$, and we have:
  \begin{align}
    \tag{by assumption}
    fsf^\sharp cf
    &\leq cff^\sharp cf \\
    \tag{by~\eqref{eq:adj:unit}}
    &\leq ccf \\
    \tag{$c$ a closure}
    &\leq cf
  \end{align}
  By~\eqref{eq:adj}, we deduce $sf^\sharp cf\leq f^\sharp cf$.
  We also have $1\leq f^\sharp f \leq f^\sharp cf$ by~\eqref{eq:adj:unit} and $1\leq c$.
  Therefore we get $s^\star\leq f^\sharp cf$ by~\eqref{cind:fun}, which means $fs^\star\leq cf$ by~\eqref{eq:adj}.

  Now, if $f$ is only affine, then $f=x\mapsto a+f'(x)$ for some constant $a$ and some linear function $f'$,  by \cref{lem:aff:lin}.
  The assumption $fs\leq cf$ at bottom gives us $a\leq c(a)$.
  Define $c'(x)=c(a +x)$. This function is a closure: we have $1\leq c\leq c'$ and
  $c'(c'(x))=c(a+c(a+x))\leq c(c(a)+c(a+x))\leq c(c(a+x))\leq c(a+x)=c'(x)$.
  Also observe that $cf=c'f'$.
  We have $f's\leq fs\leq cf=c'f'$ so that we can apply the linear version of the lemma to $s$, $f'$ and $c'$, to deduce $f's^\star\leq c'f'=cf$. Combined with $a\leq c(a)$, it follows that $fs^\star\leq cf$.
\end{proof}

\begin{rem}
  \label{rem:hetero}
  \cref{prop:clo:iter:l,prop:clo:iter:r} easily generalise to the case where $s$ and $c$ are functions on two distinct lattices $X$ and $Y$, and $f$ is a function between those lattices (from $Y$ to $X$ for \cref{prop:clo:iter:l}, and from $X$ to $Y$ for \cref{prop:clo:iter:r}.
  We actually use this more general form of the latter proposition in the proof of \cref{prop:hred}.
\end{rem}

\begin{lem}
  \label{lem:clo:add}
  If $s$ is affine, so is $s^\star$.
\end{lem}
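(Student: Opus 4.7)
The plan is to prove both inclusions of $s^\star\paren{\bigvee_{i\in I} x_i} = \bigvee_{i\in I} s^\star(x_i)$ directly, for an arbitrary non-empty index set $I$. One direction is immediate from monotonicity of $s^\star$ (established as part of \cref{prop:least:closure}): since $x_i \leq \bigvee_{j\in I} x_j$ for each $i$, we get $s^\star(x_i) \leq s^\star\paren{\bigvee_{j\in I} x_j}$, and taking joins on the left yields the inequality.

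For the reverse inequality, I would set $y \eqdef \bigvee_{i\in I} s^\star(x_i)$ and apply the induction principle \eqref{cind} to bound $s^\star\paren{\bigvee_i x_i}$ by $y$. This requires showing $\bigvee_i x_i + s(y) \leq y$. The first summand is handled by noting $x_i \leq s^\star(x_i) \leq y$ (using $1\leq s^\star$), so $\bigvee_i x_i \leq y$. The second summand is where the affineness assumption on $s$ is used crucially: since $I$ is non-empty, affineness of $s$ gives
\begin{align*}
  s(y) = s\paren{\bigvee_{i\in I} s^\star(x_i)} = \bigvee_{i\in I} s(s^\star(x_i))\,,
\end{align*}
and by \eqref{cpref} each $s(s^\star(x_i)) \leq s^\star(x_i) \leq y$, hence $s(y) \leq y$.

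The only delicate point is remembering that the non-emptiness of $I$ is precisely what allows us to pull $s$ out through the join in the computation of $s(y)$: this is why linearity is not required, but affineness (rather than mere monotonicity) is. The argument does not require any transfinite induction on the construction of $s^\star$: it is driven entirely by the universal property \eqref{cind} together with the one-step post-fixpoint property \eqref{cpref}, so the proof should be only a few lines.
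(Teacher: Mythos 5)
Your proof is correct, and it takes a genuinely different---and more elementary---route than the paper's. The paper argues in two stages: it first treats linear $s$ via the adjoint characterisation of \cref{lem:aff:lin}, showing that $s^\star$ admits as upper adjoint the greatest coclosure $t^\circ$ below the upper adjoint $t$ of $s$; it then handles the affine case by decomposing $s=a+f$ into a constant plus a linear part and proving the explicit identity $s^\star=f^\star(a)+f^\star$. You instead verify directly that $y=\bigvee_{i\in I}s^\star(x_i)$ is a pre-fixpoint of $\lambda z.\,\bigvee_{i}x_i+s(z)$ and conclude by the induction principle \eqref{cind}, the converse inclusion being monotonicity; all the steps check out, since $x_i\leq s^\star(x_i)$ and $s(s^\star(x_i))\leq s^\star(x_i)$ are both instances of \eqref{cpref}, and you correctly flag the non-emptiness of $I$ as the single point where affineness (rather than mere monotonicity) enters. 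What your argument buys is brevity: it avoids the coclosure construction and the decomposition lemma altogether. What the paper's argument buys is strictly more information: its first stage shows that $s^\star$ is \emph{linear} (with an explicit upper adjoint) whenever $s$ is linear---a strengthening that \cref{lem:hyp:add} actually uses when it concludes that $\cl H$ is linear while citing this lemma---and the formula $s^\star=f^\star(a)+f^\star$ is reusable in its own right. If you wanted your proof to support \cref{lem:hyp:add} as well, a one-line supplement suffices: for linear $s$ we have $s(\bot)=\bot$, so $\bot$ is a pre-fixpoint and \eqref{cind} gives $s^\star(\bot)\leq\bot$, i.e., $s^\star$ also preserves the empty join.
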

\begin{proof}
  As above, we first prove statement for linear functions $s$, using the characterisation of linear functions as those admitting an upper adjoint (\cref{lem:aff:lin}).
  
  When given a monotone function $g$, let $g^\circ(y)$ be the greatest \mbox{(post-)}fixpoint of $\lambda x.y\wedge g(x)$. This construction is dual to the construction of the least closure above a monotone function, and yields the ``largest coclosure'' below $g$.

  Assume that $s$ admits $t$ as upper adjoint; we show that $s^\star$ admits $t^\circ$ as upper adjoint.
  Suppose $s^\star(x) \leq y$. Since $s(s^\star(x))\leq s^\star(x)$, the adjoint property~\eqref{eq:adj} gives $s^\star(x)\leq t(s^\star(x))$. Therefore, $s^\star(x)$ is a post-fixpoint of $\lambda x.y\wedge t(x)$, so that $s^\star(x)\leq t^\circ(y)$. Since $x\leq s^\star(x)$, we deduce $x\leq t^\circ(y)$ by transitivity.
  The converse implication ($\forall x,y,~x\leq t^\circ(y)\Rightarrow s^\star(x)\leq y$) holds by duality.

  Now, if $s$ is only affine, then $s=a+f$ for some constant $a$ and some linear function $f$,  by \cref{lem:aff:lin}. We prove that $s^\star$ is affine by showing that $s^\star=f^\star(a)+f^\star$.
  \begin{itemize}
  \item We use~\eqref{cind:fun} to show $s^\star\leq f^\star(a)+f^\star$.
    Indeed, we have $1\leq f^\star\leq f^\star(a)+f^\star$, and $s(f^\star(a)+f^\star)=a+f(f^\star(a))+ff^\star\leq f^\star(a)+f^\star$.
  \item For the converse inequation, 
    we have $f^\star(a)\leq s^\star(\bot)$ by~\eqref{cind}, since $a\leq s(\bot)\leq s^\star(\bot)$ and $f(s^\star(\bot))\leq s(s^\star(\bot))\leq s^\star(\bot)$; and $f^\star\leq s^\star$ follows from $f\leq s$.
    \qedhere
  \end{itemize}
\end{proof}

Given a function $f$, we write $f^=$ for the function $f+1$ (i.e., $f^=(x)=f(x)+x$).
\begin{lem}
  \label{lem:clo:cup1}
  For every monotone function $s$, we have $(s^=)^\star=s^\star$.
\end{lem}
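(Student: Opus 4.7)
The plan is to establish both inequalities $s^\star \leq (s^=)^\star$ and $(s^=)^\star \leq s^\star$ using only the universal property of least closures (\cref{prop:least:closure}) and the monotonicity of the $(-)^\star$ construction (\cref{lem:clot:mon}).

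For the first inequality, I would observe that $s \leq s + 1 = s^=$ pointwise, so \cref{lem:clot:mon} immediately yields $s^\star \leq (s^=)^\star$.

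For the converse inequality, the key observation is that $s^\star$ is already a closure above $s^=$: indeed, by \cref{prop:least:closure}, $s^\star$ is a closure above $s$, which means in particular that $s \leq s^\star$ and $1 \leq s^\star$, hence $s^= = s + 1 \leq s^\star$. Since $(s^=)^\star$ is the least closure above $s^=$ (again by \cref{prop:least:closure}), we conclude $(s^=)^\star \leq s^\star$.

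There is no significant obstacle here: the proof is essentially a one-liner once one notes that adding $1$ to a function does not change which closures lie above it, because every closure already dominates $1$.
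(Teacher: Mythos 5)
Your proof is correct, but it takes a different (and more elementary) route than the paper. The paper dispatches this lemma in one line as an instance of \cref{prop:gsos:dl:2:gen}: taking the pair $(1,s)$ there, the commutation hypothesis $1^\star s^\star \leq s^\star 1^\star$ is trivial since $1^\star=1$, and the conclusion $(s+1)^\star = s^\star 1^\star = s^\star$ is exactly the statement. You instead argue directly from the universal property: $s\leq s^=$ gives $s^\star\leq(s^=)^\star$ by \cref{lem:clot:mon}, and since $s^\star$ is a closure above $s$ (hence above $1$, hence above $s^==s+1$), leastness of $(s^=)^\star$ from \cref{prop:least:closure} gives the converse. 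Both arguments are sound and ultimately rest on the same primitives (\cref{prop:least:closure} and \cref{lem:clot:mon}, which also underlie the proof of \cref{prop:gsos:dl:2:gen}); the paper's version buys brevity by reusing the already-established composition result, while yours is self-contained and makes explicit the pleasant observation that adjoining $1$ to a function changes nothing, because every closure already dominates $1$.
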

\begin{proof}
  Consequence of Proposition~\ref{prop:gsos:dl:2:gen}, since $1^\star=1$.
\end{proof}

\begin{prop}
  \label{prop:clo:iter}
  Let $s$ be affine and $t$ be monotone.
  \begin{enumerate}
  \item If $st\leq t^\star s^=$ then $s^\star t^\star\leq t^\star s^\star$.
  \item If $st\leq t^= s^\star$ then $s^\star t^\star\leq t^\star s^\star$.
  \end{enumerate}
\end{prop}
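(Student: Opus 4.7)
The plan is to reduce both parts to applications of \cref{prop:clo:iter:l} and \cref{prop:clo:iter:r} in opposite orders: each of those two lemmas absorbs a star on one side of an inequation, so by combining them we can absorb stars on both sides. The two parts are structurally dual, corresponding to the two ways one may place the "small" cushion ($s^=$ or $t^=$) versus the "big" closure ($t^\star$ or $s^\star$) in the hypothesis.

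For part~1, starting from $st\leq t^\star s^=$, I would first upgrade to $s^= t\leq t^\star s^=$: the extra summand $t$ is controlled by $t\leq t^\star\cdot 1\leq t^\star s^=$ using $1\leq s^=$. Since $s$ is affine, so is $s^=$ (by \cref{lem:aff:lin}, as $s^=$ is the join of $s$ and the linear function $1$), hence \cref{prop:clo:iter:r} applies with $f=s^=$, inner function $t$, and closure $c=t^\star$, yielding $s^= t^\star\leq t^\star s^=$. Extracting the $s$-summand and weakening $s^=\leq s^\star$ (both on the right and when bounding the leftover $t^\star$ using $1\leq s^\star$) gives the key intermediate $st^\star\leq t^\star s^\star$. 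Finally, \cref{prop:clo:iter:l} applied to the monotone function $f=t^\star$ with closure $c=s^\star$ lifts this to $s^\star t^\star\leq t^\star s^\star$.

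For part~2, starting from $st\leq t^= s^\star$, I would run the dual argument: first upgrade to $st^=\leq t^= s^\star$ using $1\leq t^=$ to absorb the extra $s$ summand ($s\leq s^\star\leq t^= s^\star$). Then \cref{prop:clo:iter:l} with monotone $f=t^=$ and closure $c=s^\star$ yields $s^\star t^=\leq t^= s^\star$. Extracting the $t$-summand and weakening $t^=\leq t^\star$ (together with $1\leq t^\star$ to absorb the leftover $s^\star$) gives $s^\star t\leq t^\star s^\star$. I conclude with \cref{prop:clo:iter:r} applied to $f=s^\star$, which is affine by \cref{lem:clo:add} precisely because $s$ is assumed affine, inner function $t$, and closure $c=t^\star$: this produces $s^\star t^\star\leq t^\star s^\star$.

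The main subtlety, and the reason this is not a one-line application of the earlier lemmas, is that the "$=$" on the wrong side of the hypothesis prevents a direct application of \cref{prop:clo:iter:l}/\cref{prop:clo:iter:r}: one has to first saturate with $s^=$ (resp. $t^=$) to absorb the asymmetry, and only then invoke the appropriate iteration lemma. The affineness hypothesis on $s$ is used in an essential way in both parts (once directly through $s^=$, once through \cref{lem:clo:add} to ensure $s^\star$ is affine so that \cref{prop:clo:iter:r} applies); it is what allows constant summands (e.g.\ hypotheses of the shape $e\leq 0$ in the intended applications) to be handled.
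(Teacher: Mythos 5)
Your proof is correct and follows essentially the same route as the paper: both parts rest on \cref{prop:clo:iter:r} and \cref{prop:clo:iter:l} applied in the two dual orders, with the $\cdot^=$ asymmetry absorbed by saturating the hypothesis to $s^=t\leq t^\star s^=$ (resp.\ $st^=\leq t^= s^\star$) exactly as in the paper's reduction step; your only deviation is organizational, since by extracting the $s$- (resp.\ $t$-) summand midway you apply the second iteration lemma to the unprimed function and thereby avoid the paper's detour through an auxiliary unprimed statement and \cref{lem:clo:cup1} ($(f^=)^\star=f^\star$). One bookkeeping point: in part~2 the identity $st^==st+s$ behind ``absorb the extra $s$ summand'' is itself a use of affineness of $s$ (monotonicity alone gives only the reverse inequality), so affineness enters part~2 twice --- there and via \cref{lem:clo:add} --- not once as your closing paragraph suggests.
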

\begin{proof}
  We first prove the statements where we remove the $\cdot^=$ in the assumptions:
  \begin{enumerate}
  \item assuming $st\leq t^\star s$, we deduce $st^\star\leq t^\star s$ by Proposition~\ref{prop:clo:iter:r} ($s$ being affine, and $t^\star$ being a closure), whence $st^\star\leq t^\star s^\star$ since $s\leq s^\star$, and finally $s^\star t^\star\leq t^\star s^\star$ by Proposition~\ref{prop:clo:iter:l} ($s^\star$ being a closure).
  \item assuming $st\leq t s^\star$, we deduce $s^\star t\leq t s^\star$ by Proposition~\ref{prop:clo:iter:l} ($s^\star$ being a closure), whence $s^\star t\leq t^\star s^\star$ since $t\leq t^\star$, and finally $s^\star t^\star\leq t^\star s^\star$ by Proposition~\ref{prop:clo:iter:r} ($s^\star$ being affine by Lemma~\ref{lem:clo:add}, and $t^\star$ being a closure).
  \end{enumerate}
  Then we add the $\cdot^=$ back in the assumptions:
  \begin{enumerate}
  \item assume we have $st\leq t^\star s^=$; $s^=$ is affine, and we have
    $s^=t=st+t\leq t^\star s^=+t=t^\star s^=$. We can thus apply the first result above to $s^=$ and $t$, so as to obtain $(s^=)^\star t^\star\leq t^\star (s^=)^\star$.
  \item assume we have $st\leq t^= s^\star$; we have
    $st^==st+s$ since $s$ is affine, and thus $st^==st+s\leq t^= s^\star+s=t^= s^\star$. We can thus apply the second result above to $s$ and $t^=$, so as to obtain $s^\star (t^=)^\star\leq (t^=)^\star s^\star$.
  \end{enumerate}
  In both cases, we conclude by Lemma~\ref{lem:clo:cup1}.
\end{proof}

\begin{cor}
  \label{cor:gsos:gen}
  Let $s_1,\dots,s_{n-1}$ be affine functions, and $s_n$ a monotone function.
  Write $c_j$ for $s_j^\star$, $s_{<j}$ for $\bigvee_{1\leq i<j}s_i$, and $c_{<j}$ for $s_{<j}^\star$, like in Proposition~\ref{prop:gsos:dl:gen}.

  \noindent
  If for all $1<j\leq n$ we have either
  $\begin{cases}
  (1)~\forall i<j,~s_i\,s_j\leq c_j\,s_{<j}^=\,, \text{ or}\\
  (2)~\forall i<j,~s_i\,s_j\leq s_j^=\,c_{<j}\,.
  \end{cases}$
  then $c_{<n+1}=c_n\dots c_1$.
\end{cor}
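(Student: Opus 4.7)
The plan is to reduce the corollary to Proposition~\ref{prop:gsos:dl:gen} by establishing its hypothesis---that $c_{<j}\,c_j \leq c_j\,c_{<j}$ for every $1<j\leq n$---pointwise in $j$, using Proposition~\ref{prop:clo:iter}.

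First I would verify the structural assumptions needed to apply Proposition~\ref{prop:clo:iter} with $s = s_{<j}$ and $t = s_j$. Since joins of affine functions are affine and $s_{<j}$ is the join of the $s_i$ with $i<j\leq n$, each of which is affine by hypothesis (as $i<n$), the function $s_{<j}$ is affine. The function $s_j$ is monotone, either by being affine (if $j<n$) or by hypothesis (if $j=n$). Next I would compute, using that joins in the function lattice are pointwise, that
\begin{align*}
s_{<j}\,s_j \;=\; \bigl(\textstyle\bigvee_{1\leq i<j} s_i\bigr)\,s_j \;=\; \bigvee_{1\leq i<j} s_i\,s_j\,.
\end{align*}

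With this identity in hand, the per-pair hypothesis of the corollary lifts to a joint hypothesis on $s_{<j}\,s_j$: in case (1), each $s_i s_j \leq c_j\,s_{<j}^=$ yields $s_{<j}\,s_j \leq c_j\,s_{<j}^=$; in case (2), each $s_i s_j \leq s_j^=\,c_{<j}$ yields $s_{<j}\,s_j \leq s_j^=\,c_{<j}$. Note that we rely on the fact that both upper bounds do not depend on $i$, so that the join over $i$ on the left does not change the right-hand side. Applying the corresponding clause of Proposition~\ref{prop:clo:iter} then gives, in either case,
\begin{align*}
c_{<j}\,c_j \;=\; s_{<j}^\star\,s_j^\star \;\leq\; s_j^\star\,s_{<j}^\star \;=\; c_j\,c_{<j}\,.
\end{align*}

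Since this commutation holds for every $1<j\leq n$, Proposition~\ref{prop:gsos:dl:gen} applies and yields the desired $c_{<n+1} = c_n\cdots c_1$. I expect no real obstacle here; the only care needed is (i) checking that $s_{<j}$ is indeed affine so that Proposition~\ref{prop:clo:iter} is applicable (the case $j=n$ is fine because then $s_{<n}$ is a join of the affine $s_1,\dots,s_{n-1}$, while $s_n$ is only required to be monotone, as Proposition~\ref{prop:clo:iter} allows), and (ii) observing that the two alternatives (1) and (2) may be chosen independently for each column $j$, so the per-$j$ argument above suffices without any global coherence condition.
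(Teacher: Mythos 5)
Your proposal is correct and follows essentially the same route as the paper's proof: sum the per-pair hypotheses over $i<j$ (using that $s_{<j}$ is affine and that composition on the right distributes over pointwise joins) to get $s_{<j}\,s_j\leq c_j\,s_{<j}^=$ or $s_{<j}\,s_j\leq s_j^=\,c_{<j}$, apply \cref{prop:clo:iter} to obtain $c_{<j}\,c_j\leq c_j\,c_{<j}$, and conclude by \cref{prop:gsos:dl:gen}. The extra care you take—checking affinity of $s_{<j}$, noting the bounds are independent of $i$, and observing that the choice between (1) and (2) may vary per column—matches what the paper leaves implicit.
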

\begin{proof}
  For all $1<j\leq n$, we have that $s_{<j}$ is affine,
  and we deduce by summing the hypotheses that either $s_{<j}s_j\leq c_j\,s_{<j}^=$ or $s_{<j}s_j\leq s_j^=\,c_{<j}$; in both cases, Proposition~\ref{prop:clo:iter} applies so that we deduce
  $c_{<j}c_j\leq c_jc_{<j}$. We conclude with Proposition~\ref{prop:gsos:dl:gen}.
\end{proof}

\subsection{Contextual functions}
\label{app:con}

Let us assume that $X$ actually is a \emph{quantale}: a complete lattice which comes with a monoid $\tuple{X,\cdot,1}$ whose multiplication $(\cdot)$ distributes over all joins.

\begin{defi}
  \label{def:con:gen}
  A monotone function $f$ on $X$ is \emph{contextual} if for all $x,y,z\in X$, we have $x\cdot f(y)\cdot z\leq f(x\cdot y\cdot z)$.
\end{defi}

\begin{lem}
  \label{lem:con:clo:gen}
  Let $c$ be a closure. The function $c$ is contextual iff for all $x,y\in X$, we have $c(x)\cdot c(y)\leq c(x\cdot y)$.
\end{lem}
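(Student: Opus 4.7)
The statement is a routine biconditional following directly from the closure axioms and the definition of contextuality, so the plan has two symmetric short directions with no real obstacle.

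For the forward direction, assume $c$ is contextual and fix $x,y \in X$. I would apply contextuality once with the triple $(c(x), y, 1)$, yielding $c(x)\cdot c(y) = c(x)\cdot c(y)\cdot 1 \leq c(c(x)\cdot y\cdot 1) = c(c(x)\cdot y)$. Next, apply contextuality again with $(1, x, y)$ to get $c(x)\cdot y \leq c(x\cdot y)$, so by monotonicity of $c$ we obtain $c(c(x)\cdot y) \leq c(c(x\cdot y))$, and the closure property $cc \leq c$ reduces this to $c(x\cdot y)$. Chaining, $c(x)\cdot c(y) \leq c(x\cdot y)$.

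For the reverse direction, assume $c(x)\cdot c(y)\leq c(x\cdot y)$ for all $x,y$. Fix $x,y,z$. Using $1\leq c$, we have $x \leq c(x)$ and $z \leq c(z)$, so monotonicity of the quantale multiplication (recall that $\cdot$ distributes over all joins in a quantale, hence is monotone) gives
\[ x\cdot c(y)\cdot z \;\leq\; c(x)\cdot c(y)\cdot c(z). \]
Now apply the hypothesis twice: first to the leftmost pair to get $c(x)\cdot c(y) \leq c(x\cdot y)$, yielding $c(x\cdot y)\cdot c(z)$; then again to obtain $c(x\cdot y)\cdot c(z) \leq c(x\cdot y\cdot z)$. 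Hence $x\cdot c(y)\cdot z \leq c(x\cdot y\cdot z)$, i.e., $c$ is contextual.

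Both directions use only monotonicity of $\cdot$ (from quantale distributivity), the closure axioms $1\leq c$ and $cc\leq c$, and the hypothesis of the relevant direction; no transfinite machinery or further lattice-theoretic lemmas from the appendix are needed.
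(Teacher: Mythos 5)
Your proof is correct and is essentially the paper's own argument: the forward direction is the paper's chain $c(x)\cdot c(y)\leq c(c(x)\cdot y)\leq c(c(x\cdot y))=c(x\cdot y)$ (contextuality twice, then $cc\leq c$), and the reverse direction is exactly the paper's $x\cdot c(y)\cdot z\leq c(x)\cdot c(y)\cdot c(z)\leq c(x\cdot y\cdot z)$ via extensivity twice and the assumption twice. You have merely spelled out the instantiations and the monotonicity of $\cdot$ explicitly, which the paper leaves implicit.
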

\begin{proof}
  If $c$ is a contextual closure, then $c(x)\cdot c(y)\leq c(c(x)\cdot y)\leq c(c(x\cdot y))=c(x\cdot y)$ using contextuality twice.
  Conversely, we have $x\cdot c(y)\cdot z\leq c(x)\cdot c(y)\cdot c(z)\leq c(x\cdot y\cdot z)$ using extensivity (i.e., $1 \leq c$) twice, and then twice the assumption.
\end{proof}

\begin{fact}
  Contextual functions are closed under composition and arbitrary joins, and contain the identity function.
\end{fact}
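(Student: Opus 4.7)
The plan is to verify each of the three closure properties directly from \cref{def:con:gen}, noting that a contextual function is required to be monotone, so in each case I must check both monotonicity (which is standard) and the contextuality inequation $x\cdot f(y)\cdot z\leq f(x\cdot y\cdot z)$.

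For the identity function, contextuality is immediate since $x\cdot \id(y)\cdot z = x\cdot y\cdot z = \id(x\cdot y\cdot z)$. For composition, given contextual functions $f$ and $g$, I would compute $x\cdot (fg)(y)\cdot z = x\cdot f(g(y))\cdot z$ and chain two applications: contextuality of $f$ gives $x\cdot f(g(y))\cdot z \leq f(x\cdot g(y)\cdot z)$, then monotonicity of $f$ combined with contextuality of $g$ yields $f(x\cdot g(y)\cdot z)\leq f(g(x\cdot y\cdot z)) = (fg)(x\cdot y\cdot z)$.

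For arbitrary joins, let $\{f_i\}_{i\in I}$ be a family of contextual functions and set $f = \bigvee_{i\in I} f_i$. The key ingredient is that $X$ is a quantale, so multiplication distributes over arbitrary joins on both sides:
\begin{align*}
x\cdot f(y)\cdot z &= x\cdot \Bigl(\bigvee_{i\in I} f_i(y)\Bigr)\cdot z
= \bigvee_{i\in I}\bigl(x\cdot f_i(y)\cdot z\bigr) \\
&\leq \bigvee_{i\in I} f_i(x\cdot y\cdot z)
= f(x\cdot y\cdot z)\,.
\end{align*}
Monotonicity in all three cases is standard: identity is monotone, composition preserves monotonicity, and a pointwise join of monotone functions is monotone.

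There is no real obstacle here — the whole statement is a direct consequence of the definition and the quantale axiom that multiplication distributes over arbitrary joins. The only point worth highlighting is that the distributivity of $(\cdot)$ over joins (used crucially in the join case) is what makes the quantale setting the right level of generality; in a mere complete lattice with a monoid this closure property could fail.
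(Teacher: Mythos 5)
Your proof is correct and matches the paper's intent: the paper states this as a \emph{Fact} with no proof, treating it as a routine verification, and your direct check of each clause --- including the correct identification of the quantale axiom (distributivity of $\cdot$ over arbitrary joins) as the key ingredient in the join case --- is exactly the argument being left implicit. Nothing is missing; the empty join is even covered, since distributivity over the empty join gives $x\cdot\bot\cdot z=\bot$, so the bottom function is contextual as well.
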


\begin{lem}
  \label{lem:clo:con}
  If $s$ is contextual then so is $s^\star$.
\end{lem}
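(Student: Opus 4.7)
The plan is to reduce this to \cref{prop:clo:iter:r}, which already encapsulates the delicate induction needed to commute $s^\star$ past a well-behaved function. Concretely, for fixed $x, z \in X$, consider the auxiliary function $f_{x,z} \colon X \to X$ defined by $f_{x,z}(y) = x \cdot y \cdot z$. Contextuality of $s^\star$ is the pointwise statement $f_{x,z} \circ s^\star \leq s^\star \circ f_{x,z}$, so it suffices to establish this inequality of functions for arbitrary $x, z$.

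The first observation is that $f_{x,z}$ is linear, and hence affine: this is a direct consequence of the fact that $X$ is a quantale, so multiplication $(\cdot)$ distributes over all joins in each argument (including the empty join, which forces $f_{x,z}(\bot) = \bot$). This puts us in the situation of \cref{prop:clo:iter:r}, provided we can supply its hypothesis, namely $f_{x,z} \circ s \leq s^\star \circ f_{x,z}$.

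The second observation is that the hypothesis of \cref{prop:clo:iter:r} follows from contextuality of $s$: by definition, $x \cdot s(y) \cdot z \leq s(x \cdot y \cdot z)$ for all $y$, which reads as $f_{x,z} \circ s \leq s \circ f_{x,z}$ at the level of functions; composing on the right with the inequality $s \leq s^\star$ (which holds since $s^\star$ is a closure above $s$, by \cref{prop:least:closure}) yields $f_{x,z} \circ s \leq s^\star \circ f_{x,z}$, as required.

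Applying \cref{prop:clo:iter:r} with $f = f_{x,z}$ (affine), the given $s$ (monotone), and $c = s^\star$ (a closure), we conclude $f_{x,z} \circ s^\star \leq s^\star \circ f_{x,z}$, i.e., $x \cdot s^\star(y) \cdot z \leq s^\star(x \cdot y \cdot z)$ for all $y$. There is no serious obstacle here: the only non-trivial ingredient is the quantale structure needed to see that $f_{x,z}$ is affine, and the heavy lifting (the induction on $s^\star$) is packaged inside \cref{prop:clo:iter:r}.
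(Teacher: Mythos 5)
Your proof is correct and follows essentially the same route as the paper: it introduces the linear maps $f_{x,z}\colon y\mapsto x\cdot y\cdot z$, rephrases contextuality as the commutation $f_{x,z}\,s\leq s\,f_{x,z}$, and invokes \cref{prop:clo:iter:r} with $c=s^\star$ to propagate it to $s^\star$. You merely make explicit two steps the paper leaves implicit (weakening $f_{x,z}\,s\leq s\,f_{x,z}$ to $f_{x,z}\,s\leq s^\star f_{x,z}$ via $s\leq s^\star$, and the quantale distributivity behind linearity of $f_{x,z}$), which is fine.
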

\begin{proof}
  For $x,z\in X$, let $f_{x,z}: y\mapsto x\cdot y \cdot z$.
  A function $s$ is contextual iff for all $x,z$, $f_{x,z}s\leq sf_{x,z}$.
  The functions $f_{x,z}$ are linear, so that we may apply Proposition~\ref{prop:clo:iter:r} to deduce that when $s$ is contextual, we have $f_{x,z}s^\star\leq s^\star f_{x,z}$ for all $x,z$, i.e., $s^\star$ is contextual.
\end{proof}

Now consider the quantale of languages.

\begin{lem}
  \label{lem:con:iff}
  A monotone function $f$ on languages is contextual iff for all words $l,r$ and languages $K$, we have $l\cdot f(K)\cdot r\subseteq f(l\cdot K\cdot r)$.
\end{lem}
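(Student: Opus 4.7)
The plan is to establish the two directions of the equivalence separately, both by direct calculation from \cref{def:con:gen} and monotonicity of $f$.

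The forward direction is immediate: if $f$ is contextual in the sense of \cref{def:con:gen}, then for any words $l,r$ and language $K$, we can view $l$ and $r$ as singleton languages $\{l\}$ and $\{r\}$, and applying the definition with $x=\{l\}$, $y=K$, $z=\{r\}$ gives exactly $l\cdot f(K)\cdot r\subseteq f(l\cdot K\cdot r)$.

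For the backward direction, I would start from the assumption that $l\cdot f(K)\cdot r\subseteq f(l\cdot K\cdot r)$ for every pair of words $l,r$ and every language $K$, and aim to show $x\cdot f(y)\cdot z\subseteq f(x\cdot y\cdot z)$ for arbitrary languages $x,y,z$. The key observation is that concatenation of languages distributes over unions, so
\[
x\cdot f(y)\cdot z \;=\; \bigcup_{l\in x,\,r\in z} l\cdot f(y)\cdot r.
\]
Each term on the right is contained in $f(l\cdot y\cdot r)$ by the word-level hypothesis, and in turn $f(l\cdot y\cdot r)\subseteq f(x\cdot y\cdot z)$ by monotonicity of $f$, since $l\cdot y\cdot r\subseteq x\cdot y\cdot z$. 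Taking the union of these containments yields the desired inclusion.

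This proof is short and essentially routine; there is no real obstacle, since the equivalence is just the standard observation that when a property is preserved by arbitrary unions on the outer context (thanks to the distributivity of concatenation over unions in the language quantale) and $f$ is monotone, it suffices to check that property on the generators of $\pow(\Sigma^*)$, i.e., on singletons.
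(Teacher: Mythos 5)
Your proof is correct and follows essentially the same route as the paper's: the forward direction via singleton languages, and the converse by decomposing into words and using monotonicity of $f$. The only cosmetic difference is that you phrase the converse via distributivity of concatenation over unions, whereas the paper argues element-wise on a word $u = lxr \in L\cdot f(K)\cdot R$; these are the same argument.
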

\begin{proof}
  The left-to-right implication is trivial: take singleton languages $\set{l}$ and $\set{r}$ for $x$ and $z$.
  For the converse implication, suppose that $u\in L\cdot f(K)\cdot R$: $u=lxr$ for words $l,x,r$ respectively in $L,f(K)$ and $R$. We have in particular $lxr\in l\cdot f(K)\cdot r$. We deduce $lxr\in f(l\cdot K\cdot r)$ by assumption, whence $lxr\in f(L\cdot K\cdot R)$ by monotonicity.
\end{proof}

\begin{lem}
  \label{lem:hyp:con}
  Let $H$ be a set of hypotheses. The functions $H$ and $H^\star$ are contextual.
\end{lem}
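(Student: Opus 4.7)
The plan is to first show directly that the one-step function $H$ is contextual, and then invoke Lemma~\ref{lem:clo:con} to transfer this property to $H^\star = \cl H$.

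For the one-step function $H$, monotonicity is immediate from the definition: enlarging $L$ can only enlarge the collection $\set{u\lang e v\mid e\leq f\in H,~u\lang f v\subseteq L}$. For contextuality, by \cref{lem:con:iff} it suffices to show $l\cdot H(K)\cdot r\subseteq H(l\cdot K\cdot r)$ for all words $l,r$ and languages $K$. So take $w\in l\cdot H(K)\cdot r$; then $w=l\cdot x\cdot r$ for some $x\in H(K)$. By definition of $H(K)$, there exist $e\leq f\in H$, words $u,v$, and $y\in\lang e$ with $x=uyv$ and $u\lang f v\subseteq K$. Then $(lu)\lang f(vr)\subseteq l\cdot K\cdot r$, hence $(lu)\lang e(vr)\subseteq H(l\cdot K\cdot r)$, and $w=(lu)\cdot y\cdot (vr)$ lies in this set. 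This gives contextuality of $H$.

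For $H^\star$, since $H^\star=\cl H$ is by definition the least closure above the monotone function $H$ (cf.\ \cref{prop:least:closure} applied in the lattice $\pow(\Sigma^*)$), and we have just shown that $H$ is contextual, \cref{lem:clo:con} applies directly to yield that $H^\star$ is contextual as well.

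I expect no real obstacle here: the argument for $H$ is a purely mechanical unfolding of definitions, and the step from $H$ to $H^\star$ is delegated to the abstract lattice-theoretic machinery (\cref{lem:clo:con}), whose proof in turn reduces to \cref{prop:clo:iter:r} applied to the linear context-insertion maps $f_{l,r}\colon K\mapsto l\cdot K\cdot r$. The only thing worth spelling out is that $H$ really is monotone, which is needed to invoke the closure machinery; everything else is bookkeeping.
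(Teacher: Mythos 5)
Your proof is correct and takes essentially the same route as the paper: the paper's proof of this lemma consists precisely of reducing the claim to contextuality of the one-step function $H$ (which it declares ``easy'' and leaves implicit) and then invoking \cref{lem:clo:con} for $H^\star$. Your unfolding of the easy part via \cref{lem:con:iff} is accurate, as is your remark that \cref{lem:clo:con} rests on \cref{prop:clo:iter:r} applied to the linear maps $K\mapsto l\cdot K\cdot r$.
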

\begin{proof}
  It suffices to prove that $H$ is contextual by \cref{lem:clo:con}, which is easy.
\end{proof}

\begin{lem}
  \label{lem:con:clo:app}
  Let $H$ be a set of hypotheses.
  For all languages $L,K$ we have
  \begin{align*}
  \cl H(L)\cdot \cl H(K)\subseteq \cl H(L\cdot K)\,.
  \end{align*}
\end{lem}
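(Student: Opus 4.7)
The plan is to derive this as a direct corollary of the abstract machinery developed just above. Concretely, the strategy is to reduce the statement about concatenation of languages to the general quantale-level characterisation of contextual closures in \cref{lem:con:clo:gen}.

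First I would observe that the powerset of $\Sigma^*$, equipped with language concatenation and the singleton $\set 1$ as unit, is a quantale: concatenation distributes over arbitrary unions on both sides. Hence the abstract results of \cref{app:closures} on monotone/linear/contextual functions apply to this concrete setting, and \cref{def:con:gen} specialises to the notion used in the main text (\cref{def:con}).

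Second, I would invoke \cref{lem:hyp:con} to conclude that $\cl H = H^\star$ is a contextual closure on $\pow(\Sigma^*)$. Note that the proof of that lemma goes through the fact that the one-step function $H$ is contextual (which is immediate from its definition $H(L)=\bigcup\set{u\lang e v\mid e\leq f\in H,~u\lang f v\subseteq L}$: prefixing and suffixing a language by words commutes with the union and preserves the required side condition) and then appeals to \cref{lem:clo:con}, which itself is an instance of \cref{prop:clo:iter:r} for the linear functions $f_{l,r}\colon K\mapsto l\cdot K\cdot r$.

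Finally, the direct implication of \cref{lem:con:clo:gen} converts contextuality of the closure $\cl H$ into exactly the statement we need: for all $L,K$,
\begin{align*}
  \cl H(L)\cdot \cl H(K)\subseteq \cl H(\cl H(L)\cdot K)\subseteq \cl H(\cl H(L\cdot K))=\cl H(L\cdot K),
\end{align*}
using contextuality twice (once absorbing the right factor, once the left), followed by idempotence of the closure. There is essentially no obstacle here: all of the work was already carried out in the abstract appendix, and the only thing to check is that the concrete quantale instance fits the abstract framework, which it does trivially.
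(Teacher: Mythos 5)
Your proposal is correct and is precisely the paper's own argument: the paper proves \cref{lem:con:clo:app} by combining \cref{lem:hyp:con} (contextuality of $H$ and hence of $\cl H=H^\star$, via \cref{lem:clo:con} and \cref{prop:clo:iter:r}) with the direct implication of \cref{lem:con:clo:gen}, whose proof is exactly your two-step chain $\cl H(L)\cdot\cl H(K)\subseteq \cl H(\cl H(L)\cdot K)\subseteq\cl H(\cl H(L\cdot K))=\cl H(L\cdot K)$. Nothing is missing.
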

\begin{proof}
  By \cref{lem:hyp:con,lem:con:clo:gen}.
\end{proof}

\begin{lem}
  \label{lem:con:clo:str}
  Let $H$ be a set of hypotheses.
  For all languages $L$ we have
  \begin{align*}
    (\cl H(L))^*\subseteq\cl H(L^*)\,.
  \end{align*}
\end{lem}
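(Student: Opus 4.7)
The plan is to derive the result from contextuality of $\cl H$, specifically from \cref{lem:con:clo:app}, together with monotonicity and the definition of Kleene star on languages as $L^* = \bigcup_{n \in \NN} L^n$.

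First I would observe the base case: $\cl H(L)^0 = \set{1}$, and since $1 \in L^*$, we have $\set{1} \subseteq L^* \subseteq \cl H(L^*)$ by extensivity of the closure. Next, by induction on $n \geq 1$, I would show $\cl H(L)^n \subseteq \cl H(L^n)$. The base case $n=1$ is trivial, and the inductive step
\begin{align*}
\cl H(L)^{n+1} = \cl H(L) \cdot \cl H(L)^n \subseteq \cl H(L) \cdot \cl H(L^n) \subseteq \cl H(L \cdot L^n) = \cl H(L^{n+1})
\end{align*}
uses the induction hypothesis together with monotonicity of concatenation, and then \cref{lem:con:clo:app} applied to $L$ and $L^n$. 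By monotonicity of $\cl H$, each $\cl H(L^n)$ is contained in $\cl H(L^*)$.

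Taking the union over all $n \in \NN$ therefore yields
\begin{align*}
(\cl H(L))^* = \bigcup_{n \in \NN} \cl H(L)^n \subseteq \cl H(L^*)\,,
\end{align*}
which is the desired inclusion. There is no real obstacle here: the whole argument is a straightforward consequence of \cref{lem:con:clo:app} (which in turn comes from \cref{lem:hyp:con,lem:con:clo:gen}), and the only minor subtlety is remembering to handle the empty-word case $n=0$ separately, since contextuality only directly gives us the inductive step for products of nonempty length.
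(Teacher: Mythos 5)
Your proof is correct and takes essentially the same route as the paper: both rest on the multiplicativity of the closure, $\cl H(L)\cdot\cl H(K)\subseteq\cl H(L\cdot K)$ (\cref{lem:con:clo,lem:con:clo:app}). The paper merely compresses your induction on powers into a single appeal to star induction, showing $1+\cl H(L)\cdot\cl H(L^*)\subseteq \cl H(L^*)$, which is exactly the fixpoint form of your unrolled argument over $\bigcup_{n}\cl H(L)^n$.
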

\begin{proof}
  It suffices to show $1+\cl H(L)\cdot\cl H(L^*)\subseteq \cl H(L^*)$, which follows from \cref{lem:con:clo}.
\end{proof}

\section{Direct soundness proof}
\label{app:soundness}

We give in this appendix a direct proof of soundness (\cref{thm:soundness}, \cite[Theorem~2]{dkpp:fossacs19:kah}).
It relies on the following lemma from~\cite{dkpp:fossacs19:kah}.

\begin{lemC}[{\cite[Lemma~2]{dkpp:fossacs19:kah}}]
  \label{lem:presoundness}
  Let $H$ be a set of hypotheses.
  For all languages $L,K$, we have
  \begin{enumerate}
  \item $\cl H(L+K)=\cl H(\cl H(L)+\cl H(K))$\,,
  \item $\cl H(L\cdot K)=\cl H(\cl H (L)\cdot\cl H(K))$\,,
  \item $\cl H(L^*)=\cl H(\cl H(L)^*)$\,.
  \end{enumerate}
\end{lemC}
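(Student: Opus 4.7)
The plan is to establish each of the three equalities by proving the two inclusions separately, relying essentially on the fact that $\cl H$ is a closure (monotone, extensive, idempotent) together with the contextuality properties already packaged in \cref{lem:con:clo:app,lem:con:clo:str}.

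For the $\subseteq$ direction in all three cases, I would use pure extensivity and monotonicity: since $L\subseteq\cl H(L)$ and $K\subseteq\cl H(K)$, monotonicity of $+$, $\cdot$, and $(-)^*$ yields $L+K\subseteq \cl H(L)+\cl H(K)$, $L\cdot K\subseteq \cl H(L)\cdot\cl H(K)$, and $L^*\subseteq\cl H(L)^*$ respectively, and then one more application of $\cl H$ (by monotonicity) gives the claim.

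For the $\supseteq$ direction, the strategy is the same in each case: show a bound of the form $F(\cl H(L_1),\ldots,\cl H(L_n))\subseteq \cl H(F(L_1,\ldots,L_n))$ where $F$ is the connective under consideration, then apply $\cl H$ and collapse $\cl H\circ\cl H$ to $\cl H$. For item~(1), the bound $\cl H(L)+\cl H(K)\subseteq \cl H(L+K)$ is just monotonicity of $\cl H$ applied to $L,K\subseteq L+K$. For item~(2), the bound $\cl H(L)\cdot\cl H(K)\subseteq \cl H(L\cdot K)$ is exactly \cref{lem:con:clo:app}. For item~(3), the bound $\cl H(L)^*\subseteq \cl H(L^*)$ is \cref{lem:con:clo:str}. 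So once $\cl H$ is applied, we get $\cl H(\cl H(L)+\cl H(K))\subseteq \cl H(\cl H(L+K))=\cl H(L+K)$ and similarly for the other two cases.

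No step is really the hard part: the conceptual content has already been absorbed into the supporting lemmas. If anything, the only pitfall is to remember that \cref{lem:con:clo:app,lem:con:clo:str} ultimately rely on contextuality of the one-step function $H$ (which is immediate from its definition) lifted to $\cl H$ via \cref{lem:clo:con}; so strictly speaking one should cite those two lemmas rather than try to re-prove contextuality inline.
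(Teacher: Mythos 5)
Your proof is correct and follows exactly the route the paper takes: its one-line proof invokes ``basic closure properties'' together with \cref{lem:con:clo:app,lem:con:clo:str}, which is precisely the two-inclusion argument you spell out (extensivity and monotonicity for $\subseteq$; the two contextuality lemmas plus idempotence of $\cl H$ for $\supseteq$). Your closing remark about citing the lemmas rather than re-proving contextuality inline also matches the paper's organisation, where contextuality of $H$ is lifted to $\cl H$ via \cref{lem:clo:con}.
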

\begin{proof}
  Via basic closure properties, \cref{lem:con:clo:app}, and \cref{lem:con:clo:str}.
\end{proof}
Before proving soundness, let us also recall the following equivalence, which we use implicitly in the proof below: for all languages $L,K$,
\begin{align*}
  \cl H(L+K)=\cl H(K) \quad\text{iff}\quad L\subseteq \cl H(K)
\end{align*}
(This is useful to deal with inequations, since $e\leq f$ is defined as $e+f=f$.)

\printProofs[snd]

\section{Guarded string interpretation of \texorpdfstring{$\sem\hkat-$}{kat*〚−〛}}
\label{app:kat:gs}
\printProofs[kat]

\end{document}